\def \R{\mathbb R}
\newcommand{\sset}[1]{\left\{ #1\right\}}
\newcommand{\ssets}[1]{\{ #1\}}
\newcommand{\fwh}[1]{\; \left| \; #1 \right.}
\newcommand{\fwhs}[1]{\; | \; #1 }
\newcommand{\card}[1]{\left| #1 \right|}
\newcommand{\cards}[1]{| #1 |}
\newcommand{\union}{\cup}
\newcommand{\map}{\to}
\newcommand{\inters}{\cap}
\newcommand{\perm}[1]{\textup{perm}(#1)}
\DeclareMathOperator*{\expectation}{\mathbb E}
\newcommand{\expect}[2][]{\expectation_{#1}\nolimits\left[#2\right]}
\DeclareMathOperator*{\probability}{\mathrm{Pr}}
\newcommand{\prob}[1]{\probability\left[#1\right]}
\DeclareMathOperator*{\argmax}{argmax}
\DeclareMathOperator*{\supportdistro}{\mathrm{supp}}
\newcommand{\support}[1]{\supportdistro\left(#1\right)}
\renewcommand\vec{\bm}
\newcommand{\sat}{\textup{\textsc{2/3,3-SAT}}\xspace}
\newcommand*{\poly}{\operatorname{poly}}
\newcommand{\NOT}{\textup{\textsf{NOT}}\xspace}
\newcommand{\OR}{\textup{\textsf{OR}}\xspace}
\newcommand{\PROJ}{\textup{\textsf{PROJ}}\xspace}
\newcommand{\OUT}{\textup{\textsf{OUT}}\xspace}
\newcommand{\priorupperbound}{\ensuremath{\overline{\phi}}}
\newcommand{\priorlowerbound}{\ensuremath{\underline{\phi}}}
\newcommand{\interiorset}[1]{\ensuremath{{#1}^{\circ}}}
\theoremstyle{definition}
\newtheorem{definition}{Definition}
\theoremstyle{plain}
\newtheorem{theorem}{Theorem}[section]
\newtheorem{lemma}[theorem]{Lemma}
\newtheorem{proposition}[theorem]{Proposition}
\newtheorem{question}{General Question}
\newtheorem{inftheorem}{Informal Theorem}
\newtheorem*{opproblem}{Open Problem}
\theoremstyle{definition}
\newtheorem{remark}{Remark}
\newtheorem{example}{Example}
\newcommand*{\myproofname}{Proof}
\title{Equilibrium Computation in First-Price Auctions with Correlated Priors\thanks{Aris Filos-Ratsikas was supported by the UK Engineering and Physical Sciences Research Council (EPSRC) grant EP/Y003624/1. Charalampos Kokkalis was supported by an EPSRC DTA Scholarship (Reference EP/W524384/1).}} 
\author{
\begin{tabular}{c c}
& \\ \textbf{Aris Filos-Ratsikas} & \textbf{Yiannis Giannakopoulos}\\
\small{University of Edinburgh, United Kingdom} & \small{University of Glasgow, United Kingdom} \\
\href{mailto:Aris.Filos-Ratsikas@ed.ac.uk}{\small{\texttt{aris.filos-ratsikas@ed.ac.uk}}} & \href{mailto:yiannis.giannakopoulos@glasgow.ac.uk}{\small{\texttt{yiannis.giannakopoulos@glasgow.ac.uk}}}\\
& \\
\textbf{Alexandros Hollender} & \textbf{Charalampos Kokkalis}\\
\small{University of Oxford, United Kingdom} & \small{University of Edinburgh, United Kingdom} \\
\href{mailto:alexandros.hollender@cs.ox.ac.uk}{\small{\texttt{alexandros.hollender@cs.ox.ac.uk}}} & \href{mailto:charalampos.kokkalis@ed.ac.uk}{\small{\texttt{charalampos.kokkalis@ed.ac.uk}}}
\end{tabular}}
\date{June 5, 2025}
\begin{document} 
\maketitle

\begin{abstract}
We consider the computational complexity of computing Bayes-Nash equilibria in
first-price auctions, where the bidders' values for the item are drawn from a
general (possibly correlated) joint distribution. We show that when the values
and the bidding space are discrete, determining the existence of a pure
Bayes-Nash equilibrium is NP-hard. This is the first hardness result in the
literature of the problem that does not rely on assumptions of subjectivity of
the priors, or convoluted tie-breaking rules. We then present two main
approaches for achieving positive results, via \emph{bid sparsification} and via
\emph{bid densification}. The former is more combinatorial and is based on
enumeration techniques, whereas the latter makes use of the continuous theory of
the problem developed in the economics literature. Using these approaches, we
develop polynomial-time approximation algorithms for computing equilibria in
symmetric settings or settings with a fixed number of bidders, for different
(discrete or continuous) variants of the auction. 
\end{abstract}

\newpage
\vspace{1cm}
\setcounter{tocdepth}{2} 
\tableofcontents
\newpage

\section{Introduction}

The study of the first-price auction has been in the epicentre of auction theory
since the inception of the field \citep{vickrey1961counterspeculation}.
Motivated by the fact that the bidders in this auction have incentives to
\emph{underbid}, the literature in economics since the early 1960s has studied
the game-theoretic aspects of the auction extensively, aiming to understand and
characterize its \emph{equilibria}. These questions are now as relevant as ever,
since first-price auctions and their variants are widely used in practice, e.g.,
in the sale of \emph{ad impressions} on major online platforms
\citep{paes2020competitive,despotakis2021first,conitzer2022multiplicative,aggarwal2024auto}.

When choosing whether to underbid and by how much, bidders base their decisions
on their value for the item for sale (their ``willingness to buy'' the item) and
the \emph{beliefs} that they have about the values of the other bidders. If a
bidder expects that her competitors will not be very interested in purchasing
the item, she might underbid significantly, attempting to win the item at a
rather low price, whereas if she expects stiffer competition, she may choose to
bid closer to her true value. In game-theoretic terms, this situation is most
accurately modelled as a \emph{Bayesian game of incomplete information}
\citep{harsanyi1967games}, and the aforementioned beliefs are modelled by means
of \emph{value distributions}, also known as \emph{probability priors}. Indeed,
in his seminal paper in 1961, \citeauthor{vickrey1961counterspeculation} studied
the case when the value distributions are all identical and uniform, and showed
that a \emph{Bayes-Nash equilibrium} of the auction always exists, and can be
described via a closed form expression. Since then, a series of works have
considered the same type of questions for different assumptions on the
distributions, and produced mainly existence or uniqueness results, and
descriptions of the equilibria only in limited cases, e.g., see
\citep{griesmer1967toward,riley1981optimal,plum1992characterization,marshall1994numerical,maskin1985auction,maskin2000equilibrium,maskin2003uniqueness,lebrun1996existence,lebrun1999first,lebrun2006uniqueness,lizzeri2000uniqueness,Athey2001,athey2007nonparametric,reny2004existence,chawla2013auctions,bergemann2017first}.

Without a doubt, a highlight of this literature is the work of \citet{MW82}, who
considered auctions with \emph{correlated} (or \emph{interdependent}) values. In
this setting, the values of the agents are drawn from a joint distribution which
assigns a probability to each possible tuple of values, and can capture
situations in which the value of a bidder depends on the values of its
competitors. For example, if the item for sale might potentially be resold in a
future auction, this might have an effect on the values of the bidders for the
item, creating dependencies between them \citep{eden2021poa}. Another classic
example is that of \emph{auctions for mineral rights}
\citep{wilson1967competitive}, where the bidders' values come from estimates
about whether a certain oil site contains oil or not, which are based, e.g., on
their own geological surveys. In such a case, the value of a bidder is clearly
affected by the values of the competitors, as them having a larger value
indicates increased likelihood of the presence of oil in the site; see also
\citep{MW82,roughgarden2016optimal}. \citet{MW82} showed that under a particular
form of positive correlation called \emph{affiliation} (which subsumes
\emph{independent} value priors), and under a certain symmetry condition, the
first-price auction always has a (symmetric) equilibrium, and provided a closed
form expression that describes it. In the broader literature of Bayesian games,
correlation was very much present in the original definition of these games in
\citeauthor{harsanyi1967games}'s seminal trilogy
\citep{harsanyi1967games,harsanyi1968games,harsanyiIII}, referred to as
``C-games'', see also \citep{myerson2004comments}.

In recent years, the interest in the equilibria of the first-price auction has
rekindled in the literature of computer science via the prism of computational
complexity. Concretely, the goal of the associated investigations is to either
design polynomial time algorithms for computing these equilibria, or to prove
hardness results for the appropriate computational problems. To this end,
\citet{fghlp2021_sicomp} studied the setting when the value distributions are
continuous, independent and \emph{subjective}, and provided a PPAD-hardness
result for the problem of computing pure Bayes-Nash equilibria  of the auction.
In follow-up work, \citet{fghk24} provided similar hardness results, namely an
NP-hardness result for pure equilibria and a PPAD-hardness result for mixed
equilibria, when the distributions are still independent and subjective, but
discrete. 

The subjectivity assumption imposed in the aforementioned works implies that the
distributions are different from the perspective of different bidders, i.e.,
there is a distribution $F_{ij}$ for the values of bidder $j$, from the
perspective of each bidder $i$. Auctions with subjective priors are hence very
general, and the hardness results about their equilibrium computation, while
important, are relatively weak. At the same time, while
\citet{harsanyi1967games,harsanyi1968games,harsanyiIII} originally defined
Bayesian games in the context of subjective priors, such priors have only been
considered in a handful of works in the literature of auctions, certainly much
fewer than the plethora of works that study auctions with correlated
(non-subjective) values. This can possibly be explained by the fact that
\citeauthor{harsanyi1967games} in his original work argued that the subjective
priors should be \emph{consistent}, i.e., they need to be derivable from a
common prior by applying Bayes's rule. The rationale behind this assertion,
which became known as the \emph{Harsanyi doctrine}\footnote{The term was
seemingly coined by \citet{aumann1976agreeing}.}, is that, with consistent
priors, the differences in beliefs are due to differences in information, which
is typically the case in reality. It turns out that once one imposes this
consistency condition, the resulting Bayesian games are C-games in the language
of \citet{harsanyi1967games}, i.e., games with correlated values, see
\citep{myerson2004comments}. 

The need to remove the subjectivity assumption was highlighted in
\citep{fghlp2021_sicomp,fghk24}, where settling the complexity of the setting
with \emph{independent private values} was posed as a major open problem. In the
quest to establish computational hardness, the most sensible intermediate step
would be to attempt to prove hardness results for computing equilibria in
first-price auctions with correlated priors. Following the discussion above,
such results would be quite important in their own right, given the prevalence
of correlation in auction theory and the associated applications in practice.
This brings us to our first general question.

\begin{question}\label{que:hardness} Can we prove hardness results for computing
    equilibria in the first-price auction with correlated values, without
    imposing any subjectivity assumptions?
\end{question}

\noindent This question was in fact implicitly asked by
\citep{fghlp2021_sicomp}, who stated the complexity of the auction with
consistent subjective priors as an open problem, seemingly unaware of its
connection with the setting with correlated priors that we mentioned above.  

On the other end of the spectrum, we are also interested in obtaining positive
results, i.e., polynomial time algorithms for computing (approximate) equilibria
of the auction. The sensible approach in this case is to start from auctions
with simple distributions and gradually generalize them as much as possible.
Indeed, in this spirit, \citet{fghk24} designed a polynomial time approximation
scheme (PTAS) for computing symmetric mixed equilibria of the auction with
discrete values which are drawn iid from the same distribution. For the
continuous-value variant, \citet{fghlp2021_sicomp} designed an approximation
algorithm when the number of bidders is fixed. Could we hope to extend these
results to the case of correlated priors, or at least for some reasonable forms
of correlation? This motivates our second general question.

\begin{question}\label{que:positive} Can we design polynomial time algorithms
    for computing approximate equilibria in the first-price auction in symmetric
    settings, or settings with a few bidders, when the values exhibit some
    reasonable form of correlation?
\end{question}

\noindent In this paper, we make significant progress on both of these questions. 

\subsection{Our Results and Techniques}
In this work, we study first-price auctions in which the bids come from a
discrete set, and the value distributions are either discrete or continuous.
Both of these settings have been studied in the literature of the problem before
\citep{fghlp2021_sicomp,chen2023complexity,fghk24} and the corresponding
auctions were coined the \emph{Discrete First-Price Auction (DFPA)} and the
\emph{Continuous First-Price Auction (CFPA)}, respectively.  Contrary to all
previous work, which only considered independent private values (IPV), we allow
the values of the bidders to be \emph{correlated}, i.e., to come from a joint
distribution. In the DFPA, we will generally be interested in both pure
Bayes-Nash equilibria (PBNE) and mixed Bayes-Nash equilibria (MBNE), whereas in
the CFPA we will be interested in PBNE only.\footnote{The rationale for this
decision is grounded on corresponding existence theorems \citep{Athey2001} as
well as issues of representation; we are not aware of an appropriate way to
represent MBNE in the CFPA, see also \citep{fghk24}.} 

\subsubsection{A hardness result for correlated values}\label{sec:intro-hardness}

We first aim to provide answers to \cref{que:hardness} above. To this end, we
first consider the question of deciding the existence of a PBNE of the DFPA with
correlated priors. Note that PBNE of the auction in this case are not guaranteed
to exist, even in the simple case of two bidders with iid values, e.g., see
\citep{fghk24}. We provide the following computational hardness result.

\begin{inftheorem}\label{infthm:hardness} The problem of deciding whether an
(approximate) PBNE of the DFPA with correlated values exists is strongly NP-hard. 
\end{inftheorem}

\noindent In fact, the formal version (\cref{thm:NP-completeness}) of this informal theorem shows a stronger
statement, namely that there exists an $\varepsilon$ of size inversely
polynomial in the description of the input, such that the decision problem is
NP-hard, even for $\varepsilon$-approximate equilibria. \cref{infthm:hardness}
is the first computational hardness result in the literature of the problem
(either for the DFPA or the CFPA) that does not require subjectivity assumptions
or unnatural tie-breaking rules.\footnote{\citet{chen2023complexity} showed a
PPAD-hardness result for computing PBNE in the CFPA, but their construction
crucially requires a rather convoluted tie-breaking rule, rather than the
standard uniform tie-breaking rule.}

From a technical perspective, the proof of the NP-hardness result in
\cref{infthm:hardness} is significantly more involved than the proof of the
corresponding NP-hardness result in \citep{fghk24} for the case of subjective
priors. The high-level idea in the previous reduction is to simulate each
operator of an instance of SAT by a gadget involving a pair or a triplet of
bidders, depending on the fan-in of the operator in question, one bidder for the
output of the operator, and one or two bidders for the input. At an equilibrium
of the auction, the bidders will play strategies that encode the boolean values
``true'' and ``false'' in a way that correctly simulates the semantics of the
operators. For this to be achievable, the equilibrium strategies of the output
bidder should only depend on the strategies of the input bidder(s), and
crucially, none of the other bidders. To achieve that, \citet{fghk24} heavily
exploit the subjectivity assumption and set the value of any other bidder to be
$0$ \emph{only from the perspective of the output bidder}. 

We would like to achieve a similar effect without the subjectivity assumption.
First, we observe that we can construct a joint distribution that assigns
positive probability only to tuples in which the input and output bidders of an
operator have positive values, therefore ``localizing'' the issue to the fact
that a bidder can appear with positive value as an output bidder of one operator
and as an input bidder to another. To resolve this issue, we construct our
instance in \emph{layers}, with appropriate \emph{discounting factors} between
layers. Intuitively, the points of the distribution that correspond to some
operator of a lower layer will appear with significantly smaller probability. As
a result, when computing the best response of a bidder which is both an output
to an operator and an input to another, her best response will be primarily
affected by the former, with the latter practically being absorbed in the
approximation error. Once we have simulated the SAT operators properly, we embed
an instance showing the non-existence of PBNE to the construction, in a way such
that an equilibrium exists if and only if the SAT formula is satisfiable.

Here, one might wonder about potential computational hardness results for
computing MBNE of the DFPA or PBNE of the CFPA. Proving such results seems quite
challenging and is beyond the scope of our work. Nonetheless, to this end, our
\cref{lem:discrete-to-continuous-and-back-for-existence} establishes that a
hardness result for the former would imply a hardness result for the latter as
well. We provide some additional discussion in \cref{sec:conclusion}.

\subsubsection{Approximation algorithms via bid sparsification or bid
densification.}\label{sec:intro-results-positive}

We next turn to \cref{que:positive}, and the design of polynomial time
algorithms for computing approximate equilibria for important special cases.
Similarly to previous work, we will focus on auctions with a fixed number of
bidders \citep{fghlp2021_sicomp}, or symmetric auctions (for any number of
bidders) \citep{fghk24}, and we will consider the problem of finding
\emph{monotone} equilibria of the auction. Roughly speaking, a monotone
equilibrium is one in which the bidding functions assign (weakly) higher bids to
higher values. These equilibria, besides being quite natural, are also typically
the only ones for which existence results have been obtained in the literature
of the problem, e.g., see \citep{Athey2001,MW82}. Additionally, monotonicity is
rather integral for the CFPA, as it allows the expression of the equilibrium
strategies by means of a set of ``jump points'', see
\citep{Athey2001,fghlp2021_sicomp} and \cref{sec:CFPA-represent} of our paper
for more details.   

In terms of correlation, we will consider auctions with \emph{affiliated private
values}, the class introduced in the seminal work of \citet{MW82} discussed
earlier. Affiliation is a form of positive correlation, which stipulates that,
when the value of a bidder for the item increases, it is more likely that the
values of the other bidders will be higher as well. It is no exaggeration to say
that, since the work of \citeauthor{MW82}, and driven by their (and subsequently
\citeauthor{Athey2001}'s [\citeyear{Athey2001}]) existence results, affiliation
has become the ``canonical'' form of correlation studied in the auction
literature, e.g., see
\citep{mcadams2007monotonicity,de2010testing,castro2007affiliation,pinkse2005affiliation,kagel1987information,campo2003asymmetry,li2010testing}.

Despite this rich literature, the existence of monotone MBNE of the DFPA for
affiliated values was not known. Our aforementioned
\cref{lem:discrete-to-continuous-and-back-for-existence} effectively allows us
to translate existence results for the PBNE of the CFPA to MBNE of the DFPA, so
at first glance it seems that the desired existence result would follow ``for
free'' from \citep{Athey2001}. This is not the case however, because
\citeauthor{Athey2001}'s result applies only to auctions with strictly positive
density (full support), and the auction that
\cref{lem:discrete-to-continuous-and-back-for-existence} generates
\emph{fundamentally} does not have full support. To circumvent this obstacle, we
first strengthen \citeauthor{Athey2001}'s classic result to auctions without
full support which fall in a natural class appropriate for computational
representations (those with \emph{piecewise-constant} densities).
\cref{lem:discrete-to-continuous-and-back-for-existence} then applies and yields
the following result:

\begin{inftheorem}
A (symmetric) monotone mixed Bayes-Nash equilibrium of the DFPA with (symmetric)
affiliated private values always exists. 
\end{inftheorem}

We discuss the main challenges of obtaining this new existence theorem
in \cref{sec:existence}, together with some discussion about the inherent
connection between monotonicity and affiliation. With the appropriate existence
theorem at hand, we now turn to the design of polynomial time algorithms. We
will design our algorithms using the following two, conceptually polar opposite,
approaches:
\begin{itemize}[leftmargin=*]
    \item[-] \emph{Bid sparsification:} We will substitute the bidding space of
    the DFPA with a smaller subset, at the expense of some approximation error.
    This will allow us to express the equilibrium computation problem as a
    system of polynomial inequalities of manageable size, which can be solved
    using standard techniques from the literature \citep{grigor1988solving}. We
    will use this approach to obtain results for two cases: (a) for auctions
    with a fixed number of bidders, and (b) for auctions with \emph{symmetric
    affiliated private values}, like those studied in \citep{MW82}. 
    \item[-] \emph{Bid densification:} We will substitute the bidding space of
    the DFPA with a \emph{continuous} bidding space, and employ the closed form
    expressions for equilibria devised in the classic economic theory. Since
    such expressions only exist for symmetric settings \citep{MW82}, we will use
    this approach to obtain results for symmetric affiliated private values. 
\end{itemize}
We provide more details about these two approaches, and state our main
results obtained via each of them below.

\paragraph{Bid sparsification.} 

Starting from an instance $G$ of either the CFPA or the DFPA, the idea of bid
sparsification is to create an instance $G'$ of the same auction with bidding
space $B'$ which is a smaller subset of the original bidding space $B$, and
argue that a (monotone) $\varepsilon$-approximate equilibrium on $G'$ is also a
(monotone) $(\varepsilon+\gamma)$-approximate equilibrium of $G$, where $\gamma$
is a parameter that can be chosen as a function of the size of the smaller
bidding space. This idea was introduced first by \citet{chen2023complexity} in
the context of computing PBNE of the CFPA with independent private values (IPV),
and was latter adapted to the computation of MBNE of the DFPA in the IPV setting
by \citet{fghk24}; in the latter work, the associated lemma was coined the
\emph{shrinkage lemma}, see \citep[Lemma 5.1]{fghk24}. 

To design a polynomial time algorithm, one can then devise a system of
polynomial inequalities, a solution to which is an equilibrium of the auction.
From the related literature (e.g., see \citep{grigor1988solving}), it is known
that such a system can be solved to within accuracy $\eta > 0$, in time which is
polynomial in $1/\eta$ and exponential in the number of variables. When
translated to the ``naive'' formulation of the equilibrium computation problem,
this translates to the size of the bidding space and the number of bidders
appearing in the exponent of the running time. For a fixed number of bidders, by
an application of the shrinkage lemma one can (with an additional \emph{rounding
step} to make sure the approximate solution to the system corresponds to a
monotone non-overbidding approximate equilibrium) obtain a PTAS for the problem.
This approach was first taken by \citet{fghlp2021_sicomp} to compute approximate
PBNE of the CFPA in the IPV setting, for a fixed number of bidders.\footnote{We
remark that the corresponding result in \citep{fghlp2021_sicomp} is in fact
stated with the additional assumption that the bidding space has fixed size; as we show in
\cref{sec:sparsification}, it is possible to
apply the shrinkage lemma (\cref{lem:shrinkage}) to obtain a PTAS without this additional assumption.} When the number of bidders is not fixed, \citet{fghk24} showed
for the DFPA that this exponential dependency on the number of bidders can still
be circumvented if we consider symmetric equilibria in settings where the values
are drawn iid from a common distribution. In particular, they proposed a more
succinct way of representing the equilibrium strategies, which takes advantage
of the symmetry, and coined that \emph{the support representation}. They used
all the aforementioned tools to design a PTAS for the computation of MBNE in the
DFPA with iid values.

We extend the previous results to the following statement about
auctions with affiliated values. Note that this result applies to both PBNE of
the CFPA and MBNE of the DFPA.

\begin{inftheorem}\label{infthm:PTAS-sparsification} The problem of computing a
monotone pure (resp. mixed) Bayes-Nash equilibrium in the CFPA (resp. DFPA) with
affiliated private values admits a PTAS when either
\begin{itemize}[leftmargin=*]
\item[-] there is a fixed number of bidders, or 
\item[-] the affiliated private values are symmetric. In that case, the
equilibrium that we compute is also symmetric.
\end{itemize}
\end{inftheorem}

The approach that we employ for obtaining the PTAS of
\cref{infthm:PTAS-sparsification} follows closely those of the previous
literature, but we still perform the technical work required to carefully adapt
and generalize them from the IPV and iid settings to the case of (symmetric)
affiliated values. From a technical perspective, we only need to prove the
theorem for the monotone PBNE of the CFPA, and then invoke our
\cref{lem:discrete-to-continuous-and-back-for-existence} that connects the two
settings to translate those to the monotone MBNE of the DFPA. On the conceptual
side, we observe that the design principles behind these techniques can be seen
as a general bid sparsification approach, to go alongside our newly proposed bid
densification approach, described in the next paragraph. 

\paragraph{Bid densification.} The sparsification approach described above is
sufficient to obtain a polynomial time algorithm for computing
$\varepsilon$-approximate equilibria for any constant $\varepsilon$. Still,
there is something somewhat unsatisfactory about appealing to the system of
polynomial inequalities to find an equilibrium. Indeed, if one were to code a
sparsification-based algorithm on a computer, they would have to ``unbox'' the
highly technical algorithm of \citet{grigor1988solving}, and perform a rather
involved rounding step described in \citep{fghlp2021_sicomp,fghk24}.  On the
other hand, for certain cases of interest, e.g., for symmetric auctions, the
classic auction theory in economics over the past 65 years has managed to
describe the equilibria of the auction via means of appropriate closed form
expressions. Could we make use of this elegant theory to design algorithms for
the equilibrium computation problem? 

There are some inherent challenges associated with this endeavour. First, these
results from economics typically apply to a variant of the auction where both
the value distributions and the bidding space are continuous; we henceforth
refer to this setting as the \emph{Continuous-Continuous First Price-Auction
(CCFPA)}.\label{page:CCFPA_intro_1} In a related manner, translating a closed
form expression (which may contain integrals and algebraic expressions) to an
algorithm for computing a bidding strategy is not straightforward.

The ``bid densification'' approach therefore aims to connect the (monotone,
symmetric) PBNE of the CFPA with those of the CCFPA. In particular, we consider
a CCFPA with exactly the same value distribution as the CFPA. We are now
operating on a continuous bidding space (which we may assume to be the unit
interval $[0,1]$ without loss of generality) and hence we can invoke the closed
form expressions that we mentioned above; for symmetric affiliated values in
particular, we can use the formula devised by \cite{MW82}. However, the bidding
function $\beta$ of the CCFPA might prescribe bids in $[0,1]$ to certain values
that are not a part of the original bidding space $B$ of our CFPA. To circumvent
this, we apply to the monotonicity of the equilibrium and approximate $\beta$ by
a non-decreasing piecewise-constant bidding function $\hat{\beta}$. This
function ``jumps'' from one bid to the next at the values prescribed by the
inverse bidding function $\beta^{-1}$ when taking the points of the discrete
bidding space $B$ as input. Here is where the second complication kicks in:
since $\beta$ is not a discrete object, we can only compute its inversion at the
points of interest approximately, absorbing an extra error factor in the
equilibrium approximation. 

The last step is to show that the function $\hat{\beta}$ is a good approximation
of the equilibrium strategy in the CFPA. We of course cannot hope this to be the
case in general, as the discrete bidding space $B$ might very far from a
continuous bidding space. As long as the granularity of the bidding space is
sufficiently fine, however, the equilibrium approximation should be reasonably
bounded. While this approach seems sensible, there are several intricacies
associated with formally bounding the error in the approximation of $\beta$ by
$\hat{\beta}$, which depend on parameters of the joint distribution. It turns
out that when the density of the distribution is bounded away from zero, or in
the case of iid values (without the positive density assumption), it is possible
to impose an appropriate bound on this error. We state the corresponding theorem
informally below.

\begin{inftheorem}\label{infthm:densification} Assuming that the bidding space
is sufficiently granular, we can compute an approximate PBNE of the CFPA in
polynomial time, when
\begin{itemize}
    \item[-] the values are symmetric affiliated with full support, or
    \item[-] the values are drawn iid from a distribution (that does not need to have full support).
\end{itemize}
\end{inftheorem}

In the formal statement of the theorem (\cref{th:approximate-PBNE-CFPA-to-CCFPA}), the following parameters
appear: the number of bidders $n$, the upper and lower bounds on the density of
the distribution, and the granularity of the bidding space $\delta$ (i.e., the
maximum distance between any two consecutive bids). Assuming constant upper and
lower bounds on the density, it suffices to have sufficiently more bids than
bidders to obtain very strong approximations. In particular, when $\delta <
1/n^2$, we can obtain an algorithm with equilibrium approximation error
$1/\poly(n)$; given that in reality the bidding space of an auction would
typically be \emph{much} larger than the number of bidders (e.g., all multiples
of 5 cents), the algorithm suggested by \cref{infthm:densification} would
compute a rather strong approximation. 

An interesting question here is whether we can obtain similar results to that of
\cref{infthm:densification} for the MBNE of the DFPA as well. While this is
conceivable, it poses some critical technical challenges; we provide a related
discussion in \cref{sec:conclusion}. Finally, we remark that
\cref{infthm:densification} can be also viewed from the perspective of a
platform designer, as instructing how finely to discretize the bidding space
(which is a necessary assumption in reality) in order to obtain strong
equilibrium approximations.      

\subsection{Further Related Work}

The study of equilibria of the first-price auction was initiated by the seminal
work of \citet{vickrey1961counterspeculation}, who proved the existence and
uniqueness of a symmetric equilibrium for the case when all bidders' values are
drawn from a uniform distribution, and provided a closed form formula that
describes it. Since then, a plethora of works in economics studied different
variants of the auction extensively, e.g., see
\citep{Athey2001,athey2007nonparametric,lebrun1996existence,lebrun1999first,lebrun2006uniqueness,maskin1985auction,maskin2000equilibrium,maskin2003uniqueness,griesmer1967toward,plum1992characterization,chwe1989discrete}
and references therein, aiming to produce similar existence results, as well as
reasonable descriptions of the equilibrium bidding functions. As we mentioned
earlier, a highlight of this literature is the work of \citet{MW82} who provided
such results for the case of symmetric affiliated values; we make use of these
results in the development of our polynomial time algorithms in
\cref{sec:densification}. In computer science, auctions with correlated values
have been considered in a plethora of works
\citep{roughgarden2016optimal,eden2018interdependent,eden2021poa,eden2024combinatorial,fu2014optimal,dobzinski2011optimal,cai2012optimal,papadimitriou2011optimal},
but not from the perspective of computational complexity, and not particularly
for the first-price auction. 

The computational complexity of equilibrium computation was first studied by
\citet{fghlp2021_sicomp}, who provided a PPAD-completeness result for computing
pure Bayes-Nash equilibria of the auction in the case of subjective priors, as
well as some positive results for a fixed number of bidders. In the same
setting, \citet{chen2023complexity} managed to prove a PPAD-hardness result
without the subjectivity assumption, but with the addition of a rather
convoluted tie-breaking rule, rather than the standard uniform tie-breaking.
\citet{fghk24} considered the same problem of discrete values and discrete bids
that we do, but, crucially, their hardness results only apply to the case of
subjective priors. \citet{wang2020bayesian} proposed an efficient algorithm for
first price auctions with discrete values and continuous bidding spaces, when
the tie breaking rule employed is a non-uniform rule due to
\citep{maskin2000equilibrium}.

\section{Preliminaries}\label{sec:preliminaries}

In this section we introduce the settings that we will study in this paper. We
will be interested in first-price auctions with discrete bids\footnote{The
literature of the problem in economics often assumes continuous bidding spaces,
but this is primarily a matter of mathematical convenience in order to be able
to obtain closed-form solutions; see, e.g., the discussion
of~\citet[Section~II]{Vickrey1962}. Furthermore, discrete bids are much more
amenable to computational complexity analysis; this is a point made also in all
previous works that study the same setting, namely
\citep{fghlp2021_sicomp,fghk24} and \citep{chen2023complexity}. Finally, note
that even for continuous bidding settings where closed-form solutions are
known~\citep{MW82}, evaluating these formulas, from a computational perspective,
is not straightforward; as a matter of fact, we address this
in~\cref{lemma:inverter-poly-time-oracle-beta-MW}. See also our discussion
in~\cref{remark:discrete-approx-CCFPA}.} and either discrete value distributions
(DFPA) (presented in \cref{sec:DFPA-model}) or continuous value distributions
(CFPA) (presented in \cref{sec:CFPA-model}). In \cref{sec:representation} we
discuss issues related to the representation of inputs and outputs for both
settings, to make them amenable to computational complexity analysis. Finally,
in \cref{sec:expected-utility} we discuss the computation of expected utilities
in the auction, which is essential for our algorithms and membership results. We
start with some general notation that will be useful throughout the paper.  

\subsection{General Notation}
Throughout our paper we use $\support{F}$ to denote the support of a (continuous
or discrete) distribution, with cumulative distribution function (cdf) $F$, and
probability mass function (pmf) $f$ (for the discrete case) and probability
density function (pdf) $f$ (for the continuous case). With a slight abuse of
notation, we will use $F$ to denote both the probability distribution and its
cdf. Also, for a random variable $X$ distributed according to a cdf $F$, we will
use the notation $X\sim F$. For a positive integer $n$, we use
$[n]\coloneqq\ssets{1,2,\dots,n}$. Let $\R$ be the set of real numbers. For any
vector $\vec{x}=(x_1,x_2,\ldots,x_n)\in {\R}^n$ and $i\in[n]$, we let
$\vec{x}_{-i} \coloneq (x_1,x_2,\ldots,x_{i-1},x_{i+1},\ldots,x_n) \in
{\R}^{n-1}$. Given this, we will also denote $\vec{x}=(x_i,\vec{x_{-i}})$.
For a finite set $X$ of cardinality $n$, we use $\Delta(X)$ to denote the set of
all possible distributions over $X$; that is, the $(n-1)$-dimensional unit simplex
$$
\Delta(X) \coloneqq \ssets{\vec{y}\in [0,1]^{n}\fwhs{\sum\nolimits_{i\in[n]} y_i=1}}.
$$
For a positive integer $n$ we use $\perm{n}$ to denote the set of all possible
permutations of indices $[n]$, that is
$\perm{n}\coloneqq\sset{\pi:[n]\to[n]\fwh{\pi\;\;\text{is a bijection}}}$. For a
set $X$, we let $\mathbbm{1}_X$ denote its indicator function, i.e.,
$\mathbbm{1}_X(x)=1$ if $x\in X$, and $\mathbbm{1}_X(x)=0$ otherwise. Finally,
for any $X\subseteq \R^n$ of a Euclidean space, we use
$\interiorset{X}$ to denote its interior (with respect to the standard Euclidean
metric).

\subsection{Discrete First-Price Auctions}\label{sec:DFPA-model} 

In a (discrete, Bayesian) \emph{first-price auction (DFPA)}, there is a set
$N=[n]$ of \emph{bidders} and one item for sale. Each bidder $i$ has a
\emph{value} $v_i\in V_i$ for the item and submits a \emph{bid} $b_i \in B$. The
sets $V_1,V_2,\dots,V_n,B$ are \emph{finite} subsets of $[0,1]$ and are called
the \emph{value spaces} of the bidders and the \emph{bidding space} of the
auction, respectively. Similarly to previous work
\citep{fghk24,chen2023complexity}, we assume that $0 \in B$ (which can be seen
as abstaining from the auction). 

The item is allocated to the highest bidder, who has to submit a payment equal
to her bid. In case of a tie for the highest bid, the winner is determined
according to the \emph{uniform tie-breaking} rule. That is, for a \emph{bid
profile} $\vec{b}=(b_1,\ldots,b_n)$, the \emph{ex-post utility} of bidder $i$
with value $v_i$ is defined as
\begin{equation}
\label{eq:ex_post_utilities}
\tilde{u}_i(\vec{b};v_i) \coloneq 
\begin{cases}
\frac{1}{\cards{W(\vec{b})}}(v_i-b_i), & \text{if}\;\; i\in W(\vec{b}), \\
0, & \text{otherwise}, 
\end{cases}
\qquad\text{where}\;\; W(\vec{b})=\argmax_{j\in N} b_j
\end{equation}

\paragraph{Correlated values.}
In the Bayesian setting, the information bidders have about how much the other
bidders value the item is modelled by a \emph{joint} distribution $F$ over
$\vec{V} := V_1 \times V_2 \times \ldots \times V_n$. \medskip We will also be
interested in special cases of these Bayesian priors, namely:

\begin{itemize}[leftmargin=*]
    \item [--] \emph{\textbf{Affiliated Private Values (APV)}}, where $f$
    satisfies the affiliation condition:
    \begin{equation}
    \forall \vec{v},\vec{v'} \in \vec{V}: f(\vec{v}\vee \vec{v'})\cdot f(\vec{v}\wedge\vec{v'}) \geq f(\vec{v})\cdot f(\vec{v'}) \label{eq:affiliation}
    \end{equation}
    where $\vec{v}\vee\vec{v'}$ denotes the component-wise maximum (join) and
    $\vec{v}\wedge\vec{v'}$ denotes the component-wise minimum (meet) of
    $\vec{v}$ and $\vec{v'}$. A function $f$ that satisfies
    Condition~\eqref{eq:affiliation} is also commonly known in the mathematics
    literature as multivariate totally positive of order 2 (MTP$_2$)
    \citep{karlin1980classes}, and is also often referred to as log-supermodular
    \citep{Athey2001}. Affiliation is a form of positive correlation: the higher
    the value for one bidder, the more likely it is that the values of the other
    bidders will be higher as well. See \citep{MW82,de2010testing} for a more
    elaborate discussion.  

    \item [--] \emph{\textbf{Group-Symmetric Affiliated Private Values
    ($k$-GSAPV)}},\label{page:TSAPV-def} where $f$ satisfies the affiliation
    condition \eqref{eq:affiliation}, and is also \emph{$k$-group-symmetric} in
    the following sense: the set of bidders is partitioned into $k$ distinct
    groups $N_1,N_2,\ldots,N_k$ such that $f$ is symmetric in the arguments
    corresponding to bidders in the same group. That is, $\forall k' \in [k],
    \forall i,j \in N_{k'}$ it holds that $V_i=V_j$ and
    \begin{equation}
    f(v_i,v_j,\vec{v}_{-(i,j)}) = f(v_j,v_i,\vec{v}_{-(i,j)}), \qquad \forall \vec{v}\in \vec{V}, \label{eq:group-symmerty}
    \end{equation}
    where, for $x,y\in R$, $(x,y,\vec{v}_{-(i,j)})$ denotes the vector
    resulting from $\vec{v}$, if we replace its $i$-th coordinate with $x$,
    and its $j$-th coordinate with $y$.

    \item [--] \emph{\textbf{Symmetric Affiliated Private Values
    (SAPV)}},\label{page:SAPV-def} where $f$ satisfies the affiliation condition
    \eqref{eq:affiliation}, and is also symmetric in all of its
    arguments\footnote{In the literature of probability, this property is
    typically referred to as \emph{exchangeability} e.g., see \citep[Section
    6.8., p.~282]{ross2010}.} meaning that 
    \begin{equation} 
    V_1=V_2=\ldots=V_n \ \  \text{ and } \ \ 
    \forall \vec{v} \in \vec{V}, \forall \pi \in \perm{n}: f(\vec{v})=f(v_{\pi(1)},v_{\pi(2)},\ldots,v_{\pi(n)}). \label{eq:symmerty}
    \end{equation}
    
    \item [--] \emph{\textbf{Independent Private Values (IPV)}}, where for each
    bidder $i$ there is a distribution $F_i$ over the values $V_i$, and the
    joint distribution $F$ is a product distribution i.e., $F=F_1\times
    F_2\times\dots\times F_n$.
    
    \item [--] \emph{\textbf{Identical Independent Private Values (IID)}}, which
    is defined as in the case of IPV above, and additionally $V_i=V_{i'}$ and
    $F_i=F_{i'}$ for all $i,i' \in N$. In other words, bidder values are iid
    according to a common distribution $F$.\footnote{We remark that some works
    (e.g., \citep{athey2007nonparametric,de2010testing}) refer to this setting
    as ``symmetric independent private values''; while this would fit well the
    naming conventions of our taxonomy here, we elect to use the term iid, which
    is consistent with the bulk of the literature.}
\end{itemize}

We explain the connection between the different classes of priors below. It can
be observed \citep{MW82} that IPV is a special case of APV, in which the
affiliation condition \eqref{eq:affiliation} holds with equality. Additionally,
IID is both a special case of IPV (in which symmetry is added to independence)
and of SAPV (in which independence is added to symmetry). All of the above are
obviously contained in the class of general correlated values. We have the
inclusion relation diagram shown in \cref{fig:bayesian_priors}. Observe, also,
that $k$-GSAPV captures both APV ($k=n$) and SAPV ($k=1$). 
\begin{figure}[ht!]
\begin{mdframed}[style=MyFrame,nobreak=true,align=center,userdefinedwidth=22em]
\begin{center}
    \tikzset{every picture/.style={line width=0.75pt}} 

    \begin{tikzpicture}[x=0.75pt,y=0.75pt,yscale=-1,xscale=1]

    \draw (190,140) node [anchor=north west][inner sep=0.75pt]   [align=left] {IID};
    \draw (187,91) node [anchor=north west][inner sep=0.75pt]   [align=left] {IPV};
    \draw (252,91) node [anchor=north west][inner sep=0.75pt]   [align=left] {APV};
    \draw (310,90) node [anchor=north west][inner sep=0.75pt]   [align=left] {General Correlation};
    \draw (250,140) node [anchor=north west][inner sep=0.75pt]   [align=left] {SAPV};
    \draw (225,91) node [anchor=north west][inner sep=0.75pt]   [align=left] {$\displaystyle \subseteq $};
    \draw (225,140) node [anchor=north west][inner sep=0.75pt]   [align=left] {$\displaystyle \subseteq $};
    \draw (290,91) node [anchor=north west][inner sep=0.75pt]   [align=left] {$\displaystyle \subseteq $};
    \draw (191,130) node [anchor=north west][inner sep=0.75pt]  [rotate=-270] [align=left] {$\displaystyle \subseteq $};
    \draw (258,130) node [anchor=north west][inner sep=0.75pt]  [rotate=-270] [align=left] {$\displaystyle \subseteq $};
    \end{tikzpicture} 
\end{center}
\end{mdframed}
\caption{The inclusion relation between the different classes of Bayesian priors.}
\label{fig:bayesian_priors}
\end{figure}
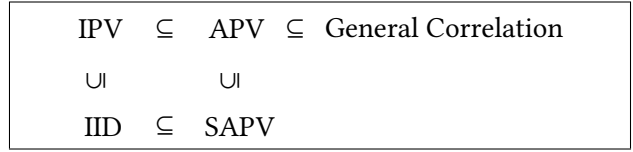

 \subsubsection{The Discrete First-Price Auction Game}
    \label{sec:DFPA-game}

    We now proceed to the definition of the game induced by a DFPA. Given the
    joint probability distribution $F$ for the values, the marginal distribution
    $F_i$ of bidder $i$ is has pmf
    $$
    f_i(v_i)\coloneqq \sum_{\vec{v_{-i}}\in \vec{V_{-i}}}f(v_i,\vec{v_{-i}}) \qquad\text{for all}\;\; v_i\in V_i,
    $$
    where $\vec{V}_{-i}\coloneqq \times_{j\in N\setminus\ssets{i}} V_j$. Given a
    value $v_i\in\support{F_i}\subseteq{V_i}$ that the player may observe with
    nonzero probability, her beliefs about the values of the others are captured
    by the conditional distribution $F_{i|v_i}$ with probability mass function:
    \begin{equation}\label{eq:discrete-conditional}
    f_{i|v_i}(\vec{v_{-i}}) := \frac{f(v_i,\vec{v_{-i}})}{f_i(v_i)}. 
    \end{equation}
    
    We will be interested in both pure and mixed strategies of the bidder in
    this game. A mixed \emph{strategy} of bidder $i$ is a function $\beta_i: V_i
    \rightarrow \Delta(B)$ mapping values to distributions over bids.
    \emph{Pure} strategies correspond to the special case where a mixed strategy
    $\beta_i$ always assigns full mass on single bids; that is, for all $v_i\in
    V_i$ there exists a $b_i \in B$ such that $\beta_i(v_i)(b_i)=1$ and
    $\beta_i(v_i)(b)=0$ for all $b\neq b_i$. Therefore, for simplicity, we will
    sometimes represent pure strategies directly as functions $\beta_i:V_i \to
    B$ from values to bids. Whether the bidding function $\beta_i$ refers to
    pure or mixed strategies will be clear for the context.

    \paragraph{Expected utilities.} Given a strategy profile
    $\vec{\beta}_{-i}\in \times_{j\in N\setminus\ssets{i}}\Delta(B)^{V_j}$ of
    the other bidders, the (interim) \emph{utility} of a bidder $i$ with value
    $v_i$, when bidding $b\in B$, is given by
    \begin{align}
    u_i(b,\vec{\beta}_{-i};v_i) 
        &\coloneq\expect[\vec{v_{-i}}\sim F_{i|v_i}]{\expect[\vec b_{-i}\sim \vec{\beta_{-i}}(\vec{v_{-i}})]{\tilde{u}_i(b,\vec{b_{-i}};v_i)}} \notag\\
        &=  \sum_{\vec{v_{-i}}\in \vec{V}_{-i}}f_{i|v_i}(\vec{v_{-i}})\sum_{\vec{b}_{-i}\in B^{N\setminus\ssets{i}}}\left(\prod_{j\in N\setminus\ssets{i}} \beta_j(v_j)(b_j)\right)\tilde{u}_i(b,\vec{b}_{-i};v_i).\label{eq:DFPA-utility-interim-mixed}
    \end{align}
    where $\vec{\beta}_{-i}(\vec{v}_{-i})$ is a shorthand for the product
    distribution $\times_{j\in N\setminus\ssets{i}}\beta_j(v_j)$, and
    $\beta_j(v_j)(b_j)$ denotes the probability that bidder $j \neq i$ submits
    bid $b_j$ when having value $v_j$.
    Intuitively, the bidder's utility can be calculated via the following
    sequence of steps: (i) a vector $\vec{v_{-i}}$ of the values of the other
    bidders is drawn from the marginal distribution $F_{i\mid v_i}$; (ii) a
    vector of bids $\vec{b_{-i}}$ is drawn from the distribution induced by
    applying the strategy function $\vec{\beta}_{-i}$ to the $\vec{v_{-i}}$
    obtained in (i).

    In the case of mixed strategies, where bidder $i$ randomizes over her bids; i.e., when
    $\vec{\gamma}\in\Delta(B)$ we define:
    \begin{equation}
    \label{eq:DFPA-utility-interim-mixed-randomized-bidding}
    u_i(\vec{\gamma},\vec{\beta}_{-i};v_i) 
    \coloneq \expect[b\sim \vec{\gamma}]{u_i(b,\vec{\beta}_{-i};v_i)}
    = \sum_{b\in B} \vec{\gamma}(b) u_i(b,\vec{\beta}_{-i};v_i).
    \end{equation}

\paragraph{Equilibria.} 
The appropriate notion of equilibrium for Bayesian games is the \emph{(interim)
Bayes-Nash equilibrium}.\footnote{This is the standard notion of equilibrium
used in the literature of the problem---see, e.g., \citep{krishna2009auction}
(section ``Bayesian-Nash Equilibria'' on page 296), or \citep[Section
2]{Athey2001}---where it is assumed that a bidder observes their own value and
the expectation is only over the values of the other bidders. An alternative,
weaker definition that has appeared in the literature before, but is less
natural for auction settings, is the \emph{ex ante} BNE (see, e.g.,
\citep[Eq.~F.2, p.~296]{krishna2009auction}), where the equilibrium condition
takes the expectation over the bidder's own value too. Thus, every interim BNE
is also an ex ante BNE.} Below, we provide the definition of a relaxed notion,
which allows for deviations that can not increase the utilities by more than an
additive parameter $\varepsilon$.  

    \begin{definition}[$\varepsilon$-approximate mixed Bayes-Nash equilibrium of
    the DFPA]\label{def:approx-mixed-bayes-nash-equilibrium} Let $\varepsilon
    \geq 0$. A (mixed) strategy profile $\vec{\beta}=(\beta_1, \ldots, \beta_n)$
    is an (interim) $\varepsilon$-approximate mixed Bayes-Nash equilibrium
    (MBNE) of the DFPA if for any bidder $i \in N$ and any value $v_i \in
    \support{F_i}$, 
    \begin{equation}
    \label{eq:MBNE-def-condition-full}
    u_i(\beta_i(v_i),\vec{\beta}_{-i};v_i) \geq u_i(\vec{\gamma},\vec{\beta}_{-i};v_i) - \varepsilon \qquad \text{for all}\;\; \vec{\gamma}\in \Delta(B).
    \end{equation}
    We will refer to a $0$-approximate MBNE as an \emph{exact} MBNE.
    \end{definition}

\begin{remark}
\label{note:MBNE-def-pure-deviation}
Note that in Condition~\eqref{eq:MBNE-def-condition-full} it suffices to
guarantee that the strategy $\beta_i(v_i)$ does not result in lower utility than
any \emph{pure} strategy $\gamma\in B$ (instead of any \emph{mixed} ones
$\vec{\gamma}\in\Delta(B)$). 
\end{remark}

Similarly, one can define the notion of an (approximate) pure
Bayes-Nash equilibrium:

\begin{definition}[$\varepsilon$-approximate pure Bayes-Nash equilibrium of the
DFPA]\label{def:approx-pure-bayes-nash-equilibrium} Let $\varepsilon \geq 0$. A
pure strategy profile $\hat{\vec{\beta}}=(\hat\beta_1, \ldots, \hat\beta_n)$ is
an (interim) $\varepsilon$-approximate pure Bayes-Nash equilibrium (PBNE) of the
DFPA if for any bidder $i \in N$ and any value $v_i \in \support{F_i}$,
\begin{equation}
    \label{eq:PBNE-def-condition-full}
u_i(\hat\beta_i(v_i),\hat{\vec{\beta}}_{-i};v_i) \geq u_i(b,\hat{\vec{\beta}}_{-i};v_i) - \varepsilon \qquad \text{for all}\;\; b\in B.
\end{equation}
\end{definition}

\paragraph{$\varepsilon$-best responses.} We will say that a strategy
$\beta$ is an $\varepsilon$-best response to the strategies $\vec{\beta}_{-i}$
of the other bidders, if it satisfies \eqref{eq:MBNE-def-condition-full} (for
mixed strategies) and \eqref{eq:PBNE-def-condition-full} (for pure strategies).

\paragraph{Symmetric Equilibria.} When all bidders choose the same
strategies, i.e., $V_i=V_i'$ and $\beta_i=\beta_{i'}$ for all $i,i'\in N$, we
will refer to the (approximate) equilibrium as \emph{symmetric}. More generally,
when the values are $k$-GSAPV, with groups $N_1,N_2,\ldots,N_k$, we will say
that an (approximate) equilibrium is symmetric with respect to those groups, if
$\beta_i=\beta_{i'}$ for all $i,i'\in N_{k'}$, for all $k' \in [k]$.

\paragraph{Monotone Equilibria.} A very natural class of equilibria that
has been studied extensively in the literature (e.g., see
\citep{Athey2001,maskin2000equilibrium,reny2004existence}) is that of
\emph{monotone equilibria}, in which a bidder's bidding behaviour is
non-decreasing in her value. The version that we define below is the appropriate
generalization for mixed strategies and follows
\citep{mcadams2007monotonicity,fghk24}.

\begin{definition}[Monotone mixed strategies and equilibria in a DFPA]
    \label{def:monotonicty-discrete}
A mixed strategy $\beta_i\in \Delta(B)^{V_i}$ (of bidder $i$) will be called
\emph{monotone} if
$$
\max \support{\beta_i(v)} \leq \min \support{\beta_i(v')}
\qquad
\text{for all}\;\; v,v' \in V_i\;\; \text{with}\;\; v < v'. 
$$
A strategy profile (and hence an equilibrium) will be called monotone, if the
strategies of all bidders in it are monotone. In the case of pure strategies,
the condition above reduces to the bidding function $\beta_i$ being
non-decreasing in the bidder's value. 
\end{definition}
As we will discuss in \cref{sec:representation}, monotone equilibria
are also particularly appealing for reasons related to their representation. 

 \paragraph{No overbidding.} Throughout the paper we will only be interested in
 \emph{no overbidding} equilibria, in which no bidder assigns a positive
 probability to any bid larger than her value. This is standard in the
 literature of the problem (e.g., see
 \citep{maskin2000equilibrium,maskin2003uniqueness,lebrun2006uniqueness}), and
 is motivated by the fact that overbidding strategies are weakly dominated by
 any non-overbidding strategy. Intuitively, bidders should never bid more than
 their value, as in that case their utility would always be non-positive. See
 also \citep{fghk24} for a related discussion.

\subsection{Continuous First-Price Auctions}\label{sec:CFPA-model}

The setup of the continuous (Bayesian) \emph{first-price auction (CFPA)} is very
similar to that of the DFPA introduced in \cref{sec:DFPA-model}, but now the
joint distribution $F$ is (absolutely) continuous, with density $f$, supported
over a (non-discrete) value space $V=V_1\times V_2\times\dots\times V_n
\subseteq [0,1]^n$. The definitions of APV, ($k$-G)SAPV, IPV, and iid extend
naturally from the ones in the DFPA by substituting the pmf with a pdf.

The definitions of expected utilities and equilibria also extend
straightforwardly, the only difference being that now the marginal distributions
$F_i$ are defined as follows:
$$
f_i(v_i)\coloneqq \int_{\vec{v_{-i}}\in \vec{V_{-i}}} f(v_i,\vec{v_{-i}})\,\mathrm{d}\vec{v_{-i}}.
$$

\noindent In turn, the utility of bidder $i$ is now defined as:

\begin{align}
    u_i(b,\vec{\beta}_{-i};v_i) 
        &\coloneq\expect[\vec{v_{-i}}\sim F_{i|v_i}]{\expect[\vec b_{-i}\sim \vec{\beta_{-i}}(\vec{v_{-i}})]{\tilde{u}_i(b,\vec{b_{-i}};v_i)}} \notag\\
        &=  \int_{\vec{v}_{-i}\in \vec{V}_{-i}}f_{i|v_i}(v_{-i})\sum_{\vec{b}_{-i}\in B^{N\setminus\ssets{i}}}\left(\prod_{j\in N\setminus\ssets{i}} \beta_j(v_j)(b_j)\right)\tilde{u}_i(b,\vec{b}_{-i};v_i)\,\mathrm{d}\vec{v_{-i}}.\label{eq:CFPA-utility-interim-mixed}
\end{align}

\noindent The definitions of equilibria are identical to the ones in the
discrete setting, as the continuous priors only appear within the computation of
the expected utility.

\subsection{Representation}\label{sec:representation}

As we are interested in results about the computational complexity of
equilibrium computation in first-price auctions, it is crucial to specify how
the inputs and the outputs of the corresponding computational problems will be
represented.  

\subsubsection{Representation in the DFPA}\label{sec:representation-dfpa}
For the DFPA, we consider the following representation.

\paragraph{Input:} Similarly to prior literature on the topic
\citep{fghlp2021_sicomp,chen2023complexity,fghk24}, we assume that the bidding
space $B$ is given explicitly by listing all possible bids $b \in B$, as
rational numbers $r/q$, with $r\geq 0$ and $q >0$ being two integers given by
their binary representation. The sets $V_1,V_2,\ldots, V_n$ are also provided
explicitly in an identical manner.

For the representation of the value distribution $F$, we will express it via its
pmf $f$ and its support $\support{F}$. Specifically, we will express $F$ as a
finite set of $\cards{\support{F}}$-many $n$-tuples of the form
$\vec{v}=(v_1,v_2,\ldots,v_n)\in \vec{V}$, together with their corresponding
mass $f(\vec{v})$ (which is given by a rational number), one for each point
$\vec{v} \in \support{F}$. Specifically, we will use the notation
$\left(\vec{v}, f(\vec{v})\right)
$
to denote that the value vector $\vec{v}=(v_1,v_2, \ldots, v_n)$ is assigned probability $f(\vec{v})$.

In the case where the priors are $k$-GSAPV, it is appropriate to consider a more succinct representation.
We will assume that we are only given the values of the pmf $f$ over the set $\support{F} \cap \vec{V}_{\geq}$, where 
$$\vec{V}_{\geq} \coloneq \{\vec{v} \in \vec{V} : v_1 \geq \ldots \geq v_{n_1}, \ v_{n_1+1} \geq \dots \geq v_{n_1+n_2}, \ \ldots, \ v_{n_1+\ldots+n_{k-1}+1} \geq \ldots  \geq v_{n_1+\ldots+n_k}\},$$ meaning that the values of bidders in the same group are sorted in non-increasing order.
This suffices to fully determine $f$ by the symmetry of the distribution.
More formally, the distribution is represented by a finite list $(\vec{t}^1,p_1), \dots, (\vec{t}^\ell,p_\ell)$, where the $\vec{t}^j$ are distinct tuples in $\vec{V}_{\geq}$ and the $p_j \in [0,1]$ are the probabilities of their corresponding tuple, i.e., $f(\vec{t}^j) = p_j$.
Additionally, in order for these probabilities to induce a valid symmetric distribution over $\vec{V}$, we must also have $\sum_{j \in [\ell]} m_j p_j = 1$, where $m_j := |\{(t_{\pi(1)}^j, \dots, t_{\pi(n)}^j): \pi \in \perm{n_1,n_2,\ldots,n_k}\}|$ is the number\footnote{We can write $m_j = \frac{n_1!n_2! \dots n_k!}{\prod_{v \in V_1} n_v^1!\prod_{v \in V_2} n_v^2! \dots \prod_{v \in V_k} n_v^k!}$, where $n_v^\ell$ is the number of times that value $v \in V_\ell$ appears in tuple $\vec{t}^j$ in the entries corresponding to group $\ell$.} of distinct, \emph{group-valid} permutations of the tuple $\vec{t}^j$, i.e., permutations which, given the $k$ groups in the $k$-GSAPV setting, only allow for exchanges between the entries corresponding to bidders in the same group.
Given $k$ groups, we denote the set of group-valid permutations of the integers from $1$ to $n$ by $\perm{n_1,n_2,\ldots,n_k}$.

\paragraph{Output:} A mixed strategy of player $i$ will also be explicitly
represented using rational numbers 
\[\{p_i(v,b)\}_{v \in V_i, b \in B} \in [0,1] \ \ \text{with} \ \  \sum_{b \in
B} p_i(v,b) =1 \text{ for all } v \in V_i,\] Here, $p_i(v,b)$ denotes the
probability that bidder $i$ submits bid $b$ when having value $v$. A pure
strategy will be represented in the same way, but for any value $v \in V_i$,
there will be exactly one $b \in B$ for which $p_i(v,b)=1$, and $p_i(v,b)$ will
be $0$ for all the other bids. A mixed strategy profile (and hence, an
equilibrium as well) will be output as a vector of mixed strategies, represented
as above.

\subsubsection{Representation in the CFPA} 
\label{sec:CFPA-represent}

While for the case of the DFPA the representation of inputs and outputs was
mostly straightforward, the appropriate representation for the CFPA is more
intricate, as it involves the representation of continuous objects (both the
value distribution and the bidding functions). The representation that we
present below appropriately generalizes the one used in \citep{fghk24} to the
correlated values setting. 

\paragraph{Input:} 
The bidding space $B$ is given explicitly as above. For the value priors, we
consider distributions over $[0,1]^n$ with density functions of the form
$f(\vec{v}) = \sum_{j=1}^\ell w_j \cdot \mathbbm{1}_{R^j}(\vec{v})$, where the
$R^j \subseteq [0,1]^n$ are hyperrectangles, i.e.,
$$R^j = [a^j_1,b^j_1] \times \dots \times [a^j_n,b^j_n],$$ and $\ell$ is a
positive integer. Thus, the distribution is represented by a list $(R^1,w_1),
\dots, (R^\ell,w_\ell)$, where each hyperrectangle $R^j$ is simply represented
by the numbers $a^j_1, b^j_1, \dots, a^j_n, b^j_n$ as above.

For general group-symmetric instances ($k$-GSAPV), we will consider, without loss of generality, that
the $k$ groups are ordered, such that a value profile $\vec{v}$ specifies the
value of bidders in $N_1$ in its first arguments, followed by the value of
bidders in $N_2$ etc.. We can then consider a more succinct representation of
the distribution where we are given a list $(R^1,w_1), \dots, (R^\ell,w_\ell)$
and the density function is:
\begin{equation}
    \label{eq:represent-CFPA-density-permutations-group}
    f(\vec{v}) = \sum_{j=1}^\ell w_j \sum_{\pi \in p(n, \vec{g})} \mathbbm{1}_{\pi(R^j)}(\vec{v}),
\end{equation}
where we extend our permutation notation to let $\pi(R^j)\coloneqq
[a^j_{\pi(1)},b^j_{\pi(1)}] \times \dots \times [a^j_{\pi(n)},b^j_{\pi(n)}]$
and, we also let $p(n, \vec{g})$ denote set of permutations of $n$ objects being
placed (contiguously) in $k$ groups of size $g_1,g_2,\ldots,g_k$ (in our case we
set $\vec{g}=(n_1,n_2,\ldots,n_k)$), allowing only for permutations of objects
within the \emph{same} group. In particular, for $k=1$ (SAPV),
representation~\eqref{eq:represent-CFPA-density-permutations-group} can be
simply expressed as  
\begin{equation}
    \label{eq:represent-CFPA-density-permutations}
    f(\vec{v}) = \sum_{j=1}^\ell w_j \sum_{\pi \in \perm{n}} \mathbbm{1}_{\pi(R^j)}(\vec{v}).
\end{equation}

Some of our results in \cref{{sec:densification}} will explicitly concern the IID
    setting. In this case, the values are represented even more succinctly as
    follows. The marginal distribution $F_1=F_2=\dots=F_n$ are given to us in
    the input\label{page:IID-represent-internal-lemma} by a set of possible
    values $0=a_0<a_1<a_2<\dots<a_{k-1}<a_{k}=1$ together with their
    corresponding probabilities $\ssets{p_j}_{j\in[k]}\in [0,1]$, such that
    $\sum_{j=1}^k(a_j-a_{j-1})p_j=1$. Then, the marginal's density is given by
    $f_1(x)=p_{j}$ for all $x\in(a_{j-1},a_{j})$, $j\in[k]$.

\paragraph{Output:} For the CFPA, we will only be interested in pure equilibria.
Given that the (pure) bidding strategy of each bidder $i$ needs to specify which
of the (finitely many) bids will be played for each possible value $v \in V_i$,
it is inherently a continuous function, which makes its representation
non-trivial. In fact, we are not aware of how to represent general equilibria of
the CFPA. This is part of the reason why the related literature both in
economics \citep{Athey2001,reny2004existence,maskin2000equilibrium} and in
computer science \citep{fghlp2021_sicomp,chen2023complexity,fghk24} has
restricted attention to monotone equilibria (see
\cref{def:monotonicty-discrete}). For monotone strategies, the literature (e.g.,
see \citep{Athey2001}) has proposed an efficient representation by means of
their \emph{jump points}, i.e., the values $v \in V_i$ for which the strategy of
bidder $i$ changes from a bid to the next. Formally, following
\citep{fghlp2021_sicomp} we define \begin{equation}\label{eq:jump_points}
s_i(b)\coloneqq\sup\sset{v\fwh{\beta_i(v)\leq b}}.
\end{equation}
Intuitively, $s_i(b)$ is the largest value for which bidder $i$'s bid would be
at most $b$. It will be useful to think of these jump points as an ``inverse''
of the bidding strategy, as in that case we obtain $\beta_i(v)=b_{j+1}$ for any
$v \in (s_i(b_j),s_i(b_{j+1}))$, for two consecutive bids $b_j, b_{j+1} \in B$.
A pictorial representation of the bidding strategy is shown in
\cref{fig:inversebidding}.

\begin{figure}[ht]\begin{center}
    \begin{tikzpicture}
        \draw[->] (0,0) -- (5,0);
        \draw[->] (0,0) -- (0,3);
        \node[below] at (5,0) {$v$};
        \node[below] at (3,0) {\scriptsize $s_i(b_{j+1})$};
        \node[below] at (2,0) {\scriptsize $s_i(b_{j})$};
        \node[left] at (0,3) {$\beta_i(v)$};
        \node[left] at (0,1.0) {$b_{j+1}$};
        \node[left] at (0,0.5) {$b_{j}$};
        \coordinate (b1l) at (0,0.0) {};
        \coordinate (b1r) at (1,0.0) {};
        \coordinate (b2l) at (1,0.5) {};
        \coordinate (b2r) at (2,0.5) {};
        \coordinate (b3l) at (2,1.0) {};
        \coordinate (b3r) at (3,1.0) {};
        \coordinate (b4l) at (3,1.5) {};
        \coordinate (b4r) at (4,1.5) {};
        \coordinate (b5l) at (4,2.0) {};
        \coordinate (b5r) at (5,2.0) {};
        \draw[black!50!gray,thick] (b1l) -- (b1r);
        \draw[black!50!gray,thick] (b2l) -- (b2r);
        \draw[black!50!gray,thick] (b3l) -- (b3r);
        \draw[black!50!gray,thick] (b4l) -- (b4r);
        \draw[black!50!gray,thick] (b5l) -- (b5r);
        \draw[dotted] (0,1.0) -- (b3l);
        \draw[dotted] (b3r) -- (3,0.0);
        \draw[dotted] (b2r) -- (2,0.0);
        \draw[black!50!gray,thick,fill=black!50!gray] (b1r) circle (2pt);
        \draw[black!50!gray,thick,fill=black!50!gray] (b2r) circle (2pt);
        \draw[black!50!gray,thick,fill=black!50!gray] (b3r) circle (2pt);
        \draw[black!50!gray,thick,fill=black!50!gray] (b4r) circle (2pt);
        \draw[black!50!gray,thick,fill=black!50!gray] (b5r) circle (2pt);
        \end{tikzpicture}
\end{center}
\caption{A monotone bidding strategy $\beta_i(\cdot)$,  succinctly represented
by its jump points, $s_i(b)$ for $b\in B$.\label{fig:inversebidding}}
\end{figure}
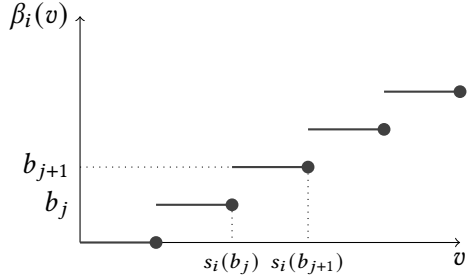 

Given the above, we will concretely represent a monotone pure strategy of bidder
$i$ as a list of jump points $\{s_i(b)\}_{b \in B}$, and a pure strategy profile
(and hence, an equilibrium as well) as a vector of those lists, one for each
bidder.

\subsection{Expected Utility Computation}\label{sec:expected-utility}
Next, we discuss the efficient computation of the bidders' utilities.

The utility of bidder $i$ with value $v_i$ when playing bid $b$ can be written as follows    
\begin{equation}\label{eq:utility}
 u_i(b,\vec{\beta_{-i}}) = (v_i-b) \cdot H_{i}(b, \vec{\beta_{-i}};v_i),
\end{equation}
where $H_{i}(b, \vec{\beta_{-i}};v_i)$ is the probability that bidder $i$ wins
the auction when her value is $v_i$, her bid is $b$, and the other bidders play
according to strategies $\vec{\beta_{-i}}$. In the DFPA setting, we can express
this as follows:

\begin{equation}\label{eq:winning-prob}
    H_{i}(b,\vec{\beta_{-i}};v_i) \coloneqq \sum_{\vec{v_{-i}} \in \vec{V}_{-i}} \frac{\mathbbm{1}\left[b \geq \max_{j \in N \backslash \{i\}}\beta_j(v_j)\right]}{\left| \ssets{j\in N \fwh{\beta_j(v_j)=b} } \right| } \cdot f_{i|v_i}(\vec{v_{-i}})
\end{equation}
Adapting this definition to the CFPA setting is straightforward; it can be
achieved by replacing the pmf with the corresponding pdf and the sum by an
integral.

To study the DFPA and the CFPA as computational problems, it is important to
show how to efficiently compute the quantities $H_{i}(b,\vec{\beta_{-i}};v_i)$
above, and as a result, the bidders' utilities. We establish that in the
following lemma, which we prove in \cref{app:utilities}.

\begin{proposition}\label{lem:efficient-computation} Bidder (interim) utilities
(see~\eqref{eq:utility}) are computable in polynomial time, in all the auction
settings we study in our paper, namely: DFPA, for both general and symmetric
correlated values, with respect to mixed bidding strategies; and CFPA for both
general and symmetric correlated values, with respect to pure bidding
strategies.
\end{proposition}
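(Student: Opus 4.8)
The plan is to prove \cref{lem:efficient-computation} by reducing, in each of the four cases, the computation of the winning probability $H_i(b,\vec{\beta}_{-i};v_i)$ from \eqref{eq:winning-prob} to a sum (or integral) that can be organised into a polynomial number of terms. The key observation is that although the naive sum in \eqref{eq:winning-prob} ranges over the exponentially large set $\vec{V}_{-i}$ (and similarly over exponentially many bid profiles in the utility definition \eqref{eq:DFPA-utility-interim-mixed}), the quantity being summed depends on the strategy profile only through ``macroscopic'' statistics of $\vec{b}_{-i}$, namely the maximum bid and the number of bidders tied at it. I would therefore structure the proof around the two orthogonal issues: (i) summing over bids for a fixed value profile, and (ii) summing over value profiles.

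First I would handle issue (i). Fix a value profile $\vec{v}_{-i}$ and a bid $b$ of bidder $i$. Then bidder $i$ wins iff no $j\neq i$ bids strictly above $b$, and if $t$ of the others bid exactly $b$, bidder $i$'s share is $1/(t+1)$. Since each $\beta_j(v_j)$ is an independent distribution over $B$, I can write, for each $j\neq i$, $q_j^{<}:=\prob{\beta_j(v_j)<b}$, $q_j^{=}:=\prob{\beta_j(v_j)=b}$. Then $\expect[\vec b_{-i}]{\tilde u_i(b,\vec b_{-i};v_i)} = (v_i-b)\cdot \sum_{t\ge 0}\tfrac{1}{t+1}\,e_t$, where $e_t$ is the probability that exactly $t$ of the events ``$j$ bids $b$'' occur while all others bid below $b$. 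The numbers $e_t$ are obtained from a single pass of a dynamic program (a convolution of the $n-1$ two-outcome distributions $(q_j^{<}, q_j^{=}, 1-q_j^{<}-q_j^{=})$), which takes $O(n^2)$ arithmetic operations, and $\sum_t e_t/(t+1)$ is then a polynomial-size sum. This resolves issue (i) for the DFPA in the general correlated case; for the CFPA the inner bid-expectation is identical since bids still come from the finite set $B$.

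Next I would handle issue (ii), the sum/integral over $\vec v_{-i}$. In the DFPA general-correlation case this is trivial: the distribution $F_{i\mid v_i}$ has support of size at most $\cards{\support F}$, which is part of the input, so \eqref{eq:winning-prob} is already a polynomial-size sum, and combining with step (i) gives the result. For the DFPA with (group-)symmetric values one must be more careful, because the succinct representation lists only sorted tuples $\vec t^j$ together with multiplicities $m_j$; here I would exploit exchangeability to argue that the inner bid-expectation of step (i) is invariant under group-valid permutations of $\vec v_{-i}$, so that one can sum over the $\ell$ representative tuples weighted by $m_j$ (with a small correction to account for the fact that we have conditioned on bidder $i$ having value $v_i$, which breaks one coordinate out of its group). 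For the CFPA, the integral $\int_{\vec V_{-i}} f_{i\mid v_i}(\vec v_{-i})\,(\text{term from step (i)})\,\mathrm d\vec v_{-i}$ is over a piecewise-constant density supported on a union of (images under permutations of) $\ell$ hyperrectangles, and crucially the integrand from step (i) is itself piecewise-constant in $\vec v_{-i}$: it only changes when some $v_j$ crosses a jump point $s_j(b')$ of $\beta_j$ for some $b'\in B$. Hence the product of the two piecewise-constant functions is piecewise-constant on a common refinement consisting of polynomially many boxes (the grid generated by the hyperrectangle endpoints and the jump points), and the integral is the corresponding finite weighted sum of box volumes. For the symmetric CFPA one additionally folds in the permutation sum of \eqref{eq:represent-CFPA-density-permutations-group}, of which there are $n!/\prod g_k!$ — but for fixed group structure, or after again using the permutation-invariance of the step-(i) integrand to collapse to representatives, this stays polynomial; I would flag that the cleanest argument restricts the polynomial-time claim to the settings (fixed number of bidders, or symmetric) in which the relevant combinatorial blow-up is controlled, consistent with how the lemma is invoked elsewhere.

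The main obstacle I anticipate is not any single calculation but the bookkeeping in the CFPA symmetric case: one must simultaneously (a) invert the monotone strategies to locate the jump points $s_j(b')$, (b) build the common refinement of the hyperrectangle grid and the jump-point grid, and (c) handle the permutation sum without incurring a super-polynomial factor, ideally by using exchangeability to reduce to a canonical ordering of $\vec v_{-i}$ and keeping track of the number of value profiles mapping to each canonical box. I would therefore devote the bulk of the detailed proof (which, per \cref{lem:efficient-computation}, is deferred to \cref{app:utilities}) to that case, treating the DFPA general case and the CFPA-with-bounded-density-boxes case as warm-ups that already contain all the conceptual ideas of steps (i) and (ii).
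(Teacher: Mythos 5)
Your overall architecture matches the paper's proof in \cref{app:utilities}: your tie-counting convolution in step (i) is exactly the paper's dynamic program for $T_n(b,\ell,k,\cdot)$ built from the per-bidder quantities $g_{j,b}$ and $G_{j,b}$, the DFPA general case is indeed just an explicit sum over the listed support, and your treatment of the (group-)symmetric DFPA via permutation-invariance of the tie-count term plus multiplicities (with the correction for the conditioned coordinate) is precisely the paper's factor $m_j=n_1!\cdots n_{k-1}!(n_k-1)!$. One remark on your hedging at the end: no restriction to a fixed number of bidders or fixed group structure is needed; the paper handles the symmetric permutation sums in closed form (the multiplicity $m_j$, and for the CFPA marginal a count of how often each interval occurs among the coordinates of a hyperrectangle), so the claim holds in full generality for the symmetric representations, assuming---as the appendix lemmas state explicitly and you should too---that $\vec{\beta}_{-i}$ is itself (group-)symmetric, which is what makes the integrand permutation-invariant.

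There is, however, one concrete misstep in your CFPA integration step (ii): the ``common refinement consisting of polynomially many boxes'' does not exist. Refining each of the $\ell$ hyperrectangles by the jump points $s_j(b')$ in every coordinate produces on the order of $(\ell+|B|)^{\,n-1}$ boxes, i.e.\ exponentially many in $n$, so an algorithm that enumerates boxes and sums their volumes is not polynomial-time. The fix is already contained in your step (i): within a single hyperrectangle the conditional density is constant, so the coordinates of $\vec{v}_{-i}$ are independent and uniform on intervals, and for each opponent $j$ you can read off from the jump points the one-dimensional probabilities $q_j^{=}$ and $q_j^{<}$ that her (monotone, pure) strategy bids exactly $b$ or strictly below $b$ on that interval. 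Feeding these per-coordinate probabilities into the same $O(n^2)$ dynamic program gives, for each hyperrectangle, the probability of each tie count $r$, and the utility is then a sum over the polynomially many hyperrectangles (weighted, in the symmetric case, as in \eqref{eq:cfpa-utility-h-group-symmetric}). This per-hyperrectangle product-structure argument is what replaces your box enumeration and is how the paper's \cref{lem:app-cfpa-efficient-utilities} proceeds; without it, your CFPA claim as written does not yield a polynomial bound.
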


\section{(Non-) Existence of Equilibria}\label{sec:existence}

Before we dive into our computational complexity results, we first present some
results about the existence of Bayes Nash equilibria. These are useful to
specify the type of computational results we should be looking for: in cases
where equilibria do not always exist, the appropriate computational question is
to decide their existence, whereas in cases where existence is guaranteed, the
problem becomes a total search problem, and our goal is to provide algorithms
that compute them or hardness results for the appropriate complexity classes.

It is known from the related literature (see \citep{fghk24} and references
therein) that an (approximate) PBNE of the DFPA need not exist, even when the
value priors are iid:
\begin{theorem}[\cite{maskin1985auction,fghk24}]
        \label{th:tight_existence_approx}
        There are instances of the DFPA, even with two bidders and IID settings,
        for which $\varepsilon$-approximate PBNE do not exist, for any
        $\varepsilon$ smaller than a sufficiently small constant. 
\end{theorem}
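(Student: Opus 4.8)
The plan is to prove the theorem by pinning down one fixed instance of the DFPA with $n=2$ bidders and iid values, and then showing directly that it admits no $\varepsilon$-approximate PBNE once $\varepsilon$ drops below some absolute constant $\varepsilon_0>0$. Concretely, I would take $V_1=V_2$ to consist of a small number of values carrying positive probability, at least two of which are ``active'' (i.e.\ the corresponding type is not forced to bid $0$ by no-overbidding), a uniform distribution over these values, and a finite bidding space $B\ni 0$ whose grid is deliberately misaligned with the ``bid just above the opponent'' points. The phenomenon being exploited is the classical one behind \citet{maskin1985auction}: on a discrete bidding space the high type's best reply is to play the smallest grid point strictly above the low type's bid, but that point need not lie in $B$, so the best-response correspondence cannot have a pure fixed point.

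The proof itself is a finite case analysis over pure strategy profiles, organized as follows. First, shrink the search space: by the no-overbidding restriction every bidder with value $v$ plays a bid in $B\cap[0,v]$, and since bidding exactly $v$ yields zero utility it may be taken to be suboptimal, leaving only a constant number of pure profiles (one can prune further using the fact that, on this instance, a best response is weakly monotone in the value). Then, for each surviving profile $\hat{\vec{\beta}}$, I would exhibit a bidder $i$, a value $v_i\in\support{F_i}$, and an alternative bid $b\in B$ with
$$
u_i(b,\hat{\vec{\beta}}_{-i};v_i)\;-\;u_i(\hat\beta_i(v_i),\hat{\vec{\beta}}_{-i};v_i)\;\ge\;\varepsilon_0,
$$
where both utilities are computed from \eqref{eq:utility}--\eqref{eq:winning-prob}. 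Taking $\varepsilon_0$ to be the minimum of these (finitely many) positive gaps and invoking \cref{def:approx-pure-bayes-nash-equilibrium} then shows that no $\hat{\vec{\beta}}$ is an $\varepsilon$-PBNE for $\varepsilon<\varepsilon_0$, which with two bidders and iid values is exactly the claim.

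The delicate point --- and the reason the theorem asserts more than the non-existence of \emph{exact} equilibria --- is securing a \emph{uniform} constant lower bound $\varepsilon_0$ on the improving-deviation gains across all profiles. Ruling out exact PBNE only requires that each profile have \emph{some} profitable deviation; making the failure robust to approximation forces the instance's numerical parameters (the values, the bid grid, the probabilities) to be chosen with enough slack that none of these gaps degenerates to $o(1)$. The main care is therefore in the choice of the instance and in checking the bound in the few ``boundary'' profiles --- typically those where the low type pools with the high type, or where the two bidders play asymmetric (non-monotone) strategies --- since that is where the gap is tightest; the remainder of the case analysis is routine arithmetic with the utility formula.
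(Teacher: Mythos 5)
The paper does not actually prove this statement---it is imported from \citet{maskin1985auction} and \citet{fghk24}---so the relevant comparison is with the argument in those works, and your plan is essentially that argument: fix one small two-bidder IID instance, enumerate the (finitely many) pure no-overbidding profiles, exhibit a profitable deviation in each, and take $\varepsilon_0$ to be the minimum gap. Your framing of the key subtlety is also right: exact non-existence only needs \emph{some} strictly improving deviation per profile, whereas approximate non-existence needs the gaps to be bounded below by a constant, which for a single fixed instance follows automatically from finiteness once each gap is strictly positive.

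The gap is that the only nontrivial content of the theorem---the existence of an instance in which \emph{every} pure no-overbidding profile admits such a deviation---is asserted rather than exhibited. Your generic description (``a few values, uniform probabilities, a bid grid misaligned with the overcut points'') does not by itself guarantee this; the case analysis can easily close up. For example, with a single active value and a two-point grid $B=\{0,b_1\}$, the profile in which both bidders always bid $b_1$ is an exact PBNE, since the only downward deviation yields zero utility; the overcut/undercut cycling you invoke only materializes when the value distribution and grid are tuned so that at the candidate pooling bid the tie-splitting payoff is strictly dominated by overcutting, while at the next bid up undercutting strictly dominates, and similarly for the asymmetric profiles. So the proposal is a correct schema following the same route as the cited proof, but it stops exactly where the work starts: you would still need to write down concrete values, probabilities and bids (as \citet{fghk24} do) and verify the handful of boundary cases you mention, since nothing in the sketch certifies that such a choice with uniformly constant gaps exists.
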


Since IID is a subclass of all the priors that we consider (see
\cref{fig:bayesian_priors}), this immediately implies the same non-existence
result for all the other classes as well. Given this, the equilibrium
computation problem for PBNE is a decision problem, which we settle for general
correlated values in \cref{sec:np-hardness} below. The natural follow-up
question is to consider the MBNE of the auction. The next theorem follows from
known results about mixed equilibria of general Bayesian games (e.g., see
\citep[Sec.~7.2.3]{Jehle2001a} and the discussion in \citep{fghk24}). 

\begin{theorem}[Existence of MBNE for general correlated
values]\label{thm:MBNE-existence-DFPA-general_correlated} For the DFPA with
general correlated values, a MBNE always exists. 
\end{theorem}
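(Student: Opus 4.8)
The statement is a classical consequence of Nash's existence theorem applied to the \emph{agent-normal-form} (type-agent) representation of the Bayesian game, in the spirit of \citet{harsanyi1967games} (see also \citep[Sec.~7.2.3]{Jehle2001a}). The plan is to build a finite normal-form game $\Gamma$ whose Nash equilibria are exactly the (no-overbidding) MBNE of the given DFPA, and then invoke the finite-game Nash theorem. Fix a DFPA with bidder set $N=[n]$, finite value spaces $\{V_i\}_{i\in N}$, finite bidding space $B\ni 0$, and joint prior $F$ over $\vec{V}$. The players of $\Gamma$ are the pairs $(i,v_i)$ with $i\in N$ and $v_i\in\support{F_i}$ — so that $f_i(v_i)>0$ and the conditional $F_{i\mid v_i}$ of \eqref{eq:discrete-conditional} is well defined — and the pure actions of player $(i,v_i)$ are the bids in $B\cap[0,v_i]$, a nonempty set since $0\in B$. (Restricting to these bids is what will make the equilibrium we extract no-overbidding, as required throughout the paper.) A mixed strategy of $(i,v_i)$ is thus a point of the simplex $\Delta(B\cap[0,v_i])$, and a mixed strategy profile of $\Gamma$ carries exactly the data of a choice $\beta_i(v_i)$ for every $i\in N$ and $v_i\in\support{F_i}$, i.e., the relevant part of a DFPA strategy profile $\vec\beta$ (values outside $\support{F_i}$ are irrelevant to the equilibrium condition).

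I would assign player $(i,v_i)$ the payoff $u_i(\beta_i(v_i),\vec\beta_{-i};v_i)$, its interim utility under $\vec\beta$ from \eqref{eq:DFPA-utility-interim-mixed}. Two observations make the Nash theorem directly applicable. First, this payoff is genuinely a function of the profile of agent strategies alone: since behavioural strategies are independent across types by definition, the expectation in \eqref{eq:DFPA-utility-interim-mixed} touches the other players only through the quantities $\beta_j(v_j)$, with the correlation in $F$ entering solely via the fixed conditional weights $f_{i\mid v_i}(\vec v_{-i})$. Second, $u_i(\cdot,\vec\beta_{-i};v_i)$ is linear in player $(i,v_i)$'s own mixed strategy and multilinear — hence continuous — in the strategies of the other players. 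As $\Gamma$ has finitely many players, each with finitely many pure actions, Nash's theorem produces a Nash equilibrium $\vec\beta^\star$ of $\Gamma$.

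It then remains to check that $\vec\beta^\star$, extended arbitrarily (say by the bid $0$) to values $v_i\notin\support{F_i}$, is an exact MBNE in the sense of \cref{def:approx-mixed-bayes-nash-equilibrium}. The Nash condition at player $(i,v_i)$ gives $u_i(\beta_i^\star(v_i),\vec\beta^\star_{-i};v_i)\ge u_i(b,\vec\beta^\star_{-i};v_i)$ for all $b\in B$ with $b\le v_i$; and for a bid $b\in B$ with $b>v_i$, \eqref{eq:utility} gives $u_i(b,\vec\beta^\star_{-i};v_i)=(v_i-b)\,H_i(b,\vec\beta^\star_{-i};v_i)\le 0\le v_i\,H_i(0,\vec\beta^\star_{-i};v_i)=u_i(0,\vec\beta^\star_{-i};v_i)$ (using $v_i\ge 0$ and $H_i\ge 0$), so such a deviation is also no improvement over the already-available bid $0$. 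Hence no pure deviation $b\in B$ beats $\beta_i^\star(v_i)$ for any $i\in N$ and $v_i\in\support{F_i}$, which by \cref{note:MBNE-def-pure-deviation} is exactly condition \eqref{eq:MBNE-def-condition-full} with $\varepsilon=0$. Thus $\vec\beta^\star$ is a (no-overbidding) MBNE, proving existence.

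I do not expect a real obstacle here — the result is standard — but the one point to state carefully is the reduction to a game with \emph{independent} agents: general correlation of $F$ does not obstruct the agent-normal-form decomposition, precisely because a bidder's interim utility depends on the other bidders only through their type-indexed, independently chosen behavioural strategies, never through a joint randomization device. The only other thing to get right is the bookkeeping around no-overbidding: restricting each agent's action set to $B\cap[0,v_i]$ and then verifying, via \eqref{eq:utility}, that overbidding is dominated by bidding $0$, so that the equilibrium we extract respects the standing no-overbidding convention.
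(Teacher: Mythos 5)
Your proposal is correct and is essentially the same argument the paper relies on: the paper proves this theorem by citing the standard existence result for finite Bayesian games (Nash's theorem applied to the type-agent/strategic-form representation, as in \citep[Sec.~7.2.3]{Jehle2001a}), which is exactly what you spell out. Your additional bookkeeping — restricting each type-agent's actions to $B\cap[0,v_i]$ and checking that overbids are dominated by bidding $0$ — correctly ensures the extracted equilibrium respects the paper's standing no-overbidding convention.
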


\subsection{Monotone Equilibria}
We next turn our attention to monotone equilibria. As we mentioned earlier,
these are very natural and well-studied in the literature. Based on the
existence result of \citet{Athey2001} for the PBNE of the CFPA, and the
techniques developed in \citet{fghk24}, one can  derive the following existence
result.\footnote{More precisely, \citet{fghk24} proved an equivalence between
approximate PBNE of the CFPA and approximate MBNE of the DFPA.
\citeauthor{Athey2001}'s result establishes the existence of an exact PBNE for
the CFPA. This, together with the aforementioned equivalence and an appropriate
limit argument, can be used to show
\cref{thm:MBNE-existence-DFPA-general_correlated}. In fact, \citet{fghk24} did
exactly this to prove existence of the MBNE for the DFPA with IID values; the
extension to the case of IPV values is straightforward.}

\begin{theorem}[\citep{fghk24}]\label{thm:MNBE-monotone-existence-DFPA-IPV}
For the DFPA with independent private values (IPV), a monotone MBNE always exists. 
\end{theorem}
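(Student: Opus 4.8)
The plan is to reduce to the continuous-value setting and invoke the classical monotone-equilibrium existence theorem of \citet{Athey2001}, exactly along the lines sketched in the footnote preceding \cref{thm:MBNE-existence-DFPA-general_correlated}. Fix a DFPA instance $G$ with IPV priors, marginals $F_1,\dots,F_n$ over discrete value spaces $V_1,\dots,V_n$, and discrete bidding space $B$. For a parameter $\delta>0$ I would construct a CFPA instance $G_\delta$ that keeps the \emph{same} bidding space $B$ but replaces the discrete priors by continuous IPV priors: each atom of mass $f_i(v)$ of $F_i$ is ``smeared'' onto a short interval of width $\delta$ centred at $v$, and an extra $\delta$-fraction of the uniform density on $[0,1]$ is mixed in. The resulting marginal densities are piecewise constant and bounded below by $\delta>0$, so $G_\delta$ has full support; being a product distribution it is in particular affiliated, hence \citeauthor{Athey2001}'s theorem applies and yields a monotone exact pure Bayes--Nash equilibrium $\beta^\delta$ of $G_\delta$ (which, since overbidding is weakly dominated, we may take to be non-overbidding).

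Next I would carry $\beta^\delta$ back to the discrete world through the equivalence between PBNE of the CFPA and MBNE of the DFPA established by \citet{fghk24}. A monotone pure strategy of $G_\delta$, restricted to the length-$\delta$ interval smeared out of a single value $v\in V_i$, prescribes a non-decreasing sequence of bids in $B$; averaging over that interval turns it into a distribution over $B$, i.e.\ a monotone \emph{mixed} action at value $v$, and doing this for every bidder and value produces a monotone non-overbidding mixed profile $\hat\beta^\delta$ of $G$. One then checks that the interim utilities \eqref{eq:DFPA-utility-interim-mixed} of $G$ and the corresponding utilities of $G_\delta$ agree up to an error $\varepsilon(\delta)$ with $\varepsilon(\delta)\to 0$ as $\delta\to 0$ (this is where the smearing width and the mass of added uniform noise are tuned), so that $\hat\beta^\delta$ is a monotone non-overbidding $\varepsilon(\delta)$-approximate MBNE of $G$.

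Finally I would pass to the limit $\delta\to 0$. The set of monotone non-overbidding mixed strategy profiles of $G$ is a closed, hence compact, subset of the product of simplices $\prod_{i\in N}\Delta(B)^{V_i}$: the non-overbidding condition ($\beta_i(v)(b)=0$ for $b>v$) is clearly closed, and the monotonicity condition of \cref{def:monotonicty-discrete} survives limits, since if $b\in\support{\beta_i(v)}$ then $b$ lies in the support of every sufficiently nearby profile, whence $\max\support{\beta_i(v)}\le\min\support{\beta_i(v')}$ follows from the same inequality for the approximants. Extracting a convergent subsequence $\hat\beta^{\delta_k}\to\beta^\star$ with $\delta_k\to0$, and using that the interim utilities are polynomials in the bidding probabilities (see \eqref{eq:DFPA-utility-interim-mixed} and \cref{lem:efficient-computation}) and hence continuous, we may pass to the limit in the $\varepsilon(\delta_k)$-equilibrium inequalities \eqref{eq:MBNE-def-condition-full} to conclude that $\beta^\star$ is an exact MBNE of $G$, monotone and non-overbidding as a limit of such profiles.

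The main obstacle is the middle step: making the passage between $G_\delta$ and $G$ quantitative, i.e.\ bounding the discrepancy of the winning-probability functions $H_i$ (see \eqref{eq:winning-prob}) uniformly over monotone profiles by a quantity vanishing with $\delta$. Care is required because $H_i$ is sensitive to ties, so one must argue that the smeared continuous instance—with the bidding grid $B$ held fixed—reproduces the discrete tie structure in the limit, and that the auxiliary $\delta$-mass of uniform density (inserted purely to meet \citeauthor{Athey2001}'s full-support hypothesis) perturbs every interim utility by only $O(\delta)$. This machinery is already developed by \citet{fghk24} for the IID case, and since per-coordinate smearing manifestly preserves independence of the marginals, the extension to IPV is essentially book-keeping.
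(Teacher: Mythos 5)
Your proposal is correct in substance and follows the same skeleton as the paper's (cited) argument: smear the discrete priors into piecewise-constant continuous ones, invoke \citeauthor{Athey2001}'s existence theorem for the resulting CFPA, transfer the monotone PBNE back to a monotone $\varepsilon(\delta)$-approximate MBNE of the DFPA (this is exactly the role of \cref{lem:discrete-to-continuous-and-back-for-existence}), and finish with a compactness/continuity limit argument as $\delta\to 0$ — which is precisely the route the footnote attributes to \citet{fghk24}. The one genuine deviation is your injection of a $\delta$-fraction of uniform noise to force full support so that \cref{thm:athey-existence-affiliated} applies verbatim. This is legitimate here only because per-marginal mixing preserves independence (and hence affiliation); for correlated priors it is exactly the move the paper rules out. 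It is also avoidable: under IPV the winning probability $H_i$ does not depend on $v_i$, so the single-crossing condition holds without any full-support assumption, and the noiseless smeared instance already admits a monotone PBNE (this is what \citet{fghk24} use, and what the paper's own \cref{thm:MNBE-monotone-existence-DFPA-APV} generalizes). The price of your noise is that the exact conditional-distribution matching of \eqref{eq:equivalent-distributions} becomes only approximate, so you must additionally bound the total-variation perturbation of the opponents' bid distributions (an $O(n\delta)$ estimate that is easy under independence but is \emph{not} part of the fghk24 machinery you defer to). Two small caveats: the claim that the Athey equilibrium "may be taken to be non-overbidding" is not automatic from the black-box statement (one should either restrict each bidder's action set to bids at most her value before applying the fixed-point argument, as the paper does via the set $D$, or note that non-overbidding is not actually needed for your limit step, since monotonicity alone is closed); and the transfer step should be phrased for a.e.\ values in each smeared interval, since the continuous equilibrium is only guaranteed to best-respond almost everywhere.
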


So, could we hope to extend the existence result of
\cref{thm:MNBE-monotone-existence-DFPA-IPV} to the class of general correlated
values? Below, we provide a counterexample, which establishes that monotone
equilibria in this case need not exist. The counterexample is inspired by an
instance used by \citet[Example~1, p.~100]{jackson2005existence} to show
non-existence of pure Bayes-Nash equilibria in first-price auctions in which
both the value space and the bidding space are continuous. 

\begin{proposition}\label{thm:MBNE-monotone-non-exist} There are instances of
    the DFPA with general correlated values, even with $2$ bidders and a value
    space of size $3$, for which monotone MBNE do not exist. 
\end{proposition}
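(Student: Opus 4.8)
The plan is to exhibit an explicit two-bidder DFPA with correlated values, with value space $\{v^{(1)} < v^{(2)} < v^{(3)}\}$ for some small common set of three values, and argue directly that no monotone MBNE can exist. The construction should be an adaptation of Jackson--Swinkels-type instances: the correlation is engineered so that conditioning on a low value makes the opponent very likely to also have a low value, whereas conditioning on a high value makes the opponent likely to have a high value, and the arithmetic of the values and bids is rigged so that the two ``sides'' push a bidder's best response in incompatible directions. Concretely, I would pick the joint distribution $F$ supported on a few tuples so that the marginals are identical (symmetry is not needed, but convenient), with $f$ chosen so the conditional $F_{i\mid v^{(1)}}$ is concentrated on the opponent also having value $v^{(1)}$ (so the low-value type wants to bid essentially $0$, since there is practically no competition), while $F_{i\mid v^{(3)}}$ is concentrated on the opponent having value $v^{(3)}$.

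The core of the argument is a tie-breaking/undercutting contradiction, which is why the instance must use the \emph{uniform} tie-breaking rule (matching \cref{eq:ex_post_utilities}). Monotonicity forces the bid-supports of the three types to be ordered: $\max\support{\beta_i(v^{(1)})}\le \min\support{\beta_i(v^{(2)})}\le \max\support{\beta_i(v^{(2)})}\le\min\support{\beta_i(v^{(3)})}$. The key steps are: (i) show the low type $v^{(1)}$ must put all mass on bid $0$ (or a bid so small that the ``no overbidding'' constraint combined with negligible competition makes any positive bid strictly worse by more than $\varepsilon$); (ii) this pins down where the middle and high types may bid; (iii) analyze the high-value type $v^{(3)}$ facing (mostly) an opponent of value $v^{(3)}$: in any candidate equilibrium profile, either there is an atom in the opponent's high-type bid distribution --- in which case the bidder strictly prefers to shift mass just above that atom to avoid splitting via uniform tie-breaking, contradicting equilibrium --- or there is no atom, which on a discrete bidding space is impossible since the opponent's high type plays a distribution over finitely many bids and hence has an atom somewhere in its support; (iv) conclude that the required monotone best-response structure cannot be simultaneously satisfied, so no monotone MBNE exists. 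One then also checks the statement for all sufficiently small $\varepsilon>0$, i.e.\ that the strict inequalities above have a constant gap, so the non-existence is robust to approximation, consistent with how \cref{thm:MBNE-monotone-non-exist} is phrased via the preceding results.

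I would organize the write-up as: first give the explicit instance (values, bids, the list of $(\vec v, f(\vec v))$ pairs, verifying $\sum f(\vec v)=1$ and computing the marginals and the conditionals $f_{i\mid v_i}$); then a lemma constraining the low type's bid; then the main case analysis on the high type's support using the uniform-tie-breaking undercutting argument; then assemble the contradiction. Throughout I would rely on \eqref{eq:utility}--\eqref{eq:winning-prob} for the winning probabilities and on \cref{def:monotonicty-discrete} for the ordering of supports.

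The main obstacle I anticipate is getting the numbers to work \emph{simultaneously}: one must choose $v^{(1)},v^{(2)},v^{(3)}$, the bidding space $B$, and the probabilities in $f$ so that (a) affiliation is \emph{not} satisfied (otherwise \cref{thm:MNBE-monotone-existence-DFPA-IPV}-style arguments via \citet{Athey2001} would give existence), (b) the low type is strictly forced to bid $0$, (c) the middle type's presence does not provide an ``escape'' strategy profile for the high type, and (d) the undercutting gap for the high type is a genuine constant, so the $\varepsilon$-robustness holds. Balancing (b) and (d) against the discreteness of $B$ (only finitely many bids, so undercutting moves by a discrete step, not infinitesimally) is the delicate part, and is presumably why the instance needs general correlation rather than merely a failure of affiliation in a symmetric instance; I would expect to choose $B$ with enough bids that the high type's optimal ``undercut target'' lies strictly inside $B$, while keeping $|B|$ and $|V_i|=3$ as small as the claim allows.
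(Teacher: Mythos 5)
There is a genuine gap, and it lies in the core mechanism. Your instance correlates values \emph{positively} (low type believes the opponent is low, high type believes the opponent is high), and your contradiction is the classical atom-undercutting dichotomy: either the opponent's high type has an atom, in which case one profitably jumps just above it to avoid the uniform tie-split, or there is no atom, which is impossible over a finite $B$. This dichotomy cannot be valid in a discrete bidding space: jumping "just above" an atom costs a full bid increment, and whether that is profitable depends on the numbers; if the argument worked as stated it would rule out \emph{all} MBNE of such instances, not just monotone ones (your step (iii) never actually uses monotonicity), contradicting \cref{thm:MBNE-existence-DFPA-general_correlated}. Moreover, with your positively-correlated design the auction essentially decomposes into a high-vs-high and a low-vs-low interaction; the high types then admit a symmetric mixed equilibrium of the induced finite game supported on higher bids, the low types bid $0$, and the resulting profile is typically monotone — positive correlation is precisely the regime in which monotone equilibria tend to exist (cf.\ \cref{thm:MNBE-monotone-existence-DFPA-APV} and the paper's discussion of monotonicity versus correlation). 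Nothing in your construction \emph{forces} the high type into a single low atom, so the contradiction never materializes.

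The paper's proof uses the opposite, anti-correlated structure, and that is what makes the squeeze work: values are uniform on $\{(0,1),(1/2,1/2),(1,0)\}$ with $B=\{0,1/10,\dots,1\}$. A bidder with value $1$ knows her opponent has value $0$ and hence (by no overbidding) bids $0$, so her unique best response is the minimal positive bid $1/10$; monotonicity then caps the support of the middle type $1/2$ at $\{0,1/10\}$, and since bidding $0$ is strictly dominated against an opponent supported in $\{0,1/10\}$, both middle types must tie at $1/10$ — but then deviating to $2/10$ yields $3/10 > 1/5$, a strict improvement, so no monotone MBNE exists. If you want to salvage your write-up, the fix is to redesign the prior so that a \emph{higher} own value signals a \emph{weaker} opponent (pinning the top type at a low bid), and let monotonicity, rather than an atom-avoidance argument, force the middle type into the unprofitable tie.
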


\begin{proof}
    Consider an instance of the DFPA with two bidders, whose values $(v_1,v_2)$
    are uniformly distributed over the set $\{(0,1), (1/2,1/2), (1,0)\}$, and
    let the bidding space be $B=\{0,1/10,2/10,\ldots,1\}$. Assume by
    contradiction that $(\beta_1, \beta_2)$ is a monotone MBNE. We will analyse
    the equilibrium strategy of bidder $1$, depending on the value that she
    observes (the analysis for bidder $2$ is symmetric by design in our
    example).
    \begin{itemize}
        \item[-] Case 1: $v_1=0$. In this case bidder $1$ will play the pure
        strategy which bids $0$, due to the no-overbidding assumption.
        \item[-] Case 2: $v_1=1$. In this case bidder $1$ knows that $v_2=0$
        (considering the marginal distribution conditioned on $v_1=1$) with
        probability $1$ and, therefore, bidder $2$ plays the pure strategy which
        bids $0$. It is then straightforward to verify that the unique best
        response of bidder $1$ is to play the pure strategy which bids $1/10$,
        winning the item (without a tie) at the lowest price possible. 
    \end{itemize}
    Since $\beta_1$ is a monotone strategy, we know that for any $v<1$, it
    holds that $\support{\beta_1(v)} \subseteq \{0,1/10\}$; in particular, this
    implies that $\support{\beta_1(1/2)} \subseteq \{0,1/10\}$. Notice that,
    when having value $1/2$ and competing against any strategy supported in
    $\{0,1/10\}$, the pure strategy of bidding $0$ is strictly dominated by the
    pure strategy of bidding $1/10$, therefore it cannot be played with positive
    probability at any mixed equilibrium. Hence, at any monotone MBNE, the
    strategy of each player could only be supported in $\{1/10\}$, meaning it
    would have to be the pure strategy of bidding $1/10$. But this cannot
    satisfy the equilibrium condition, since the pure strategy of bidding $2/10$
    would yield strictly higher utility to the deviating bidder, contradicting
    the assumption that $(\beta_1,\beta_2)$ was a monotone MBNE.
\end{proof}

We remark that monotonicity is a key property for the above counterexample.
Indeed, in this particular instance, one can prove that not only non-monotone
MBNE exist (which is guaranteed by
\cref{thm:MBNE-existence-DFPA-general_correlated} in any case) but in fact, even
non-monotone PBNE exist. 

\paragraph{Monotonicity and correlation.} A closer inspection of the proof of
\cref{thm:MBNE-monotone-non-exist} reveals an interesting relation between
monotonicity and the correlation of the bidders' values. When the value of
bidder 1 is $1$, her best response is to bid $1/10$, as the value of bidder 2
(and hence, her bid) is $0$. However, when the value of bidder 1 is $1/2$, by
the correlation in the values, the value of bidder 2 is also $1/2$, and hence
bidding at most $1/10$ (as stipulated by a monotone strategy) is \emph{not} a
reasonable choice. This is because the values are anti-correlated in the joint
distribution; in such cases, monotonicity does not seem to be a reasonable
assumption. This is in contrast to the IPV setting in which monotonicity is a
very natural property, and in fact non-monotone strategies are weakly dominated
by monotone ones. Generalizing the IPV setting, monotonicity would only make
sense in the presence of (weakly) positive correlation between the values; the
most fundamental and well-studied such type of correlation is the setting of
affiliated private values, considered below.\footnote{We remark that
\citet{MW82} and \citet{Athey2001} also consider settings with more general
affiliated values, beyond the APV setting; these are outside the scope of our
work. We also remark that even in the regime of private values, affiliation is
not the only condition that ensures (weakly) positive correlation. However,
besides being the most popular such condition, it is also one of the few ones
for which the existence of a monotone equilibrium is guaranteed, see
\citep{castro2007affiliation} for a very interesting discussion.}

\subsection{Affiliated Private Values}

Recall the definition of the affiliation condition \eqref{eq:affiliation} from
\cref{sec:DFPA-model}. Intuitively speaking, when the values are affiliated,
then a higher value for one bidder implies that it is more likely that the other
bidders will have higher values as well. Affiliation is the form of correlation
that has predominantly been studied in the literature of the problem
\citep{Athey2001,MW82,krishna2009auction}. Most relevant to us is the following
result of \citet{Athey2001} for the PBNE of the CFPA. 

\begin{theorem}[\citep{Athey2001}]\label{thm:athey-existence-affiliated} For the
CFPA with affiliated private values (APV), a monotone PBNE always exists, when
the joint probability distribution has strictly positive density.  
\end{theorem}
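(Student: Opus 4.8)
The plan is to reconstruct the argument of \citet{Athey2001}, specialized to our setting where the bidding space $B=\{b_1<b_2<\dots<b_m\}$ is finite (so this is the ``finite action'' case of her theorem). We restrict attention to monotone pure strategies. By \eqref{eq:jump_points}, such a strategy of bidder $i$ is completely described by its jump-point vector $(s_i(b))_{b\in B}$, and the set of admissible vectors is the \emph{order polytope} $P_i=\{(s_1,\dots,s_m): 0\le s_1\le s_2\le\dots\le s_m\le 1\}$, which is compact and convex. On $\prod_i P_i$ we define the best-response correspondence $\mathrm{BR}$ sending a profile of monotone opponent strategies to the (product over $i$ of the) sets of bidder-$i$ monotone best responses, and we aim to obtain an equilibrium as a fixed point of $\mathrm{BR}$ via Kakutani's theorem. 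For this it suffices to check that $\mathrm{BR}$ is nonempty-valued, convex-valued, and has closed graph.

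For nonemptiness we use affiliation. Strict positivity of $f$ guarantees that the conditionals $F_{i\mid v_i}$ are well defined for every $v_i$, and the $\mathrm{MTP}_2$ property passes to them and orders them in $v_i$ in the monotone-likelihood-ratio sense. A standard computation going back to \citet{MW82} then shows that, when the opponents use monotone pure strategies, the interim winning probability $H_i(b,\vec{\beta}_{-i};v_i)$ is log-supermodular in $(b,v_i)$; since $(v_i-b)$ is also log-supermodular on the region where it is positive, the interim utility $u_i(b;v_i)=(v_i-b)\,H_i(b,\vec{\beta}_{-i};v_i)$ satisfies the Milgrom--Shannon single-crossing property in $(b;v_i)$. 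Hence the correspondence $v_i\mapsto\argmax_{b\in B}u_i(b;v_i)$ is monotone in the strong set order, so it admits a non-decreasing selection, which is exactly a monotone best response. Some care is needed here because the uniform tie-breaking rule makes $H_i$ a weighted sum of indicator terms rather than a single c.d.f.; one verifies that each term retains the log-supermodularity needed.

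For convex-valuedness --- the delicate point --- single crossing gives extra structure: for each bid $b_k$ the set $A_k\coloneqq\{v: b_k\in\argmax_{b}u_i(b;v)\}$ is an \emph{interval} $[L_k,U_k]$, being the intersection of the up-set $\{v:u_i(b_k;v)\ge u_i(b_\ell;v)\}$ over $\ell<k$ with the corresponding down-sets over $\ell>k$; moreover the nonempty $A_k$ increase in $k$ and cover $[0,1]$. A jump-point vector lies in $\mathrm{BR}$ iff for every $k$ with $s_{k-1}<s_k$ one has $[s_{k-1},s_k]\subseteq A_k$, and using monotonicity of $L_k,U_k$ in $k$ along nonempty $A_k$ a short midpoint computation shows these containments are preserved under convex combinations. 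For the closed graph, since the value distribution is atomless with strictly positive density, the map (opponents' jump points, $v_i$) $\mapsto u_i(b;v_i)$ is jointly continuous for each $b$ (perturbing a jump point changes the relevant integration region by a set of vanishing $F_j$-measure, and $F_{i\mid v_i}$ varies continuously in $v_i$), so the thresholds $L_k,U_k$ move continuously and the containment description makes $\mathrm{BR}$ upper hemicontinuous. Kakutani then yields a monotone PBNE; the no-overbidding requirement is automatic since bidding at or above one's value is weakly dominated by bidding $0\in B$, so a best response can always be taken non-overbidding.

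I expect the main obstacle to be the convex-valuedness of $\mathrm{BR}$: establishing that each optimal-bid set $A_k$ is an interval and that this interval structure is preserved under convex combinations of jump-point vectors is exactly where the single-crossing property must be exploited --- without it the set of monotone best responses need not be convex, which is the classical obstruction to pure-strategy fixed-point arguments. A secondary technical nuisance is carrying the log-supermodularity of $H_i$ through the uniform tie-breaking terms. Finally, note that strict positivity of the density is used both to define the interim beliefs $F_{i\mid v_i}$ everywhere and in the continuity argument; relaxing it (as the present paper does, for piecewise-constant densities) requires genuinely new work precisely because the conditionals can become ill-defined on null sets.
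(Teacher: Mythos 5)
Your reconstruction is correct and follows essentially the same route as the source: this statement is quoted from \citet{Athey2001} (the paper does not reprove it), and your argument — jump-point representation of monotone strategies on an order polytope, log-supermodularity of $H_i$ under affiliation giving the single-crossing property, hence nonempty, convex-valued, closed-graph monotone best responses and a Kakutani fixed point — is exactly Athey's proof and the same framework the paper itself reuses (with the non-overbidding constraints built into $D$) when it later extends the result to piecewise-constant densities without full support via the weaker Forward-SCC. The only cosmetic difference is that the paper restricts the correspondence to non-overbidding strategies from the outset rather than arguing no-overbidding ex post, and your convexity step is where full support is genuinely used (to force optimality of $b_k$ somewhere on the nondegenerate interval), matching the paper's observation that this is precisely what breaks without strictly positive density.
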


The strict positivity of the density is a rather restrictive assumption, both as
a standalone condition for the CFPA, but also from a technical perspective. To
see this, consider the quest of obtaining a similar existence result for the
monotone MBNE of the DFPA. To achieve that, one can follow the blueprint laid
out by \citet{fghk24}, which connects those equilibria with the PBNE of the CFPA
in the IPV setting. The idea in \citep{fghk24} is a reduction from the discrete
to the continuous variant, via a simulation of the discrete priors $F_i$ in the
DFPA by piecewise constant continuous priors $F'_i$ in the CFPA, such that a
PBNE in the CFPA and a MBNE in the DFPA induce the same distribution over bids.
Then, one can invoke an existence theorem for PBNE of the continuous variant and
obtain the existence of MBNE in the discrete variant. This reduction is only
\emph{approximate}, i.e., when used verbatim it can only guarantee the existence
of $\varepsilon$-approximate MBNE of the DFPA, for all $\varepsilon >0$. The
final step is to use an appropriate limit argument to obtain the result for
exact equilibria (i.e., $\varepsilon =0$). 

We can indeed construct such a reduction for the APV setting, which we state in
\cref{lem:discrete-to-continuous-and-back-for-existence} below. Crucially
however, this reduction constructs a continuous distribution in the CFPA which
\emph{fundamentally} needs to have parts with zero density; indeed, one could
attempt to ``smoothen'' the distribution by artificially adding a small amount
of mass to the zero parts, but this would inherently ``break'' the affiliation
condition. Given this, \citeauthor{Athey2001}'s result cannot be used to obtain
the existence of monotone MBNE of the DFPA.

The only way around this obstacle is seemingly to prove a corresponding
existence theorem for the CFPA without the positive density assumption. This
however imposes certain challenges, the main one being that the \emph{single
crossing condition (SCC)} of \citet{milgrom1994monotone} that
\citeauthor{Athey2001}'s proof heavily relies on is not satisfied in this case.
We circumvent this obstacle, by defining a weaker property, which we refer to as
\emph{Forward-SCC}. While our setting with possibly zero densities now satisfies
the Forward-SCC, showing that \citeauthor{kakutani1941generalization}'s fixed
point theorem [\citeyear{kakutani1941generalization}] can still be applied under
this weaker condition becomes more challenging, in particular when arguing
convexity of the best-response sets. We manage to establish the desired
convexity for a certain general class of distributions with piecewise-constant
density functions. Luckily, the distribution that our reduction in
\cref{lem:discrete-to-continuous-and-back-for-existence} constructs is in this
class, and we obtain the existence of a monotone MBNE in the DFPA as a
corollary.
 
\begin{theorem}\label{thm:MNBE-monotone-existence-DFPA-APV} For the CFPA with
affiliated private values (APV), and a distribution with piecewise constant
density, a monotone PBNE always exists. As a result, in the DFPA with affiliated
private values, a monotone MBNE always exists. Furthermore, if the values are
symmetric ($k$-GSAPV, in particular SAPV), then a monotone \emph{symmetric} equilibrium is guaranteed to
exist (in both CFPA and DFPA).
\end{theorem}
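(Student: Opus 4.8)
The plan is to adapt \citeauthor{Athey2001}'s fixed-point argument (behind \cref{thm:athey-existence-affiliated}) so that it tolerates densities vanishing on part of the value space, and then transfer the conclusion to the DFPA through \cref{lem:discrete-to-continuous-and-back-for-existence}. A convenient feature of our setting is that, since the bidding space $B$ is finite, a monotone non-overbidding pure strategy of bidder $i$ is completely described by its jump points $\{s_i(b)\}_{b\in B}$ (see \cref{sec:CFPA-represent}), so the set $\mathcal{S}_i$ of such strategies is a compact convex polytope in $\R^{|B|}$ (jump points ordered by $b$, with the no-overbidding inequalities cutting out a polytope). The domain for the fixed-point argument is then $\prod_{i\in N}\mathcal{S}_i$, a compact convex subset of Euclidean space, and the map we iterate is the best-response correspondence $\vec{\beta}_{-i}\mapsto \mathrm{BR}_i(\vec{\beta}_{-i})$ restricted to monotone non-overbidding strategies.

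Two nontrivial properties are needed. First, \emph{existence of a monotone best response} when the opponents play monotone strategies: a monotone profile $\vec{\beta}_{-i}$ induces a distribution over the highest competing bid whose interaction with the conditional value distribution $F_{i\mid v_i}$ makes higher types prefer (weakly) higher bids. Classically this is the Milgrom--Shannon single crossing condition, but that property genuinely fails once $f_{i\mid v_i}$ can be zero; the conceptual step is therefore to isolate the weaker \emph{Forward-SCC}, verify that affiliation~\eqref{eq:affiliation} (even with zero densities) still implies it, and show Forward-SCC suffices for the optimal-bid set of a higher type to lie weakly above that of a lower type, so that a monotone selection (hence a monotone best response) exists. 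Second, and this is the heart of the matter, $\mathrm{BR}_i(\vec{\beta}_{-i})$ must be \emph{convex-valued} on the monotone domain: given two monotone best responses one needs to build a monotone best response ``between'' them. With strictly positive density this follows from strict single crossing (every intermediate jump-point vector remains optimal); without it, type intervals on which the bidder is indifferent can break this, and the fix is to exploit the piecewise-constant structure of $f$ --- on each hyperrectangle of constancy the relevant indifference conditions are affine/monotone in the jump points, which lets one interpolate the two jump-point vectors rectangle by rectangle while staying optimal and monotone. I would prove convexity precisely for the class of piecewise-constant densities, which is exactly what is needed. Upper hemicontinuity of the restricted $\mathrm{BR}_i$ is then the routine maximum-theorem argument, using continuity of the utility~\eqref{eq:utility} in the jump-point representation (cf.\ \cref{lem:efficient-computation}) together with closedness of the monotonicity constraint. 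Kakutani's fixed point theorem then yields a monotone PBNE of the CFPA.

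For the ``as a result'' clause, I would feed a DFPA with APV priors into \cref{lem:discrete-to-continuous-and-back-for-existence}, which produces a CFPA whose density is piecewise constant and still affiliated; the CFPA statement just proven gives a monotone PBNE, and the reduction pulls it back to an $\varepsilon$-approximate monotone MBNE of the DFPA for every $\varepsilon>0$. A compactness/limit argument as $\varepsilon\downarrow 0$ --- the monotone non-overbidding strategy space is compact and the approximate-equilibrium inequalities are closed --- then gives an exact monotone MBNE. For the symmetric case ($k$-GSAPV, in particular SAPV), I would run the same Kakutani argument on the \emph{symmetric} subdomain, in which bidders of the same group use a common strategy; this subdomain is still compact and convex, and group-symmetry of $f$ makes $\mathrm{BR}$ map it into itself, so the fixed point is a monotone equilibrium that is symmetric with respect to the groups, and the reduction transfers it to the DFPA. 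I expect the main obstacle to be step two above --- convexity of the monotone best-response sets under a merely piecewise-constant (possibly zero) density --- since this is precisely where \citeauthor{Athey2001}'s positive-density hypothesis, and the strict single crossing it buys, was being used.
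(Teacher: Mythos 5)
Your overall architecture is exactly the paper's: note that the single-crossing condition fails without full support, isolate the Forward-SCC and derive it from affiliation (via log-supermodularity of $H_i$), run Kakutani's theorem on the compact convex polytope of jump-point vectors of monotone non-overbidding strategies (restricted to the symmetric/group-symmetric subdomain in the $k$-GSAPV case), and transfer the CFPA result to the DFPA through \cref{lem:discrete-to-continuous-and-back-for-existence} together with a limit argument. The closed-graph and non-emptiness parts are handled as you say. The one place where your plan falls short of a proof is precisely the step you flag as the heart: convexity of the monotone best-response set $\Gamma_i(\vec{x}_{-i})$.

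Your proposed mechanism --- ``on each hyperrectangle of constancy the relevant indifference conditions are affine/monotone in the jump points, so interpolate rectangle by rectangle'' --- does not engage the actual difficulty. Convexity requires that the \emph{given} convex combination $\vec{z}=\lambda\vec{w}+(1-\lambda)\vec{y}$ of two best responses be a best response; you do not get to choose the interpolation, and the best-response set is the set of monotone selections from a value-indexed optimal-bid correspondence that does not depend on bidder $i$'s own jump points, so ``indifference conditions in the jump points'' is not the relevant object. The problematic case is when the intervals $[w^m,w^{m+1}]$ and $[y^m,y^{m+1}]$ on which the two strategies play $b_m$ are disjoint, say $w^{m+1}<y^m$: then $\vec{z}$ plays $b_m$ on part of the gap $(w^{m+1},y^m)\cap V_i$, where $\vec{w}$ plays some $b_\ell$ with $\ell>m$ and $\vec{y}$ plays some $b_k$ with $k<m$. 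The paper's resolution --- and the precise point where piecewise-constant density is used --- is that this gap can be partitioned into finitely many nondegenerate intervals $A_j\subseteq V_i$ on which $H_i(b;\cdot)$ and both strategies are constant; optimality of both strategies then gives $(v-b_k)H_i(b_k)=(v-b_\ell)H_i(b_\ell)$ identically in $v$ over an interval of positive length with $b_k<b_\ell$, which forces $H_i(b_k)=H_i(b_\ell)=0$, hence $\max_{b\in B}u_i(b;v)=0$ there and $b_m$ is trivially also optimal. Without this zero-winning-probability observation the ``interpolation'' idea has nothing to rest on. (A minor imprecision elsewhere: Forward-SCC does not make the entire optimal-bid set of a higher type lie weakly above that of a lower type; it only guarantees that a bid optimal at $v$ weakly dominates all lower bids at every $v'>v$, which is exactly what is needed for a monotone selection and is what both you and the paper actually use.)
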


In the rest of this section we prove
\cref{thm:MNBE-monotone-existence-DFPA-APV}. Namely, we show the existence of
monotone PBNE in the CFPA with APV, under piecewise constant densities. By
\cref{lem:discrete-to-continuous-and-back-for-existence}, this then yields the
existence of MBNE in the DFPA with APV. For symmetric
instances (i.e., $k$-GSAPV, in particular SAPV), we guarantee the existence of symmetric equilibria.
Before presenting the proof, we begin with a short discussion explaining why
\cref{thm:MNBE-monotone-existence-DFPA-APV} does not follow from existing
results.

\subsubsection{The Single-Crossing Condition (SCC) Fails}

The usual approach for establishing existence of monotone pure Bayes Nash
equilibria in Bayesian games with a finite number of actions is to use Athey's
framework with the single-crossing condition (SCC)~\citep{Athey2001}. Namely,
one first establishes that the SCC holds for the game at hand, and then the
existence immediately follows. In particular, the SCC holds for the CFPA with
APV, when the joint density function has full support.

Unfortunately, in our case we cannot assume that the density has full support,
because we want to obtain an existence result for the DFPA through
\cref{lem:discrete-to-continuous-and-back-for-existence}. Furthermore, adding a
very small baseline mass to a density to make it have full support breaks the
affiliation property. Thus, we would like to establish that the SCC holds in our
case as well.

\begin{definition}[Single-Crossing Condition (SCC)]
A CFPA satisfies the single-crossing condition (SCC), if for any bidder $i \in
[n]$ and for any \emph{monotone} strategy profile $\vec{\beta}_{-i}$,
$$u_i(b^H, \vec{\beta}_{-i};v^L) \geq u_i(b^L, \vec{\beta}_{-i};v^L) \implies
u_i(b^H, \vec{\beta}_{-i};v^H) \geq u_i(b^L, \vec{\beta}_{-i};v^H)$$ and
$$u_i(b^H, \vec{\beta}_{-i};v^L) > u_i(b^L, \vec{\beta}_{-i};v^L) \implies
u_i(b^H, \vec{\beta}_{-i};v^H) > u_i(b^L, \vec{\beta}_{-i};v^H)$$ for all $b^L,
b^H \in B$ and $v^L, v^H \in V_i$ with $b^L < b^H \leq v^L < v^H$.
\end{definition}

Unfortunately, the SCC is not guaranteed to hold for the CFPA with APV when we
do not have full support, as the following example shows.

\begin{example}
Consider a CFPA with two bidders, with bidding space $B = \{0,1/4,1/2\}$ and the
following joint density function over $[0,1]^2$, $f(v_1,v_2) = 32 \cdot
\mathbbm{1}_{[1/4,3/8]^2}(v_1,v_2) + 32 \cdot \mathbbm{1}_{[3/8,1/2] \times
[7/8,1]}(v_1,v_2)$. Note that this density satisfies the affiliation condition.
Assume that bidder 2 uses the following monotone (and non-overbidding) strategy:
she bids $0$ when $v_2 \in (0,1/4)$, $1/4$ when $v_2 \in (1/4,7/8)$, and $1/2$
when $v_2 \in (7/8,1)$.

Let us examine the set of best-response bids for bidder 1 at two particular
values. When bidder 1 has value $v_1^L = 5/16$, the other bidder has value $v_2
\in [1/4,3/8]$ and thus bids $1/4$ with probability 1. As a result, the only
best-response for bidder 1 is to bid $1/4$ as well (she cannot bid $1/2$, since
that would be above her value).

On the other hand, when bidder 1 has value $v_1^H = 7/16$, the other bidder has
value $v_2 \in [7/8,1]$ and thus bids $1/2$ with probability 1. As a result,
bidder 1 is indifferent between bidding $0$ or $1/4$, since both options give
her zero utility, and she cannot bid $1/2$, because that would be above her
value.

Now, we can see that the second part of the SCC fails. Indeed, bidding $b^H =
1/4$ is strictly better than bidding $b^L = 0$ at value $v_1^L$ for bidder 1,
but at the higher value $v_1^H$, the bidder is indifferent between the two
options. Note however that this does not contradict the first part of the SCC.
Indeed, as we will show below, the first part will always hold in our setting.
\end{example}

Due to the failure of the SCC, in the next section we investigate whether
existence can be shown by using only the first part of the SCC (which we call
Forward-SCC below). We show that this is possible for the CFPA with APV, with
the additional assumption that the joint density function is piecewise constant.

\subsubsection{The Proof of Existence}

We use $V_i \subseteq [0,1]$ to denote the support of the marginal distribution
of bidder $i$'s value, i.e, the support of the distribution with density
$f_i(v_i) = \int_{\vec{v}_{-i}} f(\vec{v}) d\vec{v}_{-i}$. As a result, the
conditional distribution $f(\vec{v}_{-i}|v_i)$, and thus the utility function
$u_i$, are well-defined for all $v_i \in V_i$. We recall that the utility
function of bidder $i$ can be written as $u_i(b, \vec{\beta}_{-i};v_i) = (v_i -
b) \cdot H_i(b, \vec{\beta}_{-i};v_i)$. Here $H_i(b, \vec{\beta}_{-i};v_i)$
denotes the probability of bidder $i$ winning the item, given that she bids $b$
and that the other bidders bid according to the strategy profile
$\vec{\beta}_{-i}$, conditioned on the fact that bidder $i$ has value $v_i$ for
the item. Note that $H_i(b, \vec{\beta}_{-i};v_i) \geq H_i(b',
\vec{\beta}_{-i};v_i)$, whenever $b \geq b'$.

\begin{definition}
Let $X$ be a lattice. A function $h: X \to {\R}_{\geq 0}$ is
\emph{log-supermodular} if for all $x, x' \in X$
$$h(x \wedge x') \cdot h(x \vee x') \geq h(x) \cdot h(x').$$
\end{definition}

Note that the affiliation condition for the joint distribution of values is
equivalent to saying that the joint density is log-supermodular.

\begin{lemma}[\citep{Athey2001}]\label{lem:H-log-super} In the CFPA with APV,
for any bidder $i \in [n]$ and for any \emph{monotone} strategy profile
$\vec{\beta}_{-i}$, the function $(b,v_i) \mapsto H_i(b,\vec{\beta}_{-i};v_i)$
is log-supermodular.
\end{lemma}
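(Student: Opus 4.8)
The plan is to reduce the statement to the affiliation (log-supermodularity) of the conditional density $f(\vec v_{-i} \mid v_i)$ together with a structural property of the event ``bidder $i$ wins with bid $b$'' when the opponents use monotone strategies, and then invoke a standard preservation result for log-supermodularity under integration. First I would fix bidder $i$ and a monotone profile $\vec\beta_{-i}$, and write
\[
H_i(b,\vec\beta_{-i};v_i) = \int_{\vec v_{-i}} g_i(b,\vec v_{-i}) \, f(\vec v_{-i}\mid v_i)\, \mathrm{d}\vec v_{-i},
\]
where $g_i(b,\vec v_{-i})$ is the (tie-broken) probability that bid $b$ wins against the opponents' bids when their values are $\vec v_{-i}$, i.e. $g_i(b,\vec v_{-i}) = \mathbbm{1}[b \geq \max_{j\neq i}\beta_j(v_j)] / |\{j : \beta_j(v_j)=b\} \cup \{i\}|$ in the pure case (and the obvious mixed-strategy analogue). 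The key observation is that, because each $\beta_j$ is non-decreasing, $\max_{j\neq i}\beta_j(v_j)$ is non-decreasing in $\vec v_{-i}$; so for fixed $b$ the ``clean win'' indicator $\mathbbm{1}[b > \max_{j\neq i}\beta_j(v_j)]$ is the indicator of a down-set in $\vec v_{-i}$, hence log-supermodular in $\vec v_{-i}$, and moreover the whole map $(b,\vec v_{-i}) \mapsto \mathbbm{1}[b \geq \max_{j\neq i}\beta_j(v_j)]$ is log-supermodular (increasing in $b$, decreasing in each $v_j$, and ``staircase'' events of this form are log-supermodular on the product lattice).

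The main technical point — and the step I expect to be the real obstacle — is handling \emph{ties} correctly: the factor $1/|\{j : \beta_j(v_j)=b\}|$ is neither monotone nor obviously log-supermodular as a function of $\vec v_{-i}$, and the tie-break region is exactly where $\max_{j\neq i}\beta_j(v_j)=b$. I would deal with this by decomposing $g_i(b,\cdot)$ according to which subset $S$ of opponents ties at $b$: on the region $R_{b,S} = \{\vec v_{-i} : \beta_j(v_j)=b \text{ for } j\in S,\ \beta_j(v_j)<b \text{ for } j\notin S\}$ the value of $g_i$ is the constant $1/(|S|+1)$, and by monotonicity of the $\beta_j$ each such region is a ``box'' (a product of intervals in the relevant coordinates). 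One then checks that each piecewise-constant function supported on such a product region, with the value increasing as $b$ crosses the relevant jump points, assembles into a log-supermodular function $(b,\vec v_{-i})\mapsto g_i(b,\vec v_{-i})$; intuitively, increasing $b$ either keeps you winning with a weakly larger denominator share, or promotes you from a tie to a clean win, and increasing an opponent's value weakly hurts you — the cross-differences all have the right sign. (In the CFPA the tie regions have measure zero in the continuous coordinates whenever the jump points are interior, which further simplifies matters; but since we also want the statement for general monotone profiles and for the discrete analogue, I would argue it at the level of log-supermodularity of $g_i$ rather than discarding measure-zero sets.)

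Having established that both $g_i(b,\vec v_{-i})$ (in the variables $(b,\vec v_{-i})$, treating $v_i$ as absent) and $f(\vec v_{-i}\mid v_i)$ (log-supermodular in $(\vec v_{-i},v_i)$, which is exactly affiliation of the joint density restricted to the relevant coordinates — note $f(\vec v_{-i}\mid v_i) = f(v_i,\vec v_{-i})/f_i(v_i)$ and the $1/f_i(v_i)$ factor is harmless since it does not involve $\vec v_{-i}$) are log-supermodular, the product $g_i(b,\vec v_{-i}) f(\vec v_{-i}\mid v_i)$ is log-supermodular in $(b, v_i, \vec v_{-i})$ jointly (product of log-supermodular functions on a common lattice). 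Then the final step is to integrate out $\vec v_{-i}$: by the classical preservation theorem for $\mathrm{MTP}_2$/log-supermodularity under integration over a sublattice — this is exactly the Ahlswede–Daykin / Karlin–Rinott type result, and is precisely the machinery \citet{MW82} and \citet{Athey2001} use — the marginal $(b,v_i) \mapsto H_i(b,\vec\beta_{-i};v_i)$ is log-supermodular, which is the claim. I would cite \citet{Athey2001} for this preservation step (the lemma is attributed to them in the statement anyway) and for the continuous case note that the piecewise-constant density assumption we carry elsewhere in the section makes all the relevant integrals well-behaved. The one place to be careful is that log-supermodularity is usually stated for strictly positive functions; since $g_i$ and $H_i$ can be zero, I would either use the convention $0\cdot(\text{anything}) \geq 0$ consistently, or restrict to the sublattice where the functions are positive and extend by continuity/monotonicity — both are standard and I would spell out whichever is cleaner in the write-up.
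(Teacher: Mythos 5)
Your proposal follows essentially the same route as the paper's proof: write $H_i$ as an integral over $\vec{v}_{-i}$ of the tie-broken winning probability (composed with the monotone strategies) times the conditional density, observe that each factor is log-supermodular (affiliation for the joint density, the $1/f_i(v_i)$ factor being a harmless one-dimensional term), and conclude via closure of log-supermodularity under products and under integration (the Karlin--Rinott/Athey preservation theorem). The only difference is cosmetic: where you verify log-supermodularity of the win-probability term by hand via a decomposition over tie sets, the paper simply cites \citet{Athey2001} for log-supermodularity of the allocation function $\phi_i$ in the bid profile and notes that composing with monotone strategies preserves it.
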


\begin{proof}
We can write
\begin{equation*}
\begin{split}
H_i(b,\vec{\beta}_{-i};v_i) &= \int_{\vec{v}_{-i} \in \vec{V}_{-i}} \phi_i(b, \vec{\beta}_{-i}(\vec{v}_{-i})) \cdot f(\vec{v}_{-i}|v_i) d\vec{v}_{-i}\\
&= \int_{\vec{v}_{-i} \in \vec{V}_{-i}} \phi_i(b, \vec{\beta}_{-i}(\vec{v}_{-i})) \cdot \frac{f(v_i, \vec{v}_{-i})}{f_i(v_i)} d\vec{v}_{-i}
\end{split}
\end{equation*}
where
\begin{equation*}
\phi_i(\vec{b}) \coloneq 
\begin{cases}
\frac{1}{\cards{W(\vec{b})}}, & \text{if}\;\; i\in W(\vec{b}), \\
0, & \text{otherwise}, 
\end{cases}
\qquad\text{where}\;\; W(\vec{b})=\argmax_{j\in N} b_j
\end{equation*}
Now, it can be checked that $\phi_i$ is log-supermodular (see, e.g.,
\citep[p.~886]{Athey2001}). Since the strategy profile $\vec{\beta}_{-i}$ is
monotone, it follows that $(b, \vec{v}_{-i}) \mapsto \phi_i(b,
\vec{\beta}_{-i}(\vec{v}_{-i}))$ is also log-supermodular. By assumption, $f$
satisfies the affiliation condition, which means that it is log-supermodular.
Since the one-dimensional function $1/f_i$ is trivially log-supermodular, and
products of log-supermodular functions remain log-supermodular, it follows that
the function inside the integral is log-supermodular in $(b, \vec{v}_{-i},v_i)$.
By the (somewhat surprising) fact that log-supermodularity is preserved by
integration (see, e.g., \citep[pp.~192-193]{athey02-statics}), it follows that
$H_i(b,\vec{\beta}_{-i};v_i)$ is log-supermodular in $(b, v_i)$.
\end{proof}

Unfortunately, the SCC might not hold in our setting. Nevertheless, we show that
the following weaker condition is satisfied. It is one of the two conditions
that SCC requires. We call it forward-SCC, because it guarantees that a bid that
is optimal at the current value, cannot be beaten by a lower bid at a higher
value. The full SCC also includes a similar guarantee in the backwards
direction, i.e., about smaller values.

\begin{lemma}[Forward-SCC]\label{lem:forward-SCC} In the CFPA with APV, for any
bidder $i \in [n]$ and for any \emph{monotone} strategy profile
$\vec{\beta}_{-i}$, if for some $b^L, b^H \in B$ and $v^L, v^H \in V_i$ with
$b^L < b^H \leq v^L < v^H$ we have
$$u_i(b^H, \vec{\beta}_{-i};v^L) \geq u_i(b^L, \vec{\beta}_{-i};v^L)$$
then this implies
$$u_i(b^H, \vec{\beta}_{-i};v^H) \geq u_i(b^L, \vec{\beta}_{-i};v^H).$$
\end{lemma}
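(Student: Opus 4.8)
The plan is to exploit the log-supermodularity of $H_i$ established in \cref{lem:H-log-super}, together with the non-overbidding regime $b^L < b^H \leq v^L < v^H$, which guarantees that both relevant utility terms $(v-b^L)$ and $(v-b^H)$ are strictly positive for $v \in \{v^L, v^H\}$. Write $H^L_b := H_i(b, \vec{\beta}_{-i}; v^L)$ and $H^H_b := H_i(b, \vec{\beta}_{-i}; v^H)$ for $b \in \{b^L, b^H\}$. The hypothesis $u_i(b^H, \vec{\beta}_{-i}; v^L) \geq u_i(b^L, \vec{\beta}_{-i}; v^L)$ reads
\[
(v^L - b^H) \cdot H^L_{b^H} \geq (v^L - b^L) \cdot H^L_{b^L},
\]
and the goal is the analogous inequality at $v^H$.

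First I would dispose of the degenerate case $H^L_{b^H} = 0$: since $b^H \geq b^L$ implies $H_i(b^H, \cdot; v) \geq H_i(b^L, \cdot; v)$ (monotonicity of winning probability in the bid, as noted in the text just before \cref{lem:forward-SCC}), $H^L_{b^H} = 0$ forces $H^L_{b^L} = 0$, and then log-supermodularity $H^L_{b^L} \cdot H^H_{b^H} \geq H^L_{b^H} \cdot H^H_{b^L}$ gives no immediate contradiction, but in fact $H^L_{b^L} = 0$ together with monotonicity in $v$ need not force $H^H_{b^L} = 0$; still, the desired inequality at $v^H$ would read $(v^H - b^H) H^H_{b^H} \geq (v^H - b^L) H^H_{b^L}$, which I would handle by noting that if $H^L_{b^H} = 0$ then also $H^H_{b^L} \cdot H^L_{b^H} = 0 \geq$ nothing useful — so instead the cleaner route is: if $H^H_{b^L} = 0$ the RHS at $v^H$ is zero and we are done; otherwise $H^H_{b^L} > 0$, hence by monotonicity in the bid $H^H_{b^H} > 0$ as well, and log-supermodularity rearranges to $H^L_{b^H}/H^L_{b^L} \geq H^H_{b^H}/H^H_{b^L}$ only when all four are positive, so I would argue directly in that all-positive regime.

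The core argument, in the regime where all four quantities are strictly positive: from the hypothesis,
\[
\frac{H^L_{b^H}}{H^L_{b^L}} \;\geq\; \frac{v^L - b^L}{v^L - b^H},
\]
and from log-supermodularity of $(b, v) \mapsto H_i(b, \vec{\beta}_{-i}; v)$ applied to the pair of points $(b^L, v^L)$ and $(b^H, v^H)$ (whose meet is $(b^L, v^L)$ and join is $(b^H, v^H)$),
\[
H^L_{b^L} \cdot H^H_{b^H} \;\geq\; H^H_{b^L} \cdot H^L_{b^H}
\quad\Longrightarrow\quad
\frac{H^H_{b^H}}{H^H_{b^L}} \;\geq\; \frac{H^L_{b^H}}{H^L_{b^L}} \;\geq\; \frac{v^L - b^L}{v^L - b^H}.
\]
Then I would show the purely algebraic fact that $\frac{v - b^L}{v - b^H}$ is non-increasing in $v$ on $(b^H, \infty)$ — indeed its derivative in $v$ has sign $-(b^H - b^L) < 0$ — so $\frac{v^H - b^L}{v^H - b^H} \leq \frac{v^L - b^L}{v^L - b^H} \leq \frac{H^H_{b^H}}{H^H_{b^L}}$, which upon cross-multiplying (all denominators positive) yields exactly $(v^H - b^H) H^H_{b^H} \geq (v^H - b^L) H^H_{b^L}$, i.e., $u_i(b^H, \vec{\beta}_{-i}; v^H) \geq u_i(b^L, \vec{\beta}_{-i}; v^H)$.

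I expect the only genuinely fiddly point to be the bookkeeping of the boundary cases where some $H$ value vanishes; the substance is the two-line chain combining the hypothesis ratio, the log-supermodularity ratio, and the monotonicity of $(v-b^L)/(v-b^H)$ in $v$. Note this argument uses only the "forward" direction because the monotonicity of the ratio $(v-b^L)/(v-b^H)$ goes one way; trying to push it to smaller values (the backward half of SCC) would require the reverse inequality, which is exactly what fails in the zero-density example — so I would flag that the piecewise-constant hypothesis plays no role here and is only needed later for the convexity-of-best-responses step.
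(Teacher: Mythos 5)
Your core chain in the all-positive regime is correct and is, in substance, the same argument as the paper's: the paper phrases it as log-supermodularity of $u_i(b;v)=(v-b)\cdot H_i(b;v)$ over the lattice $\{b^L,b^H\}\times\{v^L,v^H\}$ (using that $(b,v)\mapsto (v-b)$ is log-supermodular there) and argues the contrapositive, whereas you combine \cref{lem:H-log-super} with the monotonicity of $(v-b^L)/(v-b^H)$ in $v$; these are the same computation, so the substance of your ratio chain is fine.

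The genuine gap is your reduction to the all-positive regime. After disposing of $H_i(b^L;v^H)=0$, you assert that $H_i(b^L;v^H)>0$ (hence $H_i(b^H;v^H)>0$) puts you in the case where all four winning probabilities are positive, but nothing you have established gives $H_i(b^L;v^L)>0$, which is exactly what the division in your first ratio requires (likewise, dividing the hypothesis by $v^L-b^H$ silently assumes $b^H<v^L$; if $b^H=v^L$ the hypothesis forces $H_i(b^L;v^L)=0$ and you land in the same unhandled case). This case cannot be dispatched by the tools you cite: the configuration $H_i(b^L;v^L)=H_i(b^H;v^L)=0$ and $H_i(b^L;v^H)=H_i(b^H;v^H)=1/2$ satisfies log-supermodularity of $H_i$, monotonicity in the bid, and the hypothesis at $v^L$ (trivially, $0\geq 0$), yet violates the conclusion at $v^H$ because $b^L<b^H$; so an extra ingredient is needed to exclude $H_i(b^L;v^L)=0<H_i(b^L;v^H)$. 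This is precisely where the paper's proof does its additional work: since $H_i(\max B;v^L)>0$ (the top bid wins with probability at least $1/n$), applying \cref{lem:H-log-super} to the points $(\max B,v^L)$ and $(b^L,v^H)$ gives $H_i(\max B;v^H)\,H_i(b^L;v^L)\geq H_i(\max B;v^L)\,H_i(b^L;v^H)>0$, hence $H_i(b^L;v^L)>0$. With that step inserted your direct argument goes through; also note the inverted log-supermodularity ratio in your degenerate-case paragraph is a slip (your core argument states it the correct way), and your final remark is accurate: piecewise-constancy plays no role in this lemma and is only needed for the convexity of best-response sets.
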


\begin{proof}
We omit the term $\vec{\beta}_{-i}$ from the notation, since it remains fixed
throughout the proof. We prove the contrapositive. Let $b^L < b^H \leq v^L <
v^H$ be such that $u_i(b^H;v^H) < u_i(b^L;v^H)$. Our goal is to show that
$u_i(b^H;v^L) < u_i(b^L;v^L)$. The assumption in particular yields that
$u_i(b^L;v^H) > 0$, which implies $H_i(b^L;v^H) > 0$. Since $H_i(\max B;v^L) >
0$, by log-supermodularity of $H_i$ (\cref{lem:H-log-super}), it follows that
$$H_i(\max B;v^H) \cdot H_i(b^L;v^L) \geq H_i(\max B; v^L) \cdot H_i(b^L;v^H) >
0$$ which implies $H_i(b^L; v^L) > 0$, and thus $u_i(b^L;v^L) > 0$. Now, it can
be checked that $(b,v) \mapsto (v-b)$ is also log-supermodular over the lattice
$\{b^L,b^H\} \times \{v^L,v^H\}$, and since the product of two log-supermodular
functions remains log-supermodular, we obtain that $(b,v) \mapsto (v-b) \cdot
H_i(b;v) = u_i(b;v)$ is also log-supermodular over the lattice $\{b^L,b^H\}
\times \{v^L,v^H\}$. As a result, given that $u_i(b^L;v^H) > 0$ and
$u_i(b^L;v^L) > 0$, log-supermodularity yields
$$\frac{u_i(b^H;v^L)}{u_i(b^L;v^L)} \leq \frac{u_i(b^H;v^H)}{u_i(b^L;v^H)} < 1$$
by assumption. This in turn yields $u_i(b^H;v^L) < u_i(b^L;v^L)$, as desired.
\end{proof}

The forward-SCC implies that, in response to a monotone strategy profile
$\vec{\beta}_{-i}$, bidder $i$ can always best-respond with a monotone strategy.
Indeed, the forward-SCC tells us that whenever some bid $b$ becomes an optimal
choice at some value $v_i$, then we will never be forced to play a bid $b' < b$
at any value larger than $v$, since $b$ (weakly) dominates all lower bids for
higher values. Furthermore, it is easy to see that bidder $i$ can always
best-respond with a monotone strategy that is also non-overbidding. Monotone
strategies can be represented by their jump points, and thus the set of all
monotone (and non-overbidding) strategies of a bidder $i$ can be defined as
$$D = \left\{\vec{x} \in [0,1]^{|B|+1}: 0 = x^1 \leq x^2 \leq \dots \leq
x^{|B|+1} = 1, x^j \geq b_j \, \forall j \in [|B|]\right\}$$ where $B = \{b_1,
\dots, b_{|B|}\} \subset [0,1]$ and $b_1 = 0$. A point $x \in D$ represents the
strategy that bids $b_j$ when $v_i \in (x^j,x^{j+1})$ for all $j \in [|B|]$.
Note that we do not care about what happens at the jump points, since this is a
set of measure zero.

Now, given a strategy profile $\vec{x}_{-i} \in D^{n-1}$, let
$\Gamma_i(\vec{x}_{-i}) \subseteq D$ denote the set of all monotone
non-overbidding strategies of bidder $i$ that are best-responses to
$\vec{x}_{-i}$, almost everywhere in the support $V_i$. Define the
correspondence $\Gamma: D^n \to D^n, (\vec{x}_1, \dots, \vec{x}_n) \mapsto
\Gamma_1(\vec{x}_{-1}) \times \dots \times \Gamma_n(\vec{x}_{-n})$. Clearly, any fixed
point of $\Gamma$ yields a PBNE of the auction, and so our goal will be to use
Kakutani's fixed point theorem to prove that such a fixed point must exist.

This correspondence is the same as the one used by \citet{Athey2001}, except for
the constraints that we have introduced in $D$ to disallow overbidding. Note
that $D^n$ is compact and convex. In order to apply Kakutani's fixed point
theorem, we have to show that $\Gamma$ has a closed graph, and that
$\Gamma(\vec{x}_1, \dots, \vec{x}_n)$ is non-empty and convex. We have already
argued about the non-emptiness. We omit the arguments showing the closed graph
property since they are identical to \citep{Athey2001}.

Finally, we argue that $\Gamma(\vec{x}_1, \dots, \vec{x}_n)$ is convex. This is
where our argument differs from \citep{Athey2001}, since we do not have the SCC.
We make the assumption that the joint density function is \emph{piecewise
constant}, meaning that it can be written as the (weighted) sum of a finite
number of hyperrectangle-indicator functions, i.e., $f(\vec{v}) = \sum_{j \in
[m]} w_j \cdot \mathbbm{1}_{R_j}$, where $R_j = [a^j_1,b^j_1] \times \dots
\times [a^j_n,b^j_n]$. In that case, the support $V_i$ for each bidder $i$ is a
finite union of disjoint intervals.

\begin{lemma}
For any bidder $i \in [n]$ and any profile $\vec{x}_{-i} \in D^{n-1}$, the set
$\Gamma_i(\vec{x}_{-i})$ is convex.
\end{lemma}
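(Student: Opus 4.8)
The plan is to reduce the claim to two elementary structural facts, one about convex combinations of monotone strategies, one about the set of optimal bids. Throughout, fix the monotone profile $\vec{\beta}_{-i}$ associated with $\vec{x}_{-i}$, suppress it from the notation as in the proof of \cref{lem:forward-SCC}, and for $v \in [0,1]$ let $\mathrm{BR}(v) \subseteq B$ be the set of bids maximising $u_i(\cdot\,;v)$; recall $u_i(b;v) = (v-b)H_i(b;v)$ with $H_i(\cdot\,;v)$ non-decreasing in the bid, and write $B_{\le v} := \{b \in B : b \le v\}$. Membership $\vec{x} \in \Gamma_i(\vec{x}_{-i})$ means $\vec{x} \in D$ and, for a.e.\ $v \in V_i$, the bid that $\vec{x}$ prescribes at $v$ lies in $\mathrm{BR}(v)$. \emph{Fact 1 (betweenness):} if $\vec{x}, \vec{x}' \in D$ and $\vec{y} = \lambda\vec{x} + (1-\lambda)\vec{x}'$ with $\lambda \in [0,1]$, then for all but finitely many $v$ the bid prescribed by $\vec{y}$ at $v$ lies weakly between those prescribed by $\vec{x}$ and $\vec{x}'$. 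Indeed, writing the jump-point vectors as $(x^k)$, $({x'}^k)$, $(y^k)$ and taking $v$ off the finitely many jump points of the three strategies, let $j, j', j''$ index the prescribed bids, so $x^{j} < v < x^{j+1}$ and likewise for $j', j''$. If $j'' < \min\{j,j'\}$ then $x^{j''+1} \le x^{j} < v$ and ${x'}^{j''+1} \le {x'}^{j'} < v$, hence $y^{j''+1} < v$, contradicting $v < y^{j''+1}$; symmetrically, $j'' > \max\{j,j'\}$ gives $y^{j''} > v$, again a contradiction. So $\min\{j,j'\} \le j'' \le \max\{j,j'\}$, which since $b_1 < b_2 < \cdots$ is exactly the claim.

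\emph{Fact 2 (interval structure of $\mathrm{BR}$):} for a.e.\ $v \in V_i$, the set $\mathrm{BR}(v) \cap B_{\le v}$ is an interval of $B$. This is where the piecewise-constant density assumption is used. Partition $[0,1]$ at every bid of $B$ and at every endpoint appearing as a $v_i$-coordinate of a defining hyperrectangle $R_j$. On each resulting open interval $I$, the map $v \mapsto f(v,\cdot)$, hence $f_i(v)$ and the conditional density $f(\cdot \mid v)$, does not depend on $v$; therefore $H_i(b;v) = H^I(b)$ is a constant that is non-decreasing in $b$, and $B_{\le v} = B^I$ is a fixed prefix of $B$ for all $v \in I$. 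Thus $u_i(b;v) = H^I(b)(v-b)$ is affine in $v$ on $I$. If $H^I$ vanishes on $B^I$, then $u_i(\cdot\,;v) \equiv 0$ on $B^I$ and $\mathrm{BR}(v) \cap B^I = B^I$ for a.e.\ $v \in I$, a prefix of $B$ and hence an interval. Otherwise the maximum of $u_i(\cdot\,;v)$ over $B^I$ is strictly positive for every $v \in I$, so any two distinct maximisers $b \ne b'$ obey $H^I(b)(v-b) = H^I(b')(v-b') > 0$, which forces $H^I(b) \ne H^I(b')$ and pins $v$ down to a single value; hence only finitely many $v \in I$ have $\mathrm{BR}(v) \cap B^I$ non-singleton. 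As $V_i$ meets only finitely many of the intervals $I$, the total exceptional set is finite, which proves Fact 2.

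Granting the two facts, convexity follows immediately: for $\vec{x}, \vec{x}' \in \Gamma_i(\vec{x}_{-i})$ and $\lambda \in [0,1]$, the point $\vec{y} = \lambda\vec{x} + (1-\lambda)\vec{x}'$ lies in $D$ (which is cut out by linear (in)equalities, hence convex), and for a.e.\ $v \in V_i$ the bids of $\vec{x}$ and $\vec{x}'$ at $v$ both lie in $\mathrm{BR}(v) \cap B_{\le v}$, which is an interval of $B$ (Fact 2), while the bid of $\vec{y}$ at $v$ lies between them (Fact 1); hence it too lies in $\mathrm{BR}(v) \cap B_{\le v} \subseteq \mathrm{BR}(v)$, so $\vec{y} \in \Gamma_i(\vec{x}_{-i})$. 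The one genuinely delicate step is Fact 2: one must verify that non-interval behaviour of the argmax occurs only on a measure-zero set of ``crossing'' values, and isolate the degenerate case in which the optimal utility is zero (there the argmax is a whole prefix of bids, but still an interval); everything else amounts to bookkeeping about null sets of jump points and interval endpoints.
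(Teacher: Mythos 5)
Your proof is correct, and it reaches the conclusion by a genuinely different decomposition than the paper, although both arguments ultimately exploit the same consequence of piecewise-constant densities. The paper works directly with the convex combination $\vec{z}$: for each bid $b_m$ it locates the interval $(z^m,z^{m+1})$ relative to the intervals where the two given best responses play $b_m$, and in the non-overlapping case it partitions the ``gap'' into subintervals on which $H_i(\cdot\,;v)$ is constant in $v$ and both strategies are constant; there, the fact that a lower and a higher bid are simultaneously exact best responses on a set of positive length forces both winning probabilities (and hence, by monotonicity of $H_i$ in the bid, also that of $b_m$) to vanish, so every bid is optimal and $b_m$ in particular. You instead prove two reusable facts and obtain convexity as an immediate corollary: (i) a betweenness lemma saying that, off finitely many values, the bid prescribed by a convex combination of jump-point vectors lies weakly between the bids prescribed by the two endpoints --- this replaces the paper's case analysis on overlapping versus disjoint intervals $[w^m,w^{m+1}]$, $[y^m,y^{m+1}]$; and (ii) an a.e.\ interval-structure (in fact, a.e.\ singleton-or-full-prefix) property of the restricted argmax set $\mathrm{BR}(v)\cap B_{\le v}$, proved from the same local constancy of $H_i$ in $v$ via affineness of $(v-b)H^I(b)$ in $v$, with the degenerate zero-utility case isolated exactly as the paper isolates its $H_i(b_k)=H_i(b_\ell)=0$ case. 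Your route buys a cleaner separation of the combinatorial step (betweenness) from the analytic step (structure of the best-response correspondence), does not need monotonicity of $H_i$ in the bid, and yields the slightly stronger observation that best responses are a.e.\ unique unless all achievable utilities vanish; the paper's argument is more hands-on but avoids formulating the argmax structure explicitly. The only points that deserve the care you already flagged are handled correctly: values of $v$ in $I\cap V_i$ have $f_i(v)>0$ (so the conditional is well defined), prescribed bids of strategies in $D$ are automatically at most $v$, and all exceptional sets (jump points, crossing values, non-best-response nulls) are null, so the a.e.\ best-response requirement in the definition of $\Gamma_i(\vec{x}_{-i})$ is met.
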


\begin{proof}
Consider two strategies $\vec{w}, \vec{y} \in D$ that are both best-responses to
$\vec{x}_{-i}$, and let $\vec{z} \in D$ be any convex combination of $\vec{w}$
and $\vec{y}$. Our goal is to show that $\vec{z}$ is also a best-response to
$\vec{x}_{-i}$. For this, it suffices to show that for any $b_m \in B$, bidding
$b_m$ is a best-response at any $v_i \in (z^m,z^{m+1}) \cap V_i$, i.e.,
$u_i(b_m;v_i) \geq \max_{b \in B} u_i(b;v_i)$, where we suppress $\vec{x}_{-i}$
in the notation. If $w^m = w^{m+1}$ and $y^m = y^{m+1}$, then it follows that
$z^m = z^{m+1}$, and the claim trivially holds. Next, consider the case where
$w^m < w^{m+1}$ and $y^m < y^{m+1}$. If the intervals $[w^m,w^{m+1}]$ and
$[y^m,y^{m+1}]$ overlap, then $b_m$ is a best-response at any value in
$(\min(w^m,y^m),\max(w^{m+1},y^{m+1})) \cap V_i$, and thus at any value in
$(z^m,z^{m+1}) \cap V_i$.

If the intervals do not overlap, then assume without loss of generality that
$w^m < w^{m+1} < y^m < y^{m+1}$. We want to show that $b_m$ is a best-response
almost everywhere in $(w^{m+1}, y^m) \cap V_i$. Since the joint density function
is piecewise constant (as defined above), we can partition the interval
$[w^{m+1}, y^m] \cap V_i = \bigcup_{j \in [s]} \overline{A}_j$, where
$\overline{\cdot}$ denotes the closure of a set, such that for all $j \in [s]$
\begin{enumerate}
\item $A_j$ is a non-empty open interval,
\item $A_j \subseteq V_i$,
\item the $A_j$ are pairwise disjoint,
\item for all $v_i, v_i' \in A_j$, $H_i(b;v_i) = H_i(b;v_i')$, so we just write $H_i(b)$
\item the strategies represented by $\vec{w}$ and $\vec{y}$ are constant over $A_j$
\end{enumerate}
Note that this last point is possible because the strategies are monotone
step-functions, and thus they only change value a finite number of times. Now
consider any such interval $A_j$. Since $w^{m+1} \leq \inf A_j$, the strategy
represented by $\vec{w}$ uses a bid $b_\ell$ over all of $A_j$, where $\ell >
m$. Similarly, since $y^m \geq \sup A_j$, the strategy represented by $\vec{y}$
uses a bid $b_k$ over all of $A_j$, where $k < m$. Thus, both $b_k$ and $b_\ell$
are best-responses for any $v_i \in A_j$. This means that for all $v_i \in A_j$
$$(v_i-b_k) \cdot H_i(b_k) = (v_i - b_\ell) \cdot H_i(b_\ell)$$ where we used
point 4 above. Since $b_k < b_\ell$ and $A_j$ is an interval of non-zero length,
it follows that $H_i(b_k) = H_i(b_\ell) = 0$. But this means that $\max_{b \in
B} u_i(b;v_i) = 0$ for all $v_i \in A_j$. As a result, $b_m$ is also a
best-response over all of $A_j$. Since this holds for all $A_j$, we obtain that
$b_m$ is a best-response almost everywhere in $(w^{m+1}, y^m) \cap V_i$, as
desired.
\end{proof}

\paragraph{Symmetric instances.}
For the CFPA with SAPV, we instead use the correspondence $\Gamma': D \to D,
\vec{x} \mapsto \Gamma_1(\vec{x}, \dots, \vec{x})$. By the symmetry of the
instance, any fixed point $\vec{x}$ of this correspondence will yield a
symmetric PBNE $(\vec{x}, \dots, \vec{x})$ of the auction.
More generally, for the $k$-GSAPV setting with groups $N_1, \dots, N_k$, we can use the correspondence $\Gamma': D^k \to D^k, (\vec{x}_1, \dots, \vec{x}_k) \mapsto \Gamma_1(\psi_{-1}(\vec{x}_1, \dots, \vec{x}_k)) \times \dots \times \allowbreak \Gamma_k(\psi_{-k}(\vec{x}_1, \dots, \vec{x}_k))$, where we assume, without loss of generality, that $i \in N_i$ for all $i \in [k]$, and where $\psi: D^k \to D^n$ is defined as, for all $i \in [n]$,
$$\psi_i(\vec{x}_1, \dots, \vec{x}_k) = \vec{x}_\ell$$
where $\ell$ is such that $i \in N_\ell$. By the symmetry of the instance, any fixed point $(\vec{x}_1, \dots, \vec{x}_k)$ of this correspondence yields a symmetric PBNE $\psi(\vec{x}_1, \dots, \vec{x}_k)$ of the auction. All the arguments work as above.

\subsection{A Reduction From the DFPA to the CFPA}

We conclude the section by presenting our reduction from the problem of
computing monotone (approximate) MBNE of the DFPA to the problem of computing
monotone approximate PBNE of the CFPA. A similar reduction was presented in
\citep{fghk24} for the IPV setting, which we generalize here.

\begin{lemma}\label{lem:discrete-to-continuous-and-back-for-existence} Given
$\delta \in (0,1)$ and an instance of the DFPA, we can construct in polynomial
time an instance of the CFPA such that for any $\varepsilon \geq 0$, we can
transform in polynomial time any monotone $\varepsilon$-approximate PBNE of the
CFPA into a monotone $(\varepsilon + \delta)$-approximate MBNE of the DFPA.
Furthermore, this reduction maps (instances of) auctions with APV (resp. SAPV, resp. $k$-GSAPV)
to auctions with APV (resp. SAPV, resp. $k$-GSAPV), and symmetric equilibria to symmetric
equilibria.
\end{lemma}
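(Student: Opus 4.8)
The plan is to mimic the reduction of \citet{fghk24} for the IPV setting, simulating each discrete marginal/conditional structure of the DFPA by a piecewise-constant continuous density over $[0,1]^n$, while carefully preserving the affiliation (and symmetry) structure. Concretely, starting from a DFPA instance with value spaces $V_i$ and joint pmf $f$ on $\vec{V}$, I would build a CFPA instance as follows: fix a small ``width'' parameter governed by $\delta$, and for each discrete value $v \in V_i$ carve out a short sub-interval $I_i(v) \subseteq [0,1]$ (with the intervals ordered so that $v < v'$ implies $I_i(v)$ lies entirely to the left of $I_i(v')$, and with tiny gaps of zero density between them). For each support point $\vec{v} = (v_1,\dots,v_n)$ of $f$, place a constant density on the hyperrectangle $I_1(v_1) \times \dots \times I_n(v_n)$ with weight scaled so that the total mass of that box equals $f(\vec{v})$. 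The bidding space of the CFPA is the same $B$ as in the DFPA. This density is manifestly piecewise constant (a weighted sum of hyperrectangle indicators), so \cref{thm:MNBE-monotone-existence-DFPA-APV} applies to it.

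The first thing to check is that affiliation is preserved. The key point is that our continuous density factors through the map sending each $x_i \in [0,1]$ to the discrete ``level'' $v_i \in V_i$ with $x_i \in I_i(v_i)$ (and to $0$ on the gaps): $f^{\text{cont}}(\vec{x}) = c \cdot f(\ell_1(x_1),\dots,\ell_n(x_n))$ on the union of boxes, where $\ell_i$ is this level map and $c$ is a fixed normalizing constant, and $f^{\text{cont}} = 0$ elsewhere. Since each $\ell_i$ is non-decreasing, the join/meet of two points $\vec{x},\vec{x}'$ maps to the join/meet of their levels; on the boxes the affiliation inequality $f^{\text{cont}}(\vec{x}\vee\vec{x}')f^{\text{cont}}(\vec{x}\wedge\vec{x}') \geq f^{\text{cont}}(\vec{x})f^{\text{cont}}(\vec{x}')$ reduces to the affiliation of $f$; and one must separately verify the inequality when some point lands in a zero-density gap (the RHS is then zero, so the inequality holds trivially). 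Symmetry (SAPV / $k$-GSAPV) is preserved automatically provided we choose the sub-intervals $I_i(v)$ to depend only on the group of $i$, and provided we place the boxes symmetrically; the density \eqref{eq:represent-CFPA-density-permutations-group} then has exactly the required permutation-invariant form.

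Next I would set up the correspondence between strategies. A monotone pure strategy $\beta^{\text{cont}}_i$ of the CFPA, restricted to the interval $I_i(v)$, need not be constant, so it induces a \emph{distribution} over bids when $v_i = v$: namely the pushforward of the (renormalized) Lebesgue measure on $I_i(v) \cap V_i^{\text{cont}}$ under $\beta^{\text{cont}}_i$. Define the DFPA mixed strategy $\beta^{\text{disc}}_i(v)$ to be exactly this distribution. Monotonicity of $\beta^{\text{cont}}_i$ implies $\max\support{\beta^{\text{disc}}_i(v)} \leq \min\support{\beta^{\text{disc}}_i(v')}$ for $v<v'$, so $\beta^{\text{disc}}$ is a monotone MBNE candidate. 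The crucial computation is that, because our continuous conditional density $f^{\text{cont}}(\cdot\mid x_i)$ for $x_i \in I_i(v)$ is (up to the gap-mass, which contributes zero) the pushforward-consistent analogue of the discrete conditional $f_{i\mid v}$, the interim winning probability $H_i$ and hence the interim utility are preserved \emph{exactly} in expectation over $x_i \in I_i(v)$: the DFPA interim utility $u^{\text{disc}}_i(\gamma,\beta^{\text{disc}}_{-i};v)$ equals the average over $x_i \in I_i(v)$ of the CFPA interim utility $u^{\text{cont}}_i(\gamma,\beta^{\text{cont}}_{-i};x_i)$. Since $\beta^{\text{cont}}$ is an $\varepsilon$-approximate PBNE, for almost every $x_i$ we have $u^{\text{cont}}_i(\beta^{\text{cont}}_i(x_i),\cdot;x_i) \geq u^{\text{cont}}_i(b,\cdot;x_i) - \varepsilon$ for all $b\in B$; averaging over $I_i(v)$ gives the MBNE inequality for $\beta^{\text{disc}}$ up to $\varepsilon$ — provided the deviation utility $u^{\text{cont}}_i(b,\cdot;x_i)$ is itself roughly constant over $I_i(v)$.

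That last caveat is where the $\delta$ term enters, and it is the main obstacle. The issue: although $\beta^{\text{cont}}$ restricted to $I_i(v)$ may vary, a \emph{fixed} deviation bid $b$ gives a utility $u^{\text{cont}}_i(b,\beta^{\text{cont}}_{-i};x_i) = (x_i - b)H_i(b,\beta^{\text{cont}}_{-i};x_i)$ that depends on $x_i$ only through the linear factor $(x_i-b)$ (the winning probability $H_i$ depends only on the opponents' strategies and the \emph{conditional} density, which — by our construction with gaps of zero density — is constant in $x_i$ across $I_i(v)$). Hence the variation of the deviation utility over $I_i(v)$ is at most $|I_i(v)| \leq$ (interval width), which we choose $\le \delta$. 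So averaging the CFPA equilibrium inequality over $I_i(v)$ yields $u^{\text{disc}}_i(\beta^{\text{disc}}_i(v),\cdot;v) \geq u^{\text{disc}}_i(b,\cdot;v) - \varepsilon - \delta$ for every $b\in B$, which by \cref{note:MBNE-def-pure-deviation} is exactly what is needed for a monotone $(\varepsilon+\delta)$-approximate MBNE. I would also note the no-overbidding property transfers: a no-overbidding monotone $\beta^{\text{cont}}$ bids at most $x_i$ on $I_i(v)$, and since all bids in $B$ lie below $v = \sup$-ish of $I_i(v)$... here one must be slightly careful and place $I_i(v)$ so that its \emph{entire} interior lies weakly below the first bid in $B$ exceeding $v$, so that the induced discrete strategy never overbids relative to $v$; choosing the intervals to shrink appropriately near each $v$ handles this. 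Everything is polynomial-time: the number of hyperrectangles equals $|\support{f}|$ (or $\ell$ in the symmetric representation), and reading off jump points and pushing them forward to a discrete mixed strategy is a direct computation. The symmetric-to-symmetric claim follows since if $\beta^{\text{cont}}$ is symmetric (equal across a group) then so is the induced $\beta^{\text{disc}}$.
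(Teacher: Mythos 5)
Your proposal takes essentially the same route as the paper's proof: replace each discrete value profile by a small axis-aligned box of width at most $\delta$ carrying constant density (which preserves affiliation and (group-)symmetry and keeps the representation polynomial), and map a monotone continuous PBNE back to a discrete mixed strategy by pushing forward the uniform draw over each value's interval, using that the winning probabilities coincide across the interval while the value coordinate shifts by at most $\delta$, yielding the $(\varepsilon+\delta)$ bound. The only cosmetic difference is that the paper argues pointwise (for each bid $b^*$ in the support of the induced mixed strategy it picks some $x_i^*$ with $\beta^c_i(x_i^*)=b^*$) rather than averaging the equilibrium inequality over the interval, and it anchors the intervals at $[v_i,v_i+\delta]$ with $\delta$ chosen below the minimal gap in $B\cup\bigcup_i V_i$, which is exactly the placement your no-overbidding and monotonicity caveats require.
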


\begin{proof}
Let $\delta \in (0,1)$ and a DFPA with bidding space $B$, value spaces $V_1,
\dots, V_n$, and a joint distribution with density $f^d$ be given. Without loss
of generality, we can assume that $\delta$ satisfies the following two
conditions
\begin{equation}\label{eq:delta-choice1}
\delta < \min \left\{\left|p-q\right|: p, q \in B \cup \bigcup_i V_i, p \neq q\right\},
\end{equation}
\begin{equation}\label{eq:delta-choice2}
\max \bigcup_i V_i < 1 - \delta.
\end{equation}
It is easy to see that the first condition is without loss of generality: if
$\delta$ does not satisfy it, we can just replace $\delta$ by a smaller positive
number that does (and which can be computed efficiently). The same idea also
works for the second condition, except in the case where $1 \in \bigcup_i V_i$.
In that case, we can consider a modified instance where we use value spaces
$V_i' = (1 - \gamma) \cdot V_i$ for some sufficiently small $\gamma > 0$ and
pick $\delta$ sufficiently small so that both conditions are satisfied. It is
easy to check that an $\varepsilon$-approximate PBNE for the modified instance
yields an $(\varepsilon + \gamma)$-approximate PBNE for the original instance.
Since we can make $\gamma$ and $\delta$ as small as needed, both conditions are
without loss of generality.

We construct a CFPA with bidding space $B$ and with joint distribution given by
density function $f^c$. This density function over $[0,1]^n$ is defined as 
$$f^c(\vec{x}) = \sum_{\vec{v} \in \vec{V}} w_{\vec{v}} \cdot \mathbbm{1}_{R_{\vec{v}}}(\vec{x}),$$ 
where 
\begin{equation*}
    R_{\vec{v}} = [v_1,v_1+\delta] \times \dots \times [v_n, v_n + \delta]
    \qquad 
    \text{and}
    \qquad
    w_{\vec{v}} = f^d(\vec{v})/\delta^n.
\end{equation*}
Observe that:
\begin{itemize}[leftmargin=*]
    \item[-] The density $f^c$ is well-defined. In particular, we have
    $R_{\vec{v}} \subseteq [0,1]^n$ for all $\vec{v} \in \vec{V}$ by Condition
    \eqref{eq:delta-choice2}.
    \item[-] Disjoint hyperrectangles: By Condition~\eqref{eq:delta-choice1} the
    hyperrectangles $R_{\vec{v}}$ and $R_{\vec{v'}}$ do not overlap for any
    distinct $\vec{v}, \vec{v'} \in \vec{V}$.
    \item[-] If $f^d$ satisfies the affiliation condition
    \eqref{eq:affiliation}, then so does $f^c$. Indeed, consider any $\vec{x},
    \vec{x'} \in [0,1]^n$. If $f^c(\vec{x}) = 0$ or $f^c(\vec{x'}) = 0$, then
    $\vec{x}$ and $\vec{x'}$ trivially satisfy the affiliation condition. If the
    density is not zero at any of those two points, then it must be that
    $\vec{x} \in R_{\vec{v}}$ and $\vec{x'} \in R_{\vec{v'}}$ for some $\vec{v},
    \vec{v'} \in \vec{V}$. It follows that $\vec{x} \vee \vec{x'} \in R_{\vec{v}
    \vee \vec{v'}}$ and $\vec{x} \wedge \vec{x'} \in R_{\vec{v} \wedge
    \vec{v'}}$. Now, it is easy to see that $\vec{x}$ and $\vec{x'}$ must
    satisfy the affiliation condition for $f^c$, because $\vec{v}$ and
    $\vec{v'}$ satisfy the condition for $f^d$, and because distinct
    hyperrectangles cannot overlap.
    \item[-] If $f^d$ is (group) symmetric, then so is $f^c$. This immediately follows from the construction of $f^c$.
    \item[-] The density $f^c$ can be represented efficiently. In the case where
    $f^d$ is not symmetric, it is represented by a list of the elements in its
    support along with the corresponding probabilities. Then, $f^c$ will be
    represented by a list of hyperrectangles and corresponding weights.
    Importantly, we only need to list the hyperrectangles with non-zero weight,
    i.e., as many as the size of the support of $f^d$. In the case where $f^d$
    is (group) symmetric, only the elements of the support that lie in $\vec{V}_{\geq}$ are listed, along with
    their probabilities (see \cref{sec:representation-dfpa}). Then, $f^c$ will also be represented with similar
    succinctness. Namely, according to the succinct representation for (group) symmetric
    instances, it suffices to list the hyperrectangles $R_{\vec{v}}$ with
    $\vec{v} \in \vec{V}_\geq$ that have non-zero weight.
\end{itemize}
Now let $\varepsilon \geq 0$ and consider any monotone $\varepsilon$-approximate
PBNE $\vec{\beta}^c$ of the CFPA. We construct a corresponding \emph{mixed}
strategy profile $\vec{\beta}^d$ in the DFPA as follows. For any bidder $i \in
N$ and any value $v_i \in V_i$, let $\beta^d_i(v_i)$ be the distribution of
$\beta^c_i(x_i)$ where $x_i$ is drawn uniformly at random from $[v_i, v_i +
\delta]$. Note that $\beta^d_i$ is non-overbidding, since $\beta^c_i$ is
non-overbidding and Condition~\eqref{eq:delta-choice1} ensures that $\min \{b
\in B: b > v_i\} > v_i + \delta$ for all $v_i \in V_i$. Furthermore, the
monotonicity of $\beta^c_i$, together with $\min \{|v_i - v_i'|: v_i, v_i' \in
V_i, v_i \neq v_i'\} > \delta$ (by Condition~\eqref{eq:delta-choice1}) implies
that $\beta^d_i$ is also monotone. Finally, if the instance is (group) symmetric and
$\vec{\beta}^c$ is symmetric, then so is $\vec{\beta}^d$.

Fix some bidder $i \in N$ and value $v_i \in \support{F_i}$. Then, the
construction of $\vec{\beta}^d$ from $\vec{\beta}^c$ ensures that the following
two distributions over $B^{n-1}$ are the same:
\begin{itemize}
    \item[-] Draw $\vec{v}_{-i} \in \vec{V}_{-i}$ according to the conditional
    distribution $f^d_{i|v_i}$, and then, for each $j \in N \setminus \{i\}$,
    (independently) draw $b_j \in B$ according to the distribution
    $\beta^d_j(v_j)$.
    \item[-] Draw $\vec{x}_{-i} \in [0,1]^{n-1}$ according to the conditional
    distribution $f^c_{i|v_i}$, and then, for each $j \in N \setminus \{i\}$,
    output $b_j := \beta^c_j(x_j)$.
\end{itemize}
Furthermore, this remains true if in the second distribution, we replace $v_i$
by any $x_i \in [v_i,v_i+\delta]$, since the corresponding conditional
distributions are identical, i.e., $f^c_{i|v_i} = f^c_{i|x_i}$. From this, we
deduce that, for any bidder $i \in N$ and any value $v_i \in \support{F_i}$,
\begin{equation}\label{eq:equivalent-distributions}
H^d_i(b,\vec{\beta}^d_{-i};v_i) = H^c_i(b,\vec{\beta}^c_{-i};x_i) \quad \forall b \in B, \forall x_i \in [v_i, v_i + \delta]
\end{equation}
where, recall, that the function $H^c_i$ (resp.\ $H^d_i$) denotes the
probability of winning the item in the CFPA (resp.\ DFPA), given the bid, the
strategy profile of the other bidders, and the value.

It remains to prove that $\vec{\beta}^d$ is an $(\varepsilon + \delta)$-MBNE of
the DFPA. For any bidder $i \in N$, value $v_i \in \support{F_i}$, and bid $b
\in B$, we can write, using~\cref{eq:equivalent-distributions},
\begin{equation*}
\begin{split}
u^d_i(b, \vec{\beta}^d_{-i};v_i) = (v_i - b) \cdot H^d_i(b, \vec{\beta}^d_{-i};v_i) &= (v_i - b) \cdot H^c_i(b, \vec{\beta}^c_{-i};x_i)\\
&= u^c_i(b, \vec{\beta}^c_{-i};x_i) + (v_i - x_i) \cdot H^c_i(b, \vec{\beta}^c_{-i};x_i)
\end{split}
\end{equation*}
for all $x_i \in [v_i, v_i + \delta]$. Consider any $b^* \in B$ with
$\beta^d_i(v_i)(b^*) > 0$, and note that by construction of $\beta^d_i$ there
must exist $x_i^* \in [v_i,v_i+\delta]$ such that $\beta^c_i(x_i^*) = b^*$. Now,
we can write, for any alternative $b \in B$,
\begin{equation*}
\begin{split}
&\quad u^d_i(b,\vec{\beta}^d_{-i};v_i) - u^d_i(b^*,\vec{\beta}^d_{-i};v_i)\\
&= u^c_i(b, \vec{\beta}^c_{-i};x_i^*) - u^c_i(b^*, \vec{\beta}^c_{-i};x_i^*) + (v_i - x_i^*) \cdot (H^c_i(b, \vec{\beta}^c_{-i};x_i^*) - H^c_i(b^*, \vec{\beta}^c_{-i};x_i^*))\\
&\leq u^c_i(b, \vec{\beta}^c_{-i};x_i^*) - u^c_i(b^*, \vec{\beta}^c_{-i};x_i^*) + |v_i - x_i^*|\\
&\leq u^c_i(b, \vec{\beta}^c_{-i};x_i^*) - u^c_i(b^*, \vec{\beta}^c_{-i};x_i^*) + \delta\\
&\leq \varepsilon + \delta
\end{split}
\end{equation*}
where we used the fact that $b^* = \beta^c_i(x_i^*)$ is an
$\varepsilon$-best-response to $\vec{\beta}^c_{-i}$ at value $x_i^*$ in the
CFPA. As a result, we have shown that any $b^* \in B$ with $\beta^d_i(v_i)(b^*)
> 0$ is an $(\varepsilon + \delta)$-best-response to $\vec{\beta}^d_{-i}$ at
value $v_i$ in the DFPA. It follows that $\vec{\beta}^d$ is an $(\varepsilon +
\delta)$-MBNE of the DFPA.
\end{proof}

\section{NP-hardness of Computing PBNE for Correlated Priors} \label{sec:np-hardness}
    
Motivated by the non-existence result of \cref{th:tight_existence_approx}, we
continue by studying the computational problem of deciding the existence of a
PBNE in a DFPA with discrete, correlated priors. In fact, in this section we
show that the problem of deciding the existence of an exact PBNE is NP-hard. At
the same time, the problem of deciding the existence of an
$\varepsilon$-approximate PBNE remains NP-hard for $\varepsilon$
inverse-polynomial in the size of the input. Our proof is via a reduction from a
version of the satisfiability problem. We will first provide a high-level
overview of our techniques and then a complete proof of the theorem at the end
of the section.

Previous work on the topic has established an NP-hardness result for the case of
subjective prior distributions \citep{fghk24}, where each bidder can have their
own, independent beliefs about the values of the others. Intuitively, the way
the reduction from satisfiability works in that case is that new bidders are
introduced to the auction for each operator of the boolean formula, the
strategies of which are only affected by the bidders corresponding to the inputs
of that operator; this is because in the subjective priors setting, the bidders'
beliefs are independent and each bidder can have their own beliefs about each
other bidder. This is achieved by introducing bidders with a prior belief of
value $0$ for the item for any bidder that we would like to not affect their
best response, forcing them to always bid $0$ at any equilibrium due to the
no-overbidding assumption. In our case, since the distribution of the values is
joint, this cannot be achieved.

In our setting, we can still construct a joint distribution that only contains
points of positive mass in which the only bidders appearing with a positive
value are involved as inputs or outputs of the same operator. However, a bidder
can appear with positive value while being the output of one operator and the
input of another, which makes it difficult to reason about her best response
with respect to both. To overcome this obstacle, it helps to think about the
evaluation of the SAT instance in levels, where first any negations to the
variables are applied, and then these are followed by at most $2$ OR operations
per clause. This allows us to introduce the idea of discounting factors
$\delta$, one for each level of the construction. The point of these discounting
factors is that they make points of the distribution that were added due to some
operator lower in the evaluation tree of the boolean formula appear with smaller
probability, such that when a bidder that appears as the output of one operator
and input of another computes her conditional distribution to find her best
response, she is primarily affected by the former operator. This allows us to
simulate the evaluation of the formula and then embed a counterexample for the
non-existence of a PBNE in the output of all these clauses to reduce the problem
of deciding if there is a satisfying assignment to the boolean formula to the
problem of deciding the existence of a PBNE in the DFPA with correlated priors,
yielding the following theorem:

\begin{theorem}\label{thm:NP-completeness} There exists an $\varepsilon$ of size
    inverse-polynomial to the problem description such that, for all
    $\varepsilon'<\varepsilon$, the problem of deciding the existence of an
    $\varepsilon'$-PBNE of a DFPA with correlated priors is (strongly) NP-hard. 
\end{theorem}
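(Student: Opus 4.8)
The plan is to reduce from the restricted satisfiability problem \sat referenced above: given a formula $\phi$, we construct in polynomial time a DFPA with correlated priors that admits an $\varepsilon'$-PBNE (for the claimed inverse-polynomial threshold $\varepsilon$ and every $\varepsilon' < \varepsilon$) if and only if $\phi$ is satisfiable. The auction has one bidder per variable of $\phi$, one ``literal bidder'' per literal occurrence (computing either the variable or its negation), at most two ``OR bidders'' per clause (an at-most-$3$-literal clause $\ell_1 \vee \ell_2 \vee \ell_3$ is broken as $\textsf{OR}(\textsf{OR}(\ell_1,\ell_2),\ell_3)$), one ``clause bidder'' per clause, and a constant-size set of bidders implementing a non-existence gadget. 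Every bidder that carries a Boolean value has two distinguished bids encoding \emph{true} and \emph{false}. The joint prior is \emph{localized}: it puts positive mass only on value-tuples in which the bidders with nonzero value are exactly the input and output bidders of a single gadget; every other bidder has value $0$ in that tuple and is therefore forced to bid $0$ at any approximate PBNE by the no-overbidding assumption, so that gadget's analysis decouples from the rest of the instance. All values, bids and probabilities lie on a grid of polynomial granularity (equivalently, are rationals with polynomially bounded numerator and denominator) and $B,V_1,\dots,V_n$ are listed explicitly, which is what yields \emph{strong} NP-hardness.

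The only way a bidder fails to be isolated to one gadget is that an output bidder of one gadget is also an input bidder of the next. To handle this we scale the mass of each gadget's tuples by a \emph{discounting factor} that shrinks from the innermost gadgets (the literal computations) towards the outer ones (the clauses, then the non-existence gadget). With the right choice of discount, a bidder's conditional distribution given its own value is dominated by the tuples of the gadget of which it is the \emph{output}, while the contribution of the gadget of which it is merely an \emph{input} is below (a fixed fraction of) the error budget $\varepsilon'$. Consequently, at any $\varepsilon'$-PBNE each output bidder must play the pure bid encoding exactly the Boolean function of its input bidders' bids, provided those inputs already encode Booleans; arguing by induction over the constantly many layers, every literal, OR and clause bidder encodes the correct value of the corresponding subformula evaluated at the assignment read off from the variable bidders (which we arrange so that both \emph{true} and \emph{false} are $\varepsilon'$-best responses, so every assignment is realizable). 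This localization-plus-discounting mechanism is precisely what replaces \citet{fghk24}'s exploitation of subjectivity --- there one could zero out ``irrelevant'' bidders from a single bidder's perspective; here we instead make their combined influence negligible.

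Finally, the clause bidders are wired into the non-existence gadget: the two gadget bidders appear only in value-tuples that also contain all clause bidders, so their conditional distributions are governed by the clause outputs, while their minuscule mass prevents them from perturbing the clause bidders' strategies beyond $\varepsilon'$. The gadget is a rescaled and re-wired copy of the two-bidder IID non-existence instance of \cref{th:tight_existence_approx}, designed so that when \emph{all} clause bidders bid \emph{true} it admits an $\varepsilon'$-PBNE, whereas if \emph{any} clause bidder bids \emph{false} it inherits the (robust, quantitative) non-existence guarantee of \cref{th:tight_existence_approx} and has no $\varepsilon'$-PBNE. Putting the pieces together: if $\phi$ is satisfiable, fix a satisfying assignment, let all gadgets propagate it, and complete the profile with the gadget's equilibrium --- this is a global $\varepsilon'$-PBNE; conversely, any global $\varepsilon'$-PBNE determines an assignment via the variable bidders, the inductive analysis forces every clause bidder to bid \emph{true}, so the gadget does not veto the profile and the assignment satisfies $\phi$. (NP membership, if wanted, follows by guessing the strategy profile and checking the equilibrium conditions using the polynomial-time utility computation of \cref{lem:efficient-computation}.)

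The main obstacle, and the bulk of the technical work, is to make the two uses of the discounting factor coexist \emph{quantitatively}: the discounts must be small enough that each output bidder's best response becomes, up to $\varepsilon'$, a function of its own gadget's inputs only --- so that gates compose cleanly across every layer and across the $O(\log)$-free combination of the clause bidders into the gadget --- while at the same time the non-existence instance, once rescaled to its tiny embedded weight, must still fail to have an $\varepsilon'$-PBNE whenever some clause is unsatisfied; this is exactly why one needs the \emph{approximate} non-existence of \cref{th:tight_existence_approx} rather than mere exact non-existence. Because the construction uses only a constant number of layers and polynomially many bidders and tuples, the per-layer discounts can be taken inverse-polynomial and their product is still inverse-polynomial, which pins down the threshold $\varepsilon$; carefully propagating these error bounds through the induction and through the embedded gadget is the delicate part of the argument.
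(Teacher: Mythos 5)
Most of your plan coincides with the paper's construction: the reduction from \sat, the encoding of \emph{false}/\emph{true} by two monotone bid vectors, the ``localized'' joint prior in which every positive-mass tuple involves only the input and output bidders of one gadget (all others forced to bid $0$ by no-overbidding), the layered discounting factors that make each bidder's conditional distribution dominated by the gadget of which she is the \emph{output}, the arrangement that both Boolean values are best responses at the variable level, and the bookkeeping that keeps the number of layers constant so that the threshold $\varepsilon$ is inverse-polynomial and the numbers polynomially bounded (strong hardness). Up to that point you are re-deriving the paper's argument.

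The genuine gap is the final step, the wiring of the non-existence gadget. You use a \emph{single} constant-size gadget whose two bidders appear only in tuples in which \emph{all} clause bidders simultaneously have positive value, and you claim this gadget admits an $\varepsilon'$-PBNE iff every clause bidder plays the true-encoding strategy. But in a first-price auction the gadget bidders' interim utility at any bid depends on the clause bidders' (pure, value-determined) bids in that shared tuple only through the event of being (tied for) the highest bid, i.e., essentially through the maximum opposing bid and the tie count at that maximum. Since \emph{false} is encoded by a strictly lower bid vector than \emph{true}, with $m\geq 2$ clauses the maximum opposing bid is identical whether all clause bidders play $s_1$ or exactly one plays $s_0$ while the rest play $s_1$; the only residual difference is an $O(1/m^2)$ change in tie-splitting probabilities, which is nothing like ``inheriting the robust non-existence of \cref{th:tight_existence_approx}'' and is not developed in your proposal. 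Consequently the reverse direction of the reduction (unsatisfiable $\Rightarrow$ no $\varepsilon'$-PBNE) fails as described: a single unsatisfied clause is invisible to your gadget. The paper avoids exactly this by \emph{not} aggregating: it attaches a separate pair of output bidders to each $\OR_2$ (clause-output) bidder, so that any single clause bidder playing $s_0$ already makes its own two output bidders unable to simultaneously $\varepsilon'$-best-respond, with a uniform margin of order $\delta_{\OUT}/\Delta$, while if the clause output is $s_1$ the local pair has an equilibrium. Repairing your version would require either these per-clause gadgets or an explicit AND-tree of additional gate bidders feeding one gadget (with its own depth-dependent discounting analysis); neither appears in your proposal.
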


\cref{fig:construction} shows a high-level view of the construction
for an example of a boolean formula with three variables. The first layer
contains the input bidders $i_{x_1},i_{x_2}$ and $i_{x_3}$, each of which is
copied three times, as they might appear in up to three clauses (represented by
bidders denoted as $j$, $k$, and $\ell$, and indexed by the variable name). The
SAT operators are simulated in layers, first the NOT operators (simulated by a
combination of NOT bidders and Projection bidders) and then two layers of OR
operators. The output bidders encode the example that shows the non-existence of
equilibrium, ensuring that an equilibrium exists if and only if the formula is
satisfiable.   \medskip 

We use the remaining of this section to provide the complete proof
of~\cref{thm:NP-completeness}. We begin by specifying the version of
satisfiability that we will reduce from.

\paragraph{The \sat problem}
For the purpose of our reduction, we will be using a variant of the classic
satisfiability problem, which will make our analysis simpler. The \sat problem
is a restriction of the classic \textsc{3-SAT} problem to instances in which
each clause can have either $2$ or $3$ variables, and each variable occurs at
most $3$ times in the formula.

\begin{definition}[\sat]\label{def:2/3,3-sat}

An instance of \sat consists of a set of variables $X = \{ x_1, x_2, \ldots, x_n
\}$ which can take values in $\{ 0,1 \}$ and a set of clauses $C = \{ C_1, C_2,
\ldots, C_m \}$. Each clause $C_i$ can be either of the form $\{y_1^i, y_2^i\}$
or $\{y_1^i,y_2^i,y_3^i\}$ where $y_1^i,y_2^i,y_3^i \in \{ x_j, \bar{x_j} \}_{j
\in [n]}$. Let $\phi:\{0,1\}^n\rightarrow\{0,1\}$ denote the function evaluating
an instance of \sat, given an assignment to its variables $X$. A yes instance of
the decision problem is one in which there is a $\vec{z} \in \{0,1\}^n$ for
which $\phi(\vec{z})=1$.

\end{definition}

\begin{theorem}[\cite{tovey_1984}]
   The \sat problem is NP-complete.
\end{theorem}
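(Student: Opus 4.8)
The plan is a many-one reduction from classical \textsc{3-SAT} --- the version where every clause has exactly three literals over three distinct variables but each variable may occur arbitrarily often --- which is NP-complete by Cook--Levin. Membership of \sat in NP is immediate: a satisfying assignment $\vec{z}\in\{0,1\}^n$ is a polynomial-size certificate and evaluating $\phi(\vec{z})$ takes polynomial time. So the content is entirely in the hardness direction, and the whole difficulty is an occurrence-bounding gadget.

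Given a \textsc{3-SAT} formula with clauses $C_1,\dots,C_m$ over $x_1,\dots,x_n$, I would process each variable $x$ that occurs $k\ge 2$ times as follows: introduce fresh variables $x^{(1)},\dots,x^{(k)}$, replace the $t$-th occurrence of $x$ by $x^{(t)}$ keeping its polarity, and append the ``equality cycle'' of $2$-clauses $\{\bar{x}^{(t)},x^{(t+1)}\}$ for $t=1,\dots,k$, indices taken modulo $k$. Each such $2$-clause is the implication $x^{(t)}\Rightarrow x^{(t+1)}$, so propagating around the cycle forces all copies of $x$ to share a common value in every satisfying assignment. Variables occurring at most once are left untouched. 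This is clearly a polynomial-time transformation, and it produces a legal \sat instance: every rewritten original clause still has three literals over three distinct copy-variables, every cycle clause has two literals, and each copy-variable $x^{(t)}$ occurs exactly three times --- once in the (rewritten) original clause where it replaced an occurrence of $x$, once as $\bar{x}^{(t)}$ in the clause $x^{(t)}\Rightarrow x^{(t+1)}$, and once as $x^{(t)}$ in the clause $x^{(t-1)}\Rightarrow x^{(t)}$.

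Correctness of the reduction is the last step. If the original formula has a satisfying assignment, set every $x^{(t)}$ equal to the value of $x$: then every cycle clause is a satisfied implication $a\Rightarrow a$ and every rewritten clause has the same truth value as the original, so the \sat instance is satisfied. Conversely, any satisfying assignment of the \sat instance must make all $x^{(t)}$ equal (the implication cycle propagates truth forward and falsity backward), so reading off that common value for each $x$ gives a satisfying assignment of the original \textsc{3-SAT} formula. I do not expect a genuine obstacle; the only point that needs care is the bookkeeping --- in particular that the equality gadget is a \emph{cycle} rather than a path (a path would leave the two endpoint copies unconstrained) and that it contributes exactly two occurrences per copy, so the occurrence bound of three is met with equality rather than exceeded.
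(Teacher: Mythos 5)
Your proposal is correct, and since the paper offers no proof of this statement (it simply cites \citet{tovey_1984}), there is nothing to diverge from: your copy-variables-plus-implication-cycle reduction from 3-SAT is precisely Tovey's classical construction, including the key observations that the equality gadget must be a cycle and that each copy then occurs exactly three times. No gaps.
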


\subsection{Construction of the Auction}

To prove the NP-hardness result, we will provide a reduction from the \sat
problem to the problem of deciding the existence of a PBNE in a DFPA with
correlated priors. Consider an instance $(X,C)$ of \sat. We will describe how to
create a DFPA instance that encodes the \sat instance.

Consider a DFPA with bidding space $B = \{0, b_1 = 1/7, b_2 = 2/7, b_3 = 3/7\}$
and a common value space for all bidders $V = \{ 0,23/64,1 \}$. For ease of
notation, we represent each bidder's strategy by a tuple
$(\beta(0),\beta(23/64),\beta(1))$ representing the bid that the bidder plays
for each value they can have. Due to the no-overbidding assumption, we also know
that $\beta(0)=0$ at any equilibrium.

The logical values of \emph{false} and \emph{true} will be encoded by a bidder's
strategy as follows:
\begin{itemize}
    \item[-] \emph{false} is encoded by the bidding strategy $s_0:=(0,b_1,b_2)$;
    \item[-] \emph{true} is encoded by the bidding strategy $s_1:=(0,b_2,b_3)$.
\end{itemize}

We will now describe the bidders that will participate in the auction, along
with the joint distribution induced. We will be gradually adding bidders to the
auction, along with mass for specific points of the joint distribution. Since
the number of bidders depends on the \sat instance, the points of mass added to
the distribution at each step will have value $0$ for any bidders that are not
directly mentioned in that step. As our construction is computable in polynomial
time, we can think of this as specifying efficiently what the joint distribution
will look like, and then constructing it all at once at the end, when we know
exactly how many bidders there are. This idea is also useful in defining a valid
distribution -- as we are adding more mass depending on the size of the
instance, we need to make sure that at the end the sum of the probabilities of
all points in the joint distribution is $1$. To do so, we will normalize
everything by multiplying the mass of each new point added by
$\frac{1}{\Delta}$, where $\Delta$ is chosen at the end of the reduction to be
the sum of all the total mass added in the previous steps of the construction.
We will now demonstrate how to introduce bidders and points of mass to the joint
distribution depending on the \sat instance.

\paragraph{Input bidders.}

The purpose of the input bidders is to make sure that all (at most three)
appearances of the same variable represent the same boolean value. To achieve
this, for each variable $x \in X$ of the \sat instance, we will introduce $4$
bidders to the auction, $i,j,k,\ell$, along with the following points of mass to
the joint distribution:

\begin{itemize}
    \item[-] $(0,\ldots, v_j=23/64, \ldots, 0)$ with probability $\frac{33}{128\Delta}$
    \item[-] $(0,\ldots, v_k=23/64, \ldots, 0)$ with probability $\frac{33}{128\Delta}$
    \item[-] $(0,\ldots, v_{\ell}=23/64, \ldots, 0)$ with probability $\frac{33}{128\Delta}$
    \item[-] $(0,\ldots, v_i=23/64, v_j = 23/64, \ldots, 0)$ with probability $\frac{2}{\Delta}$
    \item[-] $(0,\ldots, v_i=23/64, v_k = 23/64, \ldots, 0)$ with probability $\frac{2}{\Delta}$
    \item[-] $(0,\ldots, v_i=23/64, v_{\ell} = 23/64, \ldots, 0)$ with probability $\frac{2}{\Delta}$
    \item[-] $(0,\ldots, v_i=23/64, v_j = 1 \ldots, 0)$ with probability $\frac{1}{\Delta}$
    \item[-] $(0,\ldots, v_i=23/64, v_k = 1 \ldots, 0)$ with probability $\frac{1}{\Delta}$
    \item[-] $(0,\ldots, v_i=23/64, v_{\ell} = 1 \ldots, 0)$ with probability $\frac{1}{\Delta}$
\end{itemize}

where the notation $(0,\ldots,v_i=v,\ldots,0)$ means that at this point all
bidders other than $i$ have value $0$.

\paragraph{\NOT bidders.}
For each negated literal in a clause, we add a bidder $j$ to the auction, along
with the following points of mass to the joint distribution:

\begin{itemize}
    \item[-] $(0,\ldots, v_j=23/64, \ldots, 0)$ with probability $\frac{33\delta_{\NOT}}{256\Delta}$
    \item[-] $(0,\ldots, v_j=1, \ldots, 0)$ with probability $\frac{\delta_{\NOT}}{\Delta}$
    \item[-] $(0,\ldots, v_i=1, v_j=23/64, \ldots, 0)$ with probability $\frac{\delta_{\NOT}}{\Delta}$
    \item[-] $(0,\ldots, v_i=1, v_j=1, \ldots, 0)$ with probability $\frac{\delta_{\NOT}}{\Delta}$
\end{itemize}

where $i$ is the bidder corresponding to the variable that was negated, and
$\delta_{\NOT}$ is a constant that we will pick at the end of the reduction.

\paragraph{Projection bidders.}
For each ``NOT bidder'' $j$, we add another bidder $k$, along with the following
points of mass to the joint distribution:
\begin{itemize}
    \item[-] $(0,\ldots, v_k=23/64, \ldots, 0)$ with probability $\frac{33\delta_{\PROJ}}{256\Delta}$
    \item[-] $(0,\ldots, v_j=23/64, v_k=23/64, \ldots, 0)$ with probability $\frac{\delta_{\PROJ}}{\Delta}$
    \item[-] $(0,\ldots, v_j=23/64, v_k=1, \ldots, 0)$ with probability $\frac{\delta_{\PROJ}}{\Delta}$ 
\end{itemize}
where $\delta_{\PROJ}$ is a constant that we will pick at the end of the reduction.

\paragraph{$\OR_1$ bidders.}
For each clause that has $3$ literals with corresponding bidders $i,j,k$ (if a
literal is negated, the corresponding bidder is the projection bidder of the
appropriate NOT bidder, otherwise it is just one of the input bidders for the
specific variable), we introduce a bidder $\ell$, along with the following
points of mass to the joint distribution:

\begin{itemize}
    \item[-] $(0,\ldots, v_{\ell}=23/64, \ldots, 0)$ with probability $\frac{\delta_{\OR_1}}{128\Delta}$
    \item[-] $(0,\ldots, v_i=23/64, v_{\ell} = 23/64, \ldots, 0)$ with probability $\frac{\delta_{\OR_1}}{\Delta}$
    \item[-] $(0,\ldots, v_i=23/64, v_{\ell} = 1 \ldots, 0)$ with probability $\frac{\delta_{\OR_1}}{\Delta}$
    \item[-] $(0,\ldots, v_j=23/64, v_{\ell} = 23/64 \ldots, 0)$ with probability $\frac{\delta_{\OR_1}}{\Delta}$
    \item[-] $(0,\ldots, v_j=23/64, v_{\ell}=1, \ldots, 0)$ with probability $\frac{\delta_{\OR_1}}{\Delta}$
\end{itemize}

where $\delta_{\OR_1}$ is a constant that we will pick at the end of the reduction.

\paragraph{$\OR_2$ bidders.}
For each clause with $2$ literals with corresponding bidders $i,j$ (taking into
account negation and projection as above) and for each clause with $3$ literals
$k_1,k_2,j$ for which the corresponding $\OR_1$ bidder is $i$, we introduce a
bidder $\ell$, along with the following points of mass to the joint
distribution:
\begin{itemize}
    \item[-] $(0,\ldots, v_{\ell}=23/64, \ldots, 0)$ with probability $\frac{\delta_{\OR_2}}{128\Delta}$
    \item[-] $(0,\ldots, v_i=23/64, v_{\ell} = 23/64, \ldots, 0)$ with probability $\frac{\delta_{\OR_2}}{\Delta}$
    \item[-] $(0,\ldots, v_i=23/64, v_{\ell} = 1 \ldots, 0)$ with probability $\frac{\delta_{\OR_2}}{\Delta}$
    \item[-] $(0,\ldots, v_j=23/64, v_{\ell} = 23/64 \ldots, 0)$ with probability $\frac{\delta_{\OR_2}}{\Delta}$
    \item[-] $(0,\ldots, v_j=23/64, v_{\ell}=1, \ldots, 0)$ with probability $\frac{\delta_{\OR_2}}{\Delta}$
\end{itemize}
where $\delta_{\OR_2}$ is a constant that we will pick at the end of the
reduction.

\paragraph{Output bidders.}
For each $\delta_{\OR_2}$ bidder $i$, we introduce bidders $k,\ell$, along with
the following points of mass to the joint distribution:
\begin{itemize}
    \item[-] $(0,\ldots, v_k=1, \ldots, 0)$ with probability $\frac{\delta_{\OUT}}{\Delta}$
    \item[-] $(0,\ldots, v_{\ell}=1, \ldots, 0)$ with probability $\frac{\delta_{\OUT}}{\Delta}$
    \item[-] $(0,\ldots, v_i=23/64, v_k = 1, v_{\ell}=1 \ldots, 0)$ with probability $\frac{3\delta_{\OUT}}{4\Delta}$
\end{itemize} 
where $\delta_{\OUT}$ is a constant that we will pick at the end of the reduction.

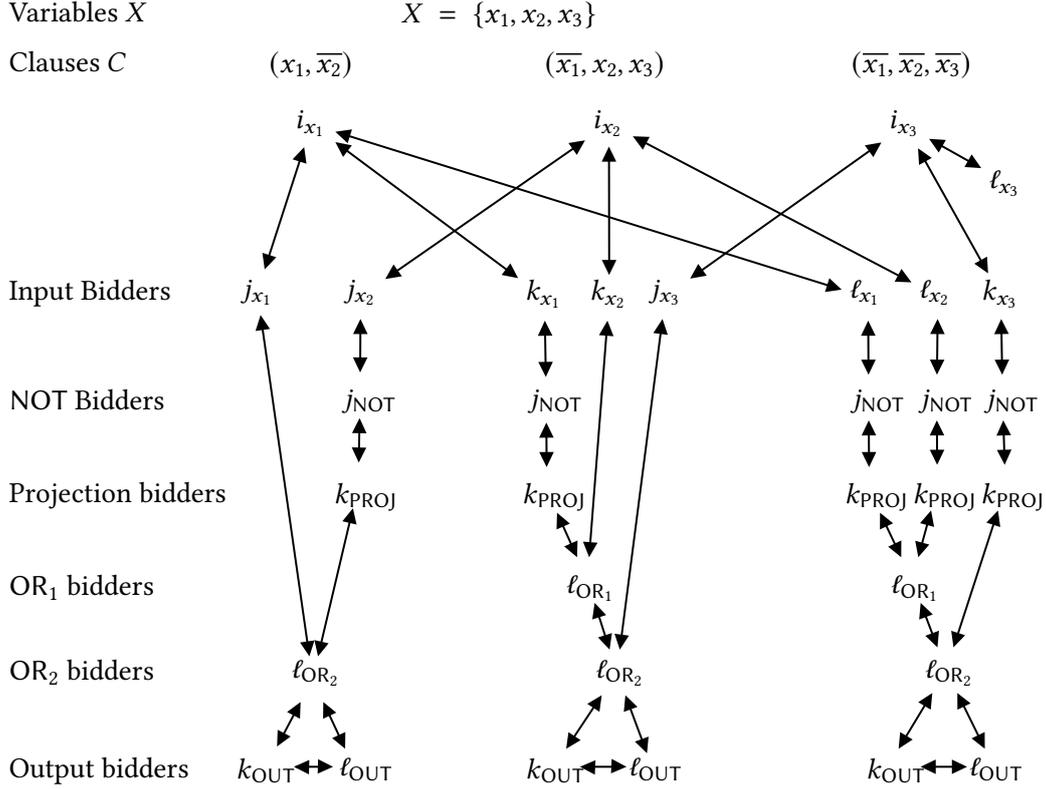
\begin{figure}[ht!]
    \begin{center}

\tikzset{every picture/.style={line width=0.75pt}} 

\begin{tikzpicture}[x=0.75pt,y=0.75pt,yscale=-0.85,xscale=0.85]

\draw (6,5.4) node [anchor=north west][inner sep=0.75pt]   [align=left] {Variables $X$};
\draw (237,5.4) node [anchor=north west][inner sep=0.75pt]    {$X\ =\ \{x_{1} ,x_{2} ,x_{3}\}$};
\draw (6,170) node [anchor=north west][inner sep=0.75pt]   [align=left] {Input Bidders};
\draw (174.33,70) node [anchor=north west][inner sep=0.75pt]    {$i_{x_1}$};
\draw (142,170) node [anchor=north west][inner sep=0.75pt]    {$j_{x_1}$};
\draw (309.67,170) node [anchor=north west][inner sep=0.75pt]    {$k_{x_1}$};
\draw (499.67,170) node [anchor=north west][inner sep=0.75pt]    {$\ell _{x_1}$};
\draw (349,70) node [anchor=north west][inner sep=0.75pt]    {$i_{x_2}$};
\draw (202.67,170) node [anchor=north west][inner sep=0.75pt]    {$j_{x_2}$};
\draw (347.67,170) node [anchor=north west][inner sep=0.75pt]    {$k_{x_2}$};
\draw (540.67,170) node [anchor=north west][inner sep=0.75pt]    {$\ell _{x_2}$};
\draw (523,70) node [anchor=north west][inner sep=0.75pt]    {$i_{x_3}$};
\draw (381.67,170) node [anchor=north west][inner sep=0.75pt]    {$j_{x_3}$};
\draw (577.67,170) node [anchor=north west][inner sep=0.75pt]    {$k_{x_3}$};
\draw (581.67,105.4) node [anchor=north west][inner sep=0.75pt]    {$\ell _{x_3}$};
\draw (6,35) node [anchor=north west][inner sep=0.75pt]   [align=left] {Clauses $C$};
\draw (158,35) node [anchor=north west][inner sep=0.75pt]    {$\left( x_{1} ,\overline{x_{2}}\right)$};
\draw (320,35) node [anchor=north west][inner sep=0.75pt]    {$\left(\overline{x_{1}} ,x_{2} ,x_{3}\right)$};
\draw (500,35) node [anchor=north west][inner sep=0.75pt]    {$\left(\overline{x_{1}} ,\overline{x_{2}} ,\overline{x_{3}}\right)$};
\draw (6,235) node [anchor=north west][inner sep=0.75pt]   [align=left] {\NOT Bidders};
\draw (201.33,235) node [anchor=north west][inner sep=0.75pt]    {$j_{\NOT}$};
\draw (310.67,235) node [anchor=north west][inner sep=0.75pt]    {$j_{\NOT}$};
\draw (500,235) node [anchor=north west][inner sep=0.75pt]    {$j_{\NOT}$};
\draw (540,235) node [anchor=north west][inner sep=0.75pt]    {$j_{\NOT}$};
\draw (578.67,235) node [anchor=north west][inner sep=0.75pt]    {$j_{\NOT}$};
\draw (6,290) node [anchor=north west][inner sep=0.75pt]   [align=left] {Projection bidders};
\draw (197.33,290) node [anchor=north west][inner sep=0.75pt]    {$k_{\PROJ}$};
\draw (308,290) node [anchor=north west][inner sep=0.75pt]    {$k_{\PROJ}$};
\draw (497.33,290) node [anchor=north west][inner sep=0.75pt]    {$k_{\PROJ}$};
\draw (537.33,290) node [anchor=north west][inner sep=0.75pt]    {$k_{\PROJ}$};
\draw (577.33,290) node [anchor=north west][inner sep=0.75pt]    {$k_{\PROJ}$};
\draw (6,345) node [anchor=north west][inner sep=0.75pt]   [align=left] {$\OR_1$ bidders};
\draw (172,395) node [anchor=north west][inner sep=0.75pt]    {$\ell _{\OR_2}$};
\draw (333.33,345) node [anchor=north west][inner sep=0.75pt]    {$\ell _{\OR_1}$};
\draw (524,345) node [anchor=north west][inner sep=0.75pt]    {$\ell _{\OR_1}$};
\draw (6,395) node [anchor=north west][inner sep=0.75pt]   [align=left] {$\OR_2$ bidders};
\draw (350.67,395) node [anchor=north west][inner sep=0.75pt]    {$\ell _{\OR_2}$};
\draw (544,395) node [anchor=north west][inner sep=0.75pt]    {$\ell _{\OR_2}$};
\draw (6,453) node [anchor=north west][inner sep=0.75pt]   [align=left] {Output bidders};
\draw (140,453) node [anchor=north west][inner sep=0.75pt]    {$k_{\OUT}$};
\draw (200,453) node [anchor=north west][inner sep=0.75pt]    {$\ell _{\OUT}$};
\draw (310,453) node [anchor=north west][inner sep=0.75pt]    {$k_{\OUT}$};
\draw (370,453) node [anchor=north west][inner sep=0.75pt]    {$\ell _{\OUT}$};
\draw (510,453) node [anchor=north west][inner sep=0.75pt]    {$k_{\OUT}$};
\draw (570,453) node [anchor=north west][inner sep=0.75pt]    {$\ell _{\OUT}$};
\draw    (158.74,162.81) -- (180.09,96.19) ;
\draw [shift={(181.01,93.33)}, rotate = 107.77] [fill={rgb, 255:red, 0; green, 0; blue, 0 }  ][line width=0.08]  [draw opacity=0] (8.93,-4.29) -- (0,0) -- (8.93,4.29) -- cycle    ;
\draw [shift={(157.83,165.67)}, rotate = 287.77] [fill={rgb, 255:red, 0; green, 0; blue, 0 }  ][line width=0.08]  [draw opacity=0] (8.93,-4.29) -- (0,0) -- (8.93,4.29) -- cycle    ;
\draw    (201.73,92.14) -- (304.27,169.07) ;
\draw [shift={(306.67,170.87)}, rotate = 216.88] [fill={rgb, 255:red, 0; green, 0; blue, 0 }  ][line width=0.08]  [draw opacity=0] (8.93,-4.29) -- (0,0) -- (8.93,4.29) -- cycle    ;
\draw [shift={(199.33,90.34)}, rotate = 36.88] [fill={rgb, 255:red, 0; green, 0; blue, 0 }  ][line width=0.08]  [draw opacity=0] (8.93,-4.29) -- (0,0) -- (8.93,4.29) -- cycle    ;
\draw    (493.8,176.73) -- (202.2,85.13) ;
\draw [shift={(199.33,84.23)}, rotate = 17.44] [fill={rgb, 255:red, 0; green, 0; blue, 0 }  ][line width=0.08]  [draw opacity=0] (8.93,-4.29) -- (0,0) -- (8.93,4.29) -- cycle    ;
\draw [shift={(496.67,177.63)}, rotate = 197.44] [fill={rgb, 255:red, 0; green, 0; blue, 0 }  ][line width=0.08]  [draw opacity=0] (8.93,-4.29) -- (0,0) -- (8.93,4.29) -- cycle    ;
\draw    (343.54,91.01) -- (231.13,169.64) ;
\draw [shift={(228.67,171.36)}, rotate = 325.03] [fill={rgb, 255:red, 0; green, 0; blue, 0 }  ][line width=0.08]  [draw opacity=0] (8.93,-4.29) -- (0,0) -- (8.93,4.29) -- cycle    ;
\draw [shift={(346,89.29)}, rotate = 145.03] [fill={rgb, 255:red, 0; green, 0; blue, 0 }  ][line width=0.08]  [draw opacity=0] (8.93,-4.29) -- (0,0) -- (8.93,4.29) -- cycle    ;
\draw    (360.03,96) -- (360.14,166) ;
\draw [shift={(360.14,169)}, rotate = 269.91] [fill={rgb, 255:red, 0; green, 0; blue, 0 }  ][line width=0.08]  [draw opacity=0] (8.93,-4.29) -- (0,0) -- (8.93,4.29) -- cycle    ;
\draw [shift={(360.02,93)}, rotate = 89.91] [fill={rgb, 255:red, 0; green, 0; blue, 0 }  ][line width=0.08]  [draw opacity=0] (8.93,-4.29) -- (0,0) -- (8.93,4.29) -- cycle    ;
\draw    (376.65,88.29) -- (535.01,171.91) ;
\draw [shift={(537.67,173.32)}, rotate = 207.84] [fill={rgb, 255:red, 0; green, 0; blue, 0 }  ][line width=0.08]  [draw opacity=0] (8.93,-4.29) -- (0,0) -- (8.93,4.29) -- cycle    ;
\draw [shift={(374,86.89)}, rotate = 27.84] [fill={rgb, 255:red, 0; green, 0; blue, 0 }  ][line width=0.08]  [draw opacity=0] (8.93,-4.29) -- (0,0) -- (8.93,4.29) -- cycle    ;
\draw    (517.56,92.29) -- (410.1,169.35) ;
\draw [shift={(407.67,171.1)}, rotate = 324.35] [fill={rgb, 255:red, 0; green, 0; blue, 0 }  ][line width=0.08]  [draw opacity=0] (8.93,-4.29) -- (0,0) -- (8.93,4.29) -- cycle    ;
\draw [shift={(520,90.54)}, rotate = 144.35] [fill={rgb, 255:red, 0; green, 0; blue, 0 }  ][line width=0.08]  [draw opacity=0] (8.93,-4.29) -- (0,0) -- (8.93,4.29) -- cycle    ;
\draw    (542.97,96.62) -- (581.2,165.38) ;
\draw [shift={(582.66,168)}, rotate = 240.92] [fill={rgb, 255:red, 0; green, 0; blue, 0 }  ][line width=0.08]  [draw opacity=0] (8.93,-4.29) -- (0,0) -- (8.93,4.29) -- cycle    ;
\draw [shift={(541.51,94)}, rotate = 60.92] [fill={rgb, 255:red, 0; green, 0; blue, 0 }  ][line width=0.08]  [draw opacity=0] (8.93,-4.29) -- (0,0) -- (8.93,4.29) -- cycle    ;
\draw    (550.61,89.89) -- (576.05,104.27) ;
\draw [shift={(578.67,105.74)}, rotate = 209.47] [fill={rgb, 255:red, 0; green, 0; blue, 0 }  ][line width=0.08]  [draw opacity=0] (8.93,-4.29) -- (0,0) -- (8.93,4.29) -- cycle    ;
\draw [shift={(548,88.41)}, rotate = 29.47] [fill={rgb, 255:red, 0; green, 0; blue, 0 }  ][line width=0.08]  [draw opacity=0] (8.93,-4.29) -- (0,0) -- (8.93,4.29) -- cycle    ;
\draw    (214.07,198) -- (213.93,223.67) ;
\draw [shift={(213.91,226.67)}, rotate = 270.33] [fill={rgb, 255:red, 0; green, 0; blue, 0 }  ][line width=0.08]  [draw opacity=0] (8.93,-4.29) -- (0,0) -- (8.93,4.29) -- cycle    ;
\draw [shift={(214.09,195)}, rotate = 90.33] [fill={rgb, 255:red, 0; green, 0; blue, 0 }  ][line width=0.08]  [draw opacity=0] (8.93,-4.29) -- (0,0) -- (8.93,4.29) -- cycle    ;
\draw    (322.44,199) -- (322.89,226.33) ;
\draw [shift={(322.94,229.33)}, rotate = 269.05] [fill={rgb, 255:red, 0; green, 0; blue, 0 }  ][line width=0.08]  [draw opacity=0] (8.93,-4.29) -- (0,0) -- (8.93,4.29) -- cycle    ;
\draw [shift={(322.39,196)}, rotate = 89.05] [fill={rgb, 255:red, 0; green, 0; blue, 0 }  ][line width=0.08]  [draw opacity=0] (8.93,-4.29) -- (0,0) -- (8.93,4.29) -- cycle    ;
\draw    (512.26,199) -- (512.41,225) ;
\draw [shift={(512.42,228)}, rotate = 269.68] [fill={rgb, 255:red, 0; green, 0; blue, 0 }  ][line width=0.08]  [draw opacity=0] (8.93,-4.29) -- (0,0) -- (8.93,4.29) -- cycle    ;
\draw [shift={(512.24,196)}, rotate = 89.68] [fill={rgb, 255:red, 0; green, 0; blue, 0 }  ][line width=0.08]  [draw opacity=0] (8.93,-4.29) -- (0,0) -- (8.93,4.29) -- cycle    ;
\draw    (552.98,198) -- (552.68,225) ;
\draw [shift={(552.65,228)}, rotate = 270.64] [fill={rgb, 255:red, 0; green, 0; blue, 0 }  ][line width=0.08]  [draw opacity=0] (8.93,-4.29) -- (0,0) -- (8.93,4.29) -- cycle    ;
\draw [shift={(553.02,195)}, rotate = 90.64] [fill={rgb, 255:red, 0; green, 0; blue, 0 }  ][line width=0.08]  [draw opacity=0] (8.93,-4.29) -- (0,0) -- (8.93,4.29) -- cycle    ;
\draw    (590.44,198) -- (590.89,225) ;
\draw [shift={(590.94,228)}, rotate = 269.05] [fill={rgb, 255:red, 0; green, 0; blue, 0 }  ][line width=0.08]  [draw opacity=0] (8.93,-4.29) -- (0,0) -- (8.93,4.29) -- cycle    ;
\draw [shift={(590.39,195)}, rotate = 89.05] [fill={rgb, 255:red, 0; green, 0; blue, 0 }  ][line width=0.08]  [draw opacity=0] (8.93,-4.29) -- (0,0) -- (8.93,4.29) -- cycle    ;
\draw    (213.14,277) -- (213.52,256.67) ;
\draw [shift={(213.58,253.67)}, rotate = 91.07] [fill={rgb, 255:red, 0; green, 0; blue, 0 }  ][line width=0.08]  [draw opacity=0] (8.93,-4.29) -- (0,0) -- (8.93,4.29) -- cycle    ;
\draw [shift={(213.09,280)}, rotate = 271.07] [fill={rgb, 255:red, 0; green, 0; blue, 0 }  ][line width=0.08]  [draw opacity=0] (8.93,-4.29) -- (0,0) -- (8.93,4.29) -- cycle    ;
\draw    (323.4,279.67) -- (323.27,259.33) ;
\draw [shift={(323.25,256.33)}, rotate = 89.64] [fill={rgb, 255:red, 0; green, 0; blue, 0 }  ][line width=0.08]  [draw opacity=0] (8.93,-4.29) -- (0,0) -- (8.93,4.29) -- cycle    ;
\draw [shift={(323.42,282.67)}, rotate = 269.64] [fill={rgb, 255:red, 0; green, 0; blue, 0 }  ][line width=0.08]  [draw opacity=0] (8.93,-4.29) -- (0,0) -- (8.93,4.29) -- cycle    ;
\draw    (512.73,279.67) -- (512.6,258) ;
\draw [shift={(512.58,255)}, rotate = 89.65] [fill={rgb, 255:red, 0; green, 0; blue, 0 }  ][line width=0.08]  [draw opacity=0] (8.93,-4.29) -- (0,0) -- (8.93,4.29) -- cycle    ;
\draw [shift={(512.75,282.67)}, rotate = 269.65] [fill={rgb, 255:red, 0; green, 0; blue, 0 }  ][line width=0.08]  [draw opacity=0] (8.93,-4.29) -- (0,0) -- (8.93,4.29) -- cycle    ;
\draw    (552.73,278.33) -- (552.6,258) ;
\draw [shift={(552.58,255)}, rotate = 89.64] [fill={rgb, 255:red, 0; green, 0; blue, 0 }  ][line width=0.08]  [draw opacity=0] (8.93,-4.29) -- (0,0) -- (8.93,4.29) -- cycle    ;
\draw [shift={(552.75,281.33)}, rotate = 269.64] [fill={rgb, 255:red, 0; green, 0; blue, 0 }  ][line width=0.08]  [draw opacity=0] (8.93,-4.29) -- (0,0) -- (8.93,4.29) -- cycle    ;
\draw    (592.32,278.33) -- (591.68,258) ;
\draw [shift={(591.59,255)}, rotate = 88.21] [fill={rgb, 255:red, 0; green, 0; blue, 0 }  ][line width=0.08]  [draw opacity=0] (8.93,-4.29) -- (0,0) -- (8.93,4.29) -- cycle    ;
\draw [shift={(592.41,281.33)}, rotate = 268.21] [fill={rgb, 255:red, 0; green, 0; blue, 0 }  ][line width=0.08]  [draw opacity=0] (8.93,-4.29) -- (0,0) -- (8.93,4.29) -- cycle    ;
\draw    (183.66,391.7) -- (155.84,195.64) ;
\draw [shift={(155.42,192.67)}, rotate = 81.92] [fill={rgb, 255:red, 0; green, 0; blue, 0 }  ][line width=0.08]  [draw opacity=0] (8.93,-4.29) -- (0,0) -- (8.93,4.29) -- cycle    ;
\draw [shift={(184.08,394.67)}, rotate = 261.92] [fill={rgb, 255:red, 0; green, 0; blue, 0 }  ][line width=0.08]  [draw opacity=0] (8.93,-4.29) -- (0,0) -- (8.93,4.29) -- cycle    ;
\draw    (189.84,391.75) -- (208.99,309.92) ;
\draw [shift={(209.67,307)}, rotate = 103.17] [fill={rgb, 255:red, 0; green, 0; blue, 0 }  ][line width=0.08]  [draw opacity=0] (8.93,-4.29) -- (0,0) -- (8.93,4.29) -- cycle    ;
\draw [shift={(189.16,394.67)}, rotate = 283.17] [fill={rgb, 255:red, 0; green, 0; blue, 0 }  ][line width=0.08]  [draw opacity=0] (8.93,-4.29) -- (0,0) -- (8.93,4.29) -- cycle    ;
\draw    (340.08,333.26) -- (330.76,312.41) ;
\draw [shift={(329.53,309.67)}, rotate = 65.92] [fill={rgb, 255:red, 0; green, 0; blue, 0 }  ][line width=0.08]  [draw opacity=0] (8.93,-4.29) -- (0,0) -- (8.93,4.29) -- cycle    ;
\draw [shift={(341.3,336)}, rotate = 245.92] [fill={rgb, 255:red, 0; green, 0; blue, 0 }  ][line width=0.08]  [draw opacity=0] (8.93,-4.29) -- (0,0) -- (8.93,4.29) -- cycle    ;
\draw    (348.6,333.01) -- (358.9,198.99) ;
\draw [shift={(359.13,196)}, rotate = 94.39] [fill={rgb, 255:red, 0; green, 0; blue, 0 }  ][line width=0.08]  [draw opacity=0] (8.93,-4.29) -- (0,0) -- (8.93,4.29) -- cycle    ;
\draw [shift={(348.37,336)}, rotate = 274.39] [fill={rgb, 255:red, 0; green, 0; blue, 0 }  ][line width=0.08]  [draw opacity=0] (8.93,-4.29) -- (0,0) -- (8.93,4.29) -- cycle    ;
\draw    (530.35,333.29) -- (520.48,312.38) ;
\draw [shift={(519.2,309.67)}, rotate = 64.74] [fill={rgb, 255:red, 0; green, 0; blue, 0 }  ][line width=0.08]  [draw opacity=0] (8.93,-4.29) -- (0,0) -- (8.93,4.29) -- cycle    ;
\draw [shift={(531.63,336)}, rotate = 244.74] [fill={rgb, 255:red, 0; green, 0; blue, 0 }  ][line width=0.08]  [draw opacity=0] (8.93,-4.29) -- (0,0) -- (8.93,4.29) -- cycle    ;
\draw    (542.45,333.1) -- (548.38,311.23) ;
\draw [shift={(549.17,308.33)}, rotate = 105.18] [fill={rgb, 255:red, 0; green, 0; blue, 0 }  ][line width=0.08]  [draw opacity=0] (8.93,-4.29) -- (0,0) -- (8.93,4.29) -- cycle    ;
\draw [shift={(541.66,336)}, rotate = 285.18] [fill={rgb, 255:red, 0; green, 0; blue, 0 }  ][line width=0.08]  [draw opacity=0] (8.93,-4.29) -- (0,0) -- (8.93,4.29) -- cycle    ;
\draw    (352.52,365.86) -- (359.48,387.81) ;
\draw [shift={(360.39,390.67)}, rotate = 252.41] [fill={rgb, 255:red, 0; green, 0; blue, 0 }  ][line width=0.08]  [draw opacity=0] (8.93,-4.29) -- (0,0) -- (8.93,4.29) -- cycle    ;
\draw [shift={(351.61,363)}, rotate = 72.41] [fill={rgb, 255:red, 0; green, 0; blue, 0 }  ][line width=0.08]  [draw opacity=0] (8.93,-4.29) -- (0,0) -- (8.93,4.29) -- cycle    ;
\draw    (391.06,197.98) -- (366.78,387.69) ;
\draw [shift={(366.39,390.67)}, rotate = 277.29] [fill={rgb, 255:red, 0; green, 0; blue, 0 }  ][line width=0.08]  [draw opacity=0] (8.93,-4.29) -- (0,0) -- (8.93,4.29) -- cycle    ;
\draw [shift={(391.44,195)}, rotate = 97.29] [fill={rgb, 255:red, 0; green, 0; blue, 0 }  ][line width=0.08]  [draw opacity=0] (8.93,-4.29) -- (0,0) -- (8.93,4.29) -- cycle    ;
\draw    (551.88,386.52) -- (544.12,365.81) ;
\draw [shift={(543.06,363)}, rotate = 69.44] [fill={rgb, 255:red, 0; green, 0; blue, 0 }  ][line width=0.08]  [draw opacity=0] (8.93,-4.29) -- (0,0) -- (8.93,4.29) -- cycle    ;
\draw [shift={(552.94,389.33)}, rotate = 249.44] [fill={rgb, 255:red, 0; green, 0; blue, 0 }  ][line width=0.08]  [draw opacity=0] (8.93,-4.29) -- (0,0) -- (8.93,4.29) -- cycle    ;
\draw    (563.28,386.48) -- (587.56,311.19) ;
\draw [shift={(588.48,308.33)}, rotate = 107.88] [fill={rgb, 255:red, 0; green, 0; blue, 0 }  ][line width=0.08]  [draw opacity=0] (8.93,-4.29) -- (0,0) -- (8.93,4.29) -- cycle    ;
\draw [shift={(562.35,389.33)}, rotate = 287.88] [fill={rgb, 255:red, 0; green, 0; blue, 0 }  ][line width=0.08]  [draw opacity=0] (8.93,-4.29) -- (0,0) -- (8.93,4.29) -- cycle    ;
\draw    (166.48,445.34) -- (177.52,424.32) ;
\draw [shift={(178.91,421.67)}, rotate = 117.7] [fill={rgb, 255:red, 0; green, 0; blue, 0 }  ][line width=0.08]  [draw opacity=0] (8.93,-4.29) -- (0,0) -- (8.93,4.29) -- cycle    ;
\draw [shift={(165.09,448)}, rotate = 297.7] [fill={rgb, 255:red, 0; green, 0; blue, 0 }  ][line width=0.08]  [draw opacity=0] (8.93,-4.29) -- (0,0) -- (8.93,4.29) -- cycle    ;
\draw    (176,460) -- (192,460) ;
\draw [shift={(198,460)}, rotate = 181.43] [fill={rgb, 255:red, 0; green, 0; blue, 0 }  ][line width=0.08]  [draw opacity=0] (8.93,-4.29) -- (0,0) -- (8.93,4.29) -- cycle    ;
\draw [shift={(175,460)}, rotate = 1.43] [fill={rgb, 255:red, 0; green, 0; blue, 0 }  ][line width=0.08]  [draw opacity=0] (8.93,-4.29) -- (0,0) -- (8.93,4.29) -- cycle    ;
\draw    (193.52,424.39) -- (203.82,446.61) ;
\draw [shift={(205.08,449.33)}, rotate = 245.14] [fill={rgb, 255:red, 0; green, 0; blue, 0 }  ][line width=0.08]  [draw opacity=0] (8.93,-4.29) -- (0,0) -- (8.93,4.29) -- cycle    ;
\draw [shift={(192.26,421.67)}, rotate = 65.14] [fill={rgb, 255:red, 0; green, 0; blue, 0 }  ][line width=0.08]  [draw opacity=0] (8.93,-4.29) -- (0,0) -- (8.93,4.29) -- cycle    ;
\draw    (336.55,446.83) -- (354.12,420.17) ;
\draw [shift={(355.77,417.67)}, rotate = 123.39] [fill={rgb, 255:red, 0; green, 0; blue, 0 }  ][line width=0.08]  [draw opacity=0] (8.93,-4.29) -- (0,0) -- (8.93,4.29) -- cycle    ;
\draw [shift={(334.9,449.33)}, rotate = 303.39] [fill={rgb, 255:red, 0; green, 0; blue, 0 }  ][line width=0.08]  [draw opacity=0] (8.93,-4.29) -- (0,0) -- (8.93,4.29) -- cycle    ;
\draw    (346,460) -- (366,460) ;
\draw [shift={(370.33,460)}, rotate = 182.49] [fill={rgb, 255:red, 0; green, 0; blue, 0 }  ][line width=0.08]  [draw opacity=0] (8.93,-4.29) -- (0,0) -- (8.93,4.29) -- cycle    ;
\draw [shift={(343,460)}, rotate = 2.49] [fill={rgb, 255:red, 0; green, 0; blue, 0 }  ][line width=0.08]  [draw opacity=0] (8.93,-4.29) -- (0,0) -- (8.93,4.29) -- cycle    ;
\draw    (381.3,449.19) -- (370.7,420.48) ;
\draw [shift={(369.66,417.67)}, rotate = 69.72] [fill={rgb, 255:red, 0; green, 0; blue, 0 }  ][line width=0.08]  [draw opacity=0] (8.93,-4.29) -- (0,0) -- (8.93,4.29) -- cycle    ;
\draw [shift={(382.34,452)}, rotate = 249.72] [fill={rgb, 255:red, 0; green, 0; blue, 0 }  ][line width=0.08]  [draw opacity=0] (8.93,-4.29) -- (0,0) -- (8.93,4.29) -- cycle    ;
\draw    (531.64,446.76) -- (548.36,418.91) ;
\draw [shift={(549.9,416.33)}, rotate = 120.96] [fill={rgb, 255:red, 0; green, 0; blue, 0 }  ][line width=0.08]  [draw opacity=0] (8.93,-4.29) -- (0,0) -- (8.93,4.29) -- cycle    ;
\draw [shift={(530.1,449.33)}, rotate = 300.96] [fill={rgb, 255:red, 0; green, 0; blue, 0 }  ][line width=0.08]  [draw opacity=0] (8.93,-4.29) -- (0,0) -- (8.93,4.29) -- cycle    ;
\draw    (579.42,446.64) -- (565.92,419.03) ;
\draw [shift={(564.6,416.33)}, rotate = 63.95] [fill={rgb, 255:red, 0; green, 0; blue, 0 }  ][line width=0.08]  [draw opacity=0] (8.93,-4.29) -- (0,0) -- (8.93,4.29) -- cycle    ;
\draw [shift={(580.73,449.33)}, rotate = 243.95] [fill={rgb, 255:red, 0; green, 0; blue, 0 }  ][line width=0.08]  [draw opacity=0] (8.93,-4.29) -- (0,0) -- (8.93,4.29) -- cycle    ;
\draw    (567,460) -- (551,460) ;
\draw [shift={(543,460)}, rotate = 360] [fill={rgb, 255:red, 0; green, 0; blue, 0 }  ][line width=0.08]  [draw opacity=0] (8.93,-4.29) -- (0,0) -- (8.93,4.29) -- cycle    ;
\draw [shift={(570.33,460)}, rotate = 180] [fill={rgb, 255:red, 0; green, 0; blue, 0 }  ][line width=0.08]  [draw opacity=0] (8.93,-4.29) -- (0,0) -- (8.93,4.29) -- cycle    ;

\end{tikzpicture}

    \end{center}
    \caption{Outline of the construction of the DFPA from the \sat instance. The
    double arrows indicate the existence of a point in the support of the
    distribution in which both of the bidders in the arrow's endpoints have
    strictly positive value.}
    \label{fig:construction}
    \end{figure}

\subsection{Analysis}
We will show that there exists a satisfying assignment of the \sat instance if
and only if the corresponding DFPA has an $\varepsilon$-PBNE (we will compute
the value of $\varepsilon$ for which this holds at the end of the reduction).

\paragraph{Extraction.}
Let $\vec{\beta}$ be a strategy profile of the bidders of the auction. We define
an assignment $\mathbf{A}:X \rightarrow \{0,1\}$ to the variables of the \sat
instance, such that $\forall x \in X$:
\begin{itemize}
    \item [-] $\mathbf{A}[x] = 0$, if $\beta_x = (0,1/7,2/7) = s_0$
    \item [-] $\mathbf{A}[x] = 1$, if $\beta_x = (0,2/7,3/7) = s_1$
\end{itemize}
where, with slight abuse of notation, we have expressed the bidding function
$\beta_x$ using a vector of dimension $|V|$ explicitly stating the bid that $x$
plays for every value in $V$, i.e., $\beta_x =
(\beta_x(0),\beta_x(23/64),\beta_x(1))$. We also define the mapping
$\mathbf{\chi}$ from bidding strategies to $\{0,1\}$ to satisfy
$\mathbf{\chi}[s_0] = 0$ and $\mathbf{\chi}[s_1] = 1$.

In the rest of this section, we will prove that the \sat instance has a
satisfying assignment if and only if we can construct a strategy profile
$\vec{\beta}$ that is a PBNE.

\paragraph{Input bidders.}
We begin by arguing that the input bidders all have the required behaviour,
which is summarized by the following lemma:

\begin{lemma}\label{lem:input-bidder} Fix any $\delta_{\NOT}<\frac{33}{1792}$
    and $\varepsilon<\frac{1}{\Delta}(\frac{33}{896}-2\delta_{\NOT})$. Then, at
    any $\varepsilon$-PBNE $\vec{\beta}$, for any bidders $i,j,k,\ell$
    introduced by some variable $x \in X$ of the \sat instance, we have
    $\beta_j=\beta_k=\beta_{\ell} \in \{s_0,s_1\}$.
\end{lemma}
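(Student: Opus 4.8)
The plan is to show that, at any $\varepsilon$-PBNE, the three ``copy'' bidders $j,k,\ell$ of a variable $x$ are each forced to play the \emph{same} strategy, and that this common strategy must be $s_0$ or $s_1$. First, the no-overbidding assumption already gives $\beta_c(0)=0$ for every copy $c\in\{j,k,\ell\}$, so only the bids at values $23/64$ and $1$ are in question; moreover, since $23/64<3/7$, at value $23/64$ the only feasible (non-overbidding) bids are $0,1/7,2/7$, and the same bounds constrain $\beta_i(23/64)$ for the reference bidder $i$. Thus it suffices to pin down $(\beta_c(23/64),\beta_c(1))$ for each copy.

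Next I would read off, from the list of mass points, the conditional distribution of a copy $c$ at each of its values. Inspecting the construction, the points with $v_c=23/64$ are the two input-bidder points $(v_c=23/64)$ and $(v_i=23/64,v_c=23/64)$ (of un-normalized weights $33/128$ and $2$), plus the points of the $\OR$-gadget that $c$ feeds into, of total weight $O(\delta_{\OR_1})$; and the points with $v_c=1$ are the single input-bidder point $(v_i=23/64,v_c=1)$ (weight $1$), plus the $\NOT$-gadget points of total weight $O(\delta_{\NOT})$. Using the hierarchy of discounting factors ($\delta_{\NOT}\ge\delta_{\PROJ}\ge\delta_{\OR_1}\ge\delta_{\OR_2}\ge\dots$) all of these lower-layer contributions are $O(\delta_{\NOT})$. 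The key structural consequence is that, in the dominant mass, the only other bidder with positive value that $c$ ever competes against is the reference bidder $i$, and always at value $23/64$; hence the best response of $c$ — and, symmetrically, of $j,k,\ell$ — is governed, up to the $O(\delta_{\NOT})$ perturbation, by the single number $\beta_i(23/64)\in\{0,1/7,2/7\}$, with every relevant margin at least $\tfrac1\Delta\bigl(\tfrac{33}{896}-2\delta_{\NOT}\bigr)>\varepsilon$ by the hypotheses on $\delta_{\NOT}$ and $\varepsilon$.

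I would then do a case analysis on $\beta_i(23/64)$, computing in each case the copy's utility for each feasible bid at values $23/64$ and $1$. If $\beta_i(23/64)=1/7$, then $c$'s unique $\varepsilon$-best responses are $1/7$ at value $23/64$ and $2/7$ at value $1$, so $\beta_c=s_0$. If $\beta_i(23/64)=2/7$, then $c$'s $\varepsilon$-best response at value $1$ is uniquely $3/7$, but at value $23/64$ the bids $0$ and $2/7$ are \emph{tied} (with $1/7$ strictly dominated); here I would break the tie using the $\varepsilon$-best-response condition of the reference bidder $i$: if some copy played $0$ at value $23/64$, then bidder $i$ at value $23/64$ (whose conditional is a near-uniform mixture over the copies) would strictly prefer deviating from $2/7$ to $1/7$ — a contradiction — so every copy plays $2/7$ and $\beta_c=s_1$. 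Finally, if $\beta_i(23/64)=0$, then $c$ is forced to the strategy $(0,1/7,1/7)$, against which bidder $i$ at value $23/64$ always faces a bid of $1/7$ and so strictly prefers $1/7$ to its own bid $0$, again contradicting $i$'s $\varepsilon$-best-response condition; hence this case cannot occur. In the two surviving cases all three copies respond to the same value $\beta_i(23/64)$ and therefore play the same strategy, which lies in $\{s_0,s_1\}$.

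The hard part will be the last two cases rather than the bookkeeping of conditional distributions. The gadget is deliberately ``critically calibrated'': when the copies should encode \emph{true} (i.e.\ $\beta_i(23/64)=2/7$), the direct best-response argument leaves a genuine indifference at value $23/64$, so the copies can only be pinned down by a second, cross-bidder argument routed through $i$'s own equilibrium condition. Care is needed to check that this is not circular — it is not, since at an $\varepsilon$-PBNE every bidder's approximate-best-response inequality holds simultaneously, so one may freely invoke $i$'s condition after having used the copies' conditions — and to verify that each strict inequality above survives the $O(\delta_{\NOT})$ perturbation with slack exceeding $\varepsilon$, which is exactly where the choice $\delta_{\NOT}<\tfrac{33}{1792}$ and $\varepsilon<\tfrac1\Delta(\tfrac{33}{896}-2\delta_{\NOT})$ is used.
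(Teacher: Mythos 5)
Your overall strategy is the same as the paper's: split each copy's interim utility into the dominant input-gadget part plus a lower-layer perturbation bounded by $\frac{2\delta_{\NOT}}{\Delta}$, run a case analysis on $\beta_i(23/64)\in\{0,1/7,2/7\}$, and eliminate $\beta_i(23/64)=0$ through bidder $i$'s own deviation incentive. The paper does exactly this, with explicit utility tables for each case (plus an additional check of $i$'s best responses against $s_0$ and $s_1$, which is not strictly needed for this lemma but is reused later).

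The genuine problem is your treatment of the case $\beta_i(23/64)=2/7$. There is no indifference for the copy at value $23/64$: the standalone point $(0,\dots,v_c=23/64,\dots,0)$ of weight $\frac{33}{128\Delta}$ lies in the copy's conditional support, and at that point a bid of $0$ collects only the tie-share $\frac{1}{n}\cdot\frac{23}{64}$, whereas bidding $2/7$ wins it outright for $\frac{33}{448}$ and additionally collects half of $\frac{33}{448}$ at the shared point of weight $\frac{2}{\Delta}$. In $\Delta$-scaled terms this gives $\frac{759}{8192}$ for $b_2$, versus $\frac{3201}{57344}$ for $b_1$ and at most $\frac{759}{16384}$ for $0$ when $n\geq 2$; so $b_2$ is the \emph{unique} best response, with margin exactly $\frac{33}{896}$ over the second-best option --- precisely the margin to which the hypotheses $\delta_{\NOT}<\frac{33}{1792}$ and $\varepsilon<\frac{1}{\Delta}\left(\frac{33}{896}-2\delta_{\NOT}\right)$ are calibrated (also, contrary to your ordering, $1/7$ beats $0$ here). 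Your cross-bidder patch --- that if some copy bid $0$ then $i$ would strictly prefer $1/7$ to $2/7$ --- is in fact sound, so your conclusions survive; but it is unnecessary, and more importantly the quantitative part of the lemma is not established by your argument: you assert that every relevant margin is at least $\frac{1}{\Delta}\left(\frac{33}{896}-2\delta_{\NOT}\right)$ while your own account of this case has two bids tied (margin $0$), and without redoing the tables correctly you cannot justify the specific constants in the statement. The $\frac{33}{128}$ standalone points exist exactly to make the copies' best responses strictly unique, and recognizing their effect is the missing piece.
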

\begin{proof}
    We start with the analysis of the best-responses of $j,k,\ell$. Firstly, due
    to no-overbidding, we know that, when having value $0$, all three of them
    will bid $0$. When arguing about their best responses given some positive
    value, we need to take into account their conditional distributions. Notice
    that $j,k,\ell$ have been defined in a symmetric way and there are no points
    in the joint distribution where more than one of them has a positive value,
    so we can reason about the intended behaviour of one of them and the
    analysis follows for the other two. Thus, we will provide the analysis only
    for bidder $j$'s best response. Notice also that, for any point in the
    support of the distribution where $j$ has positive value, $i$ can only have
    value $0$ or $23/64$, so we can ignore the value of $\beta_i(1)$.
    Additionally, $\beta_i(0)=0$ due to the no-overbidding assumption. There can
    only be at most one other bidder affecting $j$'s best-response (meaning that
    there is positive probability that she has positive value at the same time
    as $j$), and this could be either a \NOT, an $\OR_1$, or an $\OR_2$ bidder.
    Let this bidder be $x$, with strategy $\beta_x$, and let the number of total
    bidders be $n$. One of the key ideas of our construction is the choice of
    appropriate ``discounting factors''
    $\delta_{\NOT},\delta_{\OR_1},\delta_{\OR_2}$ in the description of \NOT,
    $\OR_1$, and $\OR_2$ bidders, such that $j$'s best response is in fact only
    depending on $i$. We will pick the values of the discounting factors at the
    last step of the reduction, to satisfy
    $\delta_{\NOT}>\delta_{\OR_1}>\delta_{\OR_2}$ along with other properties we
    will discuss there. To find $j$'s best response, we will compute the
    expected utility $u_j(b,\vec{\beta'};v_j)$ for each $v_j \in V_j$ and $b \in
    B$ when the other bidders bid according to $\vec{\beta'}$, and output the
    strategy that for each value chooses the bid that maximizes the utility.
    Notice that here, as well as in the remaining of the analysis for the other
    bidders, when we compute the utility of $j$ when having value $v_j$ via
    means of its conditional distribution $f_{j\mid v_j}$ we use the formula
    described in \cref{eq:DFPA-utility-interim-mixed}, which sums over all
    points in the support of $f_{j\mid v_j}$. Our construction has been
    carefully designed so that at any of these points there is at most one other
    bidder with strictly positive value, who can therefore bid anything higher
    than $0$ (with the exception of output bidders; see the proof of
    \cref{lem:output} for that case). This is also demonstrated in
    \cref{fig:construction}.

    In our analysis, we will consider different cases according to $i$'s
    strategy, assuming that the remaining bidders (all but $i$ and $j$) play
    according to strategies $\vec{\beta'}$. We will split the computation of the
    expected utility of bidder $j$ when bidding $b$ with value $v_j$ and the
    others following strategy $\vec{\beta'}$ as follows:
    \begin{equation}\label{eq:split-utility}
        u_j(b,\vec{\beta'};v_j) =  u_j^{(i)}(b,\vec{\beta'};v_j) + u_j^{(x)}(b,\vec{\beta'};v_j)
    \end{equation}
    where $u_j^{(i)}$ is the expected utility function that takes into account
    only the points in the support where $i$ has positive value, while
    $u_j^{(x)}$ considers points in the support where $x$ has positive value. It
    is obvious from our construction that \cref{eq:split-utility} holds at any
    equilibrium, as there are no other points in the support where $j$ has
    positive value, and at no equilibrium will $j$ receive positive utility by
    bidding $0$ (since other bidders with positive value would have an incentive
    to place a positive bid). The key idea in the reasoning behind $j$'s best
    responses is that each point in the computation of $u_j^{(x)}$ has
    probability scaled by some discounting factor which we choose to be small
    enough such that $j$'s utility is affected primarily by $u_j^{(i)}$. Since
    $\delta_{\NOT}>\delta_{\OR_1}>\delta_{\OR_2}$, we can upper bound this
    discounting factor by $\delta_{\NOT}$. Again, given the nature of our
    construction, there are at most $2$ such points, and the maximum ex-post
    utility that a bidder can observe at a point in the support is trivially
    bounded by $1$, so we can safely bound  $u_j^{(x)}(b,\vec{\beta'};v_j) \leq
    \frac{2}{\Delta}\delta_{\NOT}$. In the remaining of this proof, we will
    analyse the value of $u_j^{i}$ for the different strategies of $i$ and $j$.
    Instead of computing $u_j^{i}$, it is more practical at this step to compute
    the quantity $\Delta \cdot u_j^{(i)}$, as $\Delta$ will be chosen at the end
    of our reduction. Notice that this does not affect our computation of
    best-responses, as $\frac{1}{\Delta}$ is a common factor to the mass of each
    point of the distribution.

    \begin{enumerate}
        \item If $\beta_i(23/64)=0$:

        \begingroup
        \renewcommand{\arraystretch}{1.2}
        \begin{center}\begin{tabular}{c|cccc}
            $b$ & $0$ & $b_1$ & $b_2$ & $b_3$\\
            \hline
            $\Delta \cdot u_j^{(i)}(b;23/64)$ & $\frac{6647}{8192n}$ & $\frac{28033}{57344}$ & $\frac{9537}{57344}$ & $-$\\
            $\Delta \cdot u_j^{(i)}(b;1)$ & $\frac{1}{n}$ & $\frac{6}{7}$ & $\frac{5}{7}$ & $\frac{4}{7}$
        \end{tabular}\end{center}
        \endgroup 

        We can see that, for $n \geq 2$, $j$ receives the highest utility by
        playing $(0,b_1,b_1)$. Additionally, for $n \geq 3$, no other strategy
        achieves utility $u_j^{(i)}$ within $\frac{1}{7}$ of the optimal one in
        the table.
        
        \item $\beta_i(23/64)=1/7$:
        
        \begingroup
        \renewcommand{\arraystretch}{1.2}
        \begin{center}\begin{tabular}{c|cccc}
            $b$ & $0$ & $b_1$ & $b_2$ & $b_3$\\
            \hline
            $\Delta \cdot u_j^{(i)}(b;23/64)$ & $\frac{759}{8192n}$ & $\frac{2231}{8192}$ & $\frac{9537}{57344}$ & $-$\\
            $\Delta \cdot u_j^{(i)}(b;1)$ & $0$ & $\frac{3}{7}$ & $\frac{5}{7}$ & $\frac{4}{7}$
        \end{tabular}\end{center}
        \endgroup 

        We can see that, for $n \geq 2$, $j$ receives the highest utility by
        playing $(0,b_1,b_2)$. Additionally, no other strategy achieves utility
        $u_j^{(i)}$ within $\frac{95}{896}$ of the optimal one in the table.
        
        \item $\beta_i(23/64)=2/7$:
        
        \begingroup
        \renewcommand{\arraystretch}{1.2}
        \begin{center}\begin{tabular}{c|cccc}
            $b$ & $0$ & $b_1$ & $b_2$ & $b_3$\\
            \hline
            $\Delta \cdot u_j^{(i)}(b;23/64)$ & $\frac{759}{8192n}$ & $\frac{3201}{57344}$ & $\frac{759}{8192}$ & $-$\\
            $\Delta \cdot u_j^{(i)}(b;1)$ & $0$ & $0$ & $\frac{5}{14}$ & $\frac{4}{7}$
        \end{tabular}\end{center}
        \endgroup 

        We can see that, for $n \geq 2$, $j$ receives the highest utility by
        playing $(0,b_2,b_3)$. Additionally, no other strategy achieves utility
        $u_j^{(i)}$ within $\frac{33}{896}$ of the optimal one in the table.
        
    \end{enumerate}

    Now let us consider $i$'s best-responses, since we are interested in the
    equilibria of the game. Once again, the construction has no points in the
    support where $v_i=1$, so we only analyse what $i$ bids when observing value
    $23/64$.

    Assume for a contradiction that $i$ plays a strategy such that
    $\beta_i(23/64)=0$ at an $\varepsilon$-PBNE. The above analysis then shows
    that bidders $j,k,\ell$ all play $(0,b_1,b_1)$. It is straightforward to see
    that $i$ gets utility $0$ in this case, since it is always the case that at
    least one other bidder plays a non-zero bid. At the same time, if $i$
    switched to a strategy such that $\beta_i(23/64)=2/7$, they would get the
    whole item and an expected utility of $\frac{45}{7\Delta}$. Therefore, $i$
    has a utility-improving unilateral deviation, contradicting the assumption
    that we started from an $\varepsilon$-PBNE where $i$ plays
    $\beta_i(23/64)=0$.

    Additionally, it is crucial to show that, when $j,k,\ell$ best-respond to
    $i$, the latter also does not have an incentive to deviate, therefore we are
    indeed at an equilibrium. For our analysis, it suffices to only check the
    cases where all three bidders $j,k,\ell$ play the same strategy, as we have
    just proved that at an equilibrium their strategies are the unique best
    responses to $i$'s strategy. Furthermore, we only examine the three
    strategies that we computed as best responses in the above computation. We
    consider the following cases:

    \begin{enumerate}
        \item $j,k,\ell$ play strategy $(0,b_1,b_1)$:
        \begingroup
        \renewcommand{\arraystretch}{1.2}
        \begin{center}\begin{tabular}{c|cccc}
            $b$ & $0$ & $b_1$ & $b_2$ & $b_3$\\
            \hline
            $\Delta \cdot u_i(b;23/64)$ & $0$ & $\frac{873}{896}$ & $\frac{297}{448}$ & $-$\\
        \end{tabular}\end{center}
        \endgroup

        We can see that, $i$'s best response is $\beta_i(23/64)=b_1$.
        Additionally, no other strategy achieves utility $u_i$ within
        $\frac{279}{896}$ of the optimal one in the table. Since $j,k,\text{and
        }\ell$'s best response when $\beta_i(23/64)=b_1=1/7$ was $(0,b_1,b_2)$,
        there cannot be an equilibrium where $j,k,\ell$ play strategy
        $(0,b_1,b_1)$ (as the 4 bidders are not simultaneously best-responding).

        \item $j,k,\ell$ play strategy $(0,b_1,b_2)$:
        
        \begingroup
        \renewcommand{\arraystretch}{1.2}
        \begin{center}\begin{tabular}{c|cccc}
            $b$ & $0$ & $b_1$ & $b_2$ & $b_3$\\
            \hline
            $\Delta \cdot u_i(b;23/64)$ & $0$ & $\frac{291}{448}$ & $\frac{495}{896}$ & $-$\\
        \end{tabular}\end{center}
        \endgroup 

        We can see that $i$'s best response is $\beta_i(23/64)=b_1$.
        Additionally, no other strategy achieves a value of $\Delta \cdot u_i$
        within $\frac{87}{896}$ of the optimal one in the table.

        \item $j,k,\ell$ play strategy $(0,b_2,b_3)$:
        
        \begingroup
        \renewcommand{\arraystretch}{1.2}
        \begin{center}\begin{tabular}{c|cccc}
            $b$ & $0$ & $b_1$ & $b_2$ & $b_3$\\
            \hline
            $\Delta \cdot u_i(b;23/64)$ & $0$ & $0$ & $\frac{99}{448}$ & $-$\\
        \end{tabular}\end{center}
        \endgroup 

        We can see that $i$'s best response is $\beta_i(23/64)=b_2$.
        Additionally, no other strategy achieves a value of $\Delta \cdot u_i$
        within $\frac{99}{448}$ of the optimal one in the table.

    \end{enumerate}

    Notice that the minimum margin in the tables above by which the best
    response of a bidder was unique was $\frac{33}{896}$. To translate this back
    to the setting of our problem, this would be a difference in utility of
    $\frac{33}{896\Delta}$. Therefore, if the extra utility gained in the (at
    most two other) points in the support where one input bidder and one
    \NOT/$\OR_1$/$\OR_2$ bidder have positive value is less than
    $\frac{33}{896\Delta}$, there can be two types of equilibria, one where all
    $j,k,\ell$ play $s_0$ and one where they play $s_1$, which allows us to
    express all possible inputs to the SAT instance. Thus, we need the following
    relationship to hold:
    \begin{equation*}
        \frac{1}{\Delta} \cdot 2 \delta_{\NOT} < \frac{33}{896} \quad\implies\quad \delta_{\NOT}<\frac{33}{1792}
    \end{equation*}
    Furthermore, given our choice of $\delta_{\NOT}$ and $\Delta$ at the end of
    the reduction, the analysis we provided also holds for
    $\varepsilon$-approximate equilibria, for any
    $\varepsilon<\frac{1}{\Delta}(\frac{33}{896}-2\delta_{\NOT}$).

\end{proof}

For the simulation of negation, we will proceed in two steps: we will first
demonstrate the behaviour of the \NOT bidders and the Projection bidders, and
then we will show how the two together implement a negation.

\paragraph{\NOT bidders.}
We proceed to the analysis of the behaviour of the \NOT bidders, for which we
prove the following lemma:

\begin{lemma}\label{lem:not-bidder} Let $j$ be the \NOT bidder added to the
    auction because of some negated literal $x$ (with corresponding input bidder
    $i$). Fix any $\delta_{\PROJ} < \frac{33}{3584}\delta_{\NOT}$ and
    $\varepsilon<\frac{1}{\Delta}(\frac{33}{1792}\delta_{\NOT}-2\delta_{\PROJ})$.
    Then, at any $\varepsilon$-PBNE $\vec{\beta}$, it should be the case that 
    $$\beta_i(23/64)=1/7 \quad\implies\quad \beta_j(23/64)=2/7$$ and 
    $$\beta_i(23/64)=2/7 \quad\implies\quad \beta_j(23/64)=1/7.$$
\end{lemma}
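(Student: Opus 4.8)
The plan is to follow the blueprint of the proof of \cref{lem:input-bidder}, now applied to the \NOT bidder $j$. The crucial first step is to invoke \cref{lem:input-bidder} itself: after possibly shrinking $\varepsilon$ its hypotheses are met, so the associated input bidder $i$ (which is one of the three copies $j_x,k_x,\ell_x$ of the variable's bidder) plays a strategy in $\{s_0,s_1\}$ at the given $\varepsilon$-PBNE. Consequently $\beta_i(23/64)=1/7$ forces $\beta_i=s_0$ and hence $\beta_i(1)=2/7$, whereas $\beta_i(23/64)=2/7$ forces $\beta_i=s_1$ and hence $\beta_i(1)=3/7$. This is exactly the link that drives the negation: the \NOT-gadget points touch the input bidder only at the value $v_i=1$ (by design, so as not to perturb $i$'s value-$23/64$ conditional distribution analysed in \cref{lem:input-bidder}), so bidder $j$ effectively reads $\beta_i(1)$, which, since $\beta_i\in\{s_0,s_1\}$, encodes the complement of the literal.

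Next I would identify which bidders can influence $j$. By inspection of the construction, the only support points at which $j$ has strictly positive value are the four points of the \NOT gadget that introduced $j$ (each involving, besides $j$, only the input bidder $i$) and the two points of the Projection gadget attached to $j$ (each involving, besides $j$, only the projection bidder $k$). As in \cref{lem:input-bidder}, I split the utility as $u_j(b,\vec{\beta'};v_j)=u_j^{(i)}(b,\vec{\beta'};v_j)+u_j^{(k)}(b,\vec{\beta'};v_j)$ (cf.\ \cref{eq:split-utility}), where $u_j^{(i)}$ gathers the \NOT-gadget points and $u_j^{(k)}$ the Projection-gadget points. Since each Projection point has mass $\delta_{\PROJ}/\Delta$ and yields ex-post utility at most $1$, we have $u_j^{(k)}(b,\vec{\beta'};v_j)\le 2\delta_{\PROJ}/\Delta$ for every $b$, so this term is negligible once $\delta_{\PROJ}$ is small relative to $\delta_{\NOT}$ --- which is precisely what $\delta_{\PROJ}<\tfrac{33}{3584}\delta_{\NOT}$ ensures. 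By no-overbidding $\beta_j(0)=0$, and since $3/7>23/64>2/7$, at value $23/64$ bidder $j$ may only bid in $\{0,1/7,2/7\}$.

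It then remains to evaluate $\Delta\cdot u_j^{(i)}(b;23/64)$ for $b\in\{0,1/7,2/7\}$ in the two cases for $\beta_i(1)$, using the four \NOT-gadget points and remembering the $n$-way tie at the point where $j$ alone has positive value and $j$ bids $0$. I expect the two small tables to show: when $\beta_i(1)=2/7$, bidding $2/7$ is the unique maximiser at value $23/64$ --- bidder $j$ wins the solo point outright and ties $i$ at the other relevant point --- beating every alternative by a strict margin of $\tfrac{33}{1792}\delta_{\NOT}$ (valid for $n\ge 2$); and when $\beta_i(1)=3/7$, bidder $j$ can never outbid $i$'s $3/7$, so only the solo point matters and bidding $1/7$ is the unique maximiser, again with margin $\tfrac{33}{1792}\delta_{\NOT}$ (here one needs the $O(1/n)$ value of bidding $0$ to have fallen below that of bidding $2/7$, which holds for all $n$ above a small absolute constant; the reduction always produces enough bidders, and this is in any case without loss of generality). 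Plugging the margin and the bound $u_j^{(k)}\le 2\delta_{\PROJ}/\Delta$ into the standard $\varepsilon$-best-response comparison shows that, were $j$ to play any bid other than the claimed one at the $\varepsilon$-PBNE, the equilibrium inequality would force $\varepsilon\ge\tfrac{1}{\Delta}\bigl(\tfrac{33}{1792}\delta_{\NOT}-2\delta_{\PROJ}\bigr)$, contradicting the hypothesis on $\varepsilon$. Hence $\beta_j(23/64)$ is as claimed, which implements the negation in our encoding.

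As in \cref{lem:input-bidder}, the main obstacle is bookkeeping rather than conceptual. One must (i) make sure the Projection-gadget contribution plus the additive $\varepsilon$-slack stays strictly below the margin that separates $j$'s optimal bid from the rest --- this is exactly what fixes the constants $\tfrac{33}{3584}$ and $\tfrac{33}{1792}$ in the statement --- and (ii) check that the $O(1/n)$ term arising from bidding $0$ at value $23/64$ against the all-zero profile does not overtake the second-best bid. Neither is hard, but both have to be tracked explicitly to certify the stated thresholds.
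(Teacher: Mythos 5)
Your proposal is correct and follows essentially the same route as the paper's proof: it restricts $i$ to $\{s_0,s_1\}$ via \cref{lem:input-bidder}, splits $j$'s interim utility at value $23/64$ into the \NOT-gadget term and a projection-gadget term bounded by $2\delta_{\PROJ}/\Delta$, and arrives at the same unique maximisers ($2/7$ when $\beta_i(1)=2/7$, $1/7$ when $\beta_i(1)=3/7$) with the same margin $\tfrac{33}{1792}\delta_{\NOT}$, concluding from $\varepsilon<\tfrac{1}{\Delta}(\tfrac{33}{1792}\delta_{\NOT}-2\delta_{\PROJ})$. Your explicit caveat that the $O(1/n)$ payoff of bidding $0$ must drop below the alternatives (needing $n$ above a small constant in the $\beta_i(1)=3/7$ case) is, if anything, slightly more careful than the paper's blanket ``$n\geq 2$'' claim.
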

\begin{proof}
    In our construction, each \NOT bidder $j$ aims to encapsulate the idea of
    negating the strategy of the input bidder $i$ when best-responding. Once
    again, it is immediate from the design of the DFPA instance (also visible in
    \cref{fig:construction}) that at any point in the support of the
    distribution where $j$ has positive value, only one other bidder can have
    positive value, either the input bidder $i$ or a projection bidder, let $x$.
    Thus, similarly to the earlier analysis, we can express the utility of $j$
    at any equilibrium as $u_j = u_j^{(i)} + u_j^{(x)}$. Note that this is about
    utility at an equilibrium to avoid the cases where $j$ could get some
    positive utility from a point in the support where she has value $0$ but
    someone else chooses to bid $0$ albeit having positive value (strictly
    dominated strategy). Given the description of the projection bidders, the
    total mass of points used for the computation of $u_j^{(x)}$ is bounded by
    $\frac{2}{\Delta}\delta_{\PROJ}$. Therefore, as we will pick
    $\delta_{\PROJ}$ to be small enough, it suffices to analyse $u_j^{(i)}$ to
    compute $j$'s best response depending on $i$'s strategy. By
    \cref{lem:input-bidder}, it suffices to only check the cases where $i$ plays
    $s_0$ or $s_1$:

    \begin{enumerate}
        \item If $i$ plays strategy $s_0=(0,b_1,b_2)$:
        
        \begingroup
        \renewcommand{\arraystretch}{1.2}
        \begin{center}\begin{tabular}{c|cccc}
            $b$ & $0$ & $b_1$ & $b_2$ & $b_3$\\
            \hline
            $\frac{\Delta}{\delta_{\NOT}} \cdot u_j^{(i)}(b;23/64)$ & $\frac{759}{16384n}$ & $\frac{3201}{114688}$ & $\frac{759}{16384}$ & $-$\\
            $\frac{\Delta}{\delta_{\NOT}} \cdot u_j^{(i)}(b;1)$ & $\frac{1}{n}$ & $\frac{6}{7}$ & $\frac{15}{14}$ & $\frac{8}{7}$
        \end{tabular}\end{center}
        \endgroup

        We can see that, for $n\geq 2$, $j$'s best response is $s_1=(0,b_2,b_3)$.
        Additionally, no other strategy achieves a value of $\frac{\Delta}{\delta_{\NOT}} \cdot u_j^{(i)}$ within $\frac{33}{1792}$ of the optimal one in the table.

        \item If $i$ plays strategy $s_1=(0,b_2,b_3)$:
        
        \begingroup
        \renewcommand{\arraystretch}{1.2}
        \begin{center}\begin{tabular}{c|cccc}
            $b$ & $0$ & $b_1$ & $b_2$ & $b_3$\\
            \hline
            $\frac{\Delta}{\delta_{\NOT}} \cdot u_j^{(i)}(b;23/64)$ & $\frac{759}{16384n}$ & $\frac{3201}{114688}$ & $\frac{1089}{114688}$ & $-$\\
            $\frac{\Delta}{\delta_{\NOT}} \cdot u_j^{(i)}(b;1)$ & $\frac{1}{n}$ & $\frac{6}{7}$ & $\frac{5}{7}$ & $\frac{6}{7}$
        \end{tabular}\end{center}
        \endgroup

        We can see that, for $n \geq 2$, $j$'s best response is $(0,b_1,b_1)$ or
        $(0,b_1,b_3)$. Additionally, no other strategy achieves a value of
        $\frac{\Delta}{\delta_{\NOT}} \cdot u_j^{(i)}$ within $\frac{33}{1792}$
        of the optimal one in the table.
    \end{enumerate}

    As always, in the above tables we have computed the value of
    $\frac{\Delta}{\delta_{\NOT}} \cdot u_j^{(i)}$, as $\Delta$ and
    $\delta_{\NOT}$ will be picked at the end of the reduction; however, as
    mentioned earlier, this does not affect the computation of best-responses,
    as $\frac{\delta_{\NOT}}{\Delta}$ is a common factor to the mass of each
    point of the distribution introduced for a \NOT bidder. To make sure that
    $j$'s best response only depends on $i$'s strategy, we need to establish
    that $u_j^{(x)}$ does not provide $j$ with enough utility to incentivize her
    to change her strategy, namely:
    \begin{equation*}
        \frac{1}{\Delta} \cdot 2 \delta_{\PROJ} < \frac{1}{\Delta}\cdot\delta_{\NOT}\frac{33}{1792} \quad\implies\quad \delta_{\PROJ} < \frac{33}{3584}\delta_{\NOT} 
    \end{equation*}

    Once again, given our choice of $\delta_{\PROJ}$ and $\Delta$ at the end of
    the reduction, the analysis we provided also holds for
    $\varepsilon$-approximate equilibria, for any
    $\varepsilon<\frac{1}{\Delta}(\frac{33}{1792}\delta_{\NOT}-2\delta_{\PROJ}$).

\end{proof}
Note that in the second case, we did not get a strategy in $\{s_0, s_1\}$ as a
best response; this is where the projection bidders come into play.

\paragraph{Projection bidders.}
The projection bidders we introduced satisfy the following lemma, which
essentially guarantees that together with the \NOT bidders they simulate a \NOT
gate:

\begin{lemma}\label{lem:proj-bidder} Let $j$ be a projection bidder and $i$ be
    the corresponding \NOT bidder. Fix any $\delta_{\OR_1} <
    \frac{33}{3584}\delta_{\PROJ}$ and
    $\varepsilon<\frac{1}{\Delta}(\frac{33}{1792}\delta_{\PROJ}-2\delta_{\OR_1})$.
    Then, at any $\varepsilon$-PBNE $\vec{\beta}$, it is the case that 
    $$\beta_i(23/64)=1/7 \quad\implies\quad \beta_j = s_0$$ and 
    $$\beta_i(23/64)=2/7 \quad\implies\quad \beta_j = s_1.$$
\end{lemma}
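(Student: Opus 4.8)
The proof follows the same template as \cref{lem:not-bidder}, now applied one layer lower in the construction. Write $j$ for the projection bidder and $i$ for its associated \NOT bidder. The first step is \emph{localisation}: by construction (see \cref{fig:construction}), the only support points at which $j$ has strictly positive value are the three points of $j$'s own gadget — the solo point $(v_j=23/64)$ and the two points $(v_i=23/64,v_j=23/64)$ and $(v_i=23/64,v_j=1)$ — together with the at most two points $(v_j=23/64,v_\ell=\cdot)$ of the $\OR_1$ (or, for a two-literal clause, $\OR_2$) bidder $\ell$ that $j$ feeds into; at each such point at most one bidder other than $j$ has positive value. So, exactly as in \cref{lem:not-bidder}, one writes the joint-probability-weighted utility as $u_j(b,\vec{\beta}_{-j};v_j)=u_j^{(i)}(b,\vec{\beta}_{-j};v_j)+u_j^{(x)}(b,\vec{\beta}_{-j};v_j)$, where $u_j^{(x)}$ accounts for the $\OR$-bidder points and $u_j^{(i)}$ for the $j$-gadget points. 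Since $\delta_{\OR_1}>\delta_{\OR_2}$, each such $\OR$-point carries mass at most $\delta_{\OR_1}/\Delta$, there are at most two of them, and ex-post utilities are at most $1$, so $0\le u_j^{(x)}\le 2\delta_{\OR_1}/\Delta$ (read, as in \cref{lem:not-bidder}, as a statement about utility at an $\varepsilon$-equilibrium, so that the degenerate corner cases where a positive-value bidder bids $0$ may be ignored).

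Next, compute $j$'s best response through $u_j^{(i)}$. By no-overbidding, $\beta_j(0)=0$ and $j$ never plays $b_3=3/7$ at value $23/64$. All three $j$-gadget points use $v_i=23/64$, so only $\beta_i(23/64)$ is relevant, and by \cref{lem:not-bidder} (together with \cref{lem:input-bidder}) it suffices to handle $\beta_i(23/64)\in\{b_1,b_2\}$; at each of these points every bidder other than $i$ and $j$ bids $0$ (no-overbidding), which makes $u_j^{(i)}$ an explicit function of $b$ and $v_j$. Routine arithmetic with the fractions $23/64,\,1/7,\,2/7$ and the solo mass $33/256$ gives, writing $n\ge 2$ for the number of bidders, the following values of $\tfrac{\Delta}{\delta_{\PROJ}}u_j^{(i)}$:

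\begingroup
\renewcommand{\arraystretch}{1.2}
\begin{center}
\begin{tabular}{c|cccc}
\multicolumn{5}{c}{Case $\beta_i(23/64)=b_1$}\\
$b$ & $0$ & $b_1$ & $b_2$ & $b_3$\\
\hline
$\tfrac{\Delta}{\delta_{\PROJ}}u_j^{(i)}(b;23/64)$ & $\tfrac{759}{16384n}$ & $\tfrac{15617}{114688}$ & $\tfrac{9537}{114688}$ & $-$\\
$\tfrac{\Delta}{\delta_{\PROJ}}u_j^{(i)}(b;1)$ & $0$ & $\tfrac{3}{7}$ & $\tfrac{5}{7}$ & $\tfrac{4}{7}$
\end{tabular}
\end{center}
\begin{center}
\begin{tabular}{c|cccc}
\multicolumn{5}{c}{Case $\beta_i(23/64)=b_2$}\\
$b$ & $0$ & $b_1$ & $b_2$ & $b_3$\\
\hline
$\tfrac{\Delta}{\delta_{\PROJ}}u_j^{(i)}(b;23/64)$ & $\tfrac{759}{16384n}$ & $\tfrac{3201}{114688}$ & $\tfrac{5313}{114688}$ & $-$\\
$\tfrac{\Delta}{\delta_{\PROJ}}u_j^{(i)}(b;1)$ & $0$ & $0$ & $\tfrac{5}{14}$ & $\tfrac{4}{7}$
\end{tabular}
\end{center}
\endgroup

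Hence, for $n\ge 2$, in the first case the unique best response (over both positive values) is $(0,b_1,b_2)=s_0$, and in the second it is $(0,b_2,b_3)=s_1$; moreover in every case the best bid beats the second-best bid by at least $\tfrac{33}{1792}\cdot\tfrac{\delta_{\PROJ}}{\Delta}$ in true utility, the binding margin being $b_1$ versus $b_2$ at value $23/64$ when $\beta_i(23/64)=b_2$. Finally, any unilateral deviation of $j$ away from the computed best response changes $u_j^{(i)}$ by at least $\tfrac{33}{1792}\cdot\tfrac{\delta_{\PROJ}}{\Delta}$ while changing $u_j^{(x)}$ by at most $2\delta_{\OR_1}/\Delta$ in absolute value, hence changes $u_j$ by more than $\tfrac{1}{\Delta}\bigl(\tfrac{33}{1792}\delta_{\PROJ}-2\delta_{\OR_1}\bigr)$, which is strictly positive by $\delta_{\OR_1}<\tfrac{33}{3584}\delta_{\PROJ}$ and strictly larger than $\varepsilon$ by the hypothesis on $\varepsilon$. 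So at an $\varepsilon$-PBNE bidder $j$ must play the computed best response, which yields both implications.

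The conceptual content — localisation plus the discounting bound on $u_j^{(x)}$ — is identical to \cref{lem:not-bidder}, so the only genuine work is the arithmetic producing the two tables above and checking that the smallest best-versus-second-best margin across both cases and both values is exactly $\tfrac{33}{1792}$ (scaled by $\delta_{\PROJ}/\Delta$); this is the step to be careful with, since that constant is precisely what dictates the stated bounds $\delta_{\OR_1}<\tfrac{33}{3584}\delta_{\PROJ}$ and $\varepsilon<\tfrac{1}{\Delta}(\tfrac{33}{1792}\delta_{\PROJ}-2\delta_{\OR_1})$.
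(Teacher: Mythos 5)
Your proposal is correct and follows essentially the same route as the paper's proof: the same decomposition $u_j=u_j^{(i)}+u_j^{(x)}$ with the bound $u_j^{(x)}\leq 2\delta_{\OR_1}/\Delta$, the same two case tables (your entries over denominator $114688$ agree with the paper's $\tfrac{2231}{16384}$ and $\tfrac{759}{16384}$), and the same binding margin $\tfrac{33}{1792}\cdot\tfrac{\delta_{\PROJ}}{\Delta}$, yielding best responses $s_0$ and $s_1$ exactly as in the paper (whose case-1 margin $\tfrac{95}{1792}$ is simply larger than your uniform bound). No gaps to report.
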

\begin{proof}
    Let a projection bidder $j$, introduced after a \NOT bidder $i$. By
    construction (see \cref{fig:construction}), at any point in the support of
    the distribution in which $j$ has positive value, there is at most $1$ other
    bidder with positive value, which is either bidder $i$, or a bidder $x$
    which can be either an $\OR_1$ bidder or an $\OR_2$ bidder. Similarly to the
    previous steps in the proof, we will express the total expected utility for
    $j$ at an equilibrium as $u_j = u_j^{(i)} + u_j^{(x)}$, where $u_j^{(i)}$
    comes from the points in the support where $i$ has positive value and
    $u_j^{(x)}$ comes from the ones where $x$ has positive value.

    Given the description of the $\OR_1$ and $\OR_2$ bidders, the total mass of
    points used for the computation of $u_j^{(x)}$ is bounded by
    $\frac{2}{\Delta}\delta_{\OR_1}$ (note that we pick the discounting factors
    such that $\delta_{\OR_1}>\delta_{\OR_2}$). Therefore, as we will pick
    $\delta_{\OR_1}$ to be small enough, it suffices to analyse $u_j^{(i)}$ to
    compute $j$'s best response depending on $i$'s strategy. We will now
    calculate $j$'s best response to each of $i$'s strategies. In our
    description of the DFPA instance, $i$ has value $23/64$ in all points of the
    distribution where both $i$ and $j$ have positive value. Therefore, it
    suffices to describe how $j$ responds according to $\beta_i(23/64)$: 

    \begin{enumerate}
        \item If $\beta_i(23/64)=b_1$:
        
        \begingroup
        \renewcommand{\arraystretch}{1.2}
        \begin{center}\begin{tabular}{c|cccc}
            $b$ & $0$ & $b_1$ & $b_2$ & $b_3$\\
            \hline
            $\frac{\Delta}{\delta_{\PROJ}} \cdot u_j^{(i)}(b;23/64)$ & $\frac{759}{16384n}$ & $\frac{2231}{16384}$ & $\frac{9537}{114688}$ & $-$\\
            $\frac{\Delta}{\delta_{\PROJ}} \cdot u_j^{(i)}(b;1)$ & $0$ & $\frac{3}{7}$ & $\frac{5}{7}$ & $\frac{4}{7}$
        \end{tabular}\end{center}
        \endgroup

        We can see that, for $n \geq 2$, $j$'s best response is
        $s_0=(0,b_1,b_2)$. Additionally, no other strategy achieves a value of
        $\frac{\Delta}{\delta_{\PROJ}} \cdot u_j^{(i)}$ within $\frac{95}{1792}$
        of the optimal one in the table.

        \item If $\beta_i(23/64)=b_2$:
        
        \begingroup
        \renewcommand{\arraystretch}{1.2}
        \begin{center}\begin{tabular}{c|cccc}
            $b$ & $0$ & $b_1$ & $b_2$ & $b_3$\\
            \hline
            $\frac{\Delta}{\delta_{\PROJ}} \cdot u_j^{(i)}(b;23/64)$ & $\frac{759}{16384n}$ & $\frac{3201}{114688}$ & $\frac{759}{16384}$ & $-$\\
            $\frac{\Delta}{\delta_{\PROJ}} \cdot u_j^{(i)}(b;1)$ & $0$ & $0$ & $\frac{5}{14}$ & $\frac{4}{7}$
        \end{tabular}\end{center}
        \endgroup

        We can see that, for $n \geq 2$, $j$'s best response is
        $s_1=(0,b_2,b_3)$. Additionally, no other strategy achieves a value of
        $\frac{\Delta}{\delta_{\PROJ}} \cdot u_j^{(i)}$ within $\frac{33}{1792}$
        of the optimal one in the table.

    \end{enumerate}

    To make sure that $j$'s best response only depends on $i$'s strategy, we
    need to establish that $u_j^{(x)}$ does not provide $j$ with enough utility
    to incentivize her to change her strategy, namely:
    \begin{equation*}
        \frac{1}{\Delta} \cdot 2 \delta_{\OR_1} < \frac{1}{\Delta}\cdot\delta_{\PROJ}\cdot \frac{33}{1792} \quad\implies\quad \delta_{\OR_1} < \frac{33}{3584}\delta_{\PROJ} 
    \end{equation*}

    Once again, given our choice of $\delta_{\OR_1}$ and $\Delta$ at the end of
    the reduction, the analysis we provided also holds for
    $\varepsilon$-approximate equilibria, for any
    $\varepsilon<\frac{1}{\Delta}(\frac{33}{1792}\delta_{\PROJ}-2\delta_{\OR_1})$.
\end{proof}

Using \cref{lem:not-bidder,lem:proj-bidder}, we can derive the following:

\begin{lemma}\label{lem:not-full} Fix an
    $\varepsilon<\frac{1}{\Delta}(\frac{33}{1792}\delta_{\PROJ}-2\delta_{\OR_1})$.
    For any negated literal $x$, with corresponding input bidder $i$, NOT bidder
    $j$, and projection bidder $k$, it is the case that, at any
    $\varepsilon$-PBNE $\vec{\beta}$, $\beta_k \in \{s_0,s_1\}$ and
    $\mathbf{\chi}[\beta_i]= \overline{\mathbf{\chi}[\beta_k]}$.
\end{lemma}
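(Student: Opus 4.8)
The plan is to obtain \cref{lem:not-full} by simply chaining the three preceding lemmas along the ``input bidder $\to$ \NOT bidder $\to$ projection bidder'' pipeline; no new expected-utility computation is needed. First I would apply \cref{lem:input-bidder} to the variable $x$ underlying the negated literal, which gives $\beta_i \in \{s_0, s_1\}$, equivalently $\beta_i(23/64) \in \{b_1, b_2\}$, where $i$ is the copy of $x$ feeding this particular occurrence. I would then plug this into \cref{lem:not-bidder}, whose hypothesis is precisely a statement about $\beta_i(23/64)$ and whose conclusion is that $\beta_i(23/64) = b_1 \Rightarrow \beta_j(23/64) = b_2$ and $\beta_i(23/64) = b_2 \Rightarrow \beta_j(23/64) = b_1$, i.e.\ the \NOT bidder $j$ flips the bid played at value $23/64$ (without, as noted right after \cref{lem:not-bidder}, necessarily playing $s_0$ or $s_1$ itself).

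Next I would invoke \cref{lem:proj-bidder}, reading its ``$i$'' as our \NOT bidder $j$ and its ``$j$'' as our projection bidder $k$: this yields $\beta_j(23/64) = b_1 \Rightarrow \beta_k = s_0$ and $\beta_j(23/64) = b_2 \Rightarrow \beta_k = s_1$. Composing the two sets of implications, $\beta_i = s_0$ forces $\beta_j(23/64) = b_2$ and hence $\beta_k = s_1$, while $\beta_i = s_1$ forces $\beta_j(23/64) = b_1$ and hence $\beta_k = s_0$. In particular $\beta_k \in \{s_0, s_1\}$ in either case, and in both cases $\mathbf{\chi}[\beta_k] = 1 - \mathbf{\chi}[\beta_i]$, which is exactly the asserted $\mathbf{\chi}[\beta_i] = \overline{\mathbf{\chi}[\beta_k]}$.

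The one point that needs attention is the $\varepsilon$-bookkeeping: \cref{lem:input-bidder,lem:not-bidder,lem:proj-bidder} each carry their own upper bound on $\varepsilon$, so I would check that the single bound $\varepsilon < \frac{1}{\Delta}\left(\frac{33}{1792}\delta_{\PROJ} - 2\delta_{\OR_1}\right)$ in the statement is at most each of them. This is where the ordering $\delta_{\NOT} > \delta_{\PROJ} > \delta_{\OR_1}$ and the constraints already imposed on the discounting factors (such as $\delta_{\PROJ} < \frac{33}{3584}\delta_{\NOT}$, $\delta_{\OR_1} < \frac{33}{3584}\delta_{\PROJ}$, and $\delta_{\NOT} < \frac{33}{1792}$) come in: the bound inherited from \cref{lem:proj-bidder} is the most restrictive of the three, so it dominates, with the final numerical consistency pinned down once all the parameters ($\delta_{\NOT}, \delta_{\PROJ}, \delta_{\OR_1}, \Delta, \dots$) are fixed at the end of the reduction. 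I do not anticipate any genuine obstacle: all the quantitative content of the negation gadget already resides in \cref{lem:not-bidder,lem:proj-bidder}, and \cref{lem:not-full} merely bundles their conclusions into a single statement about $i$ and $k$.
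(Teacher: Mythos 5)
Your proposal is correct and matches the paper's proof, which likewise obtains \cref{lem:not-full} by directly composing \cref{lem:not-bidder,lem:proj-bidder} (with \cref{lem:input-bidder} implicitly pinning $\beta_i\in\{s_0,s_1\}$) and taking the smallest of the $\varepsilon$ thresholds, which under the ordering of the discounting factors is the one from \cref{lem:proj-bidder}. Your $\varepsilon$-bookkeeping caveat is handled in the paper in exactly the same way, by deferring to the final choice of the $\delta$ parameters at the end of the reduction.
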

\begin{proof}
    Follows directly from \cref{lem:not-bidder,lem:proj-bidder}, keeping the
    smallest value $\varepsilon'$ (using the fact that
    $\delta_{\NOT}>\delta_{\PROJ}$) so that we get the result for any
    $\varepsilon$-PBNE where $\varepsilon<\varepsilon'$. 
\end{proof}

\paragraph{$\OR_1$ bidders.}
We proceed to the analysis of the first layer of bidders simulating the
behaviour of an \OR, for which we establish the following lemma:

\begin{lemma}\label{lem:or1} Fix any $\delta_{\OR_2} <
    \frac{1}{1792}\delta_{\OR_1}$ and
    $\varepsilon<\frac{1}{\Delta}(\frac{1}{896}\delta_{\OR_1}-2\delta_{\OR_2})$.
    For any $\OR_1$ bidder $\ell$ introduced by literals with corresponding
    bidders $i,j,k$, at any $\varepsilon$-PBNE $\vec{\beta}$, it must be the
    case that $\mathbf{\chi}[\beta_{\ell}] = \mathbf{\chi}[\beta_i] \vee
    \mathbf{\chi}[\beta_j]$.
\end{lemma}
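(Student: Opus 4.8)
The plan is to follow the same template used for the three preceding gadget lemmas (\cref{lem:input-bidder,lem:not-bidder,lem:proj-bidder}). No-overbidding immediately fixes $\beta_\ell(0)=0$, so the only thing to pin down is $\beta_\ell(23/64)$ and $\beta_\ell(1)$. Inspecting the construction (and \cref{fig:construction}), at every point of the support where $v_\ell>0$ at most one other bidder has positive value: either the partner bidder $i$, or the partner bidder $j$, or the unique $\OR_2$ bidder $x$ that takes $\ell$ as one of its two inputs. Hence at any equilibrium we may write $u_\ell = u_\ell^{(\mathrm{main})} + u_\ell^{(x)}$, where $u_\ell^{(\mathrm{main})}$ collects the five points of mass introduced together with $\ell$ (the two $i$-points, the two $j$-points, and the single $\ell$-only point of mass $\frac{\delta_{\OR_1}}{128\Delta}$), while $u_\ell^{(x)}$ collects the (at most two) points of mass $\frac{\delta_{\OR_2}}{\Delta}$ introduced for $x$; since the ex-post utility at any support point is at most $1$, this gives $u_\ell^{(x)} \le \frac{2\delta_{\OR_2}}{\Delta}$.

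By \cref{lem:input-bidder} (when the partner bidder is an input bidder) and \cref{lem:not-full} (when it is a projection bidder), both $\beta_i$ and $\beta_j$ lie in $\{s_0,s_1\}$, and since $i$ (resp. $j$) only ever takes value $23/64$ at the relevant support points, only $\beta_i(23/64)$ and $\beta_j(23/64)$ matter. One then computes, exactly as in the earlier lemmas, the scaled tables $\frac{\Delta}{\delta_{\OR_1}}\,u_\ell^{(\mathrm{main})}(b;v_\ell)$ for $b\in B$ and $v_\ell\in\{23/64,1\}$ in the four cases $(\beta_i,\beta_j)\in\{s_0,s_1\}^2$; by the $i\leftrightarrow j$ symmetry of the construction $u_\ell^{(\mathrm{main})}$ splits additively, so these four tables are obtained by combining one ``$\OR$-input plays $s_0$'' contribution with one ``$\OR$-input plays $s_1$'' contribution, plus the common $\ell$-only contribution. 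Reading off the maximizers shows that the unique best response of $\ell$ is $s_0$ when $\beta_i=\beta_j=s_0$ and $s_1$ in the three remaining cases, i.e. exactly $s_{\mathbf{\chi}[\beta_i]\vee\mathbf{\chi}[\beta_j]}$, and that every other (non-overbidding) strategy is worse by at least $\frac{1}{896}$ in the scaled table, equivalently by at least $\frac{\delta_{\OR_1}}{896\Delta}$ in mass-weighted utility. (The tight $\frac{1}{896}$ margin arises in the asymmetric case at value $23/64$, when comparing the bids $b_1$ and $b_2$ against one partner playing $s_0$ and the other playing $s_1$: there the two $i/j$-contributions cancel out almost exactly and the $\frac{1}{128}$-weighted $\ell$-only point tips the balance by precisely $\frac{1}{896}$.)

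Finally, the hypothesis $\delta_{\OR_2}<\frac{1}{1792}\delta_{\OR_1}$ yields $u_\ell^{(x)}\le\frac{2\delta_{\OR_2}}{\Delta}<\frac{\delta_{\OR_1}}{896\Delta}$, so the downstream perturbation cannot overturn the strict ordering of $u_\ell^{(\mathrm{main})}$; and since $\varepsilon<\frac{1}{\Delta}\bigl(\frac{1}{896}\delta_{\OR_1}-2\delta_{\OR_2}\bigr)$, no strategy other than $s_{\mathbf{\chi}[\beta_i]\vee\mathbf{\chi}[\beta_j]}$ can be an $\varepsilon$-best response to $\vec{\beta}_{-\ell}$ once $u_\ell^{(x)}$ is absorbed into the slack. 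As in \cref{lem:input-bidder}, mass-weighted utilities only underestimate interim utilities (by the factor $f_\ell(v_\ell)\le 1$), so the stated bound on $\varepsilon$ is conservative; and, unlike the input-bidder case, there is no circular dependency to resolve here, since $i$ and $j$ react to $\ell$ only through downstream gates, whose effect is already absorbed into the slack of their own lemmas. Hence $\beta_\ell = s_{\mathbf{\chi}[\beta_i]\vee\mathbf{\chi}[\beta_j]}$, i.e. $\mathbf{\chi}[\beta_\ell]=\mathbf{\chi}[\beta_i]\vee\mathbf{\chi}[\beta_j]$, at any such $\varepsilon$-PBNE.

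The main obstacle is not conceptual — the argument is a near-verbatim copy of the \NOT/\PROJ lemmas — but bookkeeping: one must carefully compute the four utility tables (paying attention to the uniform tie-breaking at $v_\ell=1$ against bids $b_1$ and $b_2$, to the fact that overbidding excludes $b_3$ at value $23/64$, and to the $\frac{1}{128}$-weight of the $\ell$-only point), and verify that the additive ``$i$-part $+$ $j$-part'' structure never produces a spurious maximizer strictly between $s_0$ and $s_1$ (such as $(0,b_1,b_3)$ or $(0,b_1,b_1)$) that beats or ties the intended one within less than $\frac{1}{896}$.
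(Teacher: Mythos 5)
Your proposal matches the paper's proof essentially step for step: the same decomposition of $\ell$'s utility into the five points introduced with the $\OR_1$ gadget plus the downstream $\OR_2$ points, the same bound $u_\ell^{(x)}\le \frac{2\delta_{\OR_2}}{\Delta}$, the same case analysis over $\beta_i,\beta_j\in\{s_0,s_1\}$ via utility tables, the same binding margin $\frac{1}{896}$ (occurring exactly where you locate it, in the mixed case at value $23/64$ between $b_1$ and $b_2$), and the same parameter inequalities $\delta_{\OR_2}<\frac{1}{1792}\delta_{\OR_1}$ and $\varepsilon<\frac{1}{\Delta}\left(\frac{1}{896}\delta_{\OR_1}-2\delta_{\OR_2}\right)$. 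One small slip in your parenthetical, which does not affect the argument since the proof rests on the computed tables: the $\frac{1}{128}$-weighted $\ell$-only point in fact favours $b_1$ over $b_2$ (by $\frac{1}{896}$ in the scaled table), while the $i/j$ points favour $b_2$ by $\frac{1}{448}$, so the net $\frac{1}{896}$ margin is what survives of the $i/j$ advantage after the $\ell$-only point pulls in the opposite direction.
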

\begin{proof}
    From the description of the $\OR_{1}$ bidders, any point in the support of
    the joint distribution where $\ell$ has positive value can only have at most
    one other bidder with positive value; this could be either bidder $i$ or $j$
    that belong in either the set of Input bidders or the set of Projection
    bidders (see \cref{fig:construction}), or it could be some $\OR_2$ bidder
    $x$. Similarly to the analysis for the previous bidders, here too we will
    express the total expected utility of $\ell$ as
    $u_{\ell}=u_{\ell}^{(i,j)}+u_{\ell}^{(x)}$, where $u_{\ell}^{(i,j)}$ is the
    utility that $\ell$ receives from points in the support where \emph{either}
    $i$ or $j$ have positive value. From the construction of $\OR_2$ bidders, we
    can once again derive the bound $u_{\ell}^{(x)} \leq
    \frac{2}{\Delta}\delta_{\OR_2}$.

    Below is the analysis of $\ell$'s best-response to each possible pair of
    strategies of $i,j$, with respect to $u_{\ell}^{(i,j)}$:

    \begin{enumerate}
        \item Both $i$ and $j$ play strategy $s_0=(0,b_1,b_2)$:

        \begingroup
        \renewcommand{\arraystretch}{1.2}
        \begin{center}\begin{tabular}{c|cccc}
            $b$ & $0$ & $b_1$ & $b_2$ & $b_3$\\
            \hline
            $\frac{\Delta}{\delta_{\OR_1}} \cdot u_{\ell}^{(i,j)}(b;23/64)$ & $\frac{23}{8192n}$ & $\frac{12513}{57344}$ & $\frac{8481}{57344}$ & $-$\\
            $\frac{\Delta}{\delta_{\OR_1}} \cdot u_{\ell}^{(i,j)}(b;1)$ & $0$ & $\frac{6}{7}$ & $\frac{10}{7}$ & $\frac{8}{7}$
        \end{tabular}\end{center}
        \endgroup

        We can see that, for $n \geq 2$, $\ell$'s best response is
        $(0,b_1,b_2)$. Additionally, no other strategy achieves a value of
        $\frac{\Delta}{\delta_{\OR_1}} \cdot u_{\ell}^{(i,j)}$ within
        $\frac{9}{128}$ of the optimal one in the table.

        \item $i$ plays $s_0=(0,b_1,b_2)$ and $j$ plays $s_1=(0,b_2,b_3)$ (or
        the opposite; notice that the analysis of the two cases is symmetric due
        to the symmetry of the construction on $i$ and $j$):

        \begingroup
        \renewcommand{\arraystretch}{1.2}
        \begin{center}\begin{tabular}{c|cccc}
            $b$ & $0$ & $b_1$ & $b_2$ & $b_3$\\
            \hline
            $\frac{\Delta}{\delta_{\OR_1}} \cdot u_{\ell}^{(i,j)}(b;23/64)$ & $\frac{23}{8192n}$ & $\frac{6305}{57344}$ & $\frac{6369}{57344}$ & $-$\\
            $\frac{\Delta}{\delta_{\OR_1}} \cdot u_{\ell}^{(i,j)}(b;1)$ & $\frac{1}{128n}$ & $\frac{195}{448}$ & $\frac{965}{896}$ & $\frac{257}{224}$
        \end{tabular}\end{center}
        \endgroup
        
        We can see that, for $n \geq 2$, $\ell$'s best response is
        $(0,b_2,b_3)$. Additionally, no other strategy achieves a value of
        $\frac{\Delta}{\delta_{\OR_1}} \cdot u_{\ell}^{(i,j)}$ within
        $\frac{1}{896}$ of the optimal one in the table.

        \item Both $i$ and $j$ play $s_1=(0,b_2,b_3)$:

        \begingroup
        \renewcommand{\arraystretch}{1.2}
        \begin{center}\begin{tabular}{c|cccc}
            $b$ & $0$ & $b_1$ & $b_2$ & $b_3$\\
            \hline
            $\frac{\Delta}{\delta_{\OR_1}} \cdot u_{\ell}^{(i,j)}(b;23/64)$ & $\frac{23}{8192n}$ & $\frac{97}{57344}$ & $\frac{4257}{57344}$ & $-$\\
            $\frac{\Delta}{\delta_{\OR_1}} \cdot u_{\ell}^{(i,j)}(b;1)$ & $\frac{1}{128n}$ & $\frac{3}{448}$ & $\frac{645}{896}$ & $\frac{257}{224}$
        \end{tabular}\end{center}
        \endgroup
        
        We can see that, for $n \geq 2$, $\ell$'s best response is
        $(0,b_2,b_3)$. Additionally, no other strategy achieves a value of
        $\frac{\Delta}{\delta_{\OR_1}} \cdot u_{\ell}^{(i,j)}$ within
        $\frac{65}{896}$ of the optimal one in the table.
    \end{enumerate}
    To make sure that $\ell$'s best response only depends on the strategies of
    $i$ and $j$, we need to establish that $u_{\ell}^{(x)}$ does not provide
    $\ell$ with enough utility to incentivize her to change her strategy,
    namely:
    \begin{equation*}
        \frac{1}{\Delta} \cdot 2 \delta_{\OR_2} < \frac{1}{\Delta}\cdot\delta_{\OR_1}\cdot \frac{1}{896} \quad\implies\quad \delta_{\OR_2} < \frac{1}{1792}\delta_{\OR_1} 
    \end{equation*}

    Once again, given our choice of $\delta_{\OR_2}$ and $\Delta$ at the end of
    the reduction, the analysis we provided also holds for
    $\varepsilon$-approximate equilibria, for any
    $\varepsilon<\frac{1}{\Delta}(\frac{1}{896}\delta_{\OR_1}-2\delta_{\OR_2})$.
\end{proof}

\paragraph{$\OR_2$ bidders.}
Next, we will reason for the behaviour of the $\OR_2$ bidders, getting the following result:

\begin{lemma}\label{lem:or2} Fix any $\delta_{\OUT} <
    \frac{1}{672}\delta_{\OR_2}$ and
    $\varepsilon<\frac{1}{\Delta}(\frac{1}{896}\delta_{\OR_2}-\frac{3}{4}\delta_{\OUT})$.
    For any $\OR_2$ bidder $\ell$ introduced by literals with corresponding
    bidders $i,j$, at any $\varepsilon$-PBNE $\vec{\beta}$, it must be the case
    that $\mathbf{\chi}[\beta_{\ell}] = \mathbf{\chi}[\beta_i] \vee
    \mathbf{\chi}[\beta_j]$.
\end{lemma}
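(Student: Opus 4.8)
The plan is to mirror the proof of \cref{lem:or1} almost line for line, exploiting the fact that an $\OR_2$ bidder $\ell$ is introduced with exactly the same pattern of support points as an $\OR_1$ bidder, only with the discounting factor $\delta_{\OR_2}$ in place of $\delta_{\OR_1}$. First I would fix an $\varepsilon$-PBNE $\vec{\beta}$ with $\varepsilon$ below the stated bound, and recall that, whichever of the three admissible types the ``input'' bidders $i$ and $j$ of the $\OR_2$ gate belong to (an input bidder, a projection bidder, or an $\OR_1$ bidder), we already know from \cref{lem:input-bidder,lem:not-full,lem:or1} that $\beta_i,\beta_j\in\{s_0,s_1\}$; note that the $\varepsilon$-regime of the present lemma is stricter than those of all three earlier lemmas, since $\delta_{\OR_2}$ and $\delta_{\OUT}$ are the smallest discounting factors, so their hypotheses are satisfied here. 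It then suffices to show that, for each of the three joint profiles $(\beta_i,\beta_j)\in\{s_0,s_1\}^2$, the (essentially unique) best response of $\ell$ encodes $\mathbf{\chi}[\beta_i]\vee\mathbf{\chi}[\beta_j]$.

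As in \cref{lem:or1}, every point of the support at which $\ell$ has positive value involves at most one \emph{other} bidder with positive value --- \emph{except} for the single point introduced by the output gadget attached to $\ell$, namely $(0,\dots,v_\ell=23/64,v_{k}=1,v_{k'}=1,\dots,0)$ of mass $\tfrac{3\delta_{\OUT}}{4\Delta}$, where $k,k'$ are the two output bidders (this is the exception flagged in the proof of \cref{lem:input-bidder}). Splitting $u_\ell=u_\ell^{(i,j)}+u_\ell^{(x)}$, with $u_\ell^{(x)}$ collecting only that output-gadget point and crudely bounding the ex-post utility there by $1$, we obtain $u_\ell^{(x)}\le\tfrac{3}{4}\cdot\tfrac{\delta_{\OUT}}{\Delta}$. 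This $\tfrac{3}{4}$-factor is precisely the one place where the argument departs from \cref{lem:or1} (whose analogous term was bounded by $\tfrac{2\delta_{\OR_2}}{\Delta}$), and it explains the shape of the hypotheses, since $\tfrac{3}{4}\cdot\tfrac{1}{672}=\tfrac{1}{896}$.

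Next I would tabulate $\tfrac{\Delta}{\delta_{\OR_2}}\cdot u_\ell^{(i,j)}(b;v_\ell)$ for $v_\ell\in\{23/64,1\}$ and $b\in B$ in the three cases. Because the masses of the $\OR_2$-gadget points coincide with those of the $\OR_1$-gadget after replacing $\delta_{\OR_1}$ by $\delta_{\OR_2}$, these three tables are identical to the ones appearing in the proof of \cref{lem:or1}: when $\beta_i=\beta_j=s_0$ the best response is $s_0$ (with a margin of $\tfrac{9}{128}$ over every other strategy); when exactly one of $\beta_i,\beta_j$ equals $s_1$ the best response is $s_1$ (margin $\tfrac{1}{896}$); and when $\beta_i=\beta_j=s_1$ the best response is again $s_1$ (margin $\tfrac{65}{896}$). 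In all three cases $\mathbf{\chi}[\beta_\ell]=\mathbf{\chi}[\beta_i]\vee\mathbf{\chi}[\beta_j]$, and the tightest margin across the cases is $\tfrac{1}{896}$.

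Finally I would combine the estimates: any deviation of $\ell$ away from the intended bid profile loses at least $\tfrac{1}{896}\cdot\tfrac{\delta_{\OR_2}}{\Delta}$ in $u_\ell^{(i,j)}$, while it can gain at most $u_\ell^{(x)}+\varepsilon\le\tfrac{3}{4}\cdot\tfrac{\delta_{\OUT}}{\Delta}+\varepsilon$ from the output-gadget point together with the $\varepsilon$-slack of the approximate equilibrium. Under $\delta_{\OUT}<\tfrac{1}{672}\delta_{\OR_2}$ and $\varepsilon<\tfrac{1}{\Delta}\bigl(\tfrac{1}{896}\delta_{\OR_2}-\tfrac{3}{4}\delta_{\OUT}\bigr)$ the former strictly exceeds the latter, so the intended response is forced and $\mathbf{\chi}[\beta_\ell]=\mathbf{\chi}[\beta_i]\vee\mathbf{\chi}[\beta_j]$, as claimed. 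The only mildly delicate point in the whole argument is the bookkeeping of the second paragraph: confirming that the output-gadget point is genuinely the \emph{only} support point coupling $\ell$ with more than one positively-valued bidder, and that its mass $\tfrac{3\delta_{\OUT}}{4\Delta}$ together with the crude unit bound on ex-post utility yields exactly the $\tfrac{3}{4}$-factor that makes the arithmetic close; everything else is the routine table verification already carried out for $\OR_1$.
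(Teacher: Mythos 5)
Your proposal is correct and follows essentially the same route as the paper: the same decomposition $u_\ell = u_\ell^{(i,j)} + u_\ell^{(x,y)}$ with the single output-gadget point bounded by $\tfrac{3\delta_{\OUT}}{4\Delta}$, the same three best-response tables carried over verbatim from the $\OR_1$ analysis with $\delta_{\OR_1}$ replaced by $\delta_{\OR_2}$, and the same closing inequality $\tfrac{3}{4}\delta_{\OUT} < \tfrac{1}{896}\delta_{\OR_2}$ yielding the stated $\varepsilon$-regime. The only (harmless) addition is your explicit appeal to the earlier lemmas to justify restricting to $\beta_i,\beta_j\in\{s_0,s_1\}$, which the paper leaves implicit.
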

\begin{proof}
    From the description of the $\OR_{2}$ bidders, there are two types of points
    in the support of the joint distribution where $\ell$ has positive value.
    Firstly, it could be the case that exactly one other bidder has positive
    value; this would be either bidder $i$ or $j$ and would belong in any of the
    sets of Input/Projection/$\OR_1$ bidders (see \cref{fig:construction}).
    Secondly, there is one point in the distribution where exactly two other
    bidders $x,y$, which are Output bidders, have positive value. Similarly to
    the analysis for the previous bidders, here too we will express the total
    expected utility of $\ell$ as $u_{\ell}=u_{\ell}^{(i,j)}+u_{\ell}^{(x,y)}$,
    where $u_{\ell}^{(x,y)}$ is the utility that $\ell$ receives from the point
    in the support where $x$ and $y$ have positive value. From the construction
    of output bidders, we obtain the bound $u_{\ell}^{(x,y)} \leq
    \frac{3}{4\Delta}\delta_{\OUT}$.

    Below is the analysis of $\ell$'s best-response to each possible pair of
    strategies of $i,j$, with respect to $u_{\ell}^{(i,j)}$:

    \begin{enumerate}
        \item Both $i$ and $j$ play strategy $s_0=(0,b_1,b_2)$:

        \begingroup
        \renewcommand{\arraystretch}{1.2}
        \begin{center}\begin{tabular}{c|cccc}
            $b$ & $0$ & $b_1$ & $b_2$ & $b_3$\\
            \hline
            $\frac{\Delta}{\delta_{\OR_2}} \cdot u_{\ell}^{(i,j)}(b;23/64)$ &
            $\frac{23}{8192n}$ & $\frac{12513}{57344}$ & $\frac{8481}{57344}$ &
            $-$\\
            $\frac{\Delta}{\delta_{\OR_2}} \cdot u_{\ell}^{(i,j)}(b;1)$ & $0$ &
        $\frac{6}{7}$ & $\frac{10}{7}$ & $\frac{8}{7}$ \end{tabular}\end{center}
        \endgroup

        We can see that, for $n \geq 2$, $\ell$'s best response is
        $(0,b_1,b_2)$. Additionally, no other strategy achieves a value of
        $\frac{\Delta}{\delta_{\OR_2}} \cdot u_{\ell}^{(i,j)}$ within
        $\frac{9}{128}$ of the optimal one in the table.

        \item $i$ plays $s_0=(0,b_1,b_2)$ and $j$ plays $s_1=(0,b_2,b_3)$ (or
        the opposite; notice that the analysis of the two cases is symmetric due
        to the symmetry of the construction on $i$ and $j$):

        \begingroup
        \renewcommand{\arraystretch}{1.2}
        \begin{center}\begin{tabular}{c|cccc}
            $b$ & $0$ & $b_1$ & $b_2$ & $b_3$\\
            \hline
            $\frac{\Delta}{\delta_{\OR_2}} \cdot u_{\ell}^{(i,j)}(b;23/64)$ &
            $\frac{23}{8192n}$ & $\frac{6305}{57344}$ & $\frac{6369}{57344}$ &
            $-$\\
            $\frac{\Delta}{\delta_{\OR_2}} \cdot u_{\ell}^{(i,j)}(b;1)$ &
        $\frac{1}{128n}$ & $\frac{195}{448}$ & $\frac{965}{896}$ &
        $\frac{257}{224}$ \end{tabular}\end{center}
        \endgroup
        
        We can see that, for $n \geq 2$, $\ell$'s best response is
        $(0,b_2,b_3)$. Additionally, no other strategy achieves a value of
        $\frac{\Delta}{\delta_{\OR_2}} \cdot u_{\ell}^{(i,j)}$ within
        $\frac{1}{896}$ of the optimal one in the table.

        \item Both $i$ and $j$ play $s_1=(0,b_2,b_3)$:

        \begingroup
        \renewcommand{\arraystretch}{1.2}
        \begin{center}\begin{tabular}{c|cccc} $b$ & $0$ & $b_1$ & $b_2$ &
            $b_3$\\
            \hline
            $\frac{\Delta}{\delta_{\OR_2}} \cdot u_{\ell}^{(i,j)}(b;23/64)$ &
            $\frac{23}{8192n}$ & $\frac{97}{57344}$ & $\frac{4257}{57344}$ &
            $-$\\
            $\frac{\Delta}{\delta_{\OR_2}} \cdot u_{\ell}^{(i,j)}(b;1)$ &
        $\frac{1}{128n}$ & $\frac{3}{448}$ & $\frac{645}{896}$ &
        $\frac{257}{224}$ \end{tabular}\end{center}
        \endgroup
        
        We can see that, for $n \geq 2$, $\ell$'s best response is
        $(0,b_2,b_3)$. Additionally, no other strategy achieves a value of
        $\frac{\Delta}{\delta_{\OR_2}} \cdot u_{\ell}^{(i,j)}$ within
        $\frac{65}{896}$ of the optimal one in the table.
    \end{enumerate}

    To make sure that $\ell$'s best response only depends on the strategies of
    $i$ and $j$, we need to establish that $u_{\ell}^{(x,y)}$ does not provide
    $\ell$ with enough utility to incentivize her to change her strategy,
    namely:

    \begin{equation*}
        \frac{3}{4\Delta} \delta_{\OUT} < \frac{1}{\Delta}\cdot\delta_{\OR_2}\cdot \frac{1}{896} \quad\implies\quad \delta_{\OUT} < \frac{1}{672}\delta_{\OR_2} 
    \end{equation*}

    Once again, given our choice of $\delta_{\OUT}$ and $\Delta$ at the end of
    the reduction, the analysis we provided also holds for
    $\varepsilon$-approximate equilibria, for any
    $\varepsilon<\frac{1}{\Delta}(\frac{1}{896}\delta_{\OR_2}-\frac{3}{4}\delta_{\OUT})$.
\end{proof}

\paragraph{Output bidders.}
We now present the analysis for the last part of our construction, that of the
output bidders. These are designed so that they can simultaneously best-respond
if and only if the corresponding $\OR_2$ bidder plays $s_1$. Indeed, we show the
following:

\begin{lemma}\label{lem:output} Fix an
    $\varepsilon<\frac{\delta_{OUT}}{56\Delta}$. For any output bidders $k,\ell$
    corresponding to an $\OR_2$ bidder $i$, it is the case that, at any
    $\varepsilon$-PBNE $\vec{\beta}$, $\mathbf{\chi}[\beta_i] = 1$.
\end{lemma}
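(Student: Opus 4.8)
The plan is to treat the three support points introduced for the output bidders as a self-contained gadget and show that its two output bidders $k,\ell$ can simultaneously (approximately) best respond if and only if their associated $\OR_2$ bidder $i$ bids $b_2=2/7$ on value $23/64$, i.e.\ plays $s_1$.

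First I would reduce to a symmetric two-player sub-game. Since output bidders are leaves of the construction (they are never inputs to any gadget), the only support points where $k$ has positive value are ``$v_k=1$, rest $0$'' (mass $\delta_{\OUT}/\Delta$) and ``$v_i=23/64,\ v_k=v_\ell=1$, rest $0$'' (mass $3\delta_{\OUT}/(4\Delta)$), and symmetrically for $\ell$; hence $\support{F_k}=\support{F_\ell}=\{0,1\}$, and by no-overbidding each of $k,\ell$ bids $0$ at value $0$, so the only free choice is the pure bid at value $1$. Normalizing the two masses gives the conditional distribution given $v_k=1$: with probability $4/7$ every other bidder has value $0$ (so bids $0$), and with probability $3/7$ the partner has value $1$ while $i$ has value $23/64$ (everyone else $0$). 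Invoking \cref{lem:or2} together with the gadget lemmas it rests upon (all applicable since our $\varepsilon<\tfrac{\delta_{\OUT}}{56\Delta}$ sits below the thresholds there, which one checks from $\delta_{\OUT}<\tfrac{1}{672}\delta_{\OR_2}$ and the analogous chain of inequalities), at any $\varepsilon$-PBNE the $\OR_2$ bidder $i$ plays $s_0$ or $s_1$, so $\beta_i(23/64)\in\{b_1,b_2\}$ is a fixed bid. Consequently the value-$1$ payoffs of $k$ and $\ell$ depend only on each other's value-$1$ bid and on $\beta_i(23/64)$: we are left with a symmetric two-bidder first-price auction, both values $1$, admissible bids $\{0,b_1,b_2,b_3\}$ (all $\le 1$), with a ``dummy'' competitor bidding $\beta_i(23/64)$ in the probability-$3/7$ event.

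Next I would split on $\beta_i(23/64)$. If $\beta_i(23/64)=b_2$ (i.e.\ $\mathbf{\chi}[\beta_i]=1$): computing interim utilities from \eqref{eq:DFPA-utility-interim-mixed} under the uniform tie-breaking of \eqref{eq:ex_post_utilities}, one verifies that the explicit pure profile $\beta_k(1)=b_3$, $\beta_\ell(1)=b_1$ is an exact mutual best response, each chosen bid beating every alternative by a fixed positive margin; this is what lets the reduction output a genuine PBNE on satisfiable instances, though it is not needed for the statement here. If $\beta_i(23/64)=b_1$ (i.e.\ $\mathbf{\chi}[\beta_i]=0$): this is the crux. I would tabulate, exactly as in the proofs of \cref{lem:or1,lem:or2}, the interim utility $u_k(b;1)$ for all $b\in\{0,b_1,b_2,b_3\}$ as a function of $\beta_\ell(1)\in\{0,b_1,b_2,b_3\}$ (and symmetrically for $\ell$). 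The outcome is a best-response map that is single-valued with a strictly positive margin everywhere and forms a $3$-cycle on the bids, $b_1\mapsto b_2\mapsto b_3\mapsto b_1$ (with $0\mapsto b_2$). Hence the composition of the best-response map with itself has no fixed point in $\{0,b_1,b_2,b_3\}$, so no pure profile $(\beta_k(1),\beta_\ell(1))$ is a mutual best response; moreover the margin is at least some explicit constant $c>0$, and since $\varepsilon<\tfrac{\delta_{\OUT}}{56\Delta}<c$, in every pure profile at least one of $k,\ell$ can improve by more than $\varepsilon$. Thus $\beta_i(23/64)=b_1$ is incompatible with $\vec{\beta}$ being an $\varepsilon$-PBNE. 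Combining, at any $\varepsilon$-PBNE the $\OR_2$ bidder $i$ does not play $s_0$, and since it plays $s_0$ or $s_1$, it plays $s_1$, i.e.\ $\mathbf{\chi}[\beta_i]=1$.

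I expect the main obstacle to be the case $\beta_i(23/64)=b_1$: tracking the ex-post utilities under uniform tie-breaking (with its $2$- and $3$-way ties) carefully enough to pin the best-response correspondence down to exactly the $3$-cycle above with a uniform positive margin, and likewise checking robustness of the profile in the $\beta_i(23/64)=b_2$ case. Everything else should be routine: the conditional-distribution computation is a one-line normalization, the reduction to a two-player sub-game is immediate from output bidders being leaves, and verifying that $\tfrac{\delta_{\OUT}}{56\Delta}$ lies below both $c$ and the thresholds of the earlier gadget lemmas follows directly from the nested choices $\delta_{\OUT}<\tfrac{1}{672}\delta_{\OR_2}<\dots$ fixed at the end of the reduction.
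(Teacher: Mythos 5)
Your proposal is correct and matches the paper's own proof in essentially every respect: the case split on $\beta_i(23/64)\in\{b_1,b_2\}$, the best-response tables for the two output bidders at value $1$, the no-mutual-$\varepsilon$-best-response argument via exactly the cycle $b_1\mapsto b_2\mapsto b_3\mapsto b_1$ (with $0\mapsto b_2$) when $i$ plays $s_0$, and the asymmetric $(b_1,b_3)$ equilibrium when $i$ plays $s_1$. The only differences are presentational (you make the $4/7$--$3/7$ conditional distribution and the appeal to \cref{lem:or2} explicit, which the paper leaves implicit), so no gap to report.
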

\begin{proof}
    For the first part of our proof, we need to show that whenever $i$ plays
    $s_0$ there is no equilibrium. We proceed by computing $k$'s best-response
    according to $\ell$'s strategy, when $i$ plays $s_0 = (0,b_1,b_2)$. Our
    construction ensures that $k$ and $\ell$ have value $23/64$ with probability
    $0$, so it suffices to check their strategies when having value $1$ (again,
    no-overbidding means that they will always bid $0$ when having value $0$):
    
    \begin{enumerate}
        \item If $\beta_{\ell}(1)=0$:

        \begingroup
        \renewcommand{\arraystretch}{1.2}
        \begin{center}\begin{tabular}{c|cccc}
            $b$ & $0$ & $b_1$ & $b_2$ & $b_3$\\
            \hline
            $\frac{\Delta}{\delta_{\OUT}} \cdot u_k(b;1)$ & $\frac{1}{n}$ &
            $\frac{33}{28}$ & $\frac{5}{4}$ & $1$\\
        \end{tabular}\end{center}
        \endgroup 

        so, for $n\geq 2$, $k$'s best response is $\beta_k(1)=b_2$.
        Additionally, no other strategy achieves a value of
        $\frac{\Delta}{\delta_{\OUT}} \cdot u_k$ within $\frac{1}{14}$ of the
        optimal one in the table. 

        \item If $\beta_{\ell}(1)=b_1$:

        \begingroup
        \renewcommand{\arraystretch}{1.2}
        \begin{center}\begin{tabular}{c|cccc}
            $b$ & $0$ & $b_1$ & $b_2$ & $b_3$\\
            \hline
            $\frac{\Delta}{\delta_{\OUT}} \cdot u_k(b;1)$ & $\frac{1}{n}$ &
            $\frac{15}{14}$ & $\frac{5}{4}$ & $1$\\
        \end{tabular}\end{center}
        \endgroup 
        so, for $n \geq 2$, $k$'s best response is $\beta_k(1)=b_2$.
        Additionally, no other strategy achieves a value of
        $\frac{\Delta}{\delta_{\OUT}} \cdot u_k$ within $\frac{5}{28}$ of the
        optimal one in the table. 

        \item If $\beta_{\ell}(1)=b_2$:

        \begingroup
        \renewcommand{\arraystretch}{1.2}
        \begin{center}\begin{tabular}{c|cccc}
            $b$ & $0$ & $b_1$ & $b_2$ & $b_3$\\
            \hline
            $\frac{\Delta}{\delta_{\OUT}} \cdot u_k(b;1)$ & $\frac{1}{n}$ &
            $\frac{6}{7}$ & $\frac{55}{56}$ & $1$\\
        \end{tabular}\end{center}
        \endgroup 
        so, for $n \geq 2$, $k$'s best response is $\beta_k(d)=b_3$.
        Additionally, no other strategy achieves a value of
        $\frac{\Delta}{\delta_{\OUT}} \cdot u_k$ within $\frac{1}{56}$ of the
        optimal one in the table. 

        \item If $\beta_{\ell}(1)=b_3$:

        \begingroup
        \renewcommand{\arraystretch}{1.2}
        \begin{center}\begin{tabular}{c|cccc}
            $b$ & $0$ & $b_1$ & $b_2$ & $b_3$\\
            \hline
            $\frac{\Delta}{\delta_{\OUT}} \cdot u_k(b;1)$ & $\frac{1}{n}$ &
            $\frac{6}{7}$ & $\frac{5}{7}$ & $\frac{11}{14}$\\
        \end{tabular}\end{center}
        \endgroup 

        so, for $n \geq 2$, $k$'s best response is $\beta_k(1)=b_1$.
        Additionally, no other strategy achieves a value of
        $\frac{\Delta}{\delta_{\OUT}} \cdot u_k$ within $\frac{1}{14}$ of the
        optimal one in the table.
    \end{enumerate}

    We summarize $k$'s best-responses, which are unique within
    $\frac{\delta_{\OUT}}{56\Delta}$, in the following table:
    \begin{center}\begin{tabular}{c|cccc}
        $\beta_{\ell}(1)$ & $0$ & $b_1$ & $b_2$ & $b_3$\\
        \hline
        $\beta_k(1)$ & $b_2$ & $b_2$ & $b_3$ & $b_1$ 
    \end{tabular}
    \captionof{table}{Bidder $k$'s best-responses to bidder $\ell$'s strategies}
    \end{center}

    As bidders $k$ and $\ell$ are symmetrically defined, the analysis for
    $\ell$'s best-responses is identical. Therefore, we can see from the best
    response table that it is impossible for $k$ and $\ell$ to simultaneously
    pick $\varepsilon$- best-responses for
    $\varepsilon<\frac{\delta_{\OUT}}{56\Delta}$ (if that were the case, there
    would have to exist a column in the table where both played the same
    strategy or two columns where they swap strategies).

    We now analyse the remaining case, where the output bidder $i$ plays
    strategy $s_1=(0,b_2,b_3)$. We will demonstrate that the pair of strategies
    where, at value $1$, one of $k,\ell$ plays $b_1$ and the other plays $b_3$
    leads to an equilibrium:

    \begin{enumerate}
        \item If $\beta_{\ell}(1)=b_1$:

        \begingroup
        \renewcommand{\arraystretch}{1.2}
        \begin{center}\begin{tabular}{c|cccc}
            $b$ & $0$ & $b_1$ & $b_2$ & $b_3$\\
            \hline
            $u_k(b;1)$ & $\frac{1}{n}$ & $\frac{6}{7}$ & $\frac{55}{56}$ & $1$\\
        \end{tabular}\end{center}
        \endgroup 

        so, for $n \geq 2$, $k$'s best response is $\beta_k(1)=b_3$.
        Additionally, no other strategy achieves utility $u_k$ within
        $\frac{1}{56}$ of the optimal one in the table.

        \item If $\beta_{\ell}(1)=b_3$:

        \begingroup
        \renewcommand{\arraystretch}{1.2}
        \begin{center}\begin{tabular}{c|cccc}
            $b$ & $0$ & $b_1$ & $b_2$ & $b_3$\\
            \hline
            $u_k(b;1)$ & $\frac{1}{n}$ & $\frac{6}{7}$ & $\frac{5}{7}$ & $\frac{11}{14}$\\
        \end{tabular}\end{center}
        \endgroup 

        so, for $n \geq 2$, $k$'s best response is $\beta_k(1)=b_1$.
        Additionally, no other strategy achieves utility $u_k$ within
        $\frac{1}{14}$ of the optimal one in the table.

    \end{enumerate} 

    From the computation of the best responses for $k$ and $\ell$, we can see
    that in this case there are in fact two equilibria -- these are defined by
    the pairs of strategies where one $k,\ell$ plays $b_1$ at value $1$ and the
    other plays $b_3$ at value $1$. Hence, we have demonstrated that, if
    $\beta_i = s_1$, there is a PBNE of the DFPA.
\end{proof}

\paragraph{Choice of parameters.}
We conclude our proof of \cref{thm:NP-completeness} by showing how to pick the
values of the parameters $\delta$ and $\Delta$ of the DFPA. We begin by choosing
the values of all the $\delta$ factors to satisfy the above inequalities of the
premises of
\cref{lem:input-bidder,lem:not-bidder,lem:proj-bidder,lem:not-full,lem:or1,lem:or2,lem:output}.
Notice that we can solve these inequalities in the order we introduced them, as
every new $\delta$ was picked to be smaller than some scaled version of the
previous one (for example, $\delta_{\PROJ}$ should be less than a multiple of
$\delta_{\NOT}$, $\delta_{\OR_1}$ should be less than a multiple of
$\delta_{\PROJ}$ etc.). There is a tradeoff between the values we pick, as these
will affect the value of $\varepsilon$ we get for the hardness result of
computation of approximate equilibria; we want to try to make these $\delta$
factors as large as possible, while maintaining the aforementioned inequalities.
Notice also that the solution to these inequalities does not depend on the size
of the problem.

We now proceed to the choice of $\Delta$. Notice that every point of mass $x$ we
added to the joint distribution $F$ was defined to appear with probability equal
to something of the form $c_x \cdot \frac{1}{\Delta}$, where $c_x$ is some
constant (after fixing the $\delta$ factors). We will now define $\Delta$ as
follows:

\begin{equation*}
    \Delta = \sum_{x \in \support{F}} c_x
\end{equation*}

Crucially, this depends polynomially on the size of the SAT instance we are
reducing from. This means that the final $\varepsilon$ that we implicitly
compute here, such that for all $\varepsilon'<\varepsilon$ the problem of
deciding the existence of a $\varepsilon'$-PBNE is NP-hard, is of size
inverse-polynomial to the input. 

Moreover, notice that the numerical parameters of our instance (that is, the
values of the pmf) are products of two constants and $\frac{1}{\Delta}$, the
latter being the inverse of a sum of polynomially many constants. Given this,
our proof of~\cref{thm:NP-completeness} in this section actually implies a
\emph{strong} NP-hardness result, ruling out, thus, the existence of a
pseudopolynomial algorithm (unless P=NP); see, e.g., \cite[Sec.~4.2]{Garey1979a}

To conclude our proof, assume that the \sat instance has a satisfying assignment
$\vec{\alpha}: X \rightarrow \{0,1\}^n$, where $n$ is the number of variables.
We will show that there is an equilibrium of the DFPA that our reduction
constructs. Consider the profile in which each bidder $i$ corresponding to a
variable $X_i$ plays strategy $s_{\alpha(i)}$. By \cref{lem:input-bidder}, we
know that all the input bidders introduced for this variable will then
simultaneously best-respond to each other by playing $s_{\alpha(i)}$. Using
\cref{lem:not-full,lem:or1,lem:or2}, we get that the bidders corresponding to
each clause being evaluated will satisfy the properties of the boolean operators
as required. Moreover, these bidders will necessarily play strategy $s_1$, since
$\vec{\alpha}$ is a satisfying assignment and that the operators have been
correctly simulated. Finally, using \cref{lem:output}, we can ensure that the
output bidders will simultaneously play best-responses, since their
corresponding $\OR_2$ bidder plays $s_1$.

If, on the other hand, there is no satisfying assignment to the \sat instance,
we prove that there can be no equilibrium in the corresponding DFPA. To see
this, notice that there is no choice of strategies in $s_0,s_1$ for the input
bidders such that all $\OR_2$ bidders best respond with $s_1$ (as that would
imply a satisfying assignment), therefore there should exist at least $2$ output
bidders $k,\ell$ (corresponding to some $\OR_2$ bidder playing $s_0$) that
cannot simultaneously best respond to each other. Therefore, there is no PBNE in
the DFPA.
\qed

\section{Polynomial-Time Algorithms via Bid Sparsification}
\label{sec:sparsification}

In this section we present our first set of positive results that use a
technique which we call \emph{bid sparsification}. Our main results of the
section are polynomial-time algorithms for computing monotone
$\varepsilon$-approximate MBNE, for appropriate choices of the error parameter
$\varepsilon$.
The bid sparsification technique was essentially developed in previous work
\citep{chen2023complexity,fghk24} for the IPV and IID settings; here we work out
the details required for its expansion to the APV and SAPV settings. We develop
the proofs for the case of the monotone PBNE of the CFPA; by
\cref{lem:discrete-to-continuous-and-back-for-existence}, this immediately
implies the same type of results for the MBNE of the DFPA as well.

The term ``bid sparsification'' comes from the following lemma, which allows us
to work with a (much) smaller subset of the bidding space $B$, at the expense of
some error in the equilibrium approximation. The first version of such a lemma
was developed by \citet{chen2023complexity}. The version that we use here is a
straightforward adaptation of the version presented in \citep{fghk24}.
\begin{lemma}[Bidding Space Shrinkage Lemma]
    \label{lem:shrinkage}
    Consider a CFPA with APV and bidding space $B$ and let $M$ be a positive
    integer. We can construct a bidding space $B'\subseteq B$ with cardinality
    $\card{B'}\leq M$, in time polynomial in $M$ and the size of the input such
    that any $\varepsilon$-approximate PBNE of the auction restricted to the
    bidding space $B'$ is a $\left(\varepsilon+\frac{1}{M}\right)$-approximate
    PBNE in the original auction.
\end{lemma}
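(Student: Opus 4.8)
The goal is: given a CFPA with APV on bidding space $B$ and a target integer $M$, find $B'\subseteq B$ with $|B'|\le M$ such that any $\varepsilon$-approximate PBNE of the restricted auction is an $(\varepsilon+1/M)$-approximate PBNE of the original. The plan is to follow the approach of \citet{fghk24}, adapting it to the APV setting. The key observation is that the only place where correlation enters is inside the expected-utility formula $u_i(b,\vec\beta_{-i};v_i)=(v_i-b)\cdot H_i(b,\vec\beta_{-i};v_i)$, and the winning-probability function $H_i$ is monotone in $b$ regardless of how the values are correlated. So the core argument should go through verbatim once we are careful that everything is phrased in terms of $H_i$ rather than in terms of the (product) structure of the prior.

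The construction of $B'$: First I would reduce to the case $|B|>M$ (otherwise take $B'=B$). Then partition the bids of $B$ (sorted increasingly, starting from $0\in B$) into at most $M$ consecutive blocks, and from each block keep only the \emph{largest} bid, always retaining $0$. This gives $|B'|\le M$ and is clearly computable in time polynomial in $M$ and the input size. The intuition is that for a bidder, replacing a bid $b\in B$ by the smallest bid $b'\in B'$ with $b'\ge b$ changes the winning probability by at most the ``mass'' of bids skipped, which is controlled: since there are $\le M$ blocks and total probability mass of values is $1$, one can arrange (this is the content of the shrinkage argument) that the aggregate effect over any monotone opponent profile is at most $1/M$.

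The main step is the approximation argument. Suppose $\vec\beta$ is an $\varepsilon$-approximate PBNE of the $B'$-restricted auction. Fix bidder $i$ and value $v_i\in\support{F_i}$; I must show $u_i(\beta_i(v_i),\vec\beta_{-i};v_i)\ge u_i(b,\vec\beta_{-i};v_i)-\varepsilon-1/M$ for every $b\in B$. The opponents' strategies $\vec\beta_{-i}$ already only use bids in $B'$. For an arbitrary $b\in B$, let $b^+\in B'$ be the smallest element of $B'$ that is $\ge b$ (which exists since $\max B\in B'$). Since $H_i$ is non-decreasing in the bid and $b^+\ge b$, we have $H_i(b,\vec\beta_{-i};v_i)\le H_i(b^+,\vec\beta_{-i};v_i)$; moreover $(v_i-b)\ge(v_i-b^+)$. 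The delicate quantitative claim is that $H_i(b^+,\vec\beta_{-i};v_i)-H_i(b,\vec\beta_{-i};v_i)\le 1/M$ — this is where the block construction is used, together with the fact that $\vec\beta_{-i}$ uses only bids in $B'$, so the ``extra'' winning probability gained by raising $b$ to $b^+$ comes only from opponent bids lying strictly between $b$ and $b^+$, and by construction each block contributes in a way that sums to at most $1/M$. Combining, $u_i(b,\vec\beta_{-i};v_i)\le u_i(b^+,\vec\beta_{-i};v_i)+1/M$ (using $v_i-b\le 1$ to absorb the $H_i$-gap), and since $b^+\in B'$ and $\vec\beta$ is an $\varepsilon$-PBNE of the restricted auction, $u_i(b^+,\vec\beta_{-i};v_i)\le u_i(\beta_i(v_i),\vec\beta_{-i};v_i)+\varepsilon$. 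This yields the desired bound.

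The main obstacle is making the quantitative claim $H_i(b^+;v_i)-H_i(b;v_i)\le 1/M$ precise in the correlated setting: in the IPV/IID case one writes $H_i$ as a product over opponents and telescopes, whereas here $H_i$ is an integral of $\phi_i(b,\vec\beta_{-i}(\vec v_{-i}))$ against the conditional density $f_{i|v_i}$. The right way to handle this is to note that the difference $\phi_i(b^+,\cdot)-\phi_i(b,\cdot)$ is nonzero only on the event that the highest opponent bid lies in $(b,b^+]\cap B'$; summing over the at most $M$ blocks and using that the total conditional probability mass is $1$, one picks the block boundaries so that each contributes at most $1/M$ of the relevant mass — this is essentially the argument of \citep[Lemma 5.1]{fghk24}, which never actually uses independence, only monotonicity of $H_i$ in the bid and the fact that opponents bid in $B'$. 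I would therefore restate their argument carefully in the $H_i$-language and check that no step secretly relies on the product form of the prior. The affiliation hypothesis itself plays no role here beyond ensuring that the auction is a legitimate instance of the class we study; it is the monotone-in-bid property of $H_i$, which holds unconditionally, that drives the proof.
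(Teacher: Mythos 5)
Your high-level diagnosis --- that the prior only enters through the winning probabilities $H_i$, so the argument of \citet[Lemma 5.1]{fghk24} is insensitive to correlation --- is exactly the point of the paper's (sketched) proof. However, the way you then fill in the argument has a genuine error in the key quantitative step. You identify the crucial claim as $H_i(b^+,\vec\beta_{-i};v_i)-H_i(b,\vec\beta_{-i};v_i)\le 1/M$, where $b^+$ is the smallest retained bid above a deviation $b$. This claim is false in general and, more importantly, it is not the quantity the proof needs to control. Since the opponents bid only in $B'$ and $B'$ contains no bids in $[b,b^+)$ when $b\notin B'$, the events $\{\max_j \beta_j(v_j)<b\}$ and $\{\max_j \beta_j(v_j)<b^+\}$ coincide, and the gap $H_i(b^+)-H_i(b)$ is exactly the tie contribution at $b^+$; if the opponents place (conditional) mass close to $1$ on the bid $b^+$, this gap is a constant (about $1/2$ with one opponent), independent of $M$. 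Moreover, even granting such a bound, your ``combining'' step does not go through: writing $u_i(b)-u_i(b^+)=(v_i-b)\bigl[H_i(b)-H_i(b^+)\bigr]+(b^+-b)H_i(b^+)$, the first term is $\le 0$ for free (monotonicity of $H_i$), and the whole difference is controlled by the \emph{bid increment} $(b^+-b)H_i(b^+)\le b^+-b$, not by the $H_i$-gap. A large $H_i$-gap only makes $b^+$ a \emph{better} deviation, which is harmless because $b^+\in B'$ is already covered by the $\varepsilon$-equilibrium condition.

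This also exposes a problem with your construction: partitioning the sorted bids of $B$ into $M$ blocks of consecutive bids and keeping the largest bid per block gives no control on $b^+-b$, since a block may span a bid range of length close to $1$. The construction that makes the correct accounting work (and the one underlying \citealp{chen2023complexity} and \citealp{fghk24}) keeps, for each $j\in\{0,\dots,M-1\}$, the largest bid of $B$ in the value interval $[j/M,(j+1)/M)$ (together with $0$), which guarantees that every $b\in B$ has some $b^+\in B'$ with $b\le b^+$ and $b^+-b\le 1/M$; then $u_i(b,\vec\beta_{-i};v_i)\le u_i(b^+,\vec\beta_{-i};v_i)+1/M\le u_i(\beta_i(v_i),\vec\beta_{-i};v_i)+\varepsilon+1/M$, with no assumption on the prior beyond what is used to define $H_i$. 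So the route is the right one, but you must replace both the block construction and the $H_i$-gap claim by the bid-rounding argument above.
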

\begin{proof}[Proof sketch]
    The result follows from the proof of \cite[Lemma 5.1]{fghk24}, by noticing that the priors are internal to the computation of the utilities, which are bounded by the differences in the $H$ functions (the winning probabilities) that we bound trivially by $1$ in our setting too.
    Additionally, it is safe to replace the MBNE condition in the original proof by the PBNE condition in the CFPA setting.
    Finally, notice that, starting from a symmetric equilibrium in the auction with the smaller bidding space, the approximate equilibrium that we retrieve in the original auction is also symmetric by construction.
\end{proof}

On the smaller bidding space $B'$, we can then formulate the equilibrium
computation problem as a system of polynomial inequalities, which can be solved
via an algorithm of \citet{grigor1988solving} within $\delta$ precision in time
polynomial in the number of polynomials and their degrees, polynomial in
$\log(1/\delta)$, and exponential in the number of variables; in our
formulation, both $n$ and $|B'|$ appear in the exponent. This is where the
Shrinkage lemma above is employed, as we can choose $B'$ to be exponentially
smaller than $B$. For the dependence on $n$, we can either fix the number of
players $n$ (as in \citep{fghlp2021_sicomp}, or we can use the symmetry
condition (as in \citep{fghk24}) to efficiently enumerate over the different
possible supports of the distribution in an appropriate representation. The main
results of the section are captured by the following theorem:

\begin{theorem}\label{thm:bid-sparsification-CFPA-main-text} For any fixed
$\varepsilon>0$, a symmetric monotone $\varepsilon$-approximate PBNE of the CFPA can be
computed in time polynomial in the description of the auction, when the values are $k$-GSAPV for a fixed~$k$. 
In particular, there is a PTAS for computing
\begin{enumerate}[label=(\alph*)]
\item under APV and a fixed number of bidders: a monotone approximate PBNE, and 
\item under SAPV (i.e., $1$-GSAPV): a symmetric monotone approximate PBNE. 
\end{enumerate}
Similarly, these results also hold for computing approximate MBNE in the DFPA.
\end{theorem}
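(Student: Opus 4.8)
The plan is to establish the theorem for the monotone PBNE of the CFPA; the statement about the MBNE of the DFPA then follows by applying \cref{lem:discrete-to-continuous-and-back-for-existence} with $\delta \coloneqq \varepsilon/2$, computing an $(\varepsilon/2)$-approximate monotone PBNE of the resulting CFPA instance, and mapping it back to an $\varepsilon$-approximate monotone MBNE of the DFPA (the reduction preserving APV, SAPV, $k$-GSAPV and symmetry, and the number of bidders). So fix a CFPA with $k$-GSAPV values for a fixed $k$; cases (a) and (b) are the specializations $k=n$ with $n$ fixed (where we drop the symmetry requirement and allow all $n$ strategies to differ) and $k=1$. First I would invoke the Bidding Space Shrinkage Lemma (\cref{lem:shrinkage}) with $M \coloneqq \lceil 4/\varepsilon \rceil$ to replace $B$ by a subset $B' \subseteq B$ with $|B'| \le M$; it then suffices to compute an $(\varepsilon/2)$-approximate monotone (and, in the $k$-GSAPV case, group-symmetric) PBNE of the auction restricted to $B'$, since \cref{lem:shrinkage} upgrades it to an $(\varepsilon/2 + 1/M) \le \varepsilon$-approximate PBNE of the original auction and preserves symmetry. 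Note that $|B'|$ is now a constant depending only on $\varepsilon$.

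Next I would encode the equilibrium-computation problem on $B'$ as a collection of systems of polynomial inequalities over the jump-point representation. A group-symmetric monotone non-overbidding profile is given by $k$ vectors of jump points, one per group, each lying in the polytope $D$ (defined in \cref{sec:existence}, relative to $B'$); this is $m \coloneqq k(|B'|+1)$ real variables, a constant for fixed $k$ and $\varepsilon$ (respectively $m = n(|B'|+1)$ in case (a), still constant since $n$ is fixed, and $m = |B'|+1$ in case (b)). The membership constraints for $D^k$ are linear. For the equilibrium constraints, the key structural observation — generalizing the IPV/IID analysis of \citep{fghlp2021_sicomp,fghk24} to (group-)symmetric affiliated values via the conditional densities $f_{i|v_i}$ and the computations underlying \cref{lem:efficient-computation} — is that, because the density is piecewise constant, for any fixed group-symmetric profile the conditional winning probability $v_i \mapsto H_i(b,\vec{\beta}_{-i};v_i)$ is piecewise constant in $v_i$, its breakpoints lying among the (fixed) hyperrectangle coordinates of the density, and on each such slab it equals, up to the positive constant $f_i(v_i)$, a polynomial in the jump-point variables of degree at most $n-1$, with polynomially many monomials and polynomially bounded coefficient bit-size. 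Hence $u_i(b,\vec{\beta}_{-i};v_i) = (v_i - b)\cdot H_i(b,\vec{\beta}_{-i};v_i)$ is, on $\support{F_i}$, piecewise linear in $v_i$, so the universally quantified equilibrium condition ``$u_i(\beta_i(v_i);v_i) \ge u_i(b';v_i) - \varepsilon/2$ for all $v_i \in \support{F_i}$ and all $b' \in B'$'' is equivalent to the conjunction, over the polynomially many intervals between consecutive jump points intersected with density slabs and over their endpoints, of polynomial inequalities in the $m$ jump-point variables of degree at most $n$. One further point: the polynomial form of $u_i$ near a jump point depends on how that jump point sits relative to the (fixed) density breakpoints, so I would first enumerate over the cells of the hyperplane arrangement that the equalities ``jump point $=$ density breakpoint'' and ``jump point $=$ jump point'' induce on the $m$-dimensional cube; since $m$ is constant this arrangement has only polynomially many cells, and within each cell every $u_i(b;v)$ of interest is a single such polynomial.

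Then I would solve each of the polynomially many resulting systems with the algorithm of \citet{grigor1988solving}, which finds a point within any prescribed precision $\eta$ of the solution set in time polynomial in the number of polynomials, their degrees, $\log(1/\eta)$, and the coefficient bit-size, and exponential only in the number of variables $m$; since $m = O(k/\varepsilon) = O(1)$ for fixed $k$ and $\varepsilon$, this runs in polynomial time. By \cref{thm:MNBE-monotone-existence-DFPA-APV}, the $B'$-auction (which still has APV resp.\ $k$-GSAPV and a piecewise-constant density) admits an exact monotone (group-symmetric) PBNE, so at least one cell's system is feasible and we obtain a point $\hat{\vec{x}}$ within $\eta$ of a genuine equilibrium point $\vec{x}^* \in D^k$. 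A final rounding step, following \citep{fghlp2021_sicomp}, turns $\hat{\vec{x}}$ into a bona fide point of $D^k$ (replacing each group's jump sequence by the nearest non-decreasing sequence that dominates the bids), moving each coordinate by $O(\eta)$; since this perturbs every bidder's interim strategy only on a set of values of measure $O(\eta)$ and the interim utilities are Lipschitz in the jump points on the bounded domain $D^k$, the rounded profile is an $O(\eta)$-approximate monotone (group-symmetric) PBNE of the $B'$-auction, and taking $\eta$ a sufficiently small, polynomially computable constant makes this at most $\varepsilon/2$. Chaining with \cref{lem:shrinkage} (and, for the DFPA, with \cref{lem:discrete-to-continuous-and-back-for-existence}) completes the proof.

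I expect the second step to be the main obstacle. The routine parts — the shrinkage step, the invocation of \citet{grigor1988solving}, and the error bookkeeping — transfer essentially verbatim from \citep{fghlp2021_sicomp,fghk24}. What genuinely needs care is (i) establishing that the conditional winning probabilities $H_i(b;\cdot)$ remain piecewise constant in $v_i$ and polynomial (of controlled degree, monomial count and bit-size) in the jump-point variables once we pass from product priors to (group-)symmetric affiliated priors, which requires handling the conditional densities $f_{i|v_i}$ and the multidimensional integration over $\vec{v}_{-i}$ under the succinct symmetric representation; (ii) discharging the ``for all $v_i$'' quantifier into polynomially many polynomial inequalities using this piecewise structure; and (iii) checking that enumerating cells of the jump-point/density-breakpoint arrangement keeps the number of systems polynomial, which relies on $m$ being bounded. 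The rounding step is delicate but is precisely the argument already carried out in \citep{fghlp2021_sicomp}.
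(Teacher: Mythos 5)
Your proposal is correct and follows essentially the same route as the paper: the shrinkage lemma to make $|B'|$ a constant, enumeration of how the (constantly many) jump points sit relative to the density breakpoints, a system of polynomial inequalities in the jump-point variables solved via \citet{grigor1988solving} (exponential only in the constant number of variables), a rounding step as in \citep{fghlp2021_sicomp}, and the transfer to the DFPA via \cref{lem:discrete-to-continuous-and-back-for-existence}. The one point you defer — that under the succinct $k$-GSAPV representation the winning probabilities are polynomials in the jump points with polynomially many monomials and degree polynomial in $n$ — is exactly the part the paper makes explicit, via the group-symmetric expansion of $T_n$ with multinomial coefficients over group representatives, whose number of summands is exponential only in $k$ and whose degree is at most $kn$.
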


Before presenting the complete proof of
\cref{thm:bid-sparsification-CFPA-main-text}, we will provide a high-level
outline of the technique.
We employ the concept of $k$-GSAPV to prove a general version of the result, which elegantly unifies the proofs for (a) and (b) in the statement of the theorem.
We begin by establishing an efficient algorithm in the case where the number of
bidders is constant and then we proceed by demonstrating an efficient algorithm
in the SAPV setting, taking advantage of the symmetry of the bidders. We then
use our result from \cref{lem:discrete-to-continuous-and-back-for-existence} to
transfer our results to the DFPA setting under the same restrictions. In both
cases, our technique is inspired by the efficient algorithm in \cite[Section
6]{fghlp2021_sicomp}, combined with an adapted version of the Shrinkage Lemma
from \cite[Lemma 5.1]{fghk24}, which we state in \cref{lem:shrinkage}.

The high level idea of our proof follows the structure below:

\begin{enumerate}
    \item Use \cref{lem:shrinkage} to show that it suffices to search for an
    approximate equilibrium in a corresponding auction with a smaller bidding
    space.
    \item Guess (for each bidder) the jump points of her equilibrium strategy.
    \item Formulate the problem of finding the exact positions of the jump
    points as a system of polynomial inequalities of polynomially-large degree,
    to which we can compute a $\delta$-approximate solution using standard
    methods in time polynomial in $\log 1/\delta$ and the size of the input using a result from \cite{grigor1988solving}.
    \item Project the approximate solution computed in the previous step back to the space of feasible strategies, bounding the approximation that we get on the equilibrium condition.
\end{enumerate}

\subsection{Proof of~\texorpdfstring{\cref{thm:bid-sparsification-CFPA-main-text}}{Theorem~\ref*{thm:bid-sparsification-CFPA-main-text}}}
\begin{proof}
    We will follow closely the proof of \cite[Section 6]{fghlp2021_sicomp}, outlining the parts that need to be carefully adapted for the proof to work in our setting.
    In the subjective prior setting, the distributions were assumed to be piecewise polynomial over defined sub-intervals.
    Let $\mathcal{R}=\{(\vec{R^1},w_1),\ldots,(\vec{R^\ell},w_\ell)\}$ be the representation of the joint prior $F$.
    Given this representation, we can efficiently compute the support of the marginal distribution and find the intervals in which the marginal is constant, yielding a succinctly representable marginal distribution for each bidder, consisting of $\poly(\ell)$ intervals.

    Following \cite{fghlp2021_sicomp}, we carry out the same procedure of guessing, for each bidder, the assignment of jump points to the sub-intervals.
    This requires enumerating over the possible ways of assigning the $k \cdot (\card{B}-1)$ jump points (representing the strategy corresponding to each group) to the $\poly(\ell)$ intervals, which can be done in time $O(\poly (\ell)^{k\card{B}})$.
    Notice that this is exponential in $\card{B}$; however, utilizing the \cref{lem:shrinkage}, we will only run the enumeration step in the auction with the reduced bidding space and then transfer the approximate equilibrium to the original auction. 
    The reasoning for handling potential collisions of sequential jump points also follows directly from \cite{fghlp2021_sicomp}.
    
    We can then proceed to writing the system of polynomial inequalities that express the equilibrium conditions.
    The only difference here is in the expression of the utility functions, and consequently of the winning probabilities $H$.
    In this case, we will use the definition of $H$ from \cref{eq:cfpa-utility-h-via-repr} to show how to efficiently express the inequalities we will add to the system.
    We can write $T_n(b,n-1,r,\vec{v_{-n}})$ as follows:
    \begin{equation}
        T_n(b,n-1,r,\vec{v_{-n}}) = \sum_{\substack{(r_1,r_2,\ldots,r_k)\in \{0,1,\ldots,r\}^k, \\ r_1+r_2+\ldots+r_k=r}} \prod_{j \in [k]}\binom{n_j'}{r_j} g_{\rho(j),b}^{r_j} \cdot G_{\rho(j),b}^{n_j'-r_j}
    \end{equation}
    where $n_j'=n_j$ for all groups other than the one $n$ belongs in, and $n_j'=n_j-1$ for $n$'s group (recall that $n_k$ indicates the number of bidders in group $k$), $\rho(j)$ is a representative bidder from group $j$ (which can be computed to be the bidder with index $\rho(j)=1+\sum_{k'=1}^{j-1}n_{k'}$) and $g_{\rho(j),b}$ and $G_{\rho(j),b}$ come from the definitions in~\eqref{eq:g-utility-cfpa} and~\eqref{eq:G-utility-cfpa} respectively.
    We can now see that the number of summands is only exponential in $k$, but polynomial in the size of the rest of the input.
    Therefore, we can express the equilibrium as a system of polynomial inequalities of degree at most $kn$.
    For constant $k$, the degree of the polynomials is at most polynomial in $n$, which means that we can invoke the theory from \cite{grigor1988solving} in order to achieve, for any $\delta \in (0,1]$ of our choice, a $\delta$-approximate solution in time polynomial in $\log 1/\delta$ and the size of the input.
    Additionally, when the number of bidders is constant, we can write an efficient system of polynomial inequalities even when no bidders are in the same group (i.e., $n=k$). 
    For the final step of the proof, we need to round the approximate solution to the system of inequalities back to a feasible equilibrium strategy.
    The rounding process is the same, and correctness follows from the proof of \cite[Theorem 6.1]{fghlp2021_sicomp}.
    This concludes the proof for the computation of PBNE in the CFPA.
    By direct application of~\cref{lem:discrete-to-continuous-and-back-for-existence}, we get the corresponding results about MBNE in the DFPA from the statement of the theorem.
\end{proof}

\section{Polynomial-Time Algorithms via Bid Densification}
\label{sec:densification}

In this section, we present our positive results that use a technique which we
call \emph{bid densification}. While bid sparsification, the technique used in
the previous section, was based on previous work, the bid densification approach
is introduced in our work for the first time. The idea here is the opposite:
starting from a CFPA with discrete bidding space $B$, we consider a variant with
the same joint value distribution and a continuous bidding space (without loss
of generality the interval $[0,1]$); we refer to this variant as the
\emph{continuous} CFPA (CCFPA).\label{page:CCFPA_result_overview_5} We then
invoke a closed form expression that has been developed for the CCFPA in the
economics literature; concretely for the case of SAPV, we use the equilibrium
strategy $\beta$ as described by \citet{MW82} (see \eqref{eq:MW-CCFPA-PBNE-beta}
in \cref{sec:densification}). The idea is to invert this bidding function on the
set of bids in $B$, to obtain the jump-points of a monotone strategy
$\hat{\beta}$ for the CFPA. Note that we can only do that approximately, as the
description of $\beta$ contains integrals and algebraic expressions. The most
crucial step is to show that $\hat{\beta}$ is an approximate equilibrium of the
CFPA, for an appropriate approximation parameter. This is only possible under
necessary assumptions on the \emph{density of the bidding space} (i.e., the
maximum distance between any two consecutive bids), and bounds on the density of
the joint distribution. We first state two necessary definitions.

\begin{definition}[Bounded Priors]
    \label{def:bounded-distributions}
    Let $\priorlowerbound,\priorupperbound$ be positive reals. An $n$-bidder
    CFPA with SAPV will be called
    \emph{$(\priorlowerbound,\priorupperbound)$-bounded}, if the density $f$ of
    its joint prior distribution satisfies $\priorlowerbound\leq  f(\vec{x})
    \leq \priorupperbound$ for all $\vec{x}\in[0,1]^n$. For the special IID
    case, where $F_1$ denotes the marginal prior distribution (of each player),
    we will call the auction \emph{$\priorupperbound$-bounded}, if $f_1(x)\leq
    \priorupperbound$ for all $x\in\support{F_1}$, where $f_1$ is the density of
    $F_1$.
\end{definition} 
Two immediate observations are in order,
regarding~\cref{def:bounded-distributions}. First, note that
$(\priorlowerbound,\priorupperbound)$-boundedness implies strictly positive
density, i.e., the joint prior distribution $F$ having \emph{full support}:
$\support{F}=[0,1]^n$. Secondly, although IID, as an auction model, is a special
case of SAPV, our definition of boundedness for the IID case is weaker, in the
sense that it does not demand \emph{any} lower bound on the prior's density; in
particular, $\priorupperbound$-bounded IID priors may not have full support.
Furthermore, the upper bounds (both denoted by parameter $\priorupperbound$ in
our statement of~\cref{def:bounded-distributions}) of the two boundedness
notions do not readily translate between each other, since one applies to the
\emph{joint}, $n$-dimensional density (SAPV) and the other to the
single-dimensional bidder marginals (IID).

\begin{definition}
    \label{def:denseness-bidding-space}
    Let $\delta>0$. The bidding space $B$ of a DFPA/CFPA will be called
    \emph{$\delta$-dense} if, for all $x\in[0,1]$ there exists some $b\in B$
    such that $\card{b-x} \leq \delta$. 
\end{definition}

We now formally state the main theorem that we obtain with this technique. 

\begin{theorem}
    \label{th:approximate-PBNE-CFPA-to-CCFPA}
    Consider an $n$-bidder CFPA with
    $(\priorlowerbound,\priorupperbound)$-bounded SAPV and a $\delta$-dense
    bidding space. For any $\varepsilon>0$, a
    $2\gamma(\delta+2\varepsilon)$-approximate, monotone and symmetric, PBNE of
    the auction can be found in time polynomial in its description and
    $\log(1/\varepsilon)$, where
    $\gamma=2(n-1)(\priorupperbound/\priorlowerbound)^2$. For IID settings, the
    approximation parameter can be improved to $\gamma=n\priorupperbound$
    (without assuming any lower bound on the density, or full support).
\end{theorem}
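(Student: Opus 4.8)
The strategy is exactly the bid densification approach sketched in \cref{sec:intro-results-positive}: pull the Milgrom--Weber closed-form equilibrium of the continuous-bid auction back to the discrete-bid auction by approximate inversion. Concretely, I would (1) take the exact symmetric monotone PBNE $\beta$ of the CCFPA with the same SAPV prior, given in closed form by \eqref{eq:MW-CCFPA-PBNE-beta}; (2) define the candidate CFPA strategy $\tilde\beta$ as the monotone step function whose jump points are $\varepsilon$-accurate approximations of $\beta^{-1}(b)$ for $b \in B$; and (3) show that $\tilde\beta$ is a $2\gamma(\delta+2\varepsilon)$-approximate, monotone, symmetric PBNE of the CFPA, where $\gamma$ is as in the statement. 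Throughout, I would use that $\beta$ is non-overbidding ($\beta(v)\le v$) and strictly increasing, so that $\beta^{-1}$ is well-defined and $\tilde\beta$ is automatically monotone and non-overbidding; symmetry is inherited since all bidders use the same $\tilde\beta$, and the representation of $\tilde\beta$ is just the list of $|B|$ jump points.

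\textbf{Analytic control of $\beta$ and its inverse.} The first real task is to extract, from the nested-integral Milgrom--Weber formula \eqref{eq:MW-CCFPA-PBNE-beta}, two-sided bounds on $\beta'$ in terms of $n$ and $\priorupperbound/\priorlowerbound$ under $(\priorlowerbound,\priorupperbound)$-boundedness (see \cref{def:bounded-distributions}). The lower bound on $\beta'$ is the crucial one: it yields an upper bound of order $\priorupperbound/\priorlowerbound$ on $(\beta^{-1})'$, which is the conversion factor that turns a perturbation of size $2\delta$ in \emph{bid} space into a perturbation of size $O(\delta\cdot\priorupperbound/\priorlowerbound)$ in \emph{value} space. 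Together with monotonicity and Lipschitzness, these bounds also justify that $\beta^{-1}(b)$ can be computed to accuracy $\varepsilon$ in time $\poly(\log(1/\varepsilon))$ by bisection/Newton on the integral formula; this is the content invoked from \cref{lemma:inverter-poly-time-oracle-beta-MW}, and it makes the whole algorithm run in time polynomial in the input size and $\log(1/\varepsilon)$ (inverting at $|B|$ points, each cheaply). For the IID variant, I would instead use the classical formula $\beta(v)=v-\int_{\underline v}^{v}F_1(t)^{n-1}\,dt / F_1(v)^{n-1}$; the point is that here the winning probability $F_1(\beta^{-1}(b))^{n-1}$ is independent of the bidder's own value, and the product $f_1\cdot(\beta^{-1})'$ stays bounded even where $f_1=0$ (the factor $f_1$ in the numerator is exactly cancelled by the vanishing of $\beta'$), so no lower density bound is needed and one loses only a single factor of $\priorupperbound$, giving $\gamma=n\priorupperbound$.

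\textbf{The approximation lemma (the main obstacle).} The heart of the argument is showing that $\tilde\beta$ is a $2\gamma(\delta+2\varepsilon)$-approximate PBNE. Fix a bidder $i$, a value $v\in\support{F_i}$, and an alternative bid $b\in B$. I would compare two pairs of utilities: $u_i(b,\tilde\beta_{-i};v)$ versus $u_i^{c}(b,\beta_{-i};v)$, and $u_i(\tilde\beta(v),\tilde\beta_{-i};v)$ versus $u_i^{c}(\beta(v),\beta_{-i};v)$ (superscript $c$ denoting the CCFPA). Both comparisons reduce, via $u=(v-\text{bid})\cdot H$, to bounding how much the winning/allocation probability changes when every opponent switches from $\beta$ to $\tilde\beta$ (and, in the second pair, when bidder $i$'s own bid moves from $\beta(v)$ to $\tilde\beta(v)$, which shifts the factor $v-\text{bid}$ by at most $2\delta$). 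For a fixed threshold bid $b'\in B$, the event ``opponent $j$ bids at most $b'$'' agrees under $\beta$ and under $\tilde\beta$ except when opponent $j$'s value lies in an interval whose length is at most the displacement of the relevant jump point, namely $O(\delta\cdot\priorupperbound/\priorlowerbound + \varepsilon)$ (the first term from the gap between consecutive bids composed with the inverse-slope bound, the second from the $\varepsilon$-accuracy of the inversion oracle). Integrating the conditional marginal density of that opponent's value given $v_i=v$ — which is at most $\priorupperbound/\priorlowerbound$ by $(\priorlowerbound,\priorupperbound)$-boundedness — and union-bounding over the $n-1$ opponents bounds each utility comparison by $\gamma(\delta+2\varepsilon)$. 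Chaining through the exact-equilibrium inequality $u_i^{c}(b,\beta_{-i};v)\le u_i^{c}(\beta(v),\beta_{-i};v)$ then gives $u_i(b,\tilde\beta_{-i};v)-u_i(\tilde\beta(v),\tilde\beta_{-i};v)\le 2\gamma(\delta+2\varepsilon)$, as desired. I expect the two genuinely delicate points to be: (i) the quantitative slope bounds on $\beta$ coming out of the double integral in \eqref{eq:MW-CCFPA-PBNE-beta}; and (ii) ties — under uniform tie-breaking in the CFPA, opponents tie with a bid $b\in B$ on a \emph{positive-measure} set of value profiles (unlike the measure-zero ties in the CCFPA), so the $H$-comparison must be done at the level of the allocation probability, folding tied opponents into the same ``bad interval'' bookkeeping rather than treating them as a negligible event.
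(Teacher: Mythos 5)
Your proposal follows essentially the same route as the paper: approximate the Milgrom--Weber equilibrium $\beta$ of the CCFPA by a monotone step function on $B$ via efficiently computed (one-sided) approximate inversion, and then show this underapproximation is a $2\gamma(\delta+2\varepsilon)$-approximate PBNE by bounding how much the winning probability can shift and chaining through the exact CCFPA equilibrium inequality, with the same sources for $\gamma$ (a lower bound on $\beta'$ of order $\priorlowerbound/\priorupperbound$ combined with an upper bound on the conditional value density, and the $f_1$-cancellation in the IID case). The only cosmetic difference is that you union-bound the displacement over the $n-1$ opponents individually, whereas the paper bounds the density of the bid of the maximum opponent value $\beta(Y_1)$ directly; both yield the same $\gamma$, and your flagged delicate points (the quantitative slope bounds and positive-probability ties under discrete bids) are exactly the ones the paper resolves.
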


The remainder of this section is devoted to proving
\cref{th:approximate-PBNE-CFPA-to-CCFPA}. Let us begin with providing some
useful interpretation of the parameters of the statement. Firstly, we consider
the dependence on the boundedness (see~\cref{def:bounded-distributions})
``magnitude'' $\phi=\priorupperbound/\priorlowerbound$ to be rather benign; in
particular, when $\priorlowerbound$ and $\priorupperbound$ are constants, then
$\phi$ is a constant that can easily be absorbed in the $\delta$ parameter. For
example, if the distribution is uniform, then $\phi=1$. The presence of $(n-1)$
in the approximation error at first seems problematic. Observe however that in
almost all conceivable applications of first-price auctions, the number of
allowable bids would be \emph{much} larger than the number of bidders. For
example, one can envision of a bidding space that contains all multiples of 5
cents; in this case, $\delta$ will be much smaller than $n$. Even if $\delta =
1/n^2$, the bound in \cref{th:approximate-PBNE-CFPA-to-CCFPA} results in error
which is $1/\poly(n)$. Therefore in some cases, the algorithm that we obtain via
bid densification is superior to the one than the one that uses bid
sparsification, noting that the latter requires exponential time to achieve
error which is inversely polynomial in $n$.

\subsection{CCFPA: Continuous Bidding Space}
\label{sec:CCFPA-appendix-technical}
Therefore, the key object of study in this \cref{sec:densification} will be
$n$-bidder first-price auctions with continuous value priors \emph{and}
continuous bidding space, namely \emph{continuous} CFPA (CCFPA) --- see also our
previous discussion in pp.~\pageref{page:CCFPA_intro_1}
and~\pageref{page:CCFPA_result_overview_5}. This is a straightforward extension
of our standard CFPA model (\cref{sec:CFPA-model}), where we allow players to
bid over the entire unit interval; that is, bidding strategies are
functions\footnote{For equilibrium analysis purposes, which is our focus in this
section, these functions are allowed to be partial, since they only need to be
defined within the support of the bidders' marginals;
see~\cref{def:approx-mixed-bayes-nash-equilibrium}.} $\beta:[0,1]\map [0,1]$. It
is useful to also view this from the oppositive perspective: strategies in CFPA
with a discrete bidding space $B\subseteq [0,1]$ are still ``legitimate'' CCFPA
strategies, that simply happen to have a discrete/restricted range
$\beta([0,1])$. 
Furthermore, since~\cref{th:approximate-PBNE-CFPA-to-CCFPA} considers symmetric
priors, in this section we work under the SAPV assumption
(see p.~\pageref{page:SAPV-def}.) 

Following our standard notation (see~\cref{sec:CFPA-model}), let $F$ and $f$
denote the cdf and pdf, respectively, of the (absolutely) continuous joint
distribution of bidder values. Throughout this section, we will use
$(X_1,X_2,\dots,X_n)$ to denote a random vector of values from this
distribution. For $i\in[n]$, we use $F_i$ to denote the (cdf of the)
\emph{marginal} distribution of $X_i$; its support will be denoted by
$V_i\coloneqq \support{F_i}$, and its density (pdf) by $f_1$. Note that, due to
symmetry, all marginals are identical and, therefore, for simplicity we will be
usually making our arguments from the perspective of player $i=1$. We will also
use $\underline{v}:=\inf V_1$ to denote the leftmost point of the marginals'
support.

For a value $v\in V_1$, we
use $G_v$ to denote the distribution of the \emph{maximum order statistic} of
all other bidders' values, conditioned on $X_1=v$. That is, if we define the
random variable 
\begin{equation*}
Y_1 \coloneqq \max\nolimits_{i=2,3,\dots,n} X_i,
\end{equation*}
we have that
\begin{equation*}
    G_v(y) \coloneqq \prob{Y_1\leq y\fwh{X_1=v}}
\end{equation*}
for all $y\in[0,1]$.
We also let $g_v$ denote the density function of $G_v$.
Notice here that for IID values, the prior is a product distribution, and
therefore in that case we have that $G_v(y)=G(y)\coloneqq F_1^{n-1}(y)$ for all
$v\in V_1$ and $y\in[0,1]$, where $G$ denotes the (cdf of the) maximum order
statistic of $(n-1)$-many iid draws from the marginal distribution $F_1$. Thus,
the corresponding density function can also be elegantly expressed as
$g_v(y)=g(y):= (n-1) F_1^{n-2}(y) f_1(y)$.\label{page:iid-max-order-statistics}

\subsubsection{SAPV vs IID}
\label{sec:SAPVvsIID} 
As it is standard in the literature of the CCFPA setting,\footnote{See, e.g.,
\cite[p.~59]{Menezes2005}, \cite[Sec.~6.4]{krishna2009auction} and
\cite[Sec.~5.4.3]{Milgrom_2004}.} in order to avoid pathological behaviour (see,
e.g., \citep[Footnote~21]{MW82}), throughout this entire
\cref{sec:densification} we will also assume that our SAPV priors have full
support, i.e., $f(\vec{x})>0$ for all $\vec{x}\in (0,1)^n$, without explicitly
mentioning it every time. Note that this is without loss of generality, since
our main end result (\cref{th:approximate-PBNE-CFPA-to-CCFPA}) is stated under a
$(\priorlowerbound,\priorupperbound)$-boundedness assumption, which is stronger
(see also the discussion following~\cref{def:bounded-distributions}).
However, we will \emph{not} make such full-support assumptions whenever studying
IID priors, as it is not required at a technical level; this is to achieve
maximum applicability of our results and full compatibility with existing work.
In that sense, our IID model is not merely a restriction of the SAPV setting,
since it allows for a wider class of bidder marginals $F_1$ (although,
obviously, the resulting joint density $F$ needs to be a \emph{product}
obviously, the resulting joint density $F$ needs to be a \emph{product}
distribution, under IID). 

Another, perhaps even more critical difference, is complexity-theoretic. Recall
(see~\cref{sec:representation}) that the two models naturally induce different
input representations: in the SAPV case, we explicitly describe the (piecewise
constant) joint density $f$, while in the IID model the (piecewise constant)
marginal density $f_1$ needs to be provided instead. Notice that, although
mathematically one representation can be fully derived by the other, this would
induce, in general, an exponential (on the number of bidders $n$) blow-up in the
description size when translating the IID setting to the SAPV formalism.

The above points highlight why we cannot simply handle the IID case an immediate
special case of the SAPV one, directly instantiating the results of the latter
to derive results for the former. Therefore, throughout this
\cref{sec:densification} we will take care to treat the two models separately,
when needed. This necessitates, for most of our results, slightly different
result statements for the two models, as well as, many times, notably different
proof approaches, at a theoretical level.

\subsection{The Canonical Equilibrium of Symmetric CCFPA}
\label{sec:canonical-equilibrium-MW}

From the theory developed by~\cite{MW82}\footnote{For a more accessible,
textbook-style presentation of the notions in this section, we point the reader
to~\cite[Sec.~5.3]{Menezes2005} or~\cite[Sec.~6.4]{krishna2009auction}: their
presentation is further simplified by hard-wiring the full-support assumption in
their exposition. Although, as discussed above (see~\cref{sec:SAPVvsIID}), we
will also eventually apply such an assumption for our main results, we have
still decided to keep our exposition in this paper as general as possible,
staying closer to the spirit of original work of~\cite{MW82}, and even taking
additional care with handling and clarifying various technical subtleties, as
this allows us to handle collectively, up to some extent, together the SAPV and
the IID case (for which we will \emph{not} apply such a full-support assumption
in the end) in a more elegant way. Furthermore, we believe that this provides
maximum transparency for the reader and for follow-up work. The reader
interested directly in the full-support SAPV case only, and perhaps being
overwhelmed by the generality of the presentation here, can safely take
$V_1=\support{F_1}=[0,1]$ and $\underline{v}=0$ in the following.} we know that
the following bidding function $\beta$ (when adopted by all players) constitutes
a symmetric, nondecreasing (and no-overbidding) PBNE of our CCFPA setting:
\begin{align}
    \beta(v) &\coloneqq v-\int_{\underline{v}}^v L_v(y) \,\mathrm{d} y &&\text{for all}\;\; v \geq \underline{v} \label{eq:MW-CCFPA-PBNE-beta}\\
\intertext{with}
    L_v(y) &\coloneqq \exp\left(-\int_{y}^v\frac{g_t(t)}{G_t(t)}\, \mathrm{d}t\right) &&\text{for all}\;\; y\in[\underline{v},v]\label{eq:MW-CCFPA-PBNE-L}.
\end{align}
For the above quantities to be well-defined, we follow the standard convention
(see~\cite[Footnotes~22 and~23]{MW82}) of $g_t(t)/G_t(t) \coloneqq 0$ for all
$t\notin V_1$. 
For IID settings, one can show\footnote{See \citep[Sec.~6.43,
p.~96]{krishna2009auction}.} that $L_v(y)=\frac{G(y)}{G(v)}$ for all
$\underline{v}\leq y \leq v$, and thus, the equilibrium bidding strategy $\beta$
from above, can be more succinctly expressed as
\begin{equation}
    \label{eq:MW-CCFPA-PBNE-beta-IID}
    \beta(v) \coloneqq v-\int_{\underline{v}}^v \frac{G(y)}{G(v)} \,\mathrm{d} y \qquad\qquad\text{for all}\;\; v \geq \underline{v}.
\end{equation}

We will refer to the bidding strategy $\beta$ defined above, as the
\emph{canonical equilibrium strategy}. It is not hard to see\footnote{This is a
direct consequence of the fact that function $G_t(t)$ is absolutely continuous,
with respect to $t$, due to the fact that the underlying value prior
distribution $F$ is absolutely continuous, and that function $g_t(t)$ is
(Lebesgue) integrable. For more background in such concepts, the interested
reader is referred to any classical textbook in Real Analysis. For example, for
absolute continuity, see~\cite[Sec.~6.4]{RoydenFitzpatrick2010}.} that, for both
the (fully-supported) SAPV and IID settings, these canonical equilibrium
strategies are \emph{absolutely}
continuous\label{page:beta-absolutely-continuous} functions over
$[\underline{v},1]$. Furthermore, we know\footnote{See~\cite[Eq.~(7)]{MW82}
and~\cite[Eq.~(5.7)]{Menezes2005} or~\cite[Eq.~(6.6)]{krishna2009auction}} that
$\beta$ is almost everywhere differentiable (within the marginals' support) and,
in particular, it needs to satisfy the following differential equation:
\begin{align}
    \label{eq:MW-beta-differential-equation-def}
    \beta'(v) = [v-\beta(v)]\frac{g_v(v)}{G_v(v)}
    &&\text{for a.e.}\;\; v\geq \underline{v}.
\end{align}
Finally, it can be shown\footnote{\label{footnote:beta-strictly-increasing}See,
e.g., \cite[p.~66]{Menezes2005} for the (full-support) SAPV case and
\eqref{eq:MW-CCFPA-PBNE-beta-IID} for the IID one. Alternatively, the (strict)
monotonicity of $\beta$ can be derived directly by
analysing~\eqref{eq:MW-CCFPA-PBNE-beta} through the perspective that $L_v$ (as
defined in~\eqref{eq:MW-CCFPA-PBNE-L}) is a valid cumulative distribution
function, supported on $[\underline{v},v]\inters V_1$ (this is, essentially,
what Property~\ref{item:affiliation-property-measure-L} of our
following~\cref{prop:affiliation-properties} establishes). Then, the canonical
equilibrium strategy can be written as a Lebesgue-Stieltjes integral
$\beta(v)=\int_{\underline{v}}^v y \,\mathrm{d} L_v(y)$; see, e.g., \cite[Eq.~8,
p.~1107]{MW82}.} that the canonical equilibrium strategy is not only
nondecreasing in $[\underline{n},1]$, but actually it is \emph{strictly}
increasing within the support of the marginals $V_1$ and constant in
$[\underline{v},1]\setminus V_1$.

The following~\cref{prop:affiliation-properties} captures some important
properties of the canonical equilibrium strategy $\beta$ defined above. We
collect them below, for ease of reference in our technical exposition later in
this section.

\begin{lemma}
    \label{prop:affiliation-properties} Because the bidder values are
    affiliated, the following properties hold:
    \begin{enumerate}
        \item\label{item:affiliation-property-monotonicty-ratio-max-order-pdf-cdf} For any $y\in[\underline{v},1]$, the ratio $\frac{g_v(y)}{G_v(y)}$ is nondecreasing in $v\in V_1$.
        \item\label{item:affiliation-property-measure-L} For any $v\in V_1$, $L_v(y)$ is absolutely continuous and nondecreasing in $y\in[\underline{v},v]$, with $L_v(\underline{v})=0$ and $L_v(v)=1$.
        \item For any $v,v'\in V_1$ with $v'<v$: $L_v(v')<L_v(v)$.
        \item \label{item:affiliation-property-L-stochastically-increasing} For any $y\in[\underline{v},1]$, $L_v(y)$ is nonincreasing in $v\in[y,1]$.
        \item \label{item:lower-bound-MW-L} For all $\underline{v}\leq t \leq y$ with $y\in V_1$:
    $$ L_y(t) \geq \frac{G_y(t)}{G_y(y)}.$$
    \end{enumerate}
\end{lemma}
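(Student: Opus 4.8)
The plan is to obtain all five properties from affiliation, with the monotonicity of the ``diagonal reverse hazard rate'' $v\mapsto g_v(y)/G_v(y)$ (the first property) as the single engine: I would prove it in isolation, then deduce the fourth (immediate from nonnegativity of the integrand), the second, the third, and finally the fifth as an integrated version of the first.

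\emph{First property.} Let $h$ denote the joint density of the pair $(X_1,Y_1)$, where $Y_1=\max_{i\geq 2}X_i$. In the SAPV case, by exchangeability $h(v,y)=(n-1)\int_{[0,y]^{n-2}}f(v,y,x_3,\dots,x_n)\,\mathrm{d}x_3\cdots\mathrm{d}x_n$; the integrand $f(v,y,x_3,\dots,x_n)\prod_{k\geq 3}\mathbbm{1}[x_k\leq y]$ is log-supermodular in all of its arguments jointly (a product of the affiliated/log-supermodular $f$ with indicators of the ``relaxing'' constraints $x_k\leq y$, each of which is easily checked to be log-supermodular), and log-supermodularity is preserved when integrating out $x_3,\dots,x_n$ (the same fact used in the proof of \cref{lem:H-log-super}, see~\citep{athey02-statics}), so $h$ is log-supermodular in $(v,y)$; in the IID case $h(v,y)=g(y)$ depends only on $y$ and is trivially log-supermodular. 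Now, since $g_v(y)/G_v(y)=h(v,y)/\int_{\underline v}^{y}h(v,t)\,\mathrm{d}t$, monotonicity in $v$ amounts to showing, for $v\leq v'$,
\[
\int_{\underline v}^{y}\bigl(h(v',y)\,h(v,t)-h(v,y)\,h(v',t)\bigr)\,\mathrm{d}t\geq 0,
\]
and for every $t\leq y$ log-supermodularity of $h$ at $(v,y)$ and $(v',t)$ gives $h(v',y)\,h(v,t)\geq h(v,y)\,h(v',t)$ pointwise, so the integrand is nonnegative throughout.

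\emph{The remaining properties.} The integrand $g_t(t)/G_t(t)$ in~\eqref{eq:MW-CCFPA-PBNE-L} is nonnegative, so $L_v(y)=\exp(-\int_{y}^{v}g_t(t)/G_t(t)\,\mathrm{d}t)$ is nondecreasing in $y$ and nonincreasing in $v$ (enlarging the integration interval only increases the exponent's magnitude); this gives the fourth property and the monotonicity half of the second, while $L_v(v)=1$ is immediate. For $L_v(\underline v)=0$ I would use the first property with ``$v$''$=t\geq\underline v$ to get $g_t(t)/G_t(t)\geq g_{\underline v}(t)/G_{\underline v}(t)$, so that $\int_{\underline v}^{v}g_t(t)/G_t(t)\,\mathrm{d}t\geq \log G_{\underline v}(v)-\log G_{\underline v}(\underline v)=+\infty$ because $G_{\underline v}(\underline v)=0$ by (absolute) continuity of the marginals; for IID this is even more transparent from $L_v(y)=G(y)/G(v)$ (see~\eqref{eq:MW-CCFPA-PBNE-beta-IID}) and $G(\underline v)=0$. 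The absolute-continuity part of the second property is inherited from absolute continuity of $t\mapsto G_t(t)$ and integrability of $t\mapsto g_t(t)$, as already invoked in the text around~\eqref{eq:MW-CCFPA-PBNE-L}. The third property is the strict form of ``$L_v(v')\leq L_v(v)=1$'': under the standing full-support assumption of this section (see~\cref{sec:SAPVvsIID}) we have $g_t(t)/G_t(t)>0$ on $[v',v]$, hence $\int_{v'}^{v}g_t(t)/G_t(t)\,\mathrm{d}t>0$ and $L_v(v')<1$ (for IID, $L_v(v')=G(v')/G(v)<1$ by strict monotonicity of $G$ on the support). Finally, the fifth property follows by integrating the first: for $s\in[t,y]$ the first property (applied with second argument $s$ and values $s\leq y$, using the convention $g_s(s)/G_s(s):=0$ when $s\notin V_1$) gives $g_s(s)/G_s(s)\leq g_y(s)/G_y(s)$, hence
\[
L_y(t)=\exp\!\left(-\int_{t}^{y}\frac{g_s(s)}{G_s(s)}\,\mathrm{d}s\right)\geq \exp\!\left(-\int_{t}^{y}\frac{g_y(s)}{G_y(s)}\,\mathrm{d}s\right)=\exp\bigl(\log G_y(t)-\log G_y(y)\bigr)=\frac{G_y(t)}{G_y(y)}.
\]

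The main obstacle is the first property, and within it the log-supermodularity of the joint density of $(X_1,Y_1)$ together with the divergence $\int_{\underline v}^{v}g_t(t)/G_t(t)\,\mathrm{d}t=+\infty$ in the general SAPV setting, where $g_t(t)/G_t(t)$ is a genuine ``diagonal'' object and not the clean logarithmic derivative available under IID. Once the first property is in hand, the rest is monotonicity-of-integrals and positivity bookkeeping.
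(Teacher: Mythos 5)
Your argument for Property~\ref{item:lower-bound-MW-L} is exactly the paper's: apply Property~\ref{item:affiliation-property-monotonicty-ratio-max-order-pdf-cdf} pointwise on $[t,y]$ to bound $\int_t^y g_s(s)/G_s(s)\,\mathrm{d}s$ by $\ln G_y(y)-\ln G_y(t)$ and exponentiate, which is the only item the paper actually proves (Properties \ref{item:affiliation-property-monotonicty-ratio-max-order-pdf-cdf}--\ref{item:affiliation-property-L-stochastically-increasing} are cited to \citealp{MW82}). Your reconstruction of those cited properties---log-supermodularity of the joint density of $(X_1,Y_1)$ via products of the affiliated density with the indicators $\mathbbm{1}[x_k\leq y]$ and preservation under integration, then the cross-multiplied ratio-monotonicity argument---is precisely the Milgrom--Weber route, so the proposal is correct and follows essentially the same approach, only more self-contained.
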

\begin{proof}
Properties
\ref{item:affiliation-property-monotonicty-ratio-max-order-pdf-cdf}--\ref{item:affiliation-property-L-stochastically-increasing}
are (explicitly, or implicitly) discussed in~\cite{MW82}.

We next prove Property~\ref{item:lower-bound-MW-L} (which we will be using
critically in the proof of~\cref{lemma:bounds-inverse-bidding} later). Using
Property~\ref{item:affiliation-property-monotonicty-ratio-max-order-pdf-cdf} of
our lemma, we deduce that 
    \begin{align*}
    \int_{t}^y\frac{g_s(s)}{G_s(s)}\, \mathrm{d} s
    \leq \int_{t}^y\frac{g_y(s)}{G_y(s)}\, \mathrm{d} s
    =  \int_{t}^y \left(\frac{\mathrm{d}}{\mathrm{d}s} \ln G_y(s)\right) \, \mathrm{d} s
    =\ln G_y(y) - \ln G_y(t)
    \end{align*}
    and therefore, using its definition from~\eqref{eq:MW-CCFPA-PBNE-L}, we can lower-bound $L_y(t)$ by
    \begin{align*}
      L_y(t) = \exp\left(-\int_{t}^y\frac{g_s(s)}{G_s(s)}\, \mathrm{d}s\right)
      \geq \exp\left(-\ln G_y(y) + \ln G_y(t)\right)
      =\frac{G_y(t)}{G_y(y)}.
    \end{align*}
\end{proof}

\subsection{Canonical CCFPA Equilibrium: Continuity, Concentration, and Computability}
\label{sec:beta-MW-Lipschitch-conentration-bounds}

In this section we establish some key properties of the canonical equilibrium
strategies $\beta$ defined in~\cref{sec:canonical-equilibrium-MW}. We start with
some probability-concentration bounds in~\cref{lemma:bounds-inverse-bidding},
one for the SAPV and one for the IID case. These bounds essentially show that
the probability that a player's equilibrium bids lie within a given interval is
linear with respect to the interval's length, as well as the ``boundedness''
parameters of the underlying prior distribution
(recall~\cref{def:bounded-distributions}). Note that, for these concentration
bounds, we do not make use of any complexity-related notions and the specifics
of our input's representation; the results
in~\cref{lemma:bounds-inverse-bidding} are purely analytical.

We next continue with our complexity considerations, and
in~\cref{lemma:Lipschitz-bounds-beta-SAPV-bounded-IID} we establish the
Lipschitz continuity of the canonical equilibrium strategy, bounding its
Lipschitz constant as a (possibly exponential) function of the auctions
representation. Finally, in~\cref{lemma:inverter-poly-time-oracle-beta-MW} we
show that these equilibrium strategies, which are analytically given by the work
of~\cite{MW82} through~\eqref{eq:MW-CCFPA-PBNE-beta}
and~\eqref{eq:MW-CCFPA-PBNE-L}, can actually also be efficiently computed (with
exponential accuracy).

\begin{lemma}
    \label{lemma:bounds-inverse-bidding}
Consider a CCFPA with $n$ bidders and let $\beta$ be its canonical PBNE (as
described in given in~\cref{sec:canonical-equilibrium-MW}). For any value $v\in
V_1$ in the marginals' support and bids $0\leq b_1\leq b_2 \leq 1$ it holds that
\begin{equation}
    \label{eq:bounds-inverse-bidding-lemma}
\prob{b_1\leq \beta(Y_1) \leq b_2 \fwh{X_1=v} } \leq \gamma \cdot (b_2-b_1),
\end{equation}
where $\gamma\coloneqq 2(n-1)(\priorupperbound/ \priorlowerbound)^2$ for
$(\priorupperbound,\priorlowerbound)$-bounded\footnote{See~\cref{def:bounded-distributions}.}
SAPV settings, and $\gamma\coloneqq n \priorupperbound$ for
$\priorupperbound$-bounded IID settings.
\end{lemma}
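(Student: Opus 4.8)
The plan is to rewrite the left-hand side of~\eqref{eq:bounds-inverse-bidding-lemma} as an integral over the bidding interval $[b_1,b_2]$ via the substitution $b=\beta(y)$, and then bound the resulting integrand pointwise by $\gamma$. Recall that $\beta$ is strictly increasing and absolutely continuous on $V_1$, constant on $[\underline{v},1]\setminus V_1$, and that the density $g_v$ of $Y_1$ given $X_1=v$ vanishes outside $V_1$. Hence the law of $\beta(Y_1)$ conditioned on $X_1=v$ is the push-forward of $G_v$ under $\beta$: it is absolutely continuous, supported on $\beta(V_1)$, and using the differential equation~\eqref{eq:MW-beta-differential-equation-def} its density at $b=\beta(y)$, $y\in V_1$, equals
$$
\frac{g_v(y)}{\beta'(y)}=\frac{g_v(y)\,G_y(y)}{[y-\beta(y)]\,g_y(y)}.
$$
Since intersecting $[b_1,b_2]$ with the range of $\beta|_{V_1}$ can only shorten it, and the integrand above is nonnegative, it suffices to prove that this quantity is at most $\gamma$ for almost every $y\in V_1$; integrating over $b\in[b_1,b_2]$ then yields~\eqref{eq:bounds-inverse-bidding-lemma}. (Bids outside $\beta(V_1)$ and the single bids attained on constant pieces of $\beta$ contribute nothing.)

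\emph{The SAPV case.} Here $\priorlowerbound\le f\le\priorupperbound$ on $[0,1]^n$, so $V_1=[0,1]$ and $\underline{v}=0$. By symmetry of $f$, for $a,c\in[0,1]$ one has $g_a(c)=\frac{n-1}{f_1(a)}\,\psi(a,c)$ with $\psi(a,c)\coloneqq\int_{[0,c]^{n-2}} f(a,c,x_3,\dots,x_n)\,\mathrm{d}x_3\cdots\mathrm{d}x_n$, whence $\underline{\phi}\,c^{\,n-2}\le\psi(a,c)\le\overline{\phi}\,c^{\,n-2}$ and $f_1(a)\ge\priorlowerbound$. I would combine three ingredients. (i) $\dfrac{g_v(y)}{g_y(y)}=\dfrac{\psi(v,y)}{\psi(y,y)}\cdot\dfrac{f_1(y)}{f_1(v)}\le\dfrac{\priorupperbound}{\priorlowerbound}\cdot\dfrac{f_1(y)}{f_1(v)}$. (ii) Property~\ref{item:lower-bound-MW-L} of~\cref{prop:affiliation-properties}, $L_y(t)\ge G_y(t)/G_y(y)$ for $0\le t\le y$, gives $y-\beta(y)=\int_0^y L_y(t)\,\mathrm{d}t\ge\frac{1}{G_y(y)}\int_0^y G_y(t)\,\mathrm{d}t$, so $\dfrac{G_y(y)}{y-\beta(y)}\le\dfrac{G_y(y)^2}{\int_0^y G_y(t)\,\mathrm{d}t}$. (iii) Since $\mathrm{d}G_y(t)=g_y(t)\,\mathrm{d}t$ with $g_y(t)\le\frac{(n-1)\priorupperbound}{f_1(y)}$, we have $\mathrm{d}t\ge\frac{f_1(y)}{(n-1)\priorupperbound}\,\mathrm{d}G_y(t)$, hence $\int_0^y G_y(t)\,\mathrm{d}t\ge\frac{f_1(y)}{(n-1)\priorupperbound}\int_0^y G_y\,\mathrm{d}G_y=\frac{f_1(y)}{2(n-1)\priorupperbound}G_y(y)^2$ (using $G_y(0)=0$). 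Multiplying (i)--(iii), the $f_1(y)$-factors cancel and $\dfrac{g_v(y)\,G_y(y)}{[y-\beta(y)]\,g_y(y)}\le\dfrac{\priorupperbound}{\priorlowerbound}\cdot\dfrac{1}{f_1(v)}\cdot 2(n-1)\priorupperbound\le 2(n-1)(\priorupperbound/\priorlowerbound)^2=\gamma$, where the last step uses $f_1(v)\ge\priorlowerbound$.

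\emph{The IID case.} Now $G_v=G_y=G\coloneqq F_1^{\,n-1}$ and $g_v=g_y=g$, so the integrand is simply $\dfrac{G(y)}{y-\beta(y)}$, and by the IID form~\eqref{eq:MW-CCFPA-PBNE-beta-IID} we have $y-\beta(y)=\frac{1}{G(y)}\int_{\underline{v}}^y G(t)\,\mathrm{d}t$ \emph{exactly}, so the integrand equals $\dfrac{G(y)^2}{\int_{\underline{v}}^y G(t)\,\mathrm{d}t}$. Since $\mathrm{d}F_1(t)=f_1(t)\,\mathrm{d}t\le\priorupperbound\,\mathrm{d}t$, $\int_{\underline{v}}^y G(t)\,\mathrm{d}t=\int_{\underline{v}}^y F_1(t)^{n-1}\,\mathrm{d}t\ge\frac{1}{\priorupperbound}\int_{\underline{v}}^y F_1^{\,n-1}\,\mathrm{d}F_1=\frac{F_1(y)^n}{n\priorupperbound}$ (using $F_1(\underline{v})=0$), so the integrand is at most $n\priorupperbound\,F_1(y)^{n-2}\le n\priorupperbound=\gamma$ for $n\ge2$. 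Flat pieces of $F_1$ (hence of $\beta$) cause no trouble, as $g$ vanishes there and the Stieltjes manipulations remain valid.

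I expect the main obstacle to be pinning down the precise exponent $(\priorupperbound/\priorlowerbound)^2$ in the SAPV constant: the crude route of bounding $G_y(t)$ and $g_y(t)$ separately by powers of $t$ loses extra factors of $\priorupperbound/\priorlowerbound$. The two ideas that make it tight are the Stieltjes change of variables $u=G_y(t)$ in step (iii) (rather than $u=F_1(t)$, which is unavailable under correlation), so that the $f_1(y)$ terms it produces cancel against those in (i); and invoking Property~\ref{item:lower-bound-MW-L} of~\cref{prop:affiliation-properties}---the only place affiliation enters---to lower-bound $y-\beta(y)$ by $\int_0^y G_y(t)\,\mathrm{d}t/G_y(y)$. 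The measure-theoretic preliminaries (restricting to $\beta(V_1)$, handling where $\beta$ or $F_1$ is constant, and checking $\beta'>0$ a.e.\ on $V_1$) are routine but should be spelled out.
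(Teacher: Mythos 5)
Your proposal is correct and follows essentially the same route as the paper's proof: you bound the conditional density of $\beta(Y_1)$ by $g_v(y)/\beta'(y)$, lower-bound $\beta'$ via the differential equation~\eqref{eq:MW-beta-differential-equation-def} together with Property~\ref{item:lower-bound-MW-L} of \cref{prop:affiliation-properties} and the $(\priorlowerbound,\priorupperbound)$-bounds on $g_y$, and use the exact formula~\eqref{eq:MW-CCFPA-PBNE-beta-IID} to get the sharper $n\priorupperbound$ constant in the IID case. The only differences are cosmetic (an exact symmetry formula for $g_a(c)$ in place of the paper's hypercube-shell limit, an inline Stieltjes version of \cref{lemma:lower-bound-integral-derivative}, and a pointwise-a.e.\ density bound in place of the paper's reduction to connected components of $V_1$), and the measure-theoretic preliminaries you defer are indeed routine.
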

\begin{proof}
Fix some $v\in V_1=\support{F_1}$, and let $b_1,b_2\in[0,1]$ with $b_1\leq b2$.
First, we argue that it is enough to prove our lemma for the case where $b_1$
and $b_2$ lie in the image (under the bidding function $\beta$) of the
\emph{same} connected component of the marginals' support. To formalize this,
let $\ssets{I_j}_{j\in[k]}$ be the connected components of $V_1$; that is,
$I_j\subseteq [0,1]$ are intervals, such that $V_1={\dot\union}_{j=1}^k I_j$.
Then, we claim that it is enough to establish
\eqref{eq:bounds-inverse-bidding-lemma} under the assumption that there exists a
$j^*\in[k]$ such that $b_1,b_2\in \beta(I_{j^*})$.
Indeed, assuming this holds, for any $b_1'\leq b_2'$ (not necessarily in the
same connected component of the support) we would have that 
\begin{align*}
\prob{b_1'\leq \beta(Y_1) \leq b_2' \fwh{X_1=v} }
&\leq \sum_{j=1}^k \prob{\beta(Y_1)\in \beta(I_j)\inters[b_1',b_2'] \fwh{X_1=v} }\\ 
&\leq \gamma \cdot \sum_{j=1}^k \mu(I_j\inters[b_1',b_2'])\\
&= \gamma \cdot\mu(V_1\inters[b_1',b_2'])\\
&\leq
\gamma \cdot (b_2'-b_1'),
\end{align*}
where for the first inequality we are using a union bound and the fact that
$\support{G_v}\subseteq V_1$, and in the second and third lines $\mu$
denotes the standard Lebesgue measure in $\R$. Thus, from now on in this proof,
we will assume that there exists an interval $[v_1,v_2]\subseteq V_1$ such that
$b_1,b_2\in \beta([v_1,v_2])$. Recall here, that the canonical equilibrium
strategy $\beta$ is strictly increasing (and thus also invertible) within the
marginals' support\footnote{See~\cref{sec:canonical-equilibrium-MW},
\cref{footnote:beta-strictly-increasing}.}, and therefore this translates to
$\beta^{-1}(b)\in[v_1,v_2]$ for all $b\in[b_1,b_2]$.

\bigskip

We first start with the general SAPV model, assuming further that the priors are
$(\priorlowerbound,\priorupperbound)$-bounded. Observe that, for any $y\in V_1$
and $z\in\interiorset{\support{G_y}}$, we can determine the density $g_y(z)$ by 
\begin{align}
g_y(z) 
&= \lim_{t\to z^-}\frac{G_y(z)-G_y(t)}{z-t}\notag\\
&= \lim_{t\to z^-}\frac{\prob{t\leq Y_1 \leq z \fwh{X_1=y}}}{z-t}\notag\\
&=\frac{1}{f_1(y)}\lim_{t\to z^-}\frac{1}{z-t}\int_{S_z\setminus S_t}f(y,\vec{w})\,\mathrm{d}\vec{w},\label{eq:bounds-density-max-conditional-helper1}
\end{align}
where, for any $x\in[0,1]$, we use $S_x$\label{page:max-order-stat-(n-1)hypercubes} to denote the $(n-1)$-dimensional cube
with edge-length $x$; i.e., $S_x\coloneqq [0,x]^{n-1}$. Note, that the
($(n-1)$-dimensional) volume of this body $S_x$ is equal to $\int_{\vec{w}\in[0,x]^n}1
\,\mathrm{d}\vec{w}=x^{n-1}$, therefore we get the bounds
\begin{align*}
\int_{S_z\setminus S_t}f(y,\vec{w})\,\mathrm{d}\vec{w}
&\leq (z^{n-1}-t^{n-1}) \max_{\vec{w}\in S_z\setminus S_t}f(y,\vec{w})\\
&\leq (z^{n-1}-t^{n-1})\priorupperbound\\
&\leq (n-1)(z-t)z^{n-2} \priorupperbound,
\end{align*}
where for the last inequality we made use
of~\cref{lemma:bounds-difference-of-powers}. Similarly, we can get the lower
bound:
\begin{equation*}
    \label{eq:lower-bound-derivatives-g-integral}
\int_{S_z\setminus S_t}f(y,\vec{w})\,\mathrm{d}\vec{w}
\geq (n-1)(z-t)t^{n-2} \priorlowerbound.
\end{equation*}
Using these, we can bound the density in~\eqref{eq:bounds-density-max-conditional-helper1} by:
\begin{equation}
    \label{eq:bounds-density-max-conditional-helper3}
    \frac{(n-1)z^{n-2}}{f_1(y)}\priorlowerbound 
    = \frac{1}{f_1(y)} \lim_{t\to z^-}\left[(n-1)t^{n-2} \priorlowerbound\right]
    \leq
    g_y(z)
    \leq \frac{1}{f_1(y)} \lim_{t\to z^-}\left[(n-1)z^{n-2} \priorupperbound\right]
    =\frac{(n-1)z^{n-2}}{f_1(y)}\priorupperbound.
\end{equation}

Next, we turn our attention to bounding the derivative of the canonical
equilibrium strategy $\beta$. For any $y\in V_1$ we can get that:
\begin{align}
    \beta'(y) 
    &= [y-\beta(y)]\frac{g_y(y)}{G_y(y)}, &&\text{from~\eqref{eq:MW-beta-differential-equation-def},}\notag\\
    &= \frac{g_y(y)}{G_y(y)}\int_{\underline{v}}^yL_y(t)\, \mathrm{d} t, &&\text{from~\eqref{eq:MW-CCFPA-PBNE-beta},} \label{eq:bounds-density-max-conditional-helper5}\\
    &\geq \frac{g_y(y)}{G_y^2(y)} \int_{\underline{v}}^yG_y(t)\, \mathrm{d} t, &&\text{from Property~\ref{item:lower-bound-MW-L} of \cref{prop:affiliation-properties}},\label{eq:bounds-density-max-conditional-helper4}\\
    &\geq \frac{g_y(y)}{G_y^2(y)} \frac{G^2_y(y)-G^2_y(\underline{v})}{2\max_{t\leq y}{g_y(t)}}, &&\text{from~\cref{lemma:lower-bound-integral-derivative}},\notag\\
    &= \frac{g_y(y)}{2\max_{t\leq y}{g_y(t)}},\label{eq:bounds-density-max-conditional-helper2}
\end{align}
where in the last step we used the fact that $G_y(\underline{v})=0$, since
$\support{G_y}\subseteq V_1$.

Now let's denote
by $h_v$ the density of the random variable $\beta(Y_1)$, conditioned on
$X_1=v$. Then, for any $b\in[b_1,b_2]$, we can write (see, e.g.,
\cite[Theorem~7.1, Sec.~5.7]{ross2010}):
\begin{equation}
    \label{eq:density-beta-of-max}
h_v(b) 
= g_v(\beta^{-1}(b))\frac{\mathrm{d}}{\mathrm{d} b} \beta^{-1}(b)
= \frac{g_v(\beta^{-1}(b))}{\beta'(\beta^{-1}(b))}
= \frac{g_v(y)}{\beta'(y)},
\end{equation}
where we set $y=\beta^{-1}(b)$ for convenience. Therefore, we can utilize the
$(\priorlowerbound,\priorupperbound)$-boundedness of the value distribution, together with the bounds from
\eqref{eq:bounds-density-max-conditional-helper2} and
\eqref{eq:bounds-density-max-conditional-helper3}, to get that, for all
$b\in[b_1,b_2]$ it holds that:
\begin{align*}
h_v(b)
&\leq \frac{g_v(y)}{\frac{g_y(y)}{2\max_{t\leq y}{g_y(t)}}}, &&\text{from~\eqref{eq:density-beta-of-max} and~\eqref{eq:bounds-density-max-conditional-helper2}},\\ 
&= 2 \frac{g_v(y)}{g_y(y)}\max_{t\leq y}{g_y(t)}\\
&\leq 2 \frac{\frac{(n-1)z^{n-2}}{f_1(y)}\priorupperbound}{\frac{(n-1)z^{n-2}}{f_1(y)}\priorlowerbound}\max_{t\leq y}\frac{(n-1)t^{n-2}}{f_1(y)}\priorupperbound, &&\text{from~\eqref{eq:bounds-density-max-conditional-helper3}},\\
&=2\frac{\phi_2}{\phi_1}\cdot \frac{n-1}{f_1(y)}\priorupperbound \max_{t\leq y} t^{n-2}\\
&\leq 2(n-1)\left(\frac{\phi_2}{\phi_1}\right)^2, &&\text{since $f_1(y)\geq \priorlowerbound$ and $y \leq 1$}.
\end{align*}
the last inequality holding due to the fact that
$f_1(y)=\int_{\vec{w}\in[0,1]^{n-1}} f(y,\vec{w})\,\mathrm{d}\vec{w}\geq
\priorlowerbound$ (given the full-support assumption for prior $f$, under our
bounded SAPV model), and due to $y \leq 1$. Using this, we can finally bound the
desired probability in the statement~\eqref{eq:bounds-inverse-bidding-lemma} of
our lemma by
\begin{equation*}
\prob{b_1\leq \beta(Y_1) \leq b_2 \fwh{X_1=v} }
\leq \int_{b_1}^{b_2} h_v(b)\,\mathrm{d} b
\leq (b_2-b_1)\cdot 2(n-1)(\priorupperbound/\priorlowerbound)^2.
\end{equation*}

\bigskip
Now let's consider the remaining case, where the values are IID according to
$\priorupperbound$-bounded value distribution $F_1$.
Recall\footnote{See~\cref{sec:CCFPA-appendix-technical},
Page~\ref{page:iid-max-order-statistics}.} that, in this case, for any $x\in
V_1$ we have that the conditional maximum order statistic distributions have cdf
and pdf, respectively: $G_x(z) = G(z) = F_1^{n-1}(z)$ and $g_x(z)= g(z)
=(n-1)f_1(z)F_1^{n-2}(z)$.
So, we can now improve the bound for the derivative $\beta'$ of the canonical
equilibrium strategy in~\eqref{eq:bounds-density-max-conditional-helper4} by:
\begin{align*}
\beta'(y)&\geq \frac{g(y)}{G^2(y)} \int_{\underline{v}}^yG(t)\, \mathrm{d} t\\
    &= \frac{(n-1)f_1(y)F_1^{n-2}(y)}{F_1^{2(n-1)}(y)} \int_{\underline{v}}^yF_1^{n-1}(t)\, \mathrm{d} t\\
    &= \frac{(n-1)f_1(y)}{n F_1^{n}(y)} \int_{\underline{v}}^y n F_1^{n-1}(t)\, \mathrm{d} t\\
    &\geq \frac{(n-1)f_1(y)}{n \priorupperbound F_1^n(y)} \int_{\underline{v}}^y n f_1(t)F_1^{n-1}(t)\, \mathrm{d} t, &&\text{due to $\priorupperbound$-boundedness},\\
    &= \frac{(n-1)f_1(y)}{n \priorupperbound F_1^n(y)} \int_{\underline{v}}^y \left[F_1^{n}(t)\right]'\, \mathrm{d} t\\
    &= \frac{(n-1)f_1(y)}{n \priorupperbound F_1^n(y)} F_1^{n}(y),\\
    &=\frac{n-1}{n}\frac{f_1(y)}{\priorupperbound}.
\end{align*}
Therefore, we can now bound $h_v(b)$ from~\eqref{eq:density-beta-of-max} by:
$$
h_v(b) 
=
\frac{g(y)}{\beta'(y)}
\leq \frac{(n-1)f_1(y)F_1^{n-2}(y)}{\frac{n-1}{n}\frac{f_1(y)}{\priorupperbound}}
= n \priorupperbound F_1^{n-2}(y)
\leq n \priorupperbound.
$$
\end{proof}

\begin{lemma}
    \label{lemma:Lipschitz-bounds-beta-SAPV-bounded-IID}
    For both the IID and the (full-support) SAPV settings, the canonical CCFPA
    equilibrium strategy (see~\cref{sec:canonical-equilibrium-MW}) is Lipschitz
    continuous,\footnote{Let $E\subseteq \R$ and $c\geq 0$. We say that a
    function $g:E\map\R$ is \emph{Lipschitz} continuous (on $E$) with
    \emph{Lipschitz constant} $c$, if for all $x,y\in E$ it holds that
    $\card{f(x)-f(y)} \leq c\cdot \card{x-y}$. For more background, see
    e.g.~\cite[Sec.~1.6]{RoydenFitzpatrick2010}} with a Lipschitz constant which
    is at most \emph{exponential} in the (binary) representation
    (see~\cref{sec:CFPA-represent}) of the auction.
    \end{lemma}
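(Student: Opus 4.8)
The plan is to bound $\beta'$ almost everywhere on $[\underline{v},1]$ and then invoke the absolute continuity of $\beta$ on this interval (established above, see p.~\pageref{page:beta-absolutely-continuous}): if $|\beta'(v)| \le c$ for a.e.\ $v$, then $\beta$ is Lipschitz with constant $c$. Since $\beta$ is constant on $[\underline{v},1]\setminus V_1$, we have $\beta'=0$ there, so only $v\in\interiorset{V_1}$ matters. Write $f_{\max}$ (resp.\ $f^+_{\min}$) for the maximum value (resp.\ minimum strictly positive value) of the piecewise-constant density $f$ (in the IID case, of the marginal $f_1$); these are integer combinations of the input weights, so $f_{\max}$ and $1/f^+_{\min}$ are at most exponential in the binary representation. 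We will exhibit $c$ as a polynomial in $n$, $f_{\max}$ and $1/f^+_{\min}$, which is then exponential in the representation as claimed.

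For the (full-support) SAPV case we may take $V_1=[0,1]$ and $\underline{v}=0$. By the differential equation~\eqref{eq:MW-beta-differential-equation-def}, for a.e.\ $v\in(0,1)$ we have $\beta'(v)=(v-\beta(v))\,g_v(v)/G_v(v)$. First, $v-\beta(v)=\int_0^v L_v(y)\,\mathrm{d}y\le v$, since $0\le L_v\le 1$. Next, writing $\Phi(z)\coloneqq\int_{[0,z]^{n-1}}f(v,\vec{w})\,\mathrm{d}\vec{w}$ (so $G_v(z)=\Phi(z)/f_1(v)$ and $g_v(z)=\Phi'(z)/f_1(v)$ a.e.), the derivative $\Phi'(z)$ is a sum of $n-1$ face-integrals, each at most $f_{\max}z^{n-2}$; hence $g_v(v)\le (n-1)f_{\max}v^{n-2}/f_1(v)\le (n-1)f_{\max}v^{n-2}/f^+_{\min}$, using $f_1(v)\ge f^+_{\min}$ (valid by full support for $v\in(0,1)$). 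For the denominator, full support gives $\Phi(v)\ge f^+_{\min}v^{n-1}$ and $f_1(v)\le f_{\max}$, so $G_v(v)\ge f^+_{\min}v^{n-1}/f_{\max}$. Multiplying the three estimates, the powers of $v$ cancel and $\beta'(v)\le (n-1)(f_{\max}/f^+_{\min})^2$.

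For the IID case no full-support assumption is available, so $\underline{v}$ may be positive and $V_1$ a finite union of intervals; here we argue from~\eqref{eq:MW-CCFPA-PBNE-beta-IID} with $G=F_1^{n-1}$. Differentiating gives, for a.e.\ $v\in V_1$, $\beta'(v)=\frac{g(v)}{G(v)^2}\int_{\underline{v}}^v G(y)\,\mathrm{d}y=\frac{(n-1)f_1(v)}{F_1(v)^{n}}\int_{\underline{v}}^v F_1(y)^{n-1}\,\mathrm{d}y\le \frac{(n-1)f_1(v)\,(v-\underline{v})}{F_1(v)}$, using that $F_1$ is nondecreasing. It remains to bound $(v-\underline{v})/F_1(v)$. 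Let $[\underline{v},\underline{v}+\eta]$ be the leftmost connected component of $V_1$ and $c>0$ the (constant) value of $f_1$ there; then $F_1(v)=c(v-\underline{v})$ for $v$ in that component and $F_1(v)\ge c\eta$ for $v$ beyond it, so $(v-\underline{v})/F_1(v)\le \max\{1/c,\,1/(c\eta)\}$. Both $c$ (a strictly positive value of $f_1$) and $\eta$ (a difference of breakpoints of $f_1$) are at least exponentially small in the input, so $\beta'(v)\le (n-1)f_{\max}\max\{1/c,1/(c\eta)\}$ is again exponential. Taking $c$ to be the maximum of the SAPV and IID bounds concludes the proof.

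The step I expect to be delicate is the behaviour as $v\to\underline{v}$, where $G_v(v)\to 0$ forces the ratio $g_v(v)/G_v(v)$ (resp.\ $g/G^2$) to blow up; the argument hinges on the complementary factor $v-\beta(v)=\int L_v$ (resp.\ $\int_{\underline v}^v G$) vanishing at exactly the matching rate. Pinning down this cancellation -- via the full-support lower bound on $\Phi$ that yields $G_v(v)=\Theta(v^{n-1})$ against $g_v(v)=O(v^{n-2})$ in the SAPV case, and via the linear growth of $F_1$ on its leftmost component in the IID case -- is the crux of the estimate.
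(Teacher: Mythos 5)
Your proof is correct and takes essentially the same route as the paper's: bound $\beta'$ almost everywhere via the differential equation / closed form, let the powers of $v$ cancel in the SAPV case (you keep the density bounds separately and get $(n-1)(f_{\max}/f^{+}_{\min})^2$ where the paper's cancellation of $f_1(y)$ gives the slightly sharper $(n-1)f_{\max}/f^{+}_{\min}$, which is immaterial for an exponential bound), reduce the IID case to bounding $(v-\underline{v})/F_1(v)$ via the leftmost region of positive density, and conclude by absolute continuity. One small imprecision: $f_1$ need not be constant on the leftmost connected component of $V_1$ (it may span several representation pieces with different positive values), so the equality $F_1(v)=c(v-\underline{v})$ can fail; replacing it by the inequality $F_1(v)\geq f^{+}_{\min}\,(v-\underline{v})$ on that component --- or, as the paper does, working with the first representation interval on which the density is strictly positive --- repairs this without any other change.
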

    \begin{proof}
    See~\cref{app:densification-proofs}.
\end{proof}

\begin{lemma}
    \label{lemma:inverter-poly-time-oracle-beta-MW}
    For both the SAPV and IID settings, the values of the canonical CCFPA
    equilibrium strategy (see~\cref{sec:canonical-equilibrium-MW}) can be
    exactly computed in polynomial time. That is, given an auction $\mathcal A$
    and an $x\in[\underline{v},1]$, (the binary representation of) $\beta(x)$ can
    be computed in time polynomial in the (binary) representations of $x$ and
    $\mathcal{A}$. 
    \end{lemma}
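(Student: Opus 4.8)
The plan is to collapse the analytically-defined canonical strategy of~\eqref{eq:MW-CCFPA-PBNE-beta}--\eqref{eq:MW-CCFPA-PBNE-L} to exact arithmetic over piecewise polynomials of polynomial size. The crucial observation is that the integrand $g_t(t)/G_t(t)$ appearing in~\eqref{eq:MW-CCFPA-PBNE-L} is a \emph{logarithmic derivative of a small piecewise polynomial}. Let $\Phi(t)\coloneqq F(t,t,\dots,t)$ denote the value of the joint cdf on the diagonal, equivalently the cdf of $\max_i X_i$. Because the joint density is piecewise constant on (symmetrized) hyperrectangles and $[0,t]^n$ is permutation-invariant, $\Phi$ is an explicit piecewise polynomial of degree at most $n$, with breakpoints among the $O(n\ell)$ rationals $\{a_i^j,b_i^j\}$ and coefficients of polynomial bit-length (each piece being $n!\sum_j w_j$ times a product of linear and constant factors); in particular, no $n!$-sized expansion over permutations is ever needed. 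Differentiating, $\Phi'(t)=n\!\int_{[0,t]^{n-1}}f(t,\vec w)\,\mathrm{d}\vec w = n f_1(t)G_t(t)$, and since the density is piecewise constant we have $\partial_1 f\equiv 0$ a.e.\ (hence $f_1'\equiv 0$ and $\partial_v G_v(y)\equiv 0$ a.e.), so $\Phi''(t)=n f_1(t)g_t(t)$ a.e.; dividing,
\[
\frac{g_t(t)}{G_t(t)} = \frac{\Phi''(t)}{\Phi'(t)} = \frac{\mathrm{d}}{\mathrm{d}t}\ln\Phi'(t) \qquad\text{for a.e.\ }t\in\interiorset{V_1},
\]
with $\Phi'>0$ on the interior of the support (using the full-support assumption for SAPV). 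For IID priors the same identity holds with $\Phi(t)=F_1(t)^n$, but there one can proceed even more directly from~\eqref{eq:MW-CCFPA-PBNE-beta-IID}, as $G(y)=F_1^{n-1}(y)$ is already an explicit piecewise polynomial.

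Given this, I would compute $\beta(x)$ for an input $x\in[\underline{v},1]$ as follows. First, from the input representation, compute exactly the piecewise-polynomial descriptions of $\Phi,\Phi',\Phi''$, the sorted breakpoint list, and the structure of the marginal support $V_1$ (a union of polynomially many intervals; on $[\underline{v},1]\setminus V_1$ the convention $g_t(t)/G_t(t)\coloneqq 0$ makes $\beta$ locally constant, which we read off by continuity). Since $\Phi'$ is smooth and positive on each inter-breakpoint piece but may jump across breakpoints, integrating the displayed identity over $[y,v]$ yields
\[
\int_y^v\frac{g_t(t)}{G_t(t)}\,\mathrm{d}t = \ln\Phi'(v)-\ln\Phi'(y) - \sum_{c\in(y,v)}\bigl(\ln\Phi'(c^+)-\ln\Phi'(c^-)\bigr),
\]
where the sum ranges over breakpoints $c$, and therefore
\[
L_v(y) = \frac{\Phi'(y)}{\Phi'(v)}\prod_{c\in(y,v)}\frac{\Phi'(c^+)}{\Phi'(c^-)}
\]
is a product of ratios of \emph{positive rationals} obtained by evaluating the relevant polynomial pieces at the rational points $v,y$ and the rational breakpoints $c$.

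Finally, to evaluate $\beta(x)=x-\int_{\underline{v}}^x L_x(y)\,\mathrm{d}y$, split the integral at the breakpoints $\underline{v}=c_0<c_1<\dots<c_M<x=:c_{M+1}$ lying in $(\underline{v},x)$. On each subinterval $(c_k,c_{k+1})$ the jump-product in $L_x(\cdot)$ reduces to the constant $\Pi_k\coloneqq\prod_{i=k+1}^{M}\Phi'(c_i^+)/\Phi'(c_i^-)$, so $L_x(y)=\Pi_k\,\Phi'(y)/\Phi'(x)$ there, and by continuity of $\Phi$,
\[
\beta(x) = x - \frac{1}{\Phi'(x)}\sum_{k=0}^{M}\Pi_k\bigl(\Phi(c_{k+1})-\Phi(c_k)\bigr).
\]
Every quantity on the right is a rational number obtained by exact evaluation or integration of polynomially many polynomial pieces of polynomial degree and bit-length (and $\Phi(c_0)=\Phi(\underline{v})=0$, so the potential singularity at the left endpoint disappears), hence $\beta(x)\in\Q$ is computable in polynomial time; the IID case is handled the same way (or directly) via~\eqref{eq:MW-CCFPA-PBNE-beta-IID}.

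I expect the main obstacle to be precisely the first step: recognizing that the a priori transcendental-looking quantity $g_t(t)/G_t(t)$ equals $\tfrac{\mathrm{d}}{\mathrm{d}t}\ln$ of the (small, piecewise polynomial) density $\Phi'$ of the maximum order statistic --- this is what turns~\eqref{eq:MW-CCFPA-PBNE-beta}--\eqref{eq:MW-CCFPA-PBNE-L} into exact rational arithmetic, and it is exactly where the piecewise-constant-density assumption is used (through $\partial_1 f\equiv 0$ a.e.). The remaining effort is careful bookkeeping: tracking the jump factors of $\Phi'$ at breakpoints, dealing with the polynomially many connected components of the support and the off-support convention, and checking that the piecewise polynomials stay of polynomial degree and bit-length throughout.
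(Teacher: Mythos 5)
Your proposal is correct, and at its core it uses the same mechanism as the paper's proof: the piecewise-constant density makes $g_t(t)/G_t(t)$ a log-derivative of an explicitly computable piecewise-polynomial quantity within each piece, so $L_v(y)$ collapses to a product of ratios of exactly computable rationals across breakpoints, and the outer integral is then done exactly piece by piece. The difference is in the packaging. The paper works with the \emph{conditional} max-order-statistic distributions: it first builds the piecewise-constant representation of $G_v$ (noting $G_v$ is constant in $v$ within each marginal piece) and then computes $L_v(y)$ by a recursion $L_v(y)=\frac{G_{a_\kappa}(y)}{G_{a_\kappa}(a_\kappa)}L_v(a_\kappa)$ across pieces; you instead work with the single unconditional diagonal cdf $\Phi(t)=F(t,\dots,t)$ and the a.e.\ identity $g_t(t)/G_t(t)=\frac{\mathrm{d}}{\mathrm{d}t}\ln\Phi'(t)$ with jump corrections at breakpoints. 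Since $\Phi'(t)=nf_1(t)G_t(t)$ with $f_1$ piecewise constant, your jump-product telescopes exactly into the paper's recursion (the $p_j$-factors cancel), so the two formulas agree; what your route buys is that you never need to construct the conditional distributions $F_{1|v}$ or $G_v$ at all (the symmetrization is absorbed into a single factor $n!$ via the permutation-invariance of $[0,t]^n$), and the final integration $\int L_x$ comes for free from $\int\Phi'=\Phi$, whereas the paper integrates the piecewise-linear $G$'s. Two small bookkeeping points you should make explicit: when $x$ (or $v$) is itself a breakpoint, $\Phi'(x)$ in your formulas must be read as the left limit $\Phi'(x^-)$, consistently with excluding $x$ from the jump product; and for IID priors without full support the $\Phi$-based identity genuinely fails on zero-density pieces (where $\Phi'$ vanishes and the log is undefined), so the fallback to the explicit formula~\eqref{eq:MW-CCFPA-PBNE-beta-IID}, which is also what the paper does, is not optional but necessary there.
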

\begin{proof}
    See~\cref{app:densification-proofs}.
\end{proof}

\subsection{From Exact CCFPA Equilibria to Approximate CFPA Equilibria}
We are now ready to finalize the proof of our main ``bid densification'' result,
namely~\cref{th:approximate-PBNE-CFPA-to-CCFPA}. They key idea behind of our
approach is to study CFPAs as discrete approximations of their
bull-bidding-space CCFPA counterparts, the quality of the approximation
depending on the granularity of the original, discrete bidding space. This
technique is formalized in two steps: first,
in~\cref{lemma:approx-equilibrium-discretizing-bidding-CCFPA} we prove that a
bidding strategy which is ``near'' to the canonical \emph{exact} PBNE of a
CCFPA, must itself be an \emph{approximate} equilibrium; then,
in~\cref{lemma:approximate-inverter} we show how such an approximation of the
canonical equilibrium strategy, within the desired discrete subset of bids of
our CFPA, can indeed be constructed in polynomial time.

\begin{definition}[Approximation of bidding strategies]
\label{def:near-bidding-strategies}
    Let $\beta_1,\beta_2:[0,1]\to[0,1]$ be two bidding strategies of a symmetric
    CFPA/CCFPA with marginal prior density $F_1$. We will say that $\beta_2$ is
    an \emph{$\varepsilon$-under\-appro\-ximation} if
    $$
    \beta_1(v)-\varepsilon \leq \beta_2(v) \leq \beta_1(v) \qquad \forall v\in\support{F_1}.
    $$
\end{definition}

\begin{lemma}
    \label{lemma:approx-equilibrium-discretizing-bidding-CCFPA}
    Consider a CCFPA with $n$ bidders and let $\beta$ be its canonical PBNE (as
    described in~\cref{sec:canonical-equilibrium-MW}). Let $\tilde{\beta}$ be an
    other nondecreasing (but possibly discontinuous) bidding strategy, which is
    an $\tilde{\varepsilon}$-underapproximation of $\beta$. Then,
    $\tilde{\beta}$ is a $2\gamma\tilde{\varepsilon}$-approximate PBNE, where
    $\gamma \coloneqq 2(n-1)(\priorupperbound/ \priorlowerbound)^2$ for
    $(\priorupperbound,\priorlowerbound)$-bounded SAPV settings, and $\gamma
    \coloneqq n \priorupperbound$ for $\priorupperbound$-bounded IID settings.
\end{lemma}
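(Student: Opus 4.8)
By symmetry of the SAPV (resp.\ IID) instance it suffices to show that, when every bidder other than bidder~$1$ uses $\tilde\beta$, playing $\tilde\beta(v)$ is a $2\gamma\tilde\varepsilon$-best response for bidder~$1$ at every value $v\in\support{F_1}$. Throughout, I would condition on $X_1=v$ and write $Y_1=\max_{j\geq 2}X_j$ for the maximum of the other bidders' values, as in~\cref{sec:canonical-equilibrium-MW}; since $\tilde\beta$ is nondecreasing, the highest competing bid is exactly $\tilde\beta(Y_1)$, and the $\tilde\varepsilon$-under\-appro\-ximation hypothesis gives $\beta(Y_1)-\tilde\varepsilon\leq\tilde\beta(Y_1)\leq\beta(Y_1)$ (and likewise $\tilde\beta(v)\in[\beta(v)-\tilde\varepsilon,\beta(v)]$). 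Note also that $\tilde\beta$ is no-overbidding, since $\tilde\beta(v)\leq\beta(v)\leq v$. The whole argument rests on comparing the ``$\tilde\beta$-game'' winning probabilities $H_1(b,\tilde\beta_{-1};v)$ against the ``$\beta$-game'' ones $H_1(b,\beta_{-1};v)$, for which I can invoke that $\vec\beta$ is an \emph{exact} PBNE of the CCFPA (\cref{sec:canonical-equilibrium-MW}) and the concentration bound of~\cref{lemma:bounds-inverse-bidding}.

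The key estimate is the one-sided sandwich
\[
0\;\leq\; H_1(b,\tilde\beta_{-1};v)-H_1(b,\beta_{-1};v)\;\leq\;\gamma\tilde\varepsilon\qquad\text{for every bid }b.
\]
For the left inequality: $\tilde\beta(Y_1)\leq\beta(Y_1)$, so on the event $\{\beta(Y_1)<b\}$ bidder~$1$ wins outright in the $\tilde\beta$-game as well; hence $H_1(b,\tilde\beta_{-1};v)\geq\prob{\beta(Y_1)<b\fwh{X_1=v}}=H_1(b,\beta_{-1};v)$, where the last equality uses that $\beta$ is strictly increasing on $V_1$, so (as in the proof of~\cref{lemma:bounds-inverse-bidding}) $\beta(Y_1)$ has no atoms and ties in the $\beta$-game have probability zero. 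For the right inequality: bounding the tie term trivially by~$1$ gives $H_1(b,\tilde\beta_{-1};v)\leq\prob{\tilde\beta(Y_1)\leq b\fwh{X_1=v}}\leq\prob{\beta(Y_1)\leq b+\tilde\varepsilon\fwh{X_1=v}}$; subtracting $H_1(b,\beta_{-1};v)=\prob{\beta(Y_1)\leq b\fwh{X_1=v}}$ and applying~\cref{lemma:bounds-inverse-bidding} to the window $[b,\min\{b+\tilde\varepsilon,1\}]$ (of Lebesgue length at most $\tilde\varepsilon$, using $\beta(Y_1)\leq 1$) yields the bound. By the symmetric argument at the on-path bid, since $\tilde\beta(v)\geq\beta(v)-\tilde\varepsilon$ we have $\{\beta(Y_1)<\beta(v)-\tilde\varepsilon\}\subseteq\{\tilde\beta(Y_1)<\tilde\beta(v)\}$, and another application of~\cref{lemma:bounds-inverse-bidding} to $[\max\{\beta(v)-\tilde\varepsilon,0\},\beta(v)]$ gives $H_1(\tilde\beta(v),\tilde\beta_{-1};v)\geq H_1(\beta(v),\beta_{-1};v)-\gamma\tilde\varepsilon$.

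The conclusion is then bookkeeping, using $0\leq v-b\leq 1$ for the only deviations that matter; indeed any $b>v$ yields utility $\leq 0\leq u_1(\tilde\beta(v),\tilde\beta_{-1};v)$ (the latter being nonnegative as $\tilde\beta(v)\leq v$), so I may restrict to $b\leq v$. Let $E\coloneqq u_1(\beta(v),\beta_{-1};v)$ be bidder~$1$'s equilibrium utility. Multiplying the upper $H$-bound by $(v-b)\in[0,1]$ gives $u_1(b,\tilde\beta_{-1};v)\leq u_1(b,\beta_{-1};v)+\gamma\tilde\varepsilon\leq E+\gamma\tilde\varepsilon$, where the last step is the exact-PBNE property of $\vec\beta$. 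For the on-path bid, $v-\tilde\beta(v)\geq v-\beta(v)\geq 0$ together with the lower $H$-bound gives $u_1(\tilde\beta(v),\tilde\beta_{-1};v)\geq(v-\beta(v))H_1(\tilde\beta(v),\tilde\beta_{-1};v)\geq u_1(\beta(v),\beta_{-1};v)-\gamma\tilde\varepsilon=E-\gamma\tilde\varepsilon$. Combining, $u_1(\tilde\beta(v),\tilde\beta_{-1};v)\geq E-\gamma\tilde\varepsilon\geq u_1(b,\tilde\beta_{-1};v)-2\gamma\tilde\varepsilon$ for all $b$, which is exactly the claimed $2\gamma\tilde\varepsilon$-PBNE condition; the two values of $\gamma$ are inherited verbatim from the two regimes of~\cref{lemma:bounds-inverse-bidding}. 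The one genuinely delicate point is the tie-breaking caused by the possibly discontinuous (hence non-injective) strategy $\tilde\beta$, which breaks the clean ``bid just above the competition'' reasoning available for the continuous $\beta$; this is handled precisely because we only ever need \emph{one-sided} bounds on $H_1(\cdot,\tilde\beta_{-1};v)$, so a tie can be absorbed into the trivial slack ``$\leq 1$'' for the upper bound and discarded in favour of the strict-inequality event for the lower bound.
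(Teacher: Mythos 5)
Your proposal is correct and follows essentially the same route as the paper's proof: sandwiching the winning probability $H$ between the strict- and weak-inequality events, invoking the concentration bound of \cref{lemma:bounds-inverse-bidding} to show each of the on-path utility and the deviation utilities moves by at most $\gamma\tilde{\varepsilon}$, and then chaining through the exact PBNE property of $\beta$ to get the $2\gamma\tilde{\varepsilon}$ bound. The only cosmetic differences (restricting to $b\leq v$ up front, and handling the on-path bid via the event inclusion $\{\beta(Y_1)<\beta(v)-\tilde{\varepsilon}\}\subseteq\{\tilde{\beta}(Y_1)<\tilde{\beta}(v)\}$ rather than via $G_v(\beta(v))-G_v(\tilde{\beta}(v))$ directly) do not change the argument.
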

\begin{proof}
Throughout this proof, we use $H_v(b,\hat{\beta})$ to denote the probability of
a bidder winning the auction, conditioned on her true value being $v\in
V_1=\support{F_1}$, when she bids $b\in[0,1]$ while all other bidders play
according to the same (possibly discontinuous) nondecreasing bidding strategy
$\hat{\beta}:V_1\to[0,1]$. Also, we let $u_v(b,\hat{\beta}) =
(v-b)H_v(b,\hat{\beta})$ denote the corresponding (interim) expected utility of
the bidder.

We start by observing that, for the (possibly discontinuous) bidding strategy
$\tilde{\beta}$ in the statement of our lemma, we have the bounds
\begin{equation*}
    \prob{\tilde{\beta}(Y_1) < b\fwh{X_1=v}} 
    \leq H_v(b,\tilde{\beta}) 
    \leq \prob{\tilde{\beta}(Y_1) \leq b\fwh{X_1=v}}.
\end{equation*}
These are a direct consequence of the observations that (i) beating all other
bidders is a \emph{sufficient} condition for winning the item, while (ii)
bidding at least as high as the others is a \emph{necessary} condition for
winning.
Since $\beta-\tilde{\varepsilon}\leq \tilde{\beta} \leq \beta$ within the $V_1$
and $\support{G_v}\subseteq V_1$ (i.e., the support of $Y_1$ when conditioned on
$X_1=v$ is a subset of the value marginals' support), the above inequality gives
\begin{equation}
    \label{eq:bounds-H-CCFPA-general}
    G_v(b)=\prob{\beta(Y_1) < b\fwh{X_1=v}} 
    \leq H_v(b,\tilde{\beta}) 
    \leq \prob{\beta(Y_1) + \tilde{\varepsilon} \leq b\fwh{X_1=v}}=G_v(b+\tilde{\varepsilon}).
\end{equation}

Next, recall that the canonical equilibrium strategy $\beta$ is \emph{strictly}
increasing in the support $V_1$ and thus, since $\support{G_v}\subseteq
\support{F_1}$, random variable $\beta(Y_1)$ (when conditioned on $X_1=v$) is
also strictly increasing (in its support). Therefore, for any $b\in [0,1]$ we
can write
\begin{equation}
    \label{eq:bounds-H-CCFPA-continuous}
     H_v(b,\beta) 
    = \prob{\beta(Y_1) \leq b\fwh{X_1=v}}= G_v(b).
\end{equation}

We will now prove that the following two inequalities hold, for any $v\in V_1$:
\begin{align}
u_v(\beta(v),\beta) &\leq u_v(\tilde{\beta}(v),\tilde{\beta}) + \gamma\tilde{\varepsilon}, \label{eq:utilities-approx-densification-a} \\
\intertext{and}
 u_v(b,\beta) &\geq u_v(b,\tilde{\beta}) - \tilde{\varepsilon}, &&\text{for all}\;\; b\in [0,1]. \label{eq:utilities-approx-densification-b} 
\end{align}
For \eqref{eq:utilities-approx-densification-a}, first, we have:
\begin{align*}
    u_v(\beta(v),\beta) - u_v(\tilde{\beta}(v),\tilde{\beta}) 
    &= (v-\beta(v))H_v(\beta(v),\beta) - (v-\tilde{\beta}(v))H_v(\tilde{\beta}(v),\tilde{\beta})\\
    &\leq (v-\beta(v))H_v(\beta(v),\beta) - (v-\beta(v))H_v(\tilde{\beta}(v),\tilde{\beta}), &&\text{since $0\leq\tilde{\beta}\leq \beta$},\\
    &\leq (v-\beta(v))G_v(\beta(v)) - (v-\beta(v))G_v(\tilde{\beta}(v)), &&\text{from \eqref{eq:bounds-H-CCFPA-continuous} and \eqref{eq:bounds-H-CCFPA-general},}\\
    &= (v-\beta(v))[G_v(\beta(v)) - G_v(\tilde{\beta}(v))]\\
    &\leq (v-\beta(v))\cdot \gamma[\beta(v)-\tilde{\beta}(v)], &&\text{from~\cref{lemma:bounds-inverse-bidding}},\\
    &= \gamma\tilde{\varepsilon}, &&\text{since $0\leq v-\beta(v) \leq 1$, $\beta-\tilde{\varepsilon}\leq \tilde{\beta}$}.
\end{align*}

For~\eqref{eq:utilities-approx-densification-b}:
\begin{align*}
u_v(b,\tilde{\beta})- u_v(b,\beta) &= (v-b) H_v(b,\tilde{\beta}) - (v-b)H_v(b,\beta)\\
    &\leq (v-b) \left[ G_v(b+\tilde{\varepsilon}) - G_v(b)\right], &&\text{from \eqref{eq:bounds-H-CCFPA-general}, \eqref{eq:bounds-H-CCFPA-continuous},}\\
    &=(\beta(v)-b) \cdot \gamma\tilde{\varepsilon}, &&\text{from~\cref{lemma:bounds-inverse-bidding}},\\
    &\leq \gamma \tilde{\varepsilon} &&\text{since $\beta(v),b\in [0,1]$}.
\end{align*}

Now we are ready to establish that indeed $\tilde{\beta}$ is a
$2\gamma\tilde{\varepsilon}$-approximate symmetric PBNE. For any $v\in V_1$, and
any $b\in[0,1]$ it is:
\begin{equation*}
u_v(\tilde{\beta(v)},\tilde{\beta}) 
\overset{\eqref{eq:utilities-approx-densification-a}}{\geq} u_v(\beta(v),\beta) -\gamma\tilde{\varepsilon}
\geq u_v(b,\beta) -\gamma\tilde{\varepsilon}
\overset{\eqref{eq:utilities-approx-densification-b}}{\geq}
u_v(b,\tilde{\beta}) -\gamma\tilde{\varepsilon} -\gamma\tilde{\varepsilon},
\end{equation*}
where the middle inequality is due to the fact that $\beta$ is an (exact) PBNE.
\end{proof}

\begin{lemma} [Approximate Inverter]
    \label{lemma:approximate-inverter}
    Consider a CCFPA with $n$ bidders and let $\beta$ be its canonical PBNE (as
    described in~\cref{sec:canonical-equilibrium-MW}). Fix a (nonempty)
    \emph{finite}
    $\delta$-dense\footnote{See~\cref{def:denseness-bidding-space}.} subset of
    bids $B\subseteq [0,1]$. For any $\varepsilon>0$, we can compute (the
    standard, step-function representation of) a nondecreasing bidding strategy
    $\tilde{\beta}:[\underline{v},1] \to B$ which is a
    $(\delta+2\varepsilon)$-underapproximation of $\beta$, in time polynomial in
    $\log(1/\varepsilon)$ and the description of the auction.
\end{lemma}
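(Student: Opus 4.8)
The plan is to construct $\tilde{\beta}$ directly in its jump‑point representation, by approximately inverting the canonical equilibrium strategy $\beta$ of \cref{sec:canonical-equilibrium-MW} at the bids of $B$. I will rely on three facts already at hand: $\beta$ is continuous and nondecreasing on $[\underline{v},1]$, strictly increasing on the marginal support $V_1$, with $\beta(\underline{v})=\underline{v}$ and $\beta(v)\le v$ (no overbidding); by \cref{lemma:inverter-poly-time-oracle-beta-MW} the value $\beta(x)$ can be computed \emph{exactly} in time polynomial in the bit‑sizes of $x$ and of the auction; and by \cref{lemma:Lipschitz-bounds-beta-SAPV-bounded-IID}, $\beta$ is Lipschitz on $[\underline{v},1]$ with a constant $L$ whose binary representation is polynomial in the auction's description, so that $\log L=\poly(\text{input})$.

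Write $B=\{c_1<\dots<c_m\}$ with $c_1=0$, and pick $k:=\lceil\log_2(L/\varepsilon)\rceil$, which is $\poly(\text{input})+O(\log(1/\varepsilon))$, so that the dyadic grid $\mathcal{G}_k:=[\underline{v},1]\cap\{\underline{v}+i\cdot 2^{-k}:i\in\N\}$ has spacing at most $\varepsilon/L$. For each $j\in[m-1]$ I compute, by binary search over the sorted grid $\mathcal{G}_k$ using the exact evaluator of \cref{lemma:inverter-poly-time-oracle-beta-MW},
\[
s(c_j)\ :=\ \min\{\,g\in\mathcal{G}_k:\ \beta(g)\ge c_{j+1}\,\},
\]
with the conventions $s(c_j):=\underline{v}$ when $c_{j+1}\le\underline{v}$, $s(c_j):=1$ when $c_{j+1}>\beta(1)$ (no grid point then qualifies), $s(c_0):=\underline{v}$, and $s(c_m):=1$; ties at the finitely many jump points are broken from the left. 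This uses $O(|B|\cdot k)$ exact evaluations of $\beta$, hence runs in time $\poly(\text{input},\log(1/\varepsilon))$, and it outputs the step function $\tilde\beta:[\underline v,1]\to B$ that bids $c_j$ on $[s(c_{j-1}),s(c_j))$. It is nondecreasing: if $c_{j+1}<c_{j+2}$ then $\{g\in\mathcal G_k:\beta(g)\ge c_{j+2}\}\subseteq\{g\in\mathcal G_k:\beta(g)\ge c_{j+1}\}$, so $s(c_{j+1})\ge s(c_j)$.

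It remains to verify that $\tilde\beta$ is a $(\delta+2\varepsilon)$‑underapproximation of $\beta$ (\cref{def:near-bidding-strategies}). Fix $v\in\support{F_1}$ and let $c_j=\tilde\beta(v)$, i.e.\ $s(c_{j-1})\le v<s(c_j)$. From $v\ge s(c_{j-1})$ and the definition of $s(c_{j-1})$ we get $\beta(v)\ge\beta(s(c_{j-1}))\ge c_j=\tilde\beta(v)$, which gives $\tilde\beta(v)\le\beta(v)$ (and, since $\beta(v)\le v$, also shows $\tilde\beta$ is non‑overbidding). For the other direction, assume first $j<m$, and let $g(v)$ be the largest grid point with $g(v)\le v$; since $g(v)\le v<s(c_j)$, the point $g(v)$ fails the defining condition of $s(c_j)$, i.e.\ $\beta(g(v))<c_{j+1}$, and using that $\beta$ is $L$‑Lipschitz with $v-g(v)\le \varepsilon/L$,
\[
\beta(v)\ \le\ \beta(g(v))+L\cdot\tfrac{\varepsilon}{L}\ <\ c_{j+1}+\varepsilon\ \le\ c_j+\delta+\varepsilon\ \le\ c_j+\delta+2\varepsilon,
\]
where the step $c_{j+1}\le c_j+\delta$ uses that consecutive bids of a $\delta$‑dense set differ by at most $\delta$ (apply \cref{def:denseness-bidding-space} to the midpoint of their gap). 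The boundary case $j=m$ is even easier, since $\delta$‑density forces $c_m\ge 1-\delta\ge\beta(1)-\delta$, and $\tilde\beta(v)=c_m\le\beta(v)\le\beta(1)\le c_m+\delta$. Hence $\beta(v)-\tilde\beta(v)\le\delta+2\varepsilon$ in all cases, as required.

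I expect the main obstacle to be precisely the tension between value‑space and bid‑space accuracy: na\"ively, inverting $\beta$ to within $2^{-k}$ in the \emph{value} domain is cheap, but the underapproximation guarantee needs accuracy in the \emph{bid} domain, and $\beta$ could a priori be arbitrarily flat on parts of the support, so a constant‑precision inversion in value space would be useless. This is resolved by combining \cref{lemma:Lipschitz-bounds-beta-SAPV-bounded-IID} (which bounds $L$ by an at‑most‑exponential quantity, so $\log L$—and hence the number $k$ of binary‑search steps—stays polynomial) with \cref{lemma:inverter-poly-time-oracle-beta-MW} (which makes every comparison $\beta(g)\ge c_{j+1}$ exact, so no round‑off accumulates). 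Monotonicity of the jump points and the non‑overbidding property then come for free from the ``round up to where $\beta$ first reaches the bid'' choice in the construction, while using a common grid $\mathcal G_k$ for all bids is what keeps the jump points sorted.
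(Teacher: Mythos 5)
Your proposal is correct and takes essentially the same route as the paper's own proof: approximately invert the canonical strategy $\beta$ at each bid by binary search, using the exact evaluator of \cref{lemma:inverter-poly-time-oracle-beta-MW} together with the at-most-exponential Lipschitz constant from \cref{lemma:Lipschitz-bounds-beta-SAPV-bounded-IID} (so the number of search steps is $\poly(\text{input})+O(\log(1/\varepsilon))$), define $\tilde\beta$ as the step function with these jump points, and deduce the underapproximation from monotonicity plus $\delta$-density. One shared quibble: under \cref{def:denseness-bidding-space} your midpoint argument only gives consecutive-bid gaps at most $2\delta$ (not $\delta$), so read literally your bound becomes $2\delta+\varepsilon$ rather than $\delta+2\varepsilon$ --- but the paper's final display relies on the same ``consecutive bids are at most $\delta$ apart'' reading, so this is a looseness inherited from the paper rather than a gap introduced by your argument.
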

\begin{proof}
The most critical component of our proof, is establishing that we can
efficiently ``invert'' the canonical equilibrium strategy $\beta$. That is,
given an $\varepsilon>0$ and a possible bid $b\in\beta(V_1)$\footnote{Here we
are using our standard notation of $V_1=\support{F_1}$ denoting the support of
the distribution of the value marginals $F_1$. Recall also that $\underline{v}$
is the leftmost point of the support $V_1$. Finally, notice that, since $\beta$
is continuous in $[\underline{v},1]$ and constant in all intervals of
$[\underline{v},1]\setminus V_1$ that are outside the marginal's support
(see~\pageref{page:beta-absolutely-continuous}), it must be that
$\beta(V_1)=\beta([\underline{v},1])$.}, we can find, in time polynomial in
$\log(1/\varepsilon)$, a value $x\in [0,1]$ such that $\beta(x)$ is
$\varepsilon$-near the given bid $b$. We will achieve that, by performing binary
search in the feasible value domain $[\underline{v},1]$; for convenience, let's
also define $\xi\coloneq (1-\underline{v}) \leq 1$ for the rest of this proof.  

Indeed, given that $\beta$ is nondecreasing in $[\underline{v},1]$ and that,
by~\cref{lemma:inverter-poly-time-oracle-beta-MW}, we have an efficient oracle
for the values of $\beta$, by performing $k$ steps of such binary search,
$k=1,2,3,...$, we can construct a sequence $y_0(b),y_1(b),\dots,y_k(b)$, with
$y_0(b)=\xi/2$, with the property that there is guaranteed to exist a value
$x\in[y_k(b)-\xi/2^{k+1},y_k(b)+\xi/2^{k+1}]$ such that $b\leq \beta(x)\leq
b+\varepsilon$. Thus, if we choose the rightmost point of  this interval, 
$$
s_k(b;k) \coloneqq y_k(b)-\xi/2^{k+1},
$$
due to the monotonicity and the Lipschitz continuity of $\beta$, established
in~\cref{lemma:Lipschitz-bounds-beta-SAPV-bounded-IID}, we can guarantee that 
\begin{align}
   \beta(s(b;k)) -b &\geq \beta(x) -b \geq 0 \label{eq:upper-bound-binary-search-1}\\
\intertext{and}
\beta(s(b;k)) - b
 &\leq \card{\beta(s(b;k))-\beta(x)}+ \card{\beta(x)-b}
\leq L_{\beta} \card{s(b;k)-x} + \varepsilon
\leq L_{\beta} \frac{\xi}{2^k} + \varepsilon, \label{eq:upper-bound-binary-search-2}
\end{align}
where $L_\beta$ is the Lipschitz constant of $\beta$. Recall that,
from~\cref{lemma:Lipschitz-bounds-beta-SAPV-bounded-IID}, the value of $L_\beta$
is at most exponential in the (binary) representation of our auction. Thus, by
choosing a sufficiently large, but still polynomial in the size of our input and
$\log(1/\varepsilon)$, number of steps $k=k_{\varepsilon}$, we can make the
upper bound in~\cref{eq:upper-bound-binary-search-2} to be at most
$2\varepsilon$. In the following let's simply denote by $s(b)\coloneqq
s(b;k_{\varepsilon})$ such an (efficiently computable) value that achieves this
bound (for a sufficiently large $k_{\varepsilon}$); we will refer to this
function $s:\beta(V_1)\map [0,1]$ as \emph{approximate inverter}.

Now let $B=\ssets{0=b_0<b_1<b_2<\dots<b_m}$ be the bids of our finite,
$\delta$-dense bidding set $B$. Using the above efficient algorithm, we can
compute all values $s(b)$, for all bids $b\in B\inters\beta(V_1)$ of our
restricted bidding space $B$, in order to define a step-function bidding
strategy $\tilde{\beta}:[\underline{v},1]\to B$ by using these $s(b)$'s as break
points; that is, under $\tilde{\beta}$, a player switches to bid $b\in B$ when
she reaches value $s(b)$ (recall our bid-function output representation
from~\cref{sec:CFPA-represent}). 

More precisely, let $J\coloneqq\sset{j\in[m]\fwh{b_j\in b(V_1)}}$ be the indices
of all (non-zero probability) positive bids. Notice that, due to the continuity
of the canonical equilibrium strategy $\beta$, $J$ is a set of
\emph{consecutive} integers; let $\underline{m}\coloneqq \min J$ and
$\overline{m}\coloneqq \max J$, i.e., $J=[\underline{m},\overline{m}]$.  For a
$j\in J$, for simplicity we denote $s_j\coloneqq s(b_j)$, where $s$ is the
aforementioned, efficiently computable, approximate inverter function. Then, the
bidding strategy $\tilde{\beta}$ is formally defined by
\begin{equation}
    \label{eq:def-underapproximation-inverter}
\tilde{\beta}(x)= 
\begin{cases}
b_{\underline{m}-1}, & \text{if}\;\; x\in [\underline{v},s_{\underline{m}}],\\
b_{j}, & \text{if}\;\; x\in (s_j,s_{j+1}],\;\; \underline{m}\leq j <\overline{m}, \\
b_{\overline{m}}, & \text{if}\;\; x\in [s_{\overline{m}},1].
\end{cases}
\end{equation}

Next, we argue that $\tilde{\beta}$ is always below the canonical equilibrium
strategy. Indeed, take for example a value $x\in (s_j,s_{j+1}]$ with
$\underline{m}\leq j <\overline{m}$ (the remaining cases of the definition of
$\tilde{\beta}$ in~\eqref{eq:def-underapproximation-inverter} can be handled
similarly). Then 
$$\tilde{\beta}(x)=b_j \leq \beta(s(b_j)) = \beta(s_j) \leq \beta (x),$$ where
the first inequality is due to~\eqref{eq:upper-bound-binary-search-1} and the
second inequality is due to the monotonicity of $\beta$.

Finally, to argue that $\tilde{\beta}$ is a $(\delta+2\varepsilon)$-near the
original bidding strategy $\beta$, we first observe that, since $\tilde{\beta}$
is a step-function, its maximum distance to $\beta$ is realized on the jump
points $\sset{s_j}_{j\in J}$, or at the rightmost point of our $[0,1]$. Also,
since the set $B$ is $\delta$-dense in $[0,1]$, it is also $\delta$-dense in
$\beta([\underline{v},1])\subseteq[0,1]$. Therefore, the distance of the two
functions can be upper-bounded by
$$
\sup_{x\in [\underline{v},1]}\cards{\tilde{\beta}(x)-\beta(x)}
\leq \delta+ \max_{j\in J}\cards{b_j-\beta(s_j)}
\leq \delta + \max_{b\in B}\cards{b-\beta(s(b))}
\leq 2\varepsilon+\delta,
$$
where the last inequality is due to the definition of our approximate inverter
and its approximation upper bound in~\eqref{eq:upper-bound-binary-search-2}.
\end{proof}

\bigskip
We can now put all the pieces together, and prove the main result of this section:
\begin{proof}[Proof of~\cref{th:approximate-PBNE-CFPA-to-CCFPA}]
    Let $B\subseteq$ be the $\delta$-dense bidding set of the CFPA. We first
    use~\cref{lemma:approximate-inverter} in order to construct, in
    $\poly(\log(1/\varepsilon))$ time, a nondecreasing bidding strategy
    $\tilde{\beta}:[0,1]\to B$ which is
    $(\delta+2\varepsilon)$-underapproximation of the canonical equilibrium
    strategy $\beta$ (as described in~\cref{sec:canonical-equilibrium-MW}) of
    the CCFPA extension of our auction (from $B$ to $[0,1]$). Clearly, since
    $\beta$ is no-overbidding, $\tilde{\beta}$ is no-overbidding as well.
    Applying~\cref{lemma:approx-equilibrium-discretizing-bidding-CCFPA} with
    $\tilde{\varepsilon}\gets \delta+2\varepsilon$, gives us our desired
    approximation parameters.
\end{proof}
\begin{remark}
\label{remark:discrete-approx-CCFPA}
    Although our main result of this section, namely
    \cref{th:approximate-PBNE-CFPA-to-CCFPA}, refers to the computability of
    approximate equilibria for the CFPA (i.e., for a discrete bidding space
    $B$), one can use our intermediate tools within its proof, to derive some
    interesting approximation results for continuous bids as well, i.e., for the
    CCFPA. More precisely, our ``approximate inverter''
    \cref{lemma:approximate-inverter} essentially uses the discrete bidding
    space $B$ just to define a piecewise-constant bidding function which
    approximates the \cite{MW82} equilibrium
    formula~\eqref{eq:MW-CCFPA-PBNE-beta}; this is still a ``legitimate''
    strategy for the extended CCFPA that allows bidding in the entire
    interval [0,1], just restricted on $B$. Then, we
    deploy~\cref{lemma:approx-equilibrium-discretizing-bidding-CCFPA} to argue
    that this step-function constitutes an approximate BNE of the
    CFPA with continuous bids in $[0,1]$.

    In our paper, since we are studying the discrete bid setting, we \emph{must}
    use the fixed discrete bidding space $B$ that is given to us as input.
    However, if we are studying the continuous bid setting instead, we can
    choose any discretization (e.g., equidistant bids with granularity $\delta$)
    that we desire, and obtain a polynomial-time algorithm for CCFPA, with
    running time that will depend on $\delta$.
\end{remark}

\section{Discussion and Future Work}\label{sec:conclusion}

In this work, we have made significant progress in understanding the complexity
of computing equilibria in the first-price auction when the values of the
bidders are correlated. We firmly believe that our results bring us a step
closer to the ``holy grail'' of this literature, namely an answer about the
complexity of the auction with IPV. Below we state some perhaps more tangible
open problems that are directly associated with our work. 

The first interesting direction is to consider the computational complexity of
computing MBNE of the DFPA, even non-monotone ones. The existence results
presented in \cref{sec:existence} imply that this is a \emph{total search
problem}, and hence the appropriate candidate classes for its complexity would
be the subclasses of TFNP. While the corresponding problem with subjective
priors is PPAD-complete \citep{fghk24}, in the case of correlated values, the
true complexity of the problem could even lie somewhere lower in the TFNP
hierarchy. We state the associated open problem below.

\begin{opproblem}
What is the computational complexity of computing MBNE of the DFPA?
\end{opproblem}
\noindent One could also ask similar questions about monotone PBNE of the CFPA
or MBNE of the DFPA for affiliated values, but these will naturally be harder to
prove.

The second interesting open problem that stems from our work is whether we can
extend the results obtained via our bid densification technique to obtain
approximation algorithms for the MBNE of the DFPA as well. While our
\cref{lem:discrete-to-continuous-and-back-for-existence} seems like a very
useful tool to achieve that, the continuous distribution that it generates has
parts with zero density, and hence \cref{th:approximate-PBNE-CFPA-to-CCFPA} does
not apply. As we mentioned in \cref{sec:existence}, smoothening the density will
violate the affiliation condition, so it seems that the only way to deal with
this is to extend \cref{th:approximate-PBNE-CFPA-to-CCFPA} to distributions that
do not have full support. In fact, most of the machinery that we develop in
\cref{sec:densification} is already capable of achieving that, but our attempts
of generalization fell short on being able to bound certain quantities that have
to do with the value distribution. We believe that the desired generalization
should be possible, but that it would require rather involved and diligent
technical work. We state the second open problem below.

\begin{opproblem}
Can we extend the algorithm of \cref{th:approximate-PBNE-CFPA-to-CCFPA} to apply
to SAPV settings without full support, and as a result also obtain a similar
approximation algorithm for the MBNE of the DFPA as well?
\end{opproblem}

\clearpage
\appendix

\addcontentsline{toc}{section}{Appendix}
\section*{Appendix}

\section{Proof of \texorpdfstring{\cref{lem:efficient-computation}}{Proposition
\ref*{lem:efficient-computation}}}\label{app:utilities} In this section of the
appendix, we prove \cref{lem:efficient-computation}, namely the efficient
computation of the utilities.

\subsection{Efficient Utility Computation in Discrete First-Price
Auctions}\label{app:dfpa-utility} In this section, we provide the proof of
\cref{lem:efficient-computation} in the DFPA setting.

\begin{lemma}\label{lem:app-efficient-utilities} Fix a DFPA with correlated
    priors. For any bidder $i \in N$, any value $v_i \in V_i$, and any mixed
    strategy profile of the other bidders $\vec{\beta_{-i}}$, the utility of
    $i$, $u_i(\vec{\gamma},\vec{\beta_{-i}};v_i)$, when playing some
    distribution $\vec{\gamma}$ over her bids, is computable in polynomial time.
    Additionally, the utility is still efficiently computable in the $k$-GSAPV (in particular, SAPV)
    setting, where the joint prior is succinctly represented, when $\vec{\beta}_{-i}$ is symmetric.
\end{lemma}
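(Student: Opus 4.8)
I would reduce the whole statement to computing, for each bidder $i$, each value $v_i$, and each of the $\cards{B}$ possible pure bids $b$, the quantity $u_i(b,\vec{\beta}_{-i};v_i)=(v_i-b)\cdot H_i(b,\vec{\beta}_{-i};v_i)$ of~\eqref{eq:utility}; once these are available, the randomized‑bid utility $u_i(\vec{\gamma},\vec{\beta}_{-i};v_i)$ follows from~\eqref{eq:DFPA-utility-interim-mixed-randomized-bidding} as a weighted sum of only $\cards{B}$ terms. Expanding~\eqref{eq:DFPA-utility-interim-mixed}, there are exactly two sources of exponential blow‑up to defeat: the inner sum over bid profiles $\vec{b}_{-i}\in B^{N\setminus\ssets{i}}$, and, in the succinctly represented symmetric case, the outer sum over value profiles $\vec{v}_{-i}\in\vec{V}_{-i}$.

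For the inner sum, fix a value profile $\vec{v}_{-i}$ and a bid $b$, and set $\Psi(b,\vec{v}_{-i}):=\sum_{\vec{b}_{-i}}\big(\prod_{j\neq i}\beta_j(v_j)(b_j)\big)\,\tilde{u}_i(b,\vec{b}_{-i};v_i)$. By~\eqref{eq:ex_post_utilities}, bidder $i$ wins precisely when every other bid is at most $b$, receiving $\tfrac{v_i-b}{1+r}$, where $r$ is the number of other bidders tying at $b$. Letting $q_j$ be the probability that bidder $j$ bids strictly below $b$ and $p_j$ the probability that she bids exactly $b$ (given her value $v_j$), a bidder bidding above $b$ contributes mass to neither $q_j$ nor $p_j$ and annihilates the term, so $\Psi(b,\vec{v}_{-i})=(v_i-b)\sum_{r\geq 0}\tfrac{c_r}{r+1}$, where $c_r$ is the coefficient of $x^r$ in $\prod_{j\neq i}(q_j+p_jx)$. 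This polynomial has degree at most $n-1$ and is a product of $n-1$ linear factors, so its coefficients — rationals of $\poly$‑bounded bit‑length — are obtained with $O(n^2)$ arithmetic operations. In the general correlated setting this already suffices: $f_{i\mid v_i}$ is supported on $\support{F}$, which is listed explicitly, so we enumerate the $\poly$‑many profiles $\vec{v}_{-i}$, compute $f_{i\mid v_i}(\vec{v}_{-i})$ via~\eqref{eq:discrete-conditional} (with $f_i(v_i)=\sum_{\vec{v}_{-i}}f(v_i,\vec{v}_{-i})$), and add up $f_{i\mid v_i}(\vec{v}_{-i})\,\Psi(b,\vec{v}_{-i})$.

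For the outer sum in the $k$-GSAPV (hence SAPV) case with the succinct representation and a symmetric $\vec{\beta}_{-i}$, the key point is that both $f$ and the map $\vec{v}_{-i}\mapsto\Psi(b,\vec{v}_{-i})$ are invariant under group‑preserving permutations of the coordinates: since all bidders within a group share a strategy, $\Psi(b,\vec{v}_{-i})$ depends on $\vec{v}_{-i}$ only through the multiset of (group, value) pairs it induces. Hence we group the outer sum by ``profile type''. Each listed tuple $\vec{t}^{\,j}\in\vec{V}_{\geq}$ represents one type, on which $f$ is constant; for each of the at most $\cards{V_i}$ values that can be placed at coordinate $i$ within $i$'s group, the number $N_j$ of group‑valid permutations of $\vec{t}^{\,j}$ that do so is a rational of $\poly$‑bounded bit‑length obtained by a multinomial count, and every such permutation produces the same $\Psi(b,\vec{v}_{-i})$, which we evaluate exactly as in the previous paragraph — except that now the identical linear factors $(q+px)$ (one per other bidder, collected according to how many bidders of each group hold each value) are raised to their multiplicities before the degree‑$(n-1)$ polynomial is expanded. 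Summing the resulting polynomially many contributions and dividing by $f_i(v_i)$ (itself such a sum) yields $u_i(b,\vec{\beta}_{-i};v_i)$, and hence $u_i(\vec{\gamma},\vec{\beta}_{-i};v_i)$, in polynomial time. The only genuine obstacle is deriving the closed form for $\Psi$ via this generating‑polynomial trick; the remainder is careful bookkeeping of conditional probabilities, profile types, and permutation counts.
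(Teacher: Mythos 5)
Your proposal is correct and follows essentially the same route as the paper: reduce to pure bids, decompose the winning probability by the number $r$ of bidders tying at $b$ with weight $\frac{1}{r+1}$, handle the general case by enumerating the explicitly listed support, and handle the $k$-GSAPV case by aggregating the outer sum over representative tuples $\vec{t}^j$ with group-valid permutation counts, using the symmetry of $\vec{\beta}_{-i}$ so the tie probabilities are invariant under such permutations. Your generating-polynomial expansion of $\prod_{j\neq i}(q_j+p_j x)$ is computationally identical to the paper's dynamic program for $T_n(b,n-1,r,\vec{v}_{-n})$, which builds exactly those coefficients one linear factor at a time.
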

\begin{proof}
    Without loss of generality, consider a reordering of the bidders such that
    we are computing the utility of the last one (bidder $n=\card{N}$). We can
    use the definition of a mixed strategy to express the utility of $n$ when
    picking a distribution $\vec{\gamma}$ over the bids as
    $u_n(\vec{\gamma},\vec{\beta_{-n}};v_n)=\sum_{b \in B}\vec{\gamma}(b)
    u_n(b,\vec{\beta_{-n}};v_n)$. Thus, for the left-hand side to be efficiently
    computable, it suffices to prove that all of the $\card{B}$ many summands
    are efficiently computable. By the definition of utility
    in~\eqref{eq:utility}, we can express the utility of bidder $n$ when playing
    a bid $b$ as
    $u_n(b,\vec{\beta_{-n}};v_n)=(v_n-b)H_{n}(b,\vec{\beta_{-n}};v_n)$. We now
    proceed to show how to efficiently compute $H_{n}$, by first expressing it
    as:
    \begin{equation}\label{eq:h-fcn-explicit}
        H_{n}(b,\vec{\beta_{-n}};v_n)=\sum_{r=0}^{n-1} \frac{1}{r+1} \cdot \sum_{\vec{v_{-n}}\in\support{F_{n\mid v_n}}} f_{n \mid v_n}(\vec{v_{-n}}) \cdot T_n(b,n-1,r,\vec{v_{-n}})
    \end{equation}
    where, for $0 \leq r \leq \ell \leq n-1$, $T_n(b,n-1,r,\vec{v_{-n}})$
    denotes the probability that exactly $r$ out of the remaining $n-1$ bidders
    bid exactly $b$ and all of the others bid strictly less, when the values of
    the bidders other than $n$ are $\vec{v_{-n}}$. The randomness here stems
    from the mixed strategies $\vec{\beta_{-n}}$ of the bidders.

    We can efficiently compute the support of $F_{n\mid v_n}$ by checking
    whether $v_n=t^j_n$ for each $\vec{t^j}$ in the representation of the
    distribution. Additionally, we can efficiently compute $f_{n \mid
    v_n}(\vec{v_{-n}})$ using the definition of the conditional distribution
    in~\eqref{eq:discrete-conditional}, as we can calculate the marginal
    $f_n(v_n)$ by summing over the points in the support of the distribution
    which we just computed. It remains to show how to efficiently compute each
    $T_n(b,n-1,r,\vec{v_{-n}})$. To make notation easier to follow, it is useful
    to define the following two quantities:
    \begin{equation}\label{eq:g-utility-dfpa}
        g_{j,b} \coloneq \beta_j(v_j)(b)
    \end{equation}
    \begin{equation}\label{eq:G-utility-dfpa}
        G_{j,b} \coloneq \sum_{\substack{b'\in B, \\ b'<b}} g_{j,b'}
    \end{equation}
    where, for any bidder $j$ (who must have value $v_j$
    from~\eqref{eq:h-fcn-explicit}), $g_{j,b}$ represents the probability $j$
    picks bid $b$ and $G_{j,b}$ the probability that she picks some bid strictly
    smaller than $b$, always conditioned on the fact that bidder $n$ has value
    $v_n$. We can then express $T_n(b,n-1,r,\vec{v_{-n}})$ as:
    \begin{equation}\label{eq:winning-prob-explicit}
        T_n(b,n-1,r,\vec{v_{-n}}) = \sum_{\substack{S \subseteq [n-1], \\ \card{S}=r}} \prod_{s \in S} g_{s,b} \prod_{s \in [n-1]\setminus S} G_{s,b}
    \end{equation}

    Notice that in~\eqref{eq:winning-prob-explicit} we are summing over all
    possible subsets of $[n-1]$ of size $r$, which are exponentially many.
    Hence, this is not an efficient way to compute the utility. Instead, we
    proceed with a dynamic programming approach, of similar nature to previous
    work in \cite{fghk24}. We define the DP as follows:
    \begin{align*}
    	T_n(b,0,0,\vec{v_{-n}}) &=1;              &\\
    	T_n(b,\ell,k,\vec{v_{-n}}) &=0,&\quad\text{for}\;\; k>\ell;\\
    	T_n(b,\ell+1,0,\vec{v_{-n}}) &=T_n(b,\ell,0,\vec{v_{-n}})G_{\ell+1,b};&\\
    	T_n(b,\ell+1,k+1,\vec{v_{-n}}) &=T_n(b,\ell,k,\vec{v_{-n}})g_{\ell+1,b} + T_n(b,\ell,k+1,\vec{v_{-n}})G_{\ell+1,b}; &\text{for}\;\; k\leq\ell.
    \end{align*}

    Using this, we can compute any $T_n(b,n-1,k,\vec{v_{-n}})$ for $k \in
    0,1,\ldots,n-1$ with a total number of $O(n^2)$ recursive calls. Therefore,
    following the previous steps in this proof we see that
    $u_n(\vec{\gamma},\vec{\beta_{-n}};v_n)$ is computable in polynomial time.

    We now move to the $k$-GSAPV setting.
    Let $\mathcal{R}=\{(\vec{t^1},p_1),\ldots,(\vec{t^\ell},p_\ell)\}$ be the succinct representation of $F$, as described in \cref{sec:representation-dfpa}.
    Let $\vec{t^j_{-n}}$ be the vector corresponding from removing the first entry (among the ones that correspond to bidder $n$'s group) that matches $v_n$ in $\vec{t^j}$.
    Using the properties of the representation, we can express the conditional distribution at some point $\vec{t^j_{-n}}$ of the support as:
    \begin{equation}\label{eq:app-conditional-representative}
        f_{n \mid v_n}(\vec{t^j_{-n}}) = \frac{p_j}{f_n(v_n)}
    \end{equation}
    where $f_n(v_n)$ denotes the marginal distribution of bidder $n$ evaluated at $v_n$. Notice
    that the marginal distribution can be efficiently computed using the
    properties of the succinct representation by appropriately summing over all
    elements in its support where $v_n$ appears in one of the entries corresponding to $n$'s group. 
    We will then rewrite~\eqref{eq:h-fcn-explicit} as:
    \begin{equation}
        H_{n}(b,\vec{\beta_{-n}};v_n)=\sum_{r=0}^{n-1} \frac{1}{r+1} \cdot \sum_{\substack{(\vec{t^j},p_j)\in \mathcal{R}: \\\vec{t^j_{-n}}\in \support{F_{n \mid v_n}} }} m_j \cdot f_{n \mid v_n}(\vec{t^j_{-n}}) \cdot T_n(b,n-1,r,\vec{t^j_{-n}}) 
    \end{equation} 
    where $m_j$ counts the number of group-valid permutations.
    To compute $m_j$, assume, without loss of generality, that the bidder $n$ belongs in the last group (the $k$-th). Then, $m_j = n_1! n_2!\ldots n_{k-1}!(n_k-1)!$.

    In the above, we used the fact that, due to the symmetry of the bidding strategies $\vec{\beta_{-n}}$, $T_n(b,n-1,r,\vec{v})=T_n(b,n-1,r,\vec{v'})$ for any $\vec{v'}$ that is a group-valid permutation of $\vec{v}$.
    The result that the computation of the utilities takes polynomial time comes from the following facts:
    \begin{itemize}
        \item [-] We can check whether $\vec{t^j_{-n}}\in \support{F_{n \mid v_n}}$ in polynomial time, by simply checking whether $v_n$ appears in $\vec{t^j}$ in the entries corresponding to bidder $n$'s group.
        \item [-] We can efficiently compute $T_n(b,n-1,r,\vec{t^j_{-n}})$ using the same dynamic programming approach we presented in the general correlated setting.
        \item [-] We can efficiently compute $m_j \cdot f_{n \mid v_n}(\vec{t^j_{-n}})$, using the definition of $m_j$ from \cref{sec:representation-dfpa} and \cref{eq:app-conditional-representative}. 
    \end{itemize}
\end{proof}

\subsection{Efficient Utility Computation in Continuous First-Price
Auctions}\label{app:cfpa-utility} We will now show that we can also efficiently
compute the utilities in the CFPA, given the representation of
\cref{sec:CFPA-represent}:

\begin{lemma}\label{lem:app-cfpa-efficient-utilities} Fix a CFPA with correlated
    priors. For any bidder $i \in N$, any value $v_i \in V_i$, and any (pure)
    strategy profile of the other bidders $\vec{\beta_{-i}}$, the utility of
    $i$, $u_i(b,\vec{\beta_{-i}};v_i)$, when bidding $b$, is computable in
    polynomial time. Additionally, the utility is still efficiently computable
    in the $k$-GSAPV (in particular, SAPV) setting, where the prior is succinctly represented, when $\vec{\beta}_{-i}$ is symmetric.
\end{lemma}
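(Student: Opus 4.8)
The plan is to follow the blueprint of the proof of~\cref{lem:app-efficient-utilities} (the DFPA case), the only genuinely new ingredient being that the sum over the value space is now an integral, which I will evaluate in closed form by exploiting two structural facts: the prior density is piecewise constant (a weighted sum of hyperrectangle indicators), and every pure strategy $\beta_s$ of an opponent is represented by its jump points $\{s_s(b)\}_{b\in B}$ (see~\cref{sec:CFPA-represent}) and is hence a non-decreasing step function. Reordering so that we compute the utility of bidder $n$, it suffices by~\eqref{eq:utility} to compute the winning probability $H_n(b,\vec\beta_{-n};v_n)$ in polynomial time, since $u_n(b,\vec\beta_{-n};v_n)=(v_n-b)\,H_n(b,\vec\beta_{-n};v_n)$.

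First I would handle the general correlated case. Writing $f(\vec v)=\sum_{j=1}^{\ell}w_j\,\mathbbm 1_{R^j}(\vec v)$ with $R^j=[a^j_1,b^j_1]\times\dots\times[a^j_n,b^j_n]$, note that for each opponent $s$ the jump points split $[0,1]$ into rational intervals, so for any rectangle $R^j$ and bid $b\in B$ I can compute in polynomial time the lengths
\begin{equation}
    \label{eq:g-utility-cfpa}
    g_{s,b} \coloneqq \text{length of the interval } [a^j_s,b^j_s]\cap\{v:\beta_s(v)=b\},
\end{equation}
\begin{equation}
    \label{eq:G-utility-cfpa}
    G_{s,b} \coloneqq \text{length of the interval } [a^j_s,b^j_s]\cap\{v:\beta_s(v)<b\}.
\end{equation}
The conditional density $f_{n\mid v_n}(\vec v_{-n})=f(v_n,\vec v_{-n})/f_n(v_n)$ is, up to the normalising constant $f_n(v_n)=\sum_{j:\,v_n\in[a^j_n,b^j_n]}w_j\prod_{s\neq n}(b^j_s-a^j_s)$ (itself computable in polynomial time), a sum of box indicators. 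Decomposing the winning event according to the number $r$ of the remaining $n-1$ bidders who tie bidder $n$'s bid $b$ (the other $n-1-r$ bidding strictly below $b$), and using that the integral of a box indicator factorises coordinate by coordinate, I obtain
\begin{equation}
    \label{eq:cfpa-utility-h-via-repr}
    H_n(b,\vec\beta_{-n};v_n) = \frac{1}{f_n(v_n)}\sum_{\substack{j\in[\ell]:\\ v_n\in[a^j_n,b^j_n]}} w_j \sum_{r=0}^{n-1}\frac{1}{r+1}\,T_n(b,n-1,r,R^j),
\end{equation}
where $T_n(b,n-1,r,R^j)=\sum_{S\subseteq[n-1],\,|S|=r}\prod_{s\in S}g_{s,b}\prod_{s\in[n-1]\setminus S}G_{s,b}$ is exactly the elementary-symmetric-polynomial quantity appearing in~\cref{lem:app-efficient-utilities}; hence the same $O(n^2)$ dynamic program evaluates $T_n(b,n-1,r,R^j)$ for all $r$, and since $\ell,n$ are polynomial in the input, $H_n$ and thus $u_n$ are computed in polynomial time.

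Next I would treat the $k$-GSAPV case, where the density is given succinctly by~\eqref{eq:represent-CFPA-density-permutations-group} as $\sum_jw_j\sum_{\pi\in p(n,\vec g)}\mathbbm 1_{\pi(R^j)}$; here the number of group-valid permutations can be exponential in $n$, so the sum cannot be expanded directly. The key point is that $\vec\beta_{-n}$ is symmetric, so the integrand depends on a permutation $\pi$ only through (i) which of the group-$\ell$ intervals of $R^j$ (where $\ell$ is $n$'s group) is placed at position $n$ --- which must contain $v_n$ for the term to survive --- and (ii) the multiset of intervals assigned to each group; all within-group reorderings produce the same integral and therefore contribute only a fixed factorial multiplicity to the weight (computable exactly as in~\cref{sec:representation-dfpa}). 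Thus for each of the at most $n_\ell$ group-$\ell$ intervals of $R^j$ containing $v_n$ I get a contribution of the form~\eqref{eq:cfpa-utility-h-via-repr} in which the tie count $T_n$ is assembled group by group: within a group I run the $O(n^2)$ dynamic program over its bidders (all sharing one strategy, each carrying one of the intervals that $R^j$ contributes to that group), and then I combine the groups by convolving the per-group tie counts; when the within-group intervals of $R^j$ coincide this collapses to the binomial form used in~\cref{sec:sparsification}. Everything here is polynomial in the input, so $u_n$ is again computable in polynomial time.

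I expect the main obstacle to be the $k$-GSAPV step: collapsing the exponentially large permutation sum requires the symmetry reduction above, and I will have to be careful that the step-function strategies, the rectangle boundaries, and the conditioning on $X_n=v_n$ interact correctly when forming the interval intersections in~\eqref{eq:g-utility-cfpa}--\eqref{eq:G-utility-cfpa} and when bookkeeping the factorial multiplicities.
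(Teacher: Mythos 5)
Your proposal is correct and follows essentially the same route as the paper's proof: reduce to the winning probability $H_n$, decompose by the number of ties, exploit the piecewise-constant hyperrectangle density so that the conditional integral factorises into per-opponent interval lengths computable from the jump points, evaluate the resulting symmetric-polynomial tie probabilities with the same $O(n^2)$ dynamic program, and in the $k$-GSAPV case use the symmetry of $\vec{\beta}_{-i}$ to collapse the exponentially many group-valid permutations to one representative per rectangle with a computable factorial multiplicity. Your length-valued $g_{s,b},G_{s,b}$ and the group-by-group convolution are just a more explicit rendering of the computation the paper carries out (it runs one dynamic program over all $n-1$ opponents and writes $T_n$ directly with rectangle arguments), so the differences are organisational rather than substantive.
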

\begin{proof}
   Similarly to \cref{lem:app-efficient-utilities}, we begin by considering,
   without loss of generality, a reordering of the bidders such that we are
   computing the utility of the last one (bidder $n=\card{N}$). Using the
   definition of utility in~\eqref{eq:utility}, we have
   $u_n(b,\vec{\beta_{-n}};v_n)=(v_n-b)H_{n}(b,\vec{\beta_{-n}};v_n)$, so it
   suffices to show that the $H$ function is efficiently computable. We can
   express this as:
   \begin{equation}\label{eq:app-cfpa-h-fcn}
    H_{n}(b,\vec{\beta_{-n}};v_n)=\sum_{r=0}^{n-1}\frac{1}{r+1}\int_{\vec{v}_{-n}\in \vec{V}_{-n}}f_{n|v_n}(\vec{v_{-n}}) T_n(b,n-1,r,\vec{v_{-n}})\, \mathrm{d}\vec{v_{-n}}
   \end{equation}
   where we define $T_n(b,n-1,r,\vec{v_{-n}})$ to indicate whether exactly $r$
   out of the remaining $n-1$ bidders bid exactly $b$ and all the others bid
   strictly less, when the values of the bidders other than $n$ are
   $\vec{v_{-n}}$. Firstly, $T_n(b,n-1,r,\vec{v_{-n}})$ is easy to compute using
   the same dynamic approach as in \cref{lem:app-efficient-utilities}, with the
   only difference of redefining~\eqref{eq:g-utility-dfpa}
   and~\eqref{eq:G-utility-dfpa} to:

   \begin{equation}\label{eq:g-utility-cfpa}
    g_{j,b} \coloneq \mathbbm{1}\left[\beta_j(v_j)=b\right]
    \end{equation}
    \begin{equation}\label{eq:G-utility-cfpa}
        G_{j,b} \coloneq \mathbbm{1}\left[\beta_j(v_j)<b\right]
    \end{equation}

    It remains to show how to compute the integral. Notice that, given the
    representation of the CFPA, we can replace the integral by a sum over the
    support of $f_{n\mid v_n}$:
   \begin{equation}\label{eq:cfpa-utility-h-via-repr}
    H_{n}(b,\vec{\beta_{-n}};v_n)=\sum_{r=0}^{n-1}\frac{1}{r+1} \cdot \sum_{\vec{v_{-n}}\in \support{F_{n \mid v_n}}} f_{n|v_n}(\vec{v_{-n}}) T_n(b,n-1,r,\vec{v_{-n}})
   \end{equation}

   We can efficiently compute the support of $F_{n \mid v_n}$ by checking
   whether $v_n \in \vec{R^j_n}$ for each hyperrectangle $\vec{R^j}$, adding
   $\vec{R^j_{-n}}$ to the support if that is the case. For the computation of
   $f_{n|v_n}(\vec{v_{-n}})$, notice that is suffices to be able to compute the
   marginal $f_n(v_n)$, which we can efficiently do by summing over all the
   elements in the support of $F_{n \mid v_n}$, which we just computed. Hence,
   there are polynomially many (to the size of the input) summands, each of
   which can be efficiently computed, which concludes the proof for general
   correlated values.

    Moving to the $k$-GSAPV setting, let
    $\mathcal{R}=\{(\vec{R^1},w_1),\ldots,(\vec{R^\ell},w_\ell)\}$ be the
    succinct representation of the joint distribution $F$. In this case, we can
    rewrite the $H$ functions that express the winning probability using the
    succinct representation as follows:
    \begin{equation}\label{eq:cfpa-utility-h-group-symmetric}
    H_{n}(b,\vec{\beta_{-n}};v_n)=\sum_{r=0}^{n-1} \frac{1}{r+1} \cdot \sum_{\substack{(\vec{R^j},w_j)\in \mathcal{R}: \\\vec{R^j_{-n}}\in \support{F_{n \mid v_n}} }} f_{n \mid v_n}(\vec{R^j_{-n}}) \cdot m_j \cdot T_n(b,n-1,r,\vec{R^j_{-n}}), 
    \end{equation} 
    where $m_j$ counts the number of valid permutations (with respect to the
    groups). To compute $m_j$, assume, without loss of generality, that the bidder $n$ belongs in the
    last group (the $k$-th). Then, $m_j = n_1! n_2!\ldots n_{k-1}!(n_k-1)!$.

    To derive the above, we have used the fact that the bidders are symmetric to
    guarantee that $T_n(b,n-1,r,\vec{v})=T_n(b,n-1,r,\vec{v'})$ for any
    $\vec{v'}$ that is a group-valid permutation of $\vec{v}$, and instead of summing over
    the whole support of the conditional, we have only summed over the elements
    that appear in the succinct representation. To prove that the RHS
    of~\eqref{eq:cfpa-utility-h-group-symmetric} (and thus the utility, when the
    other bidders play according to a symmetric strategy $\vec{\beta_{-n}}$) is
    efficiently computable, we show the following facts:
    \begin{itemize}[leftmargin=*]
        \item[-] First of all, notice that it is easy to check whether
        $\vec{R^j_{-n}}\in \support{F_{n \mid v_n}}$, by iterating through the
        intervals representing $\vec{R^j}$ and checking if $v_n$ is in any of
        them (out of the ones corresponding to bidder $n$'s group). 
        \item[-] $T_n(b,n-1,r,\vec{R^j_{-n}})$ is also efficiently computable,
        using the same dynamic approach as in the case of general correlated
        priors.
        \item[-] It remains to show that we can compute $f_{n \mid v_n}(\vec{R^j_{-n}})$ for every $\vec{R^j_{-n}}$.
        To see this, we will express it as follows:
        \begin{equation}\label{eq:cfpa-conditional-symmetric}
            f_{n \mid v_n}(\vec{R^j_{-n}}) = \frac{w_j}{f_n(v_n)} = \frac{w_j}{\int_{\vec{v_{-n}}\in \vec{V_{-n}}} f(v_n,\vec{v_{-n}})\, \mathrm{d}\vec{v_{-n}}} = \frac{w_j}{\sum_{j=1}^{\ell}w_j \cdot \sum_{\pi \in \perm{n_1,n_2,\ldots,n_k}} \mathbbm{1}_{\pi(\vec{R^j})}(v_n,\vec{v_{-n}})} 
        \end{equation}
        where the last step follows from the definition of the representation of
        the CFPA with SAPV. Finally, we need to reason that the denominator in
        the expression of the conditional is efficiently computable. Notice that
        if we naively try to compute it, there is an exponential blow up in the
        computation of all group-valid permutations. Instead, we will describe how to
        efficiently compute it using our succinct representation. To see this,
        notice that for every hyperrectangle $\vec{R^j}$ in the representation
        we can keep a count of the number of times each interval appears in the entries corresponding to $n$'s group.
        Then, we can also efficiently compute $S_n$, which we define to be the set of intervals of $\vec{R^j}$ that correspond to $n$'s group, in which $v_n$ is contained. 
        We can now express the sum over the valid permutations to be equal to $n_1!n_2!\dots n_{k-1}!(n_k-1)! \sum_{s \in S_n}c_s$, where we denote by $c_s$ the number of times interval $s$ appears in $\vec{R^j}$ and we have assumed, without loss of generality, that $n$ belongs to group $k$.
    \end{itemize}
    Using the above properties, we have demonstrated how to efficiently compute
    the RHS of~\eqref{eq:cfpa-utility-h-group-symmetric}, and therefore the
    utility $u_n(b,\vec{\beta_{-n}};v_n)$ for symmetric strategy profiles
    $\vec{\beta_{-n}}$.
\end{proof}

\section{Omitted Proofs from \texorpdfstring{\cref{sec:densification}}{Section \ref*{sec:densification}}}
\label{app:densification-proofs}

\subsection{Proof of~\texorpdfstring{\cref{lemma:inverter-poly-time-oracle-beta-MW}}{Lemma~\ref*{lemma:inverter-poly-time-oracle-beta-MW}}}

    We start with the IID setting. We will use the same notation for the IID
    value distribution representation, as we did in the proof of
    \cref{lemma:Lipschitz-bounds-beta-SAPV-bounded-IID} (see
    Page~\pageref{page:IID-represent-internal-lemma}). Recall that the density
    of the marginal distribution is given by $f_1(x)=p_{j}$ for all
    $x\in(a_{j-1},a_{j})$, $j\in[k]$, and thus its cdf, on any value $x\in
    [\underline{v},1]$, can be recursively computed, in polynomial time, by:
    \begin{equation}
        \label{eq:marginal-cdf-iid-recursive}
    F_1(x)=
    \begin{cases}
        xp_1,&\text{for}\;\; x\in[a_0,a_1],\\
        (x-a_{j-1})p_{j}+F(a_{j-1}), &\text{for}\;\; x\in[a_{j-1},a_{j}],\; j=2,3,\dots,k.
    \end{cases}
    \end{equation}
    In the following, it would also be convenient to have direct access to the
    intervals of $F_1$'s support, so we define $J^*\coloneqq
    \ssets{j\in[k]\fwh{p_k>0}}$ and $\bar{J}^*:=[k]\setminus J^*$. Note that
    sets $J^*$ and $\bar{J}^*$ can be computed in polynomial time from our
    representation of $F_1$.
    
    Now, to compute the canonical equilibrium strategy $\beta$, on any value
    $x\in[\underline{v},1]$, we first observe that by~\eqref{eq:MW-CCFPA-PBNE-beta-IID}:
    $$
    \beta(x) 
    = x-\int_{0}^x \frac{F_1^{n-1}(t)}{F_1^{n-1}(x)} \,\mathrm{d} t
    = x-\frac{1}{F_1^{n-1}(x)}\int_{0}^x F_1^{n-1}(t) \,\mathrm{d} t.
    $$
    Therefore, for any value $x\in V_1$ in the support of the marginal, i.e.,
    such that $x\in[a_\xi,a_{\xi+1}]$ with $\xi+1\in J^*$, we can compute the
    above integral as:
    \begin{align}
    \beta(x) &=x-\frac{1}{F_1^{n-1}(x)}\int_{\underline{v}}^x [F_1(t)]^{n-1} \,\mathrm{d} t \notag\\
        &=x-\frac{1}{F_1^{n-1}(x)}\left[\sum_{j=1}^\xi\int_{a_{j-1}}^{a_j} \left[F_1(a_{j-1})+(t-a_{j-1})p_j\right]^{n-1} \,\mathrm{d} t +\int_{a_{\xi}}^x \left[F(a_\xi)+(t-a_\xi)p_{\xi+1}\right]^{n-1} \,\mathrm{d} t\right] \notag\\
        &=x-\frac{1}{F_1^{n-1}(x)}\left[\sum_{j\in[\xi]\inters J^*}\int_{a_{j-1}}^{a_j} \left[F_1(a_{j-1})+(t-a_{j-1})p_j\right]^{n-1} \,\mathrm{d} t +\int_{a_{\xi}}^x \left[F_1(a_\xi)+(t-a_\xi)p_{\xi+1}\right]^{n-1} \,\mathrm{d} t\right. \notag\\
        &\qquad\qquad\qquad\left.+\sum_{j\in[\xi]\inters \bar{J}^*}\int_{a_{j-1}}^{a_j} \left[F_1(a_{j-1})+(t-a_{j-1})p_j\right]^{n-1} \,\mathrm{d} t \right] \notag\\
        &=x-\frac{1}{F_1^{n-1}(x)}\left[\sum_{j\in[\xi]\inters J^*}\frac{1}{np_{j}}\int_{a_{j-1}}^{a_j} \frac{\mathrm{d}}{\mathrm{d}t}\left[F_1(a_{j-1})+(t-a_{j-1})p_j\right]^{n} \,\mathrm{d} t\right. \notag\\
        &\qquad\qquad\qquad\left.+\frac{1}{np_{\xi+1}}\int_{a_{\xi}}^x \frac{\mathrm{d}}{\mathrm{d}t}\left[F_1(a_\xi)+(t-a_\xi)p_{\xi+1}\right]^{n} \,\mathrm{d} t+\sum_{j\in[\xi]\inters \bar{J}^*}\int_{a_{j-1}}^{a_j} F_1^{n-1}(a_{j-1}) \,\mathrm{d} t \right] \notag\\
        &=x-\frac{1}{F_1^{n-1}(x)}\left[\sum_{j\in[\xi]\inters J^*}\frac{1}{np_{j}}\left[\left(F_1(a_{j-1})+(a_j-a_{j-1})p_j\right)^{n}-F_1^n(a_{j-1})\right] \right.\notag\\ 
        &\qquad\qquad\qquad\left.+\frac{1}{np_{\xi+1}} \left[\left(F_1(a_{\xi})+(x-a_{\xi})p_{\xi+1}\right)^{n}-F_1^n(a_{\xi})\right] +\sum_{j\in[\xi]\inters \bar{J}^*}(a_j-a_{j-1}) F_1^{n-1}(a_{j-1}) \right] \notag\\
        &=x-\frac{1}{F_1^{n-1}(x)}\left[\sum_{j\in[\xi]\inters J^*}\frac{1}{np_{j}}\left[F_1^n(a_{j})-F_1^n(a_{j-1})\right] +\frac{1}{np_{\xi+1}} \left[F_1^n(x)-F_1^n(a_{\xi})\right] \right.\notag\\
        &\qquad\qquad\qquad\left.+\sum_{j\in[\xi]\inters \bar{J}^*}(a_j-a_{j-1}) F_1^{n-1}(a_{j-1}) \right] \label{eq:beta-explicit-iid-piecewise} 
    \end{align}
    Given that, by~\eqref{eq:marginal-cdf-iid-recursive}, we have
    polynomial-time oracle access to the values of the cdf $F_1$, it is not hard
    to see that the expression~\eqref{eq:beta-explicit-iid-piecewise} can be
    (exactly) computed in polynomial time as well (with respect to the binary
    representation of the input value $x$ and the auction's description).

    Finally, note that for the remaining values $x\in [\underline{v},1]\setminus
    V_1$, which are outside the marginal's support, computation is
    straightforward: since $\beta$ is constant in the intervals outside of
    $F_1$'s support, we can simply query the value of $\beta$ on the last point,
    before $x$, in the support. That is, we can compute (in polynomial time)
    index $\xi^*=\max\ssets{j\in J\fwh{a_j\leq x}}$ and then use
    $\beta(x)=\beta(a_{\xi^*})$, where $\beta(a_{\xi^*})$ can be computed, in
    polynomial time, as described above
    (in~\eqref{eq:beta-explicit-iid-piecewise}, by using $x\gets
    \alpha_{\xi^*}$).
    
    \bigskip
    We now move to the SAPV setting, under our standard assumption that the
    joint value $F$ has full support. Similar to our proof
    of~\cref{lemma:Lipschitz-bounds-beta-SAPV-bounded-IID}, we need to now use
    our hyperrectangle representation for $F$ (see~\cref{sec:CFPA-represent}).
    Our proof is split into a series of observations/steps, each of which we
    make sure that can be executed in polynomial time.
    \begin{enumerate}
    \item\label{item:SAPV-marginal-induced-representation} Due to the
    hyperrectangle representation of $F$ in our input, the marginal distribution
    $F_1$ is piecewise constant. That is, there exist pairs
    $\sset{(a_1,p_1),\dots,(a_k,p_k)}\in(0,1]^2$ with
    $0=a_0<a_1<a_2<\dots<a_{k-1}<a_k= 1$ and $\sum_{j=1}^kp_j=1$, such that the
    density $f_1$ of the marginal (which has full-support) is given by
    $f_1(x)=p_j$ if $x\in(a_{j-1},a_j)$ for $j\in[k]$. Furthermore, this
    representation can be constructed in polynomial time (with respect to the
    initial binary representation of $F$ in our input).
    
    \item\label{item:SAPV-marginal-order-statistics-induced-representation}
    Given a value $v\in [0,1]$, the distribution of the maximum value of all
    other players, namely $G_v$, has also piecewise constant density. Again, we
    can construct in polynomial time a list of value-density pairs
    $\sset{(c^v_1,q^v_1),\dots,(c^v_\ell,q^v_\ell)}\in(0,1]^2$ such that
    $g_v(y)=g^v_j$ if $y\in[c^v_{j-1},c^v_j]$. This can be done in two steps:
    \begin{enumerate}
    \item First, we find a succinct, hyperrectangle representation\footnote{That
    is, our standard representation for symmetric priors,
    see~\eqref{eq:represent-CFPA-density-permutations}
    in~\cref{sec:CFPA-represent}.} for the conditional distribution $F_{1|v}$ of
    all other bidders' values $\vec{v}_{-1}$, when the value of bidder $1$ (or
    any other bidder, due to symmetry) is fixed at $v$;
    see~\eqref{eq:cfpa-conditional-symmetric}, for more details about why this
    can be efficiently computed.
    \item Secondly, given a value $y\in [0,1]$, the computation of the density
    $g_v(y)$ boils down to identifying (up to symmetry) all the hyperrectangles
    in $F_{1|v}$'s representation from the first step, that are intersecting the
    outer faces of the hypercube $[0,y]^{n-1}$; i.e., we check for all
    hyperrectangles that, in any of their coordinates, include $y$, and we take
    the sum of their weight/densities, taking all their permutations into
    consideration (due to the symmetric representation;
    see~\eqref{eq:represent-CFPA-density-permutations}). The reason for this is
    that the corresponding cdf of the maximum order statistic $G_v(y)$ of
    $\vec{v}_{-1}$ is equal to the probability $\prob{Y_1 \leq y
    \fwh{X_1=y}}$.\footnote{This is conceptually similar to our arguments for
    the bodies $S_x$ in Page~\pageref{page:max-order-stat-(n-1)hypercubes} in
    the proof of~\cref{lemma:bounds-inverse-bidding}.} Finally, recall that, as
    we have argued multiple times within our proofs in~\cref{sec:densification},
    $\support{G_v}\subseteq \support{F_1}$, and therefore it is enough to only
    consider one value of $y$ for each interval $(a_{j-1},a_j)$ of $F_1$'s
    representation (see Step~\ref{item:SAPV-marginal-induced-representation}
    above). 
    \end{enumerate}
    Then, the corresponding cdf can also be efficiently computed, by
    observing that
    \begin{equation}
        \label{eq:oracle-beta-general-computation-G}
        G_v(y)=(y-c^v_{j-1})q_j^v+\sum_{l=1}^{j-1}(c^v_{l}-c^v_{l-1})q^v_l.
    \end{equation}
    \item\label{item:SAPV-marginal-order-statistics-constant-interval-induced-representation}
    Given a fixed value $y\in [0,1]$, functions $G_v(y), g_v(y)$ are constant,
    with respect to the conditional $v$, within the different intervals in the
    support of the marginal. That is, for any $j\in[k]$ and $v,v'\in
    [a_{j-1},a_j]$, it is $g_v(y)=g_{v'}(y)$ for all $y\in [0,1]$.
    \end{enumerate}

    Now we are ready to show how the values of the canonical bidding strategy
    $\beta$ (given in~\eqref{eq:MW-CCFPA-PBNE-beta}), can be (exactly) computed
    in polynomial time. Fix some $v\in [0,1]$ and let $v\in[a_{k_v-1},a_{k_v}]$
    be the interval of the marginal distribution representation (see
    Point~\ref{item:SAPV-marginal-induced-representation} above) in which it
    lies in. We will now show that function $L_v(y)$, given
    in~\eqref{eq:MW-CCFPA-PBNE-L}, is piecewise constant with respect to $y
    \in[0,v]$, across the intervals of the representation of $G_v$ (see
    Point~\ref{item:SAPV-marginal-order-statistics-induced-representation}
    above). Indeed $L_v(y)$ can be computed in the following recursive way:
    \begin{itemize}
        \item For $y\in[a_{k_v-1},v]$, it is
        \begin{align*}
            L_v(y)&= \exp\left(-\int_{y}^v\frac{g_t(t)}{G_t(t)}\,\mathrm{d}t\right)\\
            &= \exp\left(-\int_{y}^v\frac{g_{a_{k_v}}(t)}{G_{a_{k_v}}(t)}\,\mathrm{d}t\right), &&\text{due Point~\ref{item:SAPV-marginal-order-statistics-constant-interval-induced-representation} above, since}\;\; t\in[a_{k_v-1},a_{k_v}],\\
            &= \exp\left(-\int_{y}^v\left[\frac{\mathrm{d}}{\mathrm{d}t}\ln G_{a_{k_v}}(t)\right]\,\mathrm{d}t\right)\\
            &= \exp\left(\ln G_{a_{k_v}}(y)-\ln G_{a_{k_v}}(v)\right)\\
            &=\frac{G_{a_{k_v}}(y)}{G_{a_{k_v}}(v)},
        \end{align*}
        which can be computed in polynomial time, via~\eqref{eq:oracle-beta-general-computation-G}.
        \item For $y\in[a_{\kappa-1},a_{\kappa}]$, where $1\leq\kappa\leq k_v-1$, it is
        \begin{align*}
            L_v(y)&= \exp\left(-\int_{y}^{a_\kappa}\frac{g_t(t)}{G_t(t)}\,\mathrm{d}t -\int_{a_\kappa}^v\frac{g_t(t)}{G_t(t)}\,\mathrm{d}t\right)\\
            &= \exp\left(-\int_{y}^{a_\kappa}\frac{g_t(t)}{G_t(t)}\,\mathrm{d}t\right)\cdot\exp\left(-\int_{a_\kappa}^v\frac{g_t(t)}{G_t(t)}\,\mathrm{d}t\right)\\
            &= \exp\left(-\int_{y}^{a_\kappa}\frac{g_t(t)}{G_t(t)}\,\mathrm{d}t\right)\cdot L_v(a_\kappa)\\
            &= \exp\left(-\int_{y}^{a_\kappa}\frac{g_{a_\kappa}(t)}{G_{a_\kappa}(t)}\,\mathrm{d}t\right)\cdot L_v(a_\kappa),&&\text{since}\;\; t\in[a_{\kappa-1},a_\kappa],\\
            &= \frac{G_{a_\kappa}(y)}{G_{a_\kappa}(a_\kappa)}\cdot L_v(a_\kappa).
        \end{align*}
    \end{itemize}

\subsection{Proof of~\texorpdfstring{\cref{lemma:Lipschitz-bounds-beta-SAPV-bounded-IID}}{Lemma~\ref*{lemma:Lipschitz-bounds-beta-SAPV-bounded-IID}}}

Recall (see~\cref{sec:canonical-equilibrium-MW},
    page~\pageref{page:beta-absolutely-continuous}) that the canonical
    equilibrium strategy $\beta$ is absolutely continuous and almost everywhere
    differentiable (see~\eqref{eq:MW-beta-differential-equation-def}).
    Therefore, in order to bound its Lipschitz constant, it is enough to bound
    its derivative $\card{\beta'}$.\footnote{More precisely, here we are using
    the fact that, if for an absolutely continuous function $f:[a,b]\map\R$ it
    holds that $\card{f'}\leq c$ a.e. on $[a,b]$, for some $c>0$, then $\beta$
    is Lipschitz continuous with constant at most $c$. This is a direct
    consequence of the fundamental theorem of calculus: for any $x,y\in[a,b]$ we
    get that $\card{f(x)-f(y)}=\card{\int_x^y f'(t)\,\mathrm{d}t} \leq \int_x^y
    \card{f'(t)} \,\mathrm{d}t \leq \int_x^y c \,\mathrm{d}t = c\card{x-y}$.}
    Also recall that $\beta$ is nondecreasing (and thus, its derivative is
    nonnegative) and constant outside the marginals' support $V_1$. Therefore,
    it is enough to simply upper-bound $\beta'$ on $V_1$.

    We start first with the SAPV setting. Then, the value distribution is
    $(\priorlowerbound,\priorupperbound)$-bounded for $\priorlowerbound\geq
    w_{\min}$ and $\priorupperbound\leq \ell n! w_{\max}$
    (see~\eqref{eq:represent-CFPA-density-permutations}), where here we are
    using the parameters of the hyperrectangle representation
    from~\cref{sec:CFPA-represent}, with $w_{\min}\coloneqq\min_{j\in[\ell]}
    w_j$,  $w_{\max}\coloneqq\max_{j\in[\ell]} w_j$, and $\ell$ being the number
    of hyperrectangles in the input. Then, at any $y\in V_1$ we can bound
    $\beta$'s derivative by:
    \begin{align}
    \beta'(y) 
    &= \frac{g_y(y)}{G_y(y)}\int_{\underline{v}}^yL_y(t)\, \mathrm{d} t, &&\text{from~\eqref{eq:bounds-density-max-conditional-helper5},}\\
        &\leq \frac{g_y(y)}{G_y(y)}\int_{\underline{v}}^y 1\, \mathrm{d} t,&&\text{from Property~\ref{item:affiliation-property-measure-L} of~\cref{prop:affiliation-properties},}\label{eq:bounds-density-max-conditional-helper8}\\
        &\leq \frac{g_y(y)}{G_y(y)} y\notag\\
        &= \frac{g_y(y)}{\int_{0}^y g_y(t)\,\mathrm{d}t} y\notag, &&\text{since $G_y$ has full support $[0,1]$,}\\
        &\leq \frac{\frac{n-1}{f_1(y)}\priorupperbound y^{n-2}}{\int_{0}^y\frac{n-1}{f_1(y)}\priorlowerbound t^{n-2}}y\notag, &&\text{from~\eqref{eq:bounds-density-max-conditional-helper3},}\\
        &=\frac{\priorupperbound}{\priorlowerbound}\frac{(n-1)y^{n-2}}{y^{n-1}}y\notag\\
        &=(n-1)(\priorupperbound/\priorlowerbound)\notag\\
        &\leq (n-1)\frac{\ell n! w_{\max}}{w_{\min}}\notag\\
        &= 2^{\Theta(n\log n)}\ell\frac{w_{\max}}{w_{\min}}.\notag
    \end{align}
    
    We next consider IID values. We will make use that now the marginal $F_1$ is
    given to us (see also the representation in
    p.~\pageref{page:IID-represent-internal-lemma}) in the input by a set of
    possible values $0=a_0<a_1<a_2<\dots<a_{k-1}<a_{k}=1$ together with their
    corresponding probabilities $\ssets{p_j}_{j\in[k]}\in [0,1]$, such that
    $\sum_{j=1}^k(a_j-a_{j-1})p_j=1$. Then, the marginal's density is given by
    $f_1(x)=p_{j}$ for all $x\in(a_{j-1},a_{j})$, $j\in[k]$. Clearly, the value
    distribution is $\priorupperbound$-bounded, with
    $\priorupperbound=p_{\max}\coloneq \max_{j\in[k]} p_j>0$. Let us also denote
    $p_{\min}\coloneq \min_{j\in[k]}\ssets{p_j\fwh{p_j>0}}>0$,
    $\bar{a}_{\min}\coloneqq \min_{j\in[k]} (a_j-a_{j-1})>0$ and
    $\bar{a}_{\max}\coloneqq \max_{j\in[k]} (a_j-a_{j-1})>0$. Observe that all
    these quantities have polynomial binary representation.
    
    From~\eqref{eq:bounds-density-max-conditional-helper8}, for the IID setting
    we can bound the derivative of the canonical equilibrium strategy, at any
    value $x\in V_1$ in the support, by
    \begin{align}
    \beta'(x) &\leq \frac{g(x)}{G(x)}\int_{\underline{v}}^x 1\, \mathrm{d} t \notag\\
    &=\frac{g(x)}{G(x)}(x-\underline{v}) \notag\\
    &=\frac{(n-1)f_1(x)F_1^{n-2}(x)}{F_1^{n-1}(x)}(x-\underline{v})\notag\\
    &=(n-1)\priorupperbound\frac{x-\underline{v}}{F_1(x)}\label{eq:bound-Lipschit-beta-IID-helper1},
    \end{align}
    where, recall that, $\underline{v}=\inf V_1$ denotes the leftmost point in
    the support of the marginal value distribution. Taking our input
    representation into consideration, clearly it must be that
    $\underline{v}\in\ssets{a_0,a_1,\dots,a_{k-1}}$. Let $\xi\in[k]$ such that
    $\underline{v}=a_{\xi-1}$. Notice that it must be that $p_\xi>0$. Then, we
    observe that: 
    \begin{itemize}
    \item If $x\in[a_{\xi-1},a_\xi]$, then $F_1(x)=p_\xi(x-a_{\xi-1})=p_\xi(x-\underline{v})$ and therefore $\frac{x-\underline{v}}{F_1(x)}= \frac{1}{p_{\xi}}\leq\frac{1}{p_{\min}}\leq\frac{\bar{a}_{\max}}{\bar{a}_{\min}}\frac{1}{p_{\min}}$.
    \item If $x\in[a_{l-1},a_l]$ for some $\xi<l\leq k$, then it must be that
    $F(x) \geq (a_\xi-a_{\xi-1})p_\xi\geq \bar{a}_{\min}p_{\min} $ and
    $x-\underline{v}=x-a_{\xi-1}\leq a_{l}-a_{\xi-1}\leq \bar{a}_{\max}$.
    Therefore,
    $\frac{x-\underline{v}}{F_1(x)}\leq\frac{\bar{a}_{\max}}{\bar{a}_{\min}}\frac{1}{p_{\min}}$.
    \end{itemize}
    Using this bounds, we can finally upper-bound the derivative of $\beta$
    from~\eqref{eq:bound-Lipschit-beta-IID-helper1} by
    $$\beta'(x) \leq (n-1)\cdot p_{\max} \cdot \frac{\bar{a}_{\max}}{\bar{a}_{\min}}\frac{1}{p_{\min}}\leq n \frac{\bar{a}_{\max}}{\bar{a}_{\min}}\frac{p_{\max}}{p_{\min}}.$$

\section{Technical Lemmas}
\label{app:technical}

\begin{lemma}
\label{lemma:lower-bound-integral-derivative} Let $g:[a,b]\to\R_{\geq 0}$ be a
nondecreasing, absolutely continuous function, whose derivative is at most $L>0$
(almost everywhere). Then,
$$
\int_{a}^b F(t)\, dt \geq \frac{F^2(b)-F^2(a)}{2\cdot L}.
$$
\end{lemma}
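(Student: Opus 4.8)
The plan is to integrate the derivative of $F^2$ rather than of $F$. The key observation is that, since $F$ is absolutely continuous on the bounded interval $[a,b]$ (hence bounded there), the function $F^2$ is also absolutely continuous on $[a,b]$, and its derivative is given by the product rule almost everywhere: $(F^2)'(t) = 2\,F(t)\,F'(t)$ for a.e.\ $t\in[a,b]$. Because $F$ is nonnegative and $F'(t)\le L$ almost everywhere, we get the pointwise bound $2\,F(t)\,F'(t) \le 2L\,F(t)$ for a.e.\ $t$. Integrating this inequality over $[a,b]$ and invoking the fundamental theorem of calculus for absolutely continuous functions on the left-hand side will immediately yield the claim.

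Concretely, the steps I would carry out are: (i) record that $F^2$ is absolutely continuous on $[a,b]$ (product of the absolutely continuous, bounded function $F$ with itself), so that
$$
F^2(b) - F^2(a) = \int_a^b (F^2)'(t)\,\mathrm dt = \int_a^b 2\,F(t)\,F'(t)\,\mathrm dt;
$$
(ii) apply $F(t)\ge 0$ and $F'(t)\le L$ a.e.\ to obtain $\int_a^b 2F(t)F'(t)\,\mathrm dt \le 2L\int_a^b F(t)\,\mathrm dt$; and (iii) divide by $2L>0$ to conclude
$$
\int_a^b F(t)\,\mathrm dt \ \ge\ \frac{F^2(b)-F^2(a)}{2L}.
$$
(Here I am reading the statement with a uniform notation, i.e.\ the function called $g$ in the hypothesis is the $F$ appearing in the conclusion; the nonnegativity and the derivative bound are exactly what is used, and nondecreasingness of $F$ is available but not essential.)

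The only part that requires a little care is the real-analysis justification that $F^2$ is absolutely continuous and that the fundamental theorem of calculus applies to it, together with the almost-everywhere product rule $(F^2)' = 2FF'$. This is entirely standard — it follows from the fact that products of absolutely continuous functions on a compact interval are absolutely continuous, and that absolutely continuous functions satisfy the FTC — so I would simply cite a standard reference (e.g.\ \cite[Ch.~6]{RoydenFitzpatrick2010}) rather than reprove it. I do not anticipate any genuine obstacle here; the lemma is essentially a one-line computation once the antiderivative of $2FF'$ is correctly identified as $F^2$.
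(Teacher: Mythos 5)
Your proposal is correct and follows essentially the same argument as the paper: both identify $2FF'$ as the a.e.\ derivative of the absolutely continuous function $F^2$, bound it using $F\geq 0$ and $F'\leq L$, and apply the fundamental theorem of calculus, the only difference being that you start from $F^2(b)-F^2(a)$ while the paper starts from $\int_a^b F$. Your side remark that monotonicity is not essential is also accurate.
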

\begin{proof}
Since $F$ is absolutely continuous and nondecreasing, its derivative $F'$ exists
almost everywhere and is nonnegative, thus 
$$
\int_{a}^b F(t)\, dt 
\geq \int_{a}^b \frac{F'(t)}{L}F(t)\, dt
= \frac{1}{2L}\int_{a}^b 2 F(t)F'(t)\, dt
= \frac{1}{2L}\int_{a}^b [F^2(t)]'\, dt
= \frac{1}{2L} \left[F^2(b)-F^2(a)\right].
$$
\end{proof}

\begin{lemma}
    \label{lemma:bounds-difference-of-powers} 
    Let $n$ be a positive integer, and $0\leq x \leq y$ reals. Then,
    $$
    n(y-x)x^{n-1}
    \leq
    y^n-x^n
    \leq 
    n(y-x)y^{n-1}.
    $$
\end{lemma}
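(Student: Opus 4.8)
The plan is to use the elementary factorization
$$
y^n - x^n = (y-x)\sum_{k=0}^{n-1} y^{n-1-k} x^{k},
$$
which is valid for all reals (and in particular for $0 \le x \le y$). I would first record this identity, either by citing it as standard or by a one-line induction on $n$. The sum on the right-hand side has exactly $n$ summands, each of the form $y^{n-1-k}x^{k}$ with $0 \le k \le n-1$.

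Next I would bound each summand. Since $0 \le x \le y$, for every $k \in \{0,1,\dots,n-1\}$ we have $x^{n-1} = x^{n-1-k}x^{k} \le y^{n-1-k}x^{k} \le y^{n-1-k}y^{k} = y^{n-1}$, using monotonicity of $t \mapsto t^{m}$ on $[0,\infty)$ for each nonnegative integer exponent $m$. Summing these $n$ inequalities gives
$$
n\,x^{n-1} \le \sum_{k=0}^{n-1} y^{n-1-k}x^{k} \le n\,y^{n-1}.
$$

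Finally, I would multiply the displayed chain by the nonnegative quantity $(y-x) \ge 0$, which preserves the inequalities, and substitute the factorization to conclude
$$
n(y-x)x^{n-1} \le y^n - x^n \le n(y-x)y^{n-1},
$$
as desired. (An alternative route, if one prefers, is to write $y^n - x^n = \int_x^y n t^{n-1}\,\mathrm{d}t$ and bound the integrand by $n x^{n-1}$ from below and $n y^{n-1}$ from above on $[x,y]$, then integrate; this is essentially the same argument in integral form.) There is no genuine obstacle here — the only mild care needed is to handle the degenerate cases $x=0$ and $x=y$, but both are already covered since the factorization and all the termwise bounds remain valid (when $x=y$ both outer expressions vanish, and when $x=0$ the lower bound reads $0 \le y^n$).
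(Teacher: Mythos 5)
Your proof is correct, and it takes a different route from the paper. The paper disposes of the lemma in one line by convexity: since $f(t)=t^n$ is convex on $[0,\infty)$ with $f'(t)=nt^{n-1}$, the tangent-line (gradient) inequality applied at the two endpoints gives $f'(x)(y-x)\le f(y)-f(x)\le f'(y)(y-x)$, which is exactly the claim. You instead use the algebraic factorization $y^n-x^n=(y-x)\sum_{k=0}^{n-1}y^{n-1-k}x^k$ and bound each of the $n$ summands between $x^{n-1}$ and $y^{n-1}$; your parenthetical integral variant $y^n-x^n=\int_x^y nt^{n-1}\,\mathrm{d}t$ is essentially the calculus version of the same idea and is closest in spirit to the paper's argument. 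The trade-off: your factorization argument is entirely elementary (no differentiability or convexity needed) and handles the degenerate cases transparently, while the paper's convexity argument is shorter and generalizes immediately to any convex differentiable function, not just powers. Either proof is acceptable here.
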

\begin{proof}
Immediate from the convexity of the function $f(x)=x^n$ and the fact that its
derivative is $f'(x)=n x^{n-1}$.
\end{proof}

\bibliographystyle{plainnat}
\bibliography{refs}

\end{document}